\definecolor{mylightblue}{RGB}{243,247,247}
\DeclareMathOperator{\Imag}{Im}
\DeclareMathOperator{\Real}{Re}
\newcommand{\mini}{\scriptscriptstyle}
\renewcommand{\arraystretch}{1.5} 
\definecolor{myblue}{rgb}{0.0, 0.3, 1}
\numberwithin{equation}{section}
\numberwithin{figure}{section} 
\theoremstyle{definition}
\newenvironment{definition}
  {\pushQED{\qed}\definitionx}
  {\popQED\enddefinitionx}
  \newenvironment{theorem}
    {\pushQED{\qed}\theoremx}
    {\popQED\endtheoremx}
    \newenvironment{proposition}
      {\pushQED{\qed}\propositionx}
      {\popQED\endpropositionx}
\newenvironment{remark}
{\pushQED{\qed}\remarkx}
  {\popQED\endremarkx}
\newenvironment{example}
{\pushQED{\qed}\examplex}
  {\popQED\endexamplex}
\newenvironment{counterexample}
{\pushQED{\qed}\counterexamplex}
  {\popQED\endcounterexamplex}
\titleclass{\part}{top}
\titleformat{\part}[display]
  {\huge\bfseries\centering}{\partname~\thepart}{0pt}{}
\titlespacing*{\part}{0pt}{40pt}{40pt}
\newcommand\drawingOne{\includegraphics[width=5mm]{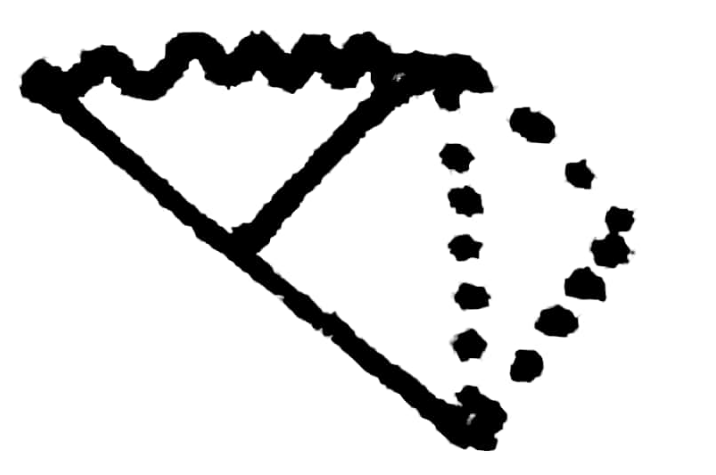}}
\newcommand\drawingTwo{\includegraphics[width=7mm]{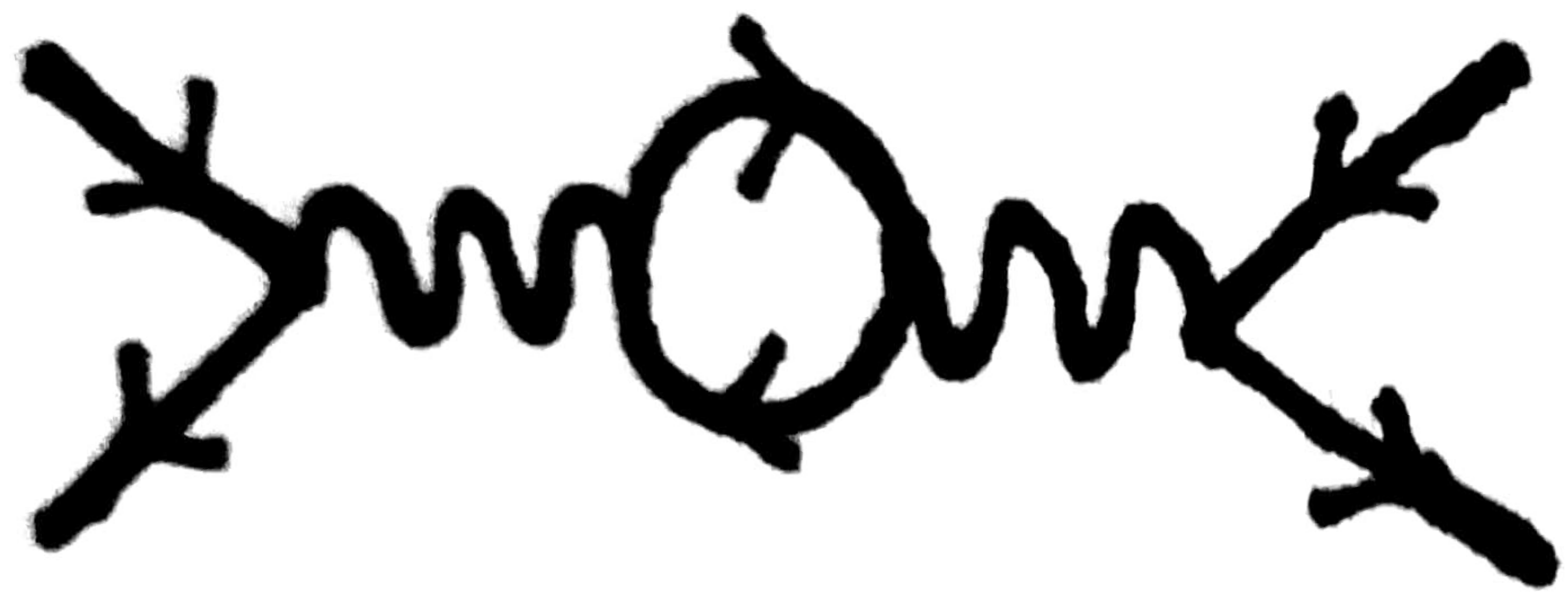}}
\newcommand\drawingThree{\includegraphics[width=2mm]{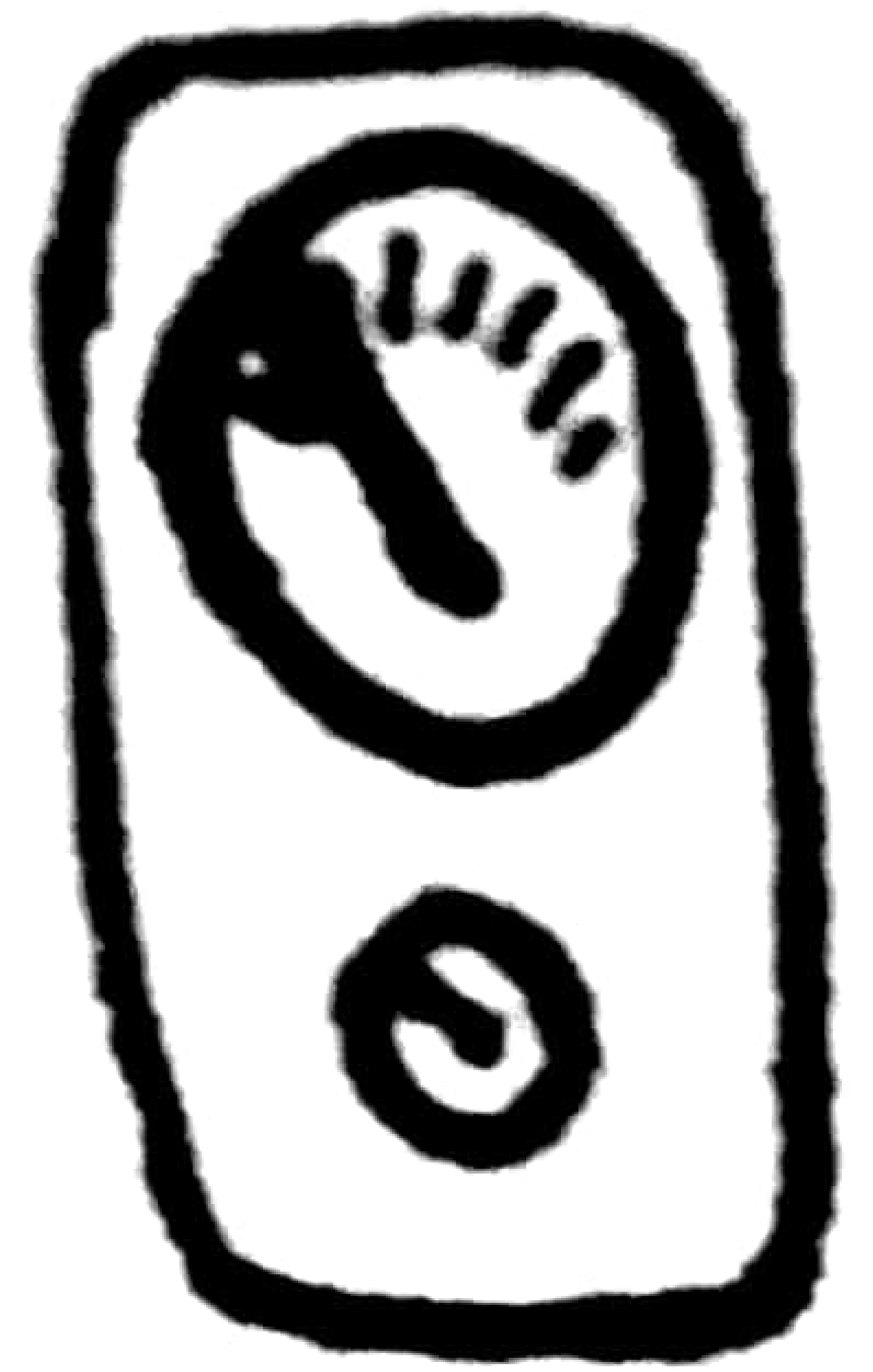}}
\newsavebox{\tempbox}
\begin{document}
\frontmatter
\pagenumbering{roman} 

\pdfbookmark[chapter]{Beginning}{Beginning}
\thispagestyle{empty}
\enlargethispage{3\baselineskip}

\vspace*{-3cm}
\begin{center}
\includegraphics[width=2.8cm]{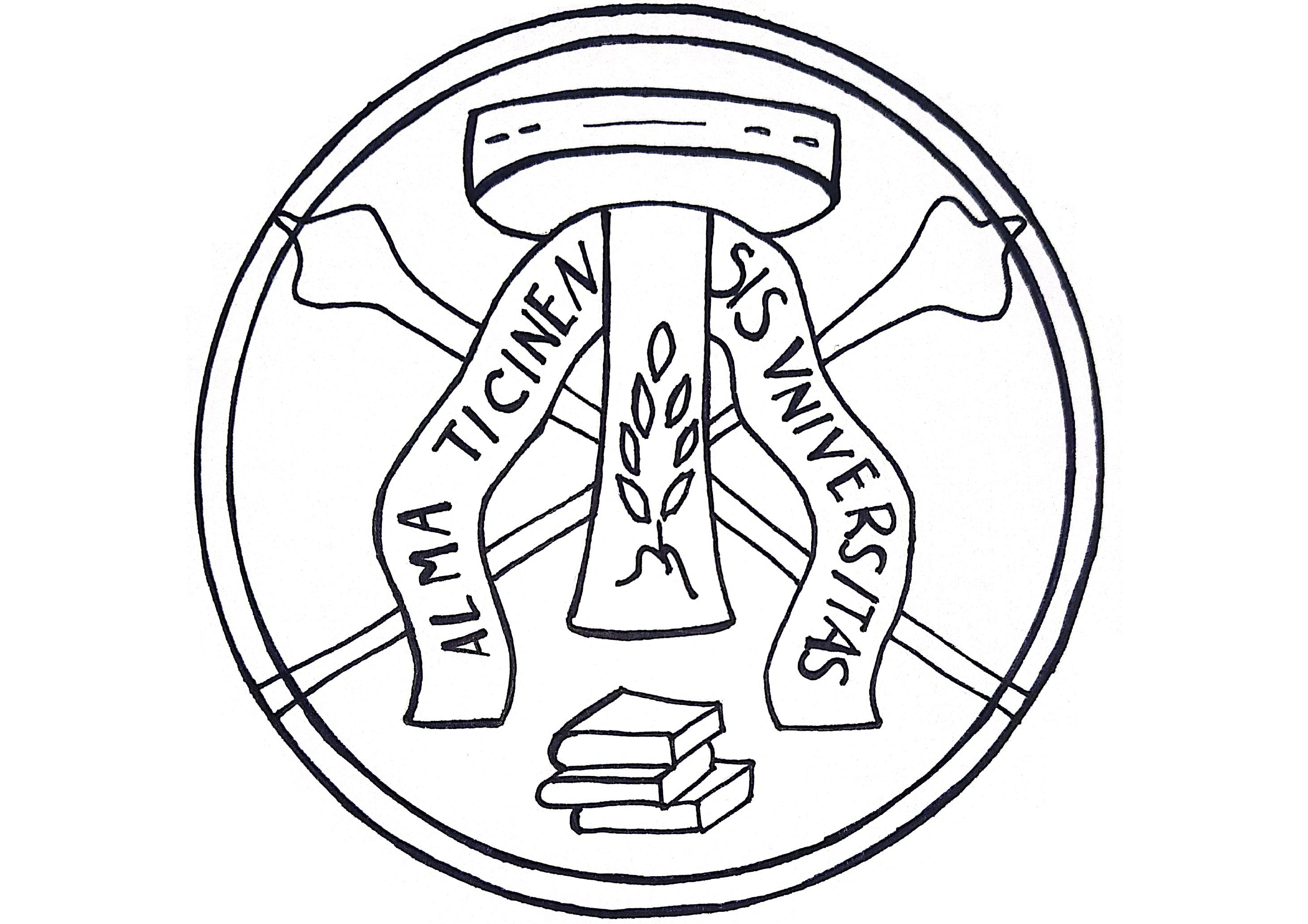}

{
  UNIVERSITÀ DEGLI STUDI DI PAVIA\\
  DOTTORATO DI RICERCA IN FISICA – XXXIV CICLO\\
}

{\LARGE
\vspace{3.7cm}
\bfseries

Probing thermal effects on static spacetimes with Unruh-DeWitt detectors
}

\vspace{6.5cm}
\includegraphics[width=.45\textwidth]{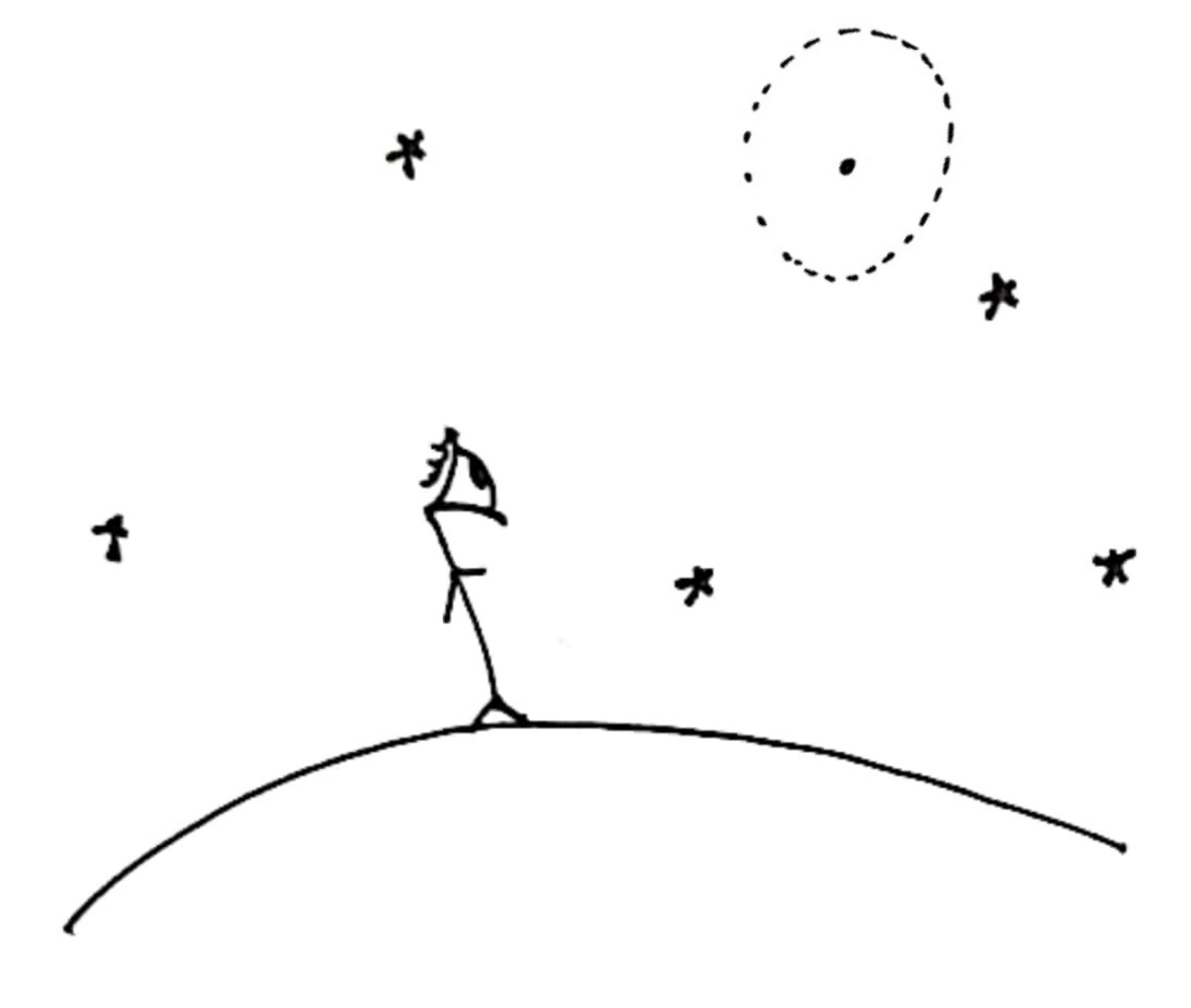}

\vspace*{3.5cm}

Lissa de Souza Campos

\vspace*{.5\baselineskip}
Supervisor: Prof. Claudio Dappiaggi

\end{center}
\clearpage

\vfill\null
\begin{flushleft}
Thesis submitted to the Graduate School of Physics at the University of Pavia \\

in partial fulfillment of the requirements for the degree of\\
DOTTORE DI RICERCA IN FISICA\\
2022
\end{flushleft}
\clearpage
\pdfbookmark[section]{Acknowledgments}{Acknowledgments}
\chapter*{Acknowledgments}

    Not for one second have I questioned my decision of coming to this foreign land and that is because of my advisor Claudio Dappiaggi. For answering all of my questions, for all the discussions, for all the feedbacks, for all the support, for all the opportunities given, for all the mentoring, and, specially, for being a model of what a physicist, a professor, an advisor can be, I am deeply grateful.\\

    I am most grateful to professors J. Barata, A. Piza, J. Pitelli, J. Louko and N. Pinamonti for the discussions and the support in all the application processes... I very much appreciate the feedback and advices from professors J. Louko and R. Mann and I thank them for having thoroughly read the manuscript of this thesis. Part of the results reported here was obtained in collaboration with D. Sina and J. Pitelli; it was a pleasure to work with them. I also appreciate the Italian culture, society and government that, through the University of Pavia, financially supported me. In addition, I acknowledge the ``Istituto Nazionale di Fisica Nucleare'' (INFN) for the financial support.\\

    A heartfelt thanks to my friends: migues, bb (e sua mamis maravilhosa), Lucio, ricardito, robinneke (and kwispeltje!), Shamim and Zé lindo. A special thanks to everyone that joined me for coffee breaks, to the employees of and my friends from UNES who are always so nice and have always had patience to hear my broken Italian, to my coworkers at the physics institute, particularly to Miss Virginie Gallati whose sympathy was incredibly welcomed in the turbulent starting up period, and to the strangers that were randomly kind to me and made some of my days. Finally, I recognize my family's support, without which I wouldn't be here today. \\

\cleardoublepage
\section*{}
\vspace{11cm}
\begin{quote}
\begin{flushright}
  \textit{ad ipsum esse}
\par\end{flushright}
\end{quote}
\hfill\thispagestyle{empty}
\newpage
\pdfbookmark[section]{Poeminha}{Poeminha}
\thispagestyle{empty}
\section*{}
%

\noindent \emph{Lamb shift}\\
\emph{light bend}\\
\emph{so far we've come}\\
\emph{the fire, the wheel}\\
\emph{the enquire, the will}\\
\emph{what will become}\\

%
 \noindent\emph{weak, strong, charges and gravity}\\
\emph{or a unique, so longed, that enlarges reality}\\

\noindent \emph{lo and behold}\\
\emph{quantum fields on smooth manifolds}\\
\emph{a semi-classical search}\\
\emph{not radical, no lurch}\\
\emph{yet much to reveal}\\
\emph{untouched, concealed}\\

%
%
\cleardoublepage
\pdfbookmark[section]{Abstract}{Abstract}
\chapter*{Abstract}

In the lack of a full-fledged theory of quantum gravity, I consider free, scalar, quantum fields on curved spacetimes to gain insight into the interaction between quantum and gravitational phenomena. I employ the Unruh-DeWitt detector approach to probe thermal, quantum effects on static, non-globally hyperbolic spacetimes. In this context, all physical observables depend on the choice of a boundary condition that cannot be singled-out, in general, without resorting to an experiment. Notwithstanding, the framework applied admits a large family of (Robin) boundary conditions and grants us physically-sensible dynamics and two-point functions of local Hadamard form. I discover that the anti-Unruh/Hawking effects are not manifest for thermal states on the BTZ black hole, nor on massless topological black holes of four dimensions. Whilst the physical significance of these statistical effects remains puzzling, my work corroborates their non-trivial relation with the KMS condition and reveals the pivotal influence of the spacetime dimension in their manifestation. On global monopoles, I find that for massless minimally coupled fields the transition rate, the thermal fluctuations and the energy density remain finite at the singularity only for Dirichlet boundary condition. For conformally coupled fields, although the energy density diverges for all boundary conditions, the transition rate and the thermal fluctuations vanish at the monopole; indicating that even if there is infinite energy, no spontaneous emission occur if the quantum field is not fluctuating. Moreover, I explicitly construct two-point functions for ground and thermal states on Lifshitz topological black holes, setting the ground for future explorations in this Lorentz breaking context. I expect my work to bring awareness on the intricate role played by choosing different boundary conditions, to stimulate the debate on thermal effects on black holes and naked singularities, and to promote the power and usefulness of semi-classical analyses.\\

\pagestyle{myPhDpagestyle}

\textbf{Key-words:} Unruh-DeWitt detectors, anti-Unruh effect, anti-Hawking effect, Robin boundary conditions, BTZ black hole, massless topological black holes, global monopoles, Lifshitz topological black holes.
\cleardoublepage
\begin{KeepFromToc}
  \pdfbookmark{\contentsname}{Contents}
  \tableofcontents
  \let\tableofcontents\relax
\end{KeepFromToc}
\mainmatter

\chapter*{Introduction}
\mtcaddchapter[Introduction]
\markboth{Introduction}{Introduction}
  \begin{flushright}
      \textit{\small{---If we squeezed ourselves into grains of sand, and then crushed the entire Milky Way to fit in a family-size pizza box, then each one of us would become a black hole that would dissipate in about ten picoseconds. That's way longer than inflation, but still smaller than the difference of age between your feet and your head due to Earth's gravity.}}\\
    \par\end{flushright}
\pagestyle{myPhDpagestyle}
\vspace{.5cm}

Taking into account quantum field theory and general relativity, I adopt the viewpoint that nature is fundamentally described by quantum fields and curved spacetimes. Quantum fields are the building blocks of the standard model of particles and of their weak, strong and electromagnetic interactions; the geometry of curved spacetimes encodes gravitation and models the large scale structure of the universe. Both theories possess crucial open problems \cite{Fredenhagen2006rv,Coley2018mzr} and outstanding experimental evidences \cite{eddington1919total,Lamb1947zz,aad2013evidence,will2014confrontation}; both theories are still being developed and questioned \cite{tHooft2015sdj}. A solid theory of everything consistently unifying all four fundamental interactions and revealing the underlying quantum gravitational laws of nature is yet to be singled-out and vouched for by reality; in \cite{Esposito2011rx} one can find a list of sixteen different approaches to quantum gravity, including string theory, twistor theory and loop quantum gravity. Notwithstanding, we can theoretically probe aspects of such a unified theory at their semi-classical interface. To this aim, I consider physical phenomena within quantum field theory on curved spacetimes, invoking the algebraic approach for the construction of physically-sensible states, and with focus on the Unruh-DeWitt detector's perspective on quantum, thermal effects on static spacetimes.

Quantum field theory on curved spacetimes has brought to light the fact that black holes emit Hawking radiation \cite{hawking1974black}. Its algebraic approach has clarified the relation between the latter and the existence of Killing horizons \cite{fredenhagen1990derivation,moretti2012state}. It has allowed for generalizations of the singularity theorems of general relativity taking into account quantum effects \cite{Fewster2021mmz}. It has introduced the PCT, spin statistics, Haag and Reeh-Schlieder theorems \cite{Haag1963dh}. It has made clear that quantum field theory does not predict a value, neither right nor wrong, for the cosmological constant \cite{hack2015cosmological}. Markedly, it possesses manifold applications within the Unruh-DeWitt particle detector approach \cite{hu2012relativistic}. We can see quantum field theory on curved spacetimes as an effective theory useful in situations where spacetime curvature is significant, and valid if we stay sufficiently away from the Planck scale. The drawback of considering the underlying background to be generic instead of the standard flat spacetime is the loss of symmetries \cite{wald2009formulation}. The four-dimensional Minkowski spacetime has ten global Killing vector fields, being invariant under four translations, three rotations and three Lorentz boosts. Altogether, Minkowski isometries form the Poincaré group. In turn, in a Hilbert space carrying a unitary representation of the Poincaré group, states are unit rays amongst which there is one, and only one, Poincaré-invariant state: Minkowski vacuum \cite[Pg.28,Pg.55]{Haag1963dh}. Such state is also the unique ground state, containing only positive-frequency modes, with respect to the corresponding (Hamiltonian induced by the) time-translation symmetry. Withal, the symmetries of Minkowski spacetime select a unique vacuum state that plays a central role in the formalization of quantum field theory---Minkowski vacuum is the reference state with respect to which expectation values are computed, disambiguating normal-ordering \cite{birrell1984quantum}. On a generic spacetime there is no isometry group that selects a preferred reference state. In this scenario, one can resort to the algebraic approach to quantum field theory.

The works of Haag and Kastler of over fifty years ago set the ground for a mathematically rigorous approach to quantum field theory in which the principle of locality is incorporated through the notion of algebras of observables, see \cite{Haag1963dh} and the references therein. The main advantages of the algebraic approach is that it allows us to bypass the lack of a preferred vacuum state and the ambiguity of choosing a Hilbert space representation \cite{haagLocalquantumphysics,brunetti2015advances}. Therefore, it is particularly well-suited when considering generic, curved spacetimes. In this context there is a preferred class of states, not necessarily containing a unique distinguished representative, called \textit{Hadamard} and with respect to which expectation values can be defined such that the underlying quantum fluctuations remain finite \cite{WaldQFTCS,fulling1981singularity,fewster2013necessity}. On Minkowski, de Sitter, the exterior Schwarzschild, Friedman-Lemaitre-Robertson-Walker and on any globally hyperbolic spacetime, the algebraic approach provides a straightforward quantization scheme: Hadamard states exist \cite{fulling1981singularity}, the Klein-Gordon equation (and Dirac and Proca equations) yields a well-posed Cauchy problem \cite{WELM,brunetti2015advances}, we can regularize the stress-energy tensor uniquely up to locally covariant geometrical terms \cite{WaldQFTCS} and we can construct the algebra of Wick polynomials \cite{moretti}. On a Kerr black hole, a Gödel universe, asymptotically anti-de Sitter spacetimes, Minkowski with plates, or any non-globally hyperbolic spacetime, such scheme is no longer valid and the existence and uniqueness of fundamental solutions of the wave equation is not guaranteed.

I focus on free, massive, real, scalar, quantum fields arbitrarily coupled to the scalar curvature of the underlying background and whose dynamics is ruled by the Klein-Gordon equation. The spacetimes of interest in this thesis are curved and non-globally hyperbolic, but they are also static and stably-causal. On one hand, their staticness means we can avail of a given global, timelike, irrotational Killing vector field to select a unique ground state, and to identify thermal states. On the other hand, their non-global hyperbolicity entails that establishing of a quantum field theoretical framework---existence of an explicit causal propagator and physically-sensible two-point functions from which expectation values can be obtained---must be performed case-by-case. As it happens, in the scenarios here the Klein-Gordon equation yields a well-posed Cauchy-boundary value problem. The construction of the causal propagator is reduced to the problem of finding self-adjoint extensions of the spatial part of the wave operator \cite{ishibashi2003dynamics}, which in turn requires choosing boundary conditions for the wave functions.

Boundary conditions are ubiquitous in physics. Some amongst them translate into physical principles, some are so natural they go unnoticed \cite{lindsay1929significance}, and others are in one-to-one correspondence with a physical observable \cite{bonneau2001self}. In the context of solving the Klein-Gordon equation as a Cauchy-boundary value problem, each boundary condition specifies a different fundamental solution, hence a distinct causal propagator, an inequivalent dynamics, a specific two-point function, a particular reference state \cite{ishibashi2003dynamics,dappiaggi2019fundamental}. Consequently, expectation values depend on the chosen boundary condition and only an experiment could single one out. In the last years, increasing attention has been given to the significance of choosing different boundary conditions within quantum field theory on non-globally hyperbolic spacetimes. Klein-Gordon and Maxwell fields admitting Robin boundary conditions have been taken into account on $n$-dimensional AdS spacetimes and its patches \cite{ishibashi2004dynamics,warnick2013massive,dappiaggi2016hadamard,dappiaggi2018mode,dappiaggi2018ground,pitelli2019boundary}, on BTZ black holes \cite{garbarz2017scalar,bussola2017ground}, on massless topological black holes \cite{morley2020quantum,morley2021vacuum}, on global monopoles \cite{pitelli2018gmonopole}, and for the study of phenomena such as quasinormal modes \cite{wang2015maxwell,wang2016maxwell,Wang2019qja,Wang2021uix}, superradiance \cite{Ferreira2017tnc,Dappiaggi2017pbe}, and the Casimir effect \cite{Mintz2005sj,Nazari2020vil}.

I consider Klein-Gordon fields on static BTZ black holes, Rindler-AdS$_3$ spacetime, massless hyperbolic black holes, flat, hyperbolic and spherical Lifshitz topological black holes and on global monopoles. I follow the prescription of Ishibashi and Wald \cite{ishibashi2003dynamics} that grants us well-defined dynamics, and I invoke the results of Sahlmann and Verch \cite{sahlmann2000passivity} that guarantee the constructed states are of (local) Hadamard form. Markedly, the framework used gives us physically-sensible two-point functions that admit the large class of Robin boundary conditions. Subsequently, to study thermal, quantum effects, I invoke the Unruh-DeWitt detector approach. Unruh-DeWitt particle detectors \cite{birrell1984quantum,Louko2007mu} have been applied in several different contexts \cite{hu2012relativistic}, as the following list of examples corroborates: the Unruh and Hawking effects \cite{fewster2016waiting,hodgkinson2014static,louko2016unruh,juarez2018quantum,smerlak2013new}, the firewall proposal \cite{louko2014unruh,martin20151}, the Casimir-Polder effect \cite{dkebski2021probing}, cosmological quantum entanglement \cite{martin2012cosmological}, measure theory \cite{ruep2021weakly,grimmer2021tale}, quantum information \cite{lopp2021quantum}, quantum optics \cite{tjoa2021makes, sachs2021unruh}, naked singularities \cite{Davies1987th,Cong2021tnk,deSouzaCampos2021awm}, and quantum gravity \cite{Foo2021fno,faure2020particle}. Here, I model the Unruh-DeWitt detector with a pointlike two-level system, interacting for an infinite proper time with the underlying quantum field, and following a static trajectory. In this setting, the transition rate can be straightforwardly computed from the two-point functions. Then, by studying its behavior with respect to its different parameters, we can infer properties of the underlying quantum field and spacetime.

\subsubsection*{Limitations of the framework}
\label{page Limitations of the framework}
\begin{itemize}

  \item[$\dagger$] The underlying backgrounds are all \textcolor{ProcessBlue}{static}. This means we assume we have a well-defined notion of time and energy, positive frequencies, ground and thermal states. However, both cosmologically and astrophysically, \textcolor{ProcessBlue}{non-static} scenarios are of interest: the universe is expanding and black holes spin. We may assume this is an approximation to a very-slowly rotating system, or we may view this is as a preliminary theoretical exploration to set the path towards generalizing the framework to rotating scenarios. The latter is a standard interpretation and it is regularly justified. For example, on BTZ black holes, the construction of two-point functions has been generalized to the rotating case \cite{bussola2017ground,bussola2018tunnelling}, and also the study on the anti-Hawking effect \cite{robbins2021anti}.
  \item[$\dagger$] I consider \textcolor{Purple}{scalar, bosonic} quantum field theory. \textcolor{Purple}{Fermionic, Dirac, Maxwell} fields can also be considered and it is expected that Unruh-DeWitt detectors are able to distinguish them \cite{hummer2016renormalized,WanMokhtar2018lwi}. Here, Klein-Gordon fields, which do represent particles such as $\eta$, $\eta'$, and Higgs bosons \cite{aad2013evidence,aad2015study}, are viewed as (the simplest) toy-models to employ when formulating quantum field theory on curved spacetimes.
  \item[$\dagger$] The Unruh-DeWitt detector model used admits many generalizations: a \textcolor{red}{pointlike} \textcolor{orange}{two-level} system that follows a  \textcolor{Green}{static} trajectory, has \textcolor{blue}{infinite} time to interact with the quantum field \textcolor{Sepia}{itself} and is formalized within \textcolor{Magenta}{first}-order perturbation theory can be generalized to a \textcolor{red}{spatially extended} \textcolor{orange}{quantum} system that follows \textcolor{Green}{(circular, infalling, geodesic) other} trajectories, has \textcolor{blue}{finite} time to interact with the quantum field\textcolor{Sepia}{'s derivative} and is formalized to \textcolor{Magenta}{higher}-orders in perturbation theory.
\end{itemize}
Each of the generalizations highlighted above would make our analysis more realistic, but more complex. Each lies along different research avenues to follow and constitutes an interesting step to pursue in future work. Aware of its limitations, let us briefly outline the strengths of the framework employed (when compared with other approaches that have the same limitations).

\subsubsection*{Strengths of the framework}
\label{page Strengths of the framework}
\begin{itemize}\setlength\itemsep{0em}
  \item[$\diamond$] It admits general, Robin boundary conditions.
  \item[$\diamond$] It allows for massive arbitrarily coupled fields.
  \item[$\diamond$] From the ground state one directly construct thermal states.
  \item[$\diamond$] Knowing the two-point functions amounts to knowing the transition rates.
  \item[$\diamond$] The expressions for the transition rates are suitable for numerical computations.
\end{itemize}

Within this approximative, simplified, but effective and fruitful framework, this thesis accomplishes two goals: the explicit construction of physically-sensible two-point functions on non-globally hyperbolic spacetimes, and the study of thermal effects by employing Unruh-DeWitt particle detectors. Categorically, we can divide the applications performed in the following three main focuses, but note that the ambiguity of choosing boundary conditions is at the heart of all of them.

\subsubsection*{The three main focuses}
\label{page main focuses}
\begin{enumerate}
    \item \underline{Anti-Unruh and anti-Hawking effects}
    \item[] A static, uniformly-accelerated detector coupled to the ground state on the three-dimensional Minkowski spacetime manifests the (weak) anti-Unruh effect. On a static BTZ black hole, depending on the mass of the black hole and on the boundary condition chosen, it manifests the (weak) anti-Hawking effect. Both these effects regard the fact that the closer the detector is to the horizon, the higher is the local Hawking temperature measured---and yet the transition rate decreases \cite{brenna2016anti,Henderson2019uqo}. By analysing the behavior of the transition rate with respect to the local Hawking temperature for different boundary conditions, for different quantum states and for different black hole masses, I study the anti-Unruh and anti-Hawking effects on static BTZ black holes, on Rindler-AdS$_3$ spacetime, and on massless hyperbolic black holes. These effects are defined in Section \ref{sec: Unruh, Hawking, anti-Unruh, anti-Hawking effects}, and the results obtained are summarized in Sections \ref{sec: On a static BTZ spacetime and on Rindler-AdS3} and \ref{sec: On massless hyperbolic black holes}.
    \item \underline{Lorentz violation}
    \item[]  The existence of a fundamental length scale within quantum gravity leads to the question of whether local Lorentz invariance is a fundamental physical principle verified by all observers at all energy scales \cite{Kostelecky1988zi,GrootNibbelink2004za,Vucetich2005ra,Collins2006bw}. Motivated by this question, I consider Lifshitz topological black holes \cite{Mann2009yx}, whose line elements manifest a scaling behaviour characteristic of Ho\v{r}ava-Lifshitz gravity. 
    Their geometry is outlined in Section \ref{ex: chapter 1 Lifshitz toplogical black holes}, while in Section \ref{sec: On Lifshitz topological black holes}, on these Lorentz violating backgrounds, I detail the construction of ground and thermal states of local Hadamard form admitting Robin boundary conditions at Lifshitz infinity. It is interesting to note that the spherical Lifshitz topological black hole has a naked singularity, and yet no boundary condition is required at such locus.

\newpage
    \item \underline{Naked singularities}
    \item[] On black holes, the existence of a bifurcate Killing horizon defines a global Hawking temperature that, in turn, allows us to identify a specific thermal state with the property of being analytic across the horizon---the Hartle-Hawking state. On naked singularities, there is seemingly no special temperature a priori. Therefore, the fact that physically-sensible thermal states can be constructed on spacetimes containing a naked singularity (as implicitly mentioned in the item above) encourages an exploration on thermal effects on these scenarios. Here, I choose to work on global monopole spacetimes, whose significance arise from the context of grand unified theories \cite{Barriola1989hx,vilenkin1994cosmic}. I generalize the existing quantum field theoretical framework on global monopoles by constructing, for massive and arbitrarily coupled free, scalar fields, the two-point functions of thermal states. In contrast with the spherical Lifshitz topological black hole, where a boundary condition must be chosen at Lifshitz infinity, we find that on global monopoles it is the naked singularity that requires one. In Section \ref{sec: On a global monopole}, I study how the boundary conditions at the singularity affect three quantities of interest: the transition rate of a static Unruh-DeWitt detector coupled to a thermal state, the thermal contributions to the ground state fluctuations, and the thermal contributions to the energy density of the ground state.
\end{enumerate}

\subsubsection*{Thesis structure}

First, in Chapter \ref{chap: The Spacetimes}, I review geometrical concepts such as Killing vector fields, static spacetimes, and non-global hyperbolicity. I define Schwarzschild-like coordinates and I highlight the main geometrical features of the spacetimes that are considered in Chapter \ref{chap: Applications}. In Chapter \ref{chap: Quantum Field Theory on Static Spacetimes} I outline the establishment of a free, scalar, quantum field theory on static spacetimes. I review the algebraic approach, define Hadamard, ground and thermal states, I distill the steps for the construction of physically-sensible two-point functions in Schwarzschild-like coordinates, and I delineate the Unruh-DeWitt detector approach. In Chapter \ref{chap: Applications}, I display the results I have obtained and that were published in \cite{deSouzaCampos2020bnj,deSouzaCampos2020ddx,deSouzaCampos2021awm,deSouzaCampos2021role}. In the Conclusions section, I summarize these results and I ponder on their significance and on the follow-up work they incite.

\cleardoublepage

\chapter{Static Spacetimes }
\label{chap: The Spacetimes}
\minitoc
\pagestyle{myPhDpagestyle1}
\vfill

General relativity is written in the language of Differential Geometry. It unites the notions of space and time in the structure of a Lorentzian manifold: essentially, a spacetime is a set equipped with a topology, a differential structure and causality notions. The spacetimes pertinent to the second part of this thesis, on which I establish a quantum field theoretical framework, possess extra features. They are all static spacetimes with sections of constant sectional curvature that admit Schwarzschild-like coordinates. In this chapter, I recapitulate a few concepts of general relativity necessary to understand their geometries. The first three sections regard Killing vector fields and their consequences. In Section \ref{sec: The definition of static spacetimes}, I give the definition of static spacetimes, which relies on the existence of a particular Killing vector field. In Section \ref{sec: A relation between symmetries and curvature}, I discuss the relation between a spacetime admitting the maximal number of Killing vector fields and the notion of a spacetime of constant sectional curvature. Since a Killing vector field may generate a non-degenerate bifurcate Killing horizon, it may be associated with a geometric temperature, as shown in Section \ref{sec: On the geometric Hawking temperature}. In Section \ref{sec: A consequence of (non-)global hyperbolicity}, I use a few causality notions to relate the concept of global hyperbolicity to that of well-posedness of Cauchy problems of partial differential equations. Throughout this chapter I consider Minkowski spacetime to exemplify the notions introduced, but in Section \ref{sec: Examples chapter 1} I sketch other examples of static spacetimes and I highlight their main geometric features.

\newpage

\section{The definition of static spacetimes}
\label{sec: The definition of static spacetimes}

An $n$-dimensional \textit{spacetime} is a Hausdorff, second-countable, connected, orientable, time-orientable, $n$-dimensional smooth manifold e\-quipped with a Lorentzian metric tensor $g$ and a Levi-Civita connection $\nabla$. \label{page def spacetime} This definition provides a minimal set of mathematical tools. It allow us to, for example, completely characterize the motion of free falling test particles for a given metric \cite[Pg. 70, Pg. 148]{weinberg1972gravitation}. However, a general spacetime lacks many features one can make use of on Minkowski spacetime. Markedly, the enjoyment of geometric symmetries, such as translations, rotations or Lorentz boosts, is not guaranteed. On this account, the class of static spacetimes---which bears one distinct translation symmetry---is of particular relevance.

The precise definition of a static spacetime relies on the notion of a Killing vector field. Hence, let us first elucidate the latter. Suppose $\mathcal{M}$ is a spacetime endowed with a Lorentzian metric tensor $g$ with signature $(-,+,...,+)$. Let $\mathcal{L}_\xi$ be the Lie derivative with respect to a vector field $\xi$ on $\mathcal{M}$, then $\xi$ is a  \textit{Killing vector field } if and only if
\begin{equation}
  \label{eq: definition Killing vector field lie derivative = 0}
  \mathcal{L}_{\xi} g := \nabla_{\mu} \xi_{\nu}+\nabla_{\nu} \xi_{\mu} = 0.
\end{equation}

As detailed in \cite[App.C]{wald2010general}, if Equation \eqref{eq: definition Killing vector field lie derivative = 0} holds for all $p\in\mathcal{M}$, then $g$ is invariant along all integral curves of $\xi$ and the associated flux yields a one-parameter group of isometries. That is, global Killing vector fields identify continuous symmetries of a spacetime.

A spacetime is \textit{stationary} if it admits a timelike Killing vector $\xi$. The corresponding one-parameter group of isometries entails invariance of the metric under translation along the coordinate subordinated to the affine parameter of the underlying integral curves of $\xi$. If, in addition, it holds that $\xi_{[\mu}\nabla_{\nu} \xi_{\lambda]} = 0 $, then $\xi$ is called \textit{irrotational} and the spacetime is said to be \textit{static}. This property, equivalent to $\xi$ being hypersufarce-orthogonal and sometimes referred to as ``integrability condition'', brings about time-reflection symmetry. For details, see e.g. \cite[Pg. 119]{wald2010general}.

\begin{definition}[Static spacetime] \label{def: static spacetime} A spacetime is \textit{static} if it admits a global, non-vanishing, timelike, irrotational Killing vector field.
\end{definition}
\begin{example}[Killing vectors and static spacetime]
\label{eg: 10 Killing in Mink}
  Let us solve the Killing equation \eqref{eq: definition Killing vector field lie derivative = 0} on the four-dimensional Minkowski spacetime with metric tensor $g$ and Cartesian coordinates $(t,x,y,z)$ such that
  \begin{equation*}
    \label{eq: metric mink 4D}
    ds^2=-dt^2+dx^2+dy^2+dz^2.
  \end{equation*}
  For a generic vector
  \begin{equation}
    \label{eq: generic linear vector for ex mink 10 kil}
    \xi_{\mu}=(\xi_{t}(t,x,y,z),\xi_{x}(t,x,y,z),\xi_{y}(t,x,y,z),\xi_{z}(t,x,y,z) ),
  \end{equation}
     Equation \eqref{eq: definition Killing vector field lie derivative = 0} yields
  \begin{subequations}
    \label{eq: killing eqs mink 4D Cartesian}
  \begin{gather}
      \partial_k\xi_k(t,x,y,z)=0, \text{ for }k\in\{t,x,y,z\},\\
      \partial_j\xi_t(t,x,y,z)+\partial_t\xi_j(t,x,y,z)=0 ,  \text{ for } j\in\{x,y,z\},\\
      \partial_i\xi_j(t,x,y,z)+\partial_j\xi_i(t,x,y,z)=0 ,  \text{ for } i\neq j \text{ and } i,j\in\{x,y,z\}.
  \end{gather}
\end{subequations}
Assuming that each component is a linear function of each coordinate, for $i\in\{0,1,2,3,4\}$, $k\in\{t,x,y,z\} $ and constants $c_{k_i}$, it holds $$ \xi_k(t,x,y,z) = c_{k_0} + c_{k_1}t+ c_{k_2}x+ c_{k_3}y+ c_{k_4}z.$$
Substituting this ansatz in Equation \eqref{eq: killing eqs mink 4D Cartesian} restricts the possible values of the $20$ constants $c_{k_i}$. We find that the Killing vector \eqref{eq: generic linear vector for ex mink 10 kil} with components \eqref{eq: killing eqs mink 4D Cartesian} depends on $10$ arbitrary constants and it is of the form
  \begin{align*}
    \xi =   &(c_{t_0} + c_{t_2}x + c_{t_3}y + c_{t_4}z)\partial_t \nonumber\\
            +&(c_{x_0} - c_{t_2}t + c_{x_3}y + c_{x_4}z)\partial_x\nonumber\\
            +&(c_{y_0} - c_{t_3}t - c_{x_3}x + c_{y_4}z)\partial_y \nonumber\\
            +&(c_{z_0} - c_{t_4}t - c_{x_4}x - c_{y_4}y)\partial_z .
  \end{align*}
Hence, there are $10$ linearly independent Killing vectors:
  \begin{gather*}
    \text{Translations: }(1,0,0,0),\, (0,1,0,0),\, (0,0,1,0),\, (0,0,0,1);\\
    \text{Rotations: }  (0,-y,x,0),\, (0,-z,0,x),\, (0,0,-z,y);\\
    \text{Boosts: } (x,-t,0,0),\, (y,0,-t,0),\,(z,0,0,-t).
  \end{gather*}
  Each Killing vector is associated with an isometry of the Poincar\'{e} group, as indicated above. In particular, the Killing vector $\xi_{\mu}=(1,0,0,0)$, identified with $\partial_t$, yields a timelike, irrotational Killing vector field, i.e. the four-dimensional Minkowski spacetime is static.
\end{example}
\begin{counterexample}[Non-static spacetime]
  \label{ceg: kerr is not static}
 A four-dimensional Kerr black hole with mass $M>0$ and angular momentum $M a$ is characterized, in Boyer-Lindquist coordinates $(t, r, \theta, \varphi)$, by the line element
      \begin{equation*}\label{eq: kerr metric} ds^2 = -dt^2 + \frac{2Mr}{\rho(r,\theta)^2}(a\sin(\theta)^2 d\varphi -dt)^2 + \rho(r,\theta)^2 \left( \frac{dr^2}{\triangle(r)} + d\theta^2\right) + (r^2+a^2)d\varphi^2,
      \end{equation*}
where $\rho(r,\theta):=r^2 + a^2 \cos(\theta)^2$ and $\triangle(r):=r^2 -2Mr +a^2$. Since the metric coefficients $g_{\mu\nu}$ are independent of $t$, it is easy to see that $\partial_t$ is a Killing vector. Noting that that the line element above is invariant under a time-reflection transformation $t\mapsto -t$ if and only if $a=0$, we conclude that $\partial_t$ is not irrotational. With extra effort, one can show that in fact there is no irrotational Killing vector field on this spacetime if $a\neq 0$. Therefore, it is not static \cite[Pg. 297]{wald2010general}. For details on this spacetime, see \cite{boyer1967maximal} and \cite[Sec.5.6]{hawking1973large}.
\end{counterexample}
Apart from not necessarily having Killing vector fields or symmetries, a general spacetime might not admit a global coordinate chart or foliations. However, if a spacetime admits a global, non-vanishing, timelike, irrotational Killing vector field, these features are guaranteed, see e.g. \cite{gutierrez2003splitting} and the references therein. Explicitly, a static spacetime with a timelike Killing vector field $\xi$ admits a foliation $\mathcal{M}\cong \{t\}_{t\in\mathbb{R}}\times\Sigma$ of spacelike hypersurfaces with normal direction $\xi$. Accordingly, it admits a coordinate system $(t,\underbar{x})$, where $t\in\mathbb{R}$ and $\underbar{x}\in\Sigma$, such that its line element can be written as
  \begin{equation}
  \label{eq: metric static spacetimes}
    ds^2=-f(\underbar{x})dt^2+h_{ij}(\underbar{x})d\underbar{x}^id\underbar{x}^j,
  \end{equation}
where $f$ and $h_{ij}$ are smooth functions independent of $t$, $f$ is positive and $h_{ij}$ identifies a Riemannian metric $h$ on $\Sigma$. In these coordinates, we have $\xi=\partial_t$, and $t$ is called \textit{Killing parameter}.
\vspace{1cm}
   \begin{figure}[H]
     \centering
      \includegraphics[width=.3\textwidth]{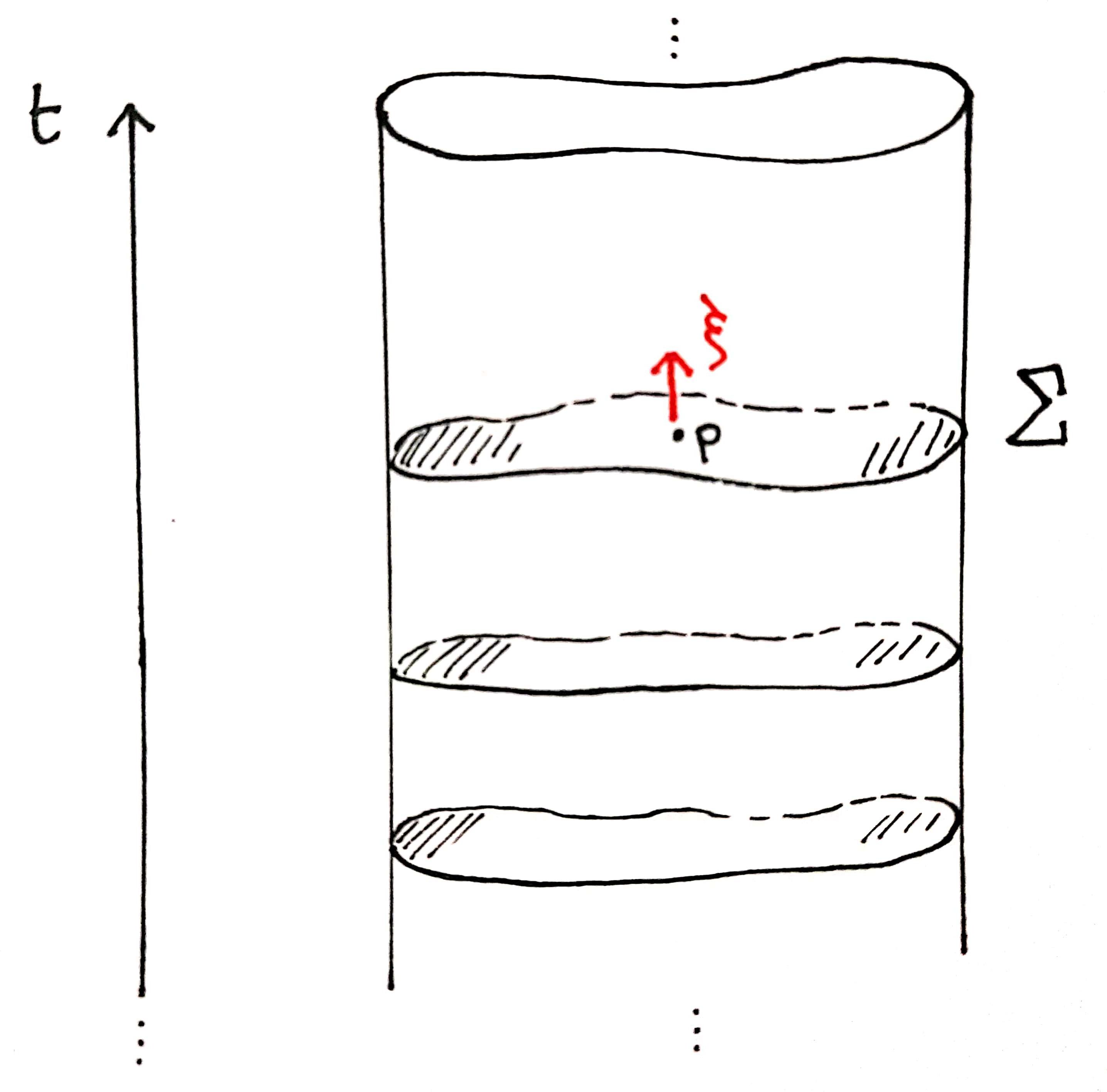}
      \caption{A foliation of a static spacetime. Each $\{t\}\times\Sigma$ is a constant time slice.}
     \label{fig: foliation static spacetime}
       \end{figure}
A physical consequence of the presence of Killing vector fields, since they yield continuous symmetries of the spacetime, is the existence of conserved quantities. Directly from the Killing Equation \eqref{eq: definition Killing vector field lie derivative = 0} follows that along a geodesic with tangent vector $u^{\mu}$ the quantity $\xi_{\mu}u^{\mu}$ is constant. This is a useful property that allows us to consider the gravitational redshift effect on stationary spacetimes and to define a local Hawking temperature, as shown in Section \ref{sec: On the geometric Hawking temperature}. Furthermore, in the case of a static spacetime, the presence of a timelike, irrotational Killing vector field allows us to employ Fourier analysis in the time direction.
In the coordinates of Equation \eqref{eq: metric static spacetimes}, since the volume element $\sqrt{|g|}d\underbar{x}$ is independent of $t$, we can define the Fourier transform $\widehat{\zeta}$ with respect to $t$ of a function $\zeta:\mathcal{M}\rightarrow\mathbb{C}$ analogously to the standard definition on $\mathbb{R}$:
\begin{equation*}
  \label{eq:fouriertransform2}
    \widehat{\zeta}(\omega,\underbar{x})=\frac{1}{2\pi}\int_\mathbb{R} e^{-i\omega t}\zeta(t,\underbar{x})dt.
\end{equation*}
It follows that a quantum field theoretical framework on a static spacetime benefits from well-defined notions of time translation invariance, Hamiltonian, energy, positive-frequencies, ground state, and momentum representation---see \cite{wald2009formulation} for a discussion on the difficulties that emerge when formulating quantum field theory on general curved spacetimes. On this account, and given that general spacetimes lack such structure, one can see static spacetimes as the simplest generalization of Minkowski spacetime.

\section{A relation between symmetries and curvature}
\label{sec: A relation between symmetries and curvature}

In the last section, it was shown that a static spacetime possesses one Killing vector field that yields a time-translation symmetry. With Minkowski spacetime in mind, as in Example \ref{eq: killing eqs mink 4D Cartesian}, we know that a general spacetime could be invariant under $n$ translations, and under $\frac{n(n-1)}{2}$ rotations. 
One may wonder what other continuous symmetries could an $n$-dimensional spacetime have. Yet, as it turns out, translations and rotations account for all possible continuous symmetries---an $n$-dimensional spacetime can have at most $n + \frac{n(n-1)}{2}=\frac{n(n+1)}{2}$ linearly independent Killing vector fields \cite[Pg.  378]{weinberg1972gravitation}. In this case, the spacetime is said to be \textit{maximally symmetric} 
and it necessarily has constant sectional curvature. Since some spacetimes considered in this thesis fall in this class, and all of them have sections of constant sectional curvature, here I discuss maximally symmetric spacetimes, constant sectional curvature and the connection between these concepts.

Let us start by defining sectional curvature. On a spacetime $\mathcal{M}$, let $R_{\mu\nu\alpha\beta}$ and $g_{\mu\nu}$ represent, respectively, the components of the Riemann and of the metric tensors in a given coordinate system. At each $p\in\mathcal{M}$, the \textit{sectional curvature} $s_p: T_p\mathcal{M}\times T_p\mathcal{M}\rightarrow \mathbb{R}$ takes two tangent vectors $\partial_{u}$ and $\partial_{v}$ at $p$ and returns the Gaussian curvature of the two-dimensional surface in $\mathcal{M} $ that has the plane $\sigma_p\subset T_p\mathcal{M} $ generated by $\partial_{u}$ and $\partial_{v}$ as its tangent plane at $p$ and that is obtained as the image of $\sigma_p$ under the exponential map at $p$. Equivalently,
\begin{equation}
  \label{eq:definition sectional curvature}
  s_p(\partial_{u},\partial_{v}):=\frac{R_{uvvu}}{g_{uu}g_{vv}-g_{uv}^2}.
\end{equation}
\begin{definition}[Constant sectional curvature] \label{def: constant sectional curvature} If the sectional curvature $s_p(\partial_{u},\partial_{v})$ is independent of $\partial_{u}$, $\partial_{v}$ and $p$, then we say $\mathcal{M}$ has constant sectional curvature.
\end{definition}
As shown in \cite[P.13, Ch.2]{weinberg1972gravitation}, on a maximally symmetric spacetime, the Riemann tensor, the Ricci tensor, and the Ricci scalar can be written, for a constant $s$, respectively as
\begin{subequations}
\label{eq: maximally symmetric curvature tensors}
\begin{gather}
  R_{\mu\nu\alpha\beta} = s\, (g_{\mu\alpha}g_{\beta\nu}-g_{\mu\beta}g_{\alpha\nu}),\label{eq: maximally symmetric curvature Riemann tensor}\\
  R_{\mu\nu} = s \, (n-1) g_{\mu\nu}, \\
  \mathbf{R} = s\, n(n-1).
\end{gather}
\end{subequations}
It is easy to see that, if the Riemann tensor is given by expression \eqref{eq: maximally symmetric curvature Riemann tensor}, then the sectional curvature \eqref{eq:definition sectional curvature} equals the constant $s$. On the other hand, as shown in \cite[Eq.(34.199)]{notasBarata}, the sectional curvature completely determines the Riemann tensor. As a consequence, if $s_p(\partial_{u},\partial_{v}) = s$ for all $p\in\mathcal{M}$, then the Riemann tensor, the Ricci tensor, and the Ricci scalar take the form of Equations \eqref{eq: maximally symmetric curvature tensors}. In summary, if a spacetime is maximally symmetric, then it has constant sectional curvature and expressions \eqref{eq: maximally symmetric curvature tensors} hold. Note, however, that the converse does not hold, i.e. a constant sectional curvature spacetime is not necessarily maximally symmetric.
\begin{example}[Maximally symmetric spacetime]
  \label{eg: maximally symmetric Mink} In Example \ref{eg: 10 Killing in Mink}, it was shown that the four-dimensional Minkowski spacetime possesses ten Killing vector fields with components that are linear functions of the Cartesian coordinates. Since a four-dimensional spacetime admits at most $\frac{4(4+1)}{2}=10$ Killing vector fields, we have not ``lost'' any continuous symmetry by assuming linearity and we conclude that the four-dimensional Minkowski spacetime is maximally symmetric. In addition, since the curvature tensors of \eqref{eq: maximally symmetric curvature tensors} vanish, its constant sectional curvature vanishes.
\end{example}
\begin{counterexample}[Non-maximally symmetric spacetime]
\label{ceg: Schwarzschild constant sectional curvature}
Consider Schwarzschild black hole: the static case ($a=0$) of Counter-Example \ref{ceg: kerr is not static}. Computing its curvature tensors, we find that both the Ricci tensor and the Ricci scalar vanish, while the Riemann tensor does not. That is, Equation \eqref{eq: maximally symmetric curvature tensors} is not satisfied, this spacetime is not of constant sectional curvature and hence, it is not maximally symmetric.
\end{counterexample}

A straightforward consequence of an $n$-dimensional spacetime having constant sectional curvature $s$ is that it solves Einstein field equations in vacuum with a cosmological constant $\Lambda$ determined by $s$ and by $n$. To verify such claim, consider a spacetime that satisfies the Einstein equations:
\begin{equation}
  \label{eq: einstein vacuum with lambda}
  R_{\mu\nu} + \left(\Lambda-\frac{R}{2}\right)g_{\mu\nu} = 0.
\end{equation}
 It is easy to see that Equation \eqref{eq: einstein vacuum with lambda} together with Equations \eqref{eq: maximally symmetric curvature tensors} hold true if and only if the cosmological constant is given by
\begin{equation}
  \label{eq: lambda constant sectional curvature}
  \Lambda = \frac{s (n-2)(n-1)}{2}.
\end{equation}
In other words, a spacetime of constant sectional curvature $s$ solves Einstein field equations in vacuum with a cosmological constant (under the sign conventions chosen above) of the same sign of $s$ and given by \eqref{eq: lambda constant sectional curvature}.

 \begin{remark}
   \label{rem: Killing-Hopf}
   The definitions of Killing vector field, maximally symmetric spacetime and of sectional curvature also apply if we consider the metric tensor to be Riemannian instead of Lorentzian. Within Riemannian geometry, the classification of maximally symmetric manifolds is given by the Killing-Hopf theorem \cite{hopf1926clifford,killing1891ueber}. It states that the universal covering $\widetilde{\Sigma}$ of a complete Riemannian manifold of constant sectional curvature (a space form) $\Sigma$ is isometric either to a sphere, an Euclidean space, or a hyperbolic space. In turn, if $\Gamma$ is a discrete group of isometries of $\widetilde{\Sigma}$ acting properly discontinuously, then $\Sigma$ is isometric to the quotient $\widetilde{\Sigma}/\Gamma$. This result is useful even within Lorentzian geometry since there are (Lorentzian) spacetimes with maximally symmetric Riemannian subspaces, as exemplified in Section \ref{sec: Examples chapter 1}. With this in mind, I include Example \ref{eg: Riemannian case}. In addition, it is worth mentioning that Lorentzian geometry admits an analogous classification: maximally symmetric spacetimes are isometric to either Minkowski, de Sitter, anti-de Sitter spacetimes or quotients/covers of them \cite[Ch.11]{jos1967spaces}. More details and references regarding the spacetimes mentioned in the examples above are given in the last section of this chapter, where I consider each of them separately.
 \end{remark}

\begin{example}[Riemannian case] \label{eg: Riemannian case} Consider the $2$-sphere endowed with a Riemannian metric such that
  $$ds^2=d\theta^2 + \sin^2\theta d\varphi^2.$$
In this case we can easily solve the Killing equation \eqref{eq: definition Killing vector field lie derivative = 0} for a general vector of the form $\xi_\mu = (\xi_\theta(\theta,\varphi),\xi_\varphi(\theta,\varphi))$, as shown in the notebook \cite{git_myPhD}. It follows that the $2$-sphere is maximally symmetric since there are $3$ Killing vector fields:
\vspace{.25cm}

\hspace{5cm}$ (\sin \varphi ,+ \sin \theta \cos \theta  \cos\varphi ) , $

\hspace{5cm}$ (\cos\varphi ,-\sin \theta  \cos \theta  \sin \varphi ),  $

\hspace{5cm}$ (0,\sin ^2\theta ). $
\end{example}

\section{On the geometric Hawking temperature}
\label{sec: On the geometric Hawking temperature}

General relativity goes without the notion of temperature. Nevertheless, in the presence of a non-degenerate bifurcate Killing horizon we can always define a ``Hawking temperature''. The interpretation of the Hawking temperature hinges on the scenario considered, but its character as a \textit{temperature} cannot be merely a geometrical, classical, consequence. On one hand, the no-hair theorem entails that black hole solutions of Einstein–Maxwell equations are completely determined by their mass, their charge and their spin \cite[Pg.876]{thorne2000gravitation}. On the other hand, thermal effects on black holes have been well-established within quantum field theory on curved spacetimes. Expressly, Hawking radiation corroborates black hole thermodynamics and the Hawking temperature, though geometrically defined, admits its full meaning at the interface between general relativity and quantum field theory \cite{hawking1975particle,hartle1976path,fredenhagen1990derivation,moretti2012state}. Throughout Chapter \ref{chap: Applications} of this thesis, I invoke such geometrical definition and its relation with thermal states on curved spacetimes, which are detailed in Sections \ref{sec: KMS states} and \ref{sec: Unruh, Hawking, anti-Unruh, anti-Hawking effects} of Chapter \ref{chap: Quantum Field Theory on Static Spacetimes}. With this in mind, in this section I define a bifurcate Killing horizon and I show how it yields a Hawking temperature.

Let $\xi$ be a Killing vector field with Killing parameter $t$ and generating a one-parameter group of isometries $G=\{U_t\}_{t\in\mathbb{R}}$ of a static spacetime. A \textit{Killing horizon generated by $\xi$} is a $G$-invariant null hypersurface where $\xi^2=0$. The union of at least two intersecting Killing horizons generated by $\xi$ is a \textit{bifurcate Killing horizon generated by $\xi$}. The intersection, called a \textit{bifurcation surface}, is a spacelike $(n-2)$-dimensional hypersurface where $\xi=0$, i.e. points in the intersection are fixed under the action of $G$. Not all Killing fields generate horizons, and not all Killing horizons can be deformed into a bifurcate Killing horizon. For details on these Killing properties, I refer to \cite{boyer1969geodesic, wald2010general, frolov2012black}.
\begin{example}[Bifurcate Killing horizon] \label{eg: bifurcate Killing horizon}
  Consider the Killing vector fields of the four-dimensional Minkowski spacetime computed in Example \ref{eg: 10 Killing in Mink}. The Killing vector field $\partial_t$ is timelike everywhere, but the Killing vector field of a Lorentz boost, say $\xi=x\partial_t-t\partial_x$, satisfies $\xi^2=t^2 - x^2$. Hence, $\xi$ is timelike for $x^2>t^2$, null at $x=\pm t$, and spacelike otherwise. The surface where $\xi^2=0$ is in fact a bifurcate Killing horizon given by the union of the horizons at $x=t$ and $x=-t$, as illustrated in Figure \ref{fig: bifurcate Killing in Minkowski}.
     \begin{figure}[H]
       \centering
        \includegraphics[width=.5\textwidth]{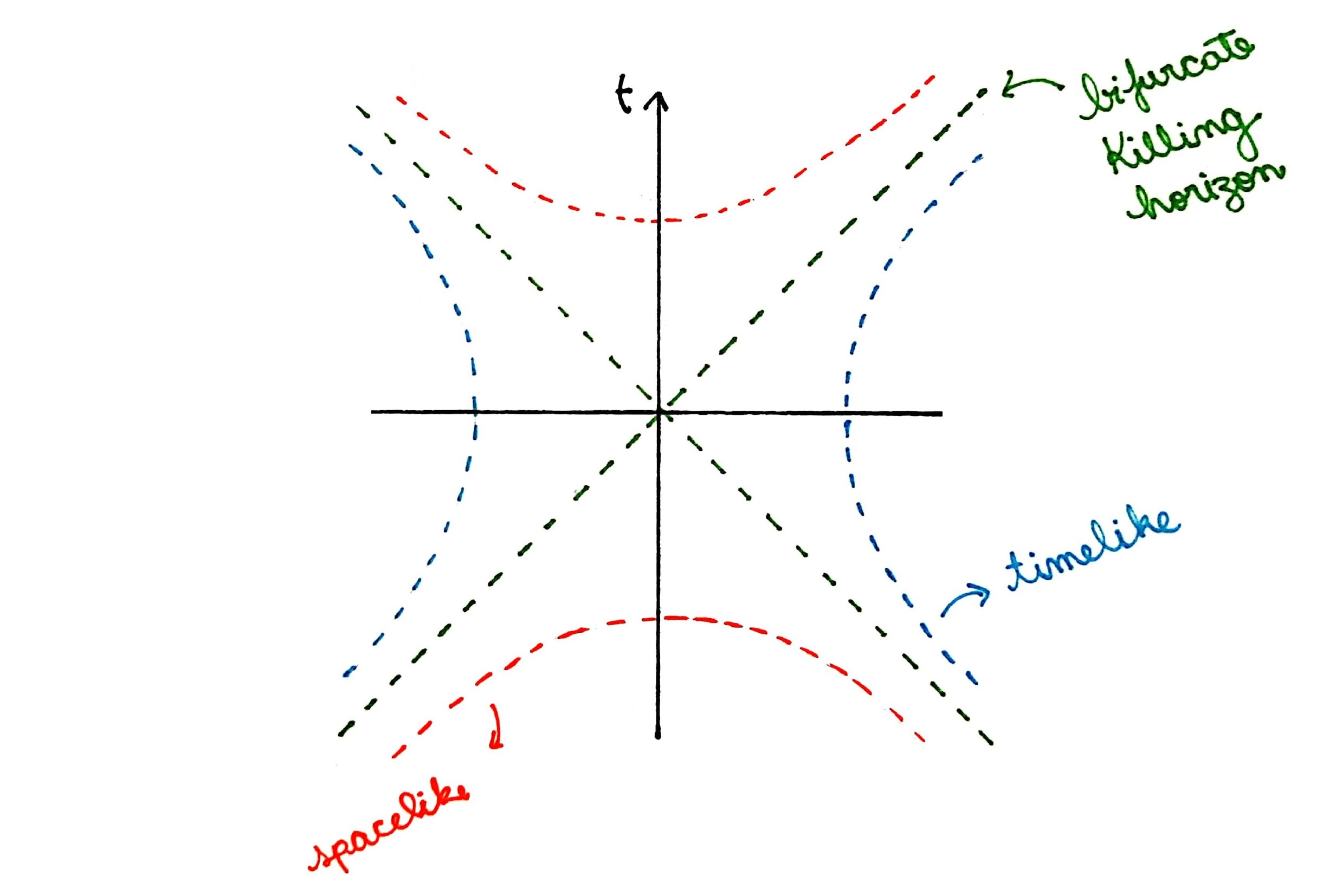}\hspace{.5cm}%
        \caption{The bifurcate Killing horizon generated by $\xi=x\partial_t-t\partial_x$ and its orbits on Minkowski.}
       \label{fig: bifurcate Killing in Minkowski}
         \end{figure}
         \vspace{-.5cm}
\end{example}
On a Killing horizon generated by $\xi$, consider a generator that is a null geodesic with affine parametrization $\lambda$ and tangent vector $\chi$. Since $\xi$ is null, we can write $\xi=\alpha\chi$ with $\alpha=\frac{\partial\lambda}{\partial t}$ such that $\lambda = e^{\kappa t}$ with $\kappa =\frac{\partial \ln \alpha}{\partial t}$ and
\begin{equation}
  \label{eq:surface gravity measure diff}
  \xi^\mu \frac{\partial}{\partial x^\mu} =\frac{\partial}{\partial t}= \kappa \lambda\frac{\partial}{\partial \lambda}.
\end{equation}
The quantity $\kappa$ is called \textit{surface gravity}. In general, it is constant along each null orbit, but not constant on the horizon. If $\kappa$ does not vanish on any orbit, we say the horizon is \textit{non-degenerate}. In addition, for practical computations, a useful form of expressing $\kappa$ is
\begin{equation}
  \label{eq: surface gravity def}
  \kappa = \sqrt{-\frac{1}{2} \nabla^\mu \xi^\nu\nabla_\mu \xi_\nu}.
\end{equation}
Comments on the interpretation of the surface gravity are in due order. By Equation \eqref{eq:surface gravity measure diff}, we can see the surface gravity as a measure of the difference between a Killing and an affine parametrization of the orbits of $\xi$ at the horizon. Note that if $\kappa\neq 0$, then it depends on the normalization of $\xi$. On a static asymptotically flat spacetime, $\kappa$ can also be written as the acceleration that an observer at asymptotic infinity would need to apply on an object at the horizon to keep it static. In this case, there is a preferred normalization: the one in which the Killing vector has unit norm at asymptotic infinity. However, this interpretation does not generalize to non-static or non-asymptotically flat scenarios and, in general, there is no preferred normalization. Notwithstanding, on a non-degenerate bifurcate Killing horizon, one can prove that the surface gravity is constant, not only along the orbits, but on the whole horizon. In analogy with thermodynamics, we associate surface gravity with the notion of temperature and we refer to this statement as ``the zero-th law''. In the following, I clarify such an association. For a discussion on the comments above, see e.g. \cite[Sec.6.3.3]{frolov2012black}.
\begin{definition}[Global Hawking temperature] \label{def: global hawking temperature}
  On a spacetime that has a non-degenerate bifurcate Killing horizon with surface gravity $\kappa_h$, the \textit{global Hawking temperature} is
    \begin{equation}
      \label{eq: def global hawking temperature}
    T_{gH} = \frac{\kappa_h}{2\pi}.
    \end{equation}
\end{definition}
In the seminal work of Hawking in the 70's \cite{hawking1974black}, he showed that black holes emit thermal radiation at $T_{gH}$. More precisely, the global Hawking temperature is the temperature of the radiation detected by an observer far away from a black hole, i.e. at future null infinity. However, a photon emitted by a black hole gets redshifted when traveling away from the horizon. To grasp the effect of the gravitational redshift, let us consider, as discussed in \cite[Sec.6.3]{wald2010general}, a photon propagating from spatial position $p_2$ to $p_1$ on a stationary black hole spacetime with timelike Killing vector field $\xi=\partial_t$ and such that $|g_{tt}||_{\mini{p_1}}=1$, as illustrated in Figure \ref{fig: redshift on stationary local T}.
\begin{figure}[H]
  \centering
  \includegraphics[width=.5\textwidth]{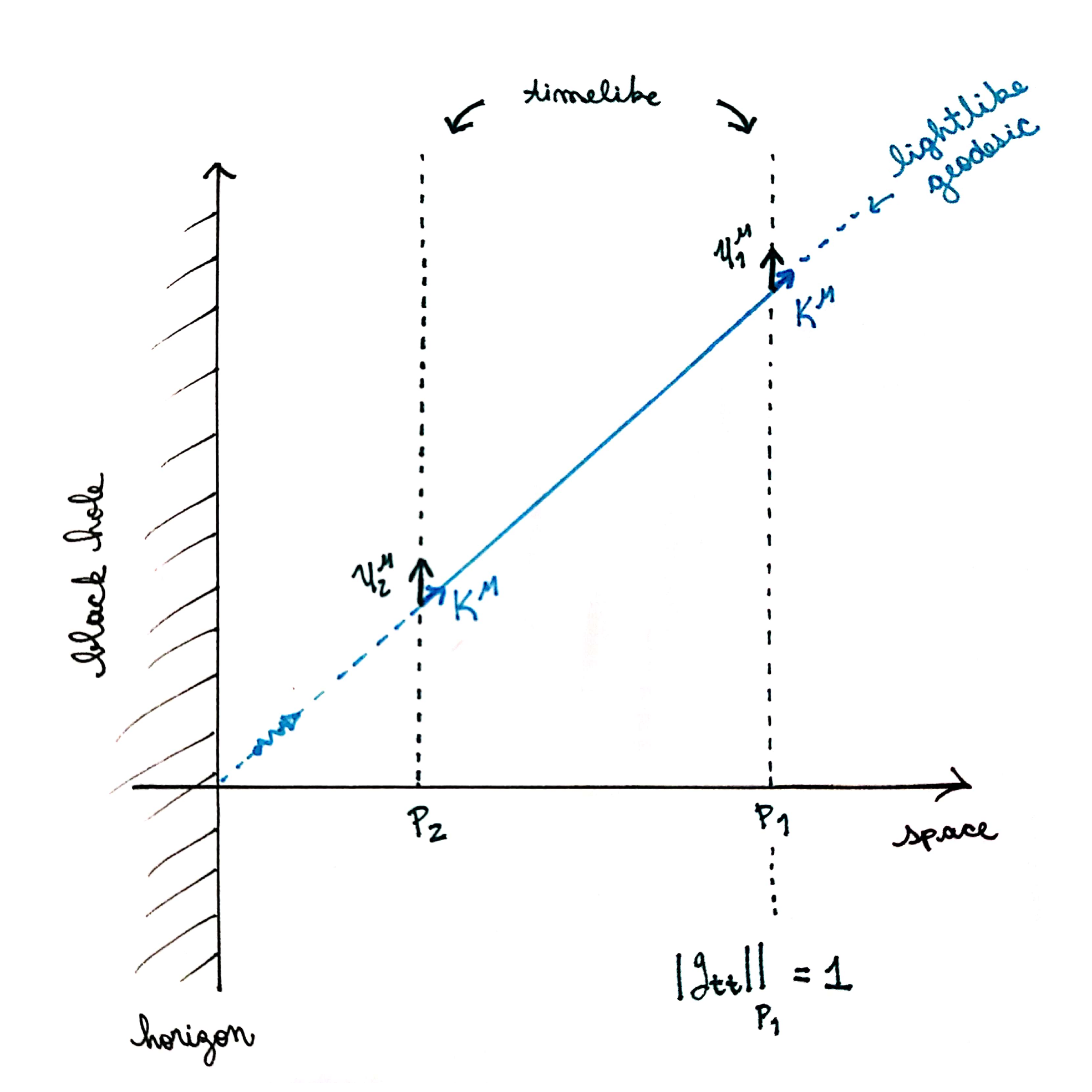}
   \caption{A photon outgoing from the horizon of a stationary black hole.}
  \label{fig: redshift on stationary local T}
  \end{figure}
\noindent Suppose there are two observers following timelike curves at spatial positions $p_1$ and $p_2$ with unit tangent vectors $u_1^{\mu}=\frac{\xi^{\mu}}{\sqrt{|g_{tt}|}}\Big|_{\mini{p_1}}$ and $u_2^{\mu}=\frac{\xi^{\mu}}{\sqrt{|g_{tt}|}}\Big|_{\mini{p_2}}$, respectively.
If $k^{\mu}$ is the tangent vector along the null geodesic followed by the photon, then the frequency measured by each observer is $\omega_j = -k_{\mu}u_j^{\mu}|_{p_j}$, for $j\in\{1,2\}$ \cite[Sec.4.2]{wald2010general}. Since the quantity $k_{\mu}\xi^{\mu}$ is conserved along the geodesic followed by the photon, as mentioned in the end of Section \ref{sec: The definition of static spacetimes}, it follows that
\begin{equation*}
    \omega_2 = - (k_{\mu}u_2^{\mu})|_{p_2} = - \frac{1}{\sqrt{|g_{tt}|}|_{p_2}}(k_{\mu}\xi^{\mu})|_{p_2} =
    - \frac{1}{\sqrt{|g_{tt}|}|_{p_2}}(k_{\mu}\xi^{\mu})|_{p_1} =   \frac{\sqrt{|g_{tt}|}|_{p_1}}{\sqrt{|g_{tt}|}|_{p_2}} \omega_1.
\end{equation*}
Imagine the observer at $p_1$ measures exact black body radiation at temperature $T_{1}$ and frequency $\omega_1$. The thermal spectrum, characterized by a Planckian distribution, depends on the temperature and the frequency by a factor $\frac{\omega_1}{T_{1}}$, which corresponds to
\begin{align}
  \label{eq: frequency redshifted}
  &\frac{\omega_1}{T_{1}} =\frac{\sqrt{|g_{tt}|}|_{\mini{p_2}}}{\sqrt{|g_{tt}|}|_{\mini{p_1}}}\frac{\omega_2}{ T_{1}}= \frac{      \sqrt{|g_{tt}|} |_{\mini{p_2} }  }{ T_{1}} \omega_2 = \frac{\omega_2}{T_{2}}, \quad \text{ for } \quad T_2 \equiv   \frac{T_{1}}{\sqrt{|g_{tt}|}|_{\mini{p_2}}}.
\end{align}
That is, the observer at $p_2$ measures a temperature $T_2$ that equals $T_1$ corrected by a redshift factor. The connection between the frequency of the photon, the redshift effect and the temperature of the radiation exemplified above calls for a position-dependent temperature. This can be taken into account consistently on stationary spacetimes, where Equation \eqref{eq: frequency redshifted} makes sense, and justifies defining a \textit{local} temperature, as follows.
\begin{definition}[(Local) Hawking temperature] \label{def: (Local) Hawking temperature} On a stationary spacetime with metric tensor $g$, time Killing parameter $t$ and associated bifurcate Killing horizon with surface gravity $\kappa_h$, the \textit{(local) Hawking temperature} is defined as the global Hawking temperature corrected by a redshift factor:
              \begin{equation}
                \label{eq: def local hawking temperature}
                T_H\doteq\frac{\kappa_h}{2\pi \sqrt{|g_{tt}|}}.
              \end{equation}
The local Hawking temperature is also known as \textit{Tolman temperature} \cite{tolman1930weight}.
\end{definition}
 Since the surface gravity depends on the normalization of the timelike Killing vector field, so does the Hawking temperature. On asymptotically flat spacetimes, the preferred choice of normalization of imposing that $\kappa_h\geq0$ and $\xi^2 \rightarrow -1$ at spatial infinity yields a finite Hawking temperature at asymptotic infinity. On an asymptotically AdS/Lifshitz spacetime, however, we have that the Hawking temperature vanishes at asymptotic infinity \cite{brown1994temperature}. In this case, the global Hawking temperature is measured, not at infinity, but where the redshift factor is $1$. The following example illustrates the difference between these two scenarios.
         \begin{example}[Asymptotically flat versus AdS] \label{eg: Schd and Schd-AdS temperatures}
         Consider a spacetime with line element
\begin{equation}\label{eq: woiejfjioiej383rjfjgbkd}
              ds^2=-\left(1 - \frac{2M}{r} + kr^2\right)dt^2+\left(1 - \frac{2M}{r}+ kr^2\right)^{-1}dr^2+r^2d\theta^2+r^2\sin^2\theta d\varphi^2,
\end{equation}
           for $M>0$, $r>r_h>0$, $\theta\in[0,\pi)$ and $\varphi\in[0,2\pi)$. The timelike Killing vector field $\xi=\partial_t$ yields a bifurcate Killing horizon at $r=r_h$. For $k=0$, this is Schwarzschild spacetime and we have $r_h=2M$, while if $k>0$, it coincides with Schwarzschild-AdS spacetime and $r_h$ is the unique real root of $f(r)$. Using Equations \eqref{eq: surface gravity def} and \eqref{eq: def global hawking temperature} (see Equation \eqref{eq: surface gravity schd-like}),

          \enlargethispage{2\baselineskip}
           $$\kappa_h = \left(\frac{M}{r^2}+ k r \right)\Big|_{r=r_h} \quad \text{and}\quad T_H = \frac{T_{gH}}{\sqrt{1 - \frac{2M}{r}+kr^2}}.$$\vspace{-.5cm}
           \begin{figure}[H]
             \centering
             \includegraphics[width=.38\textwidth]{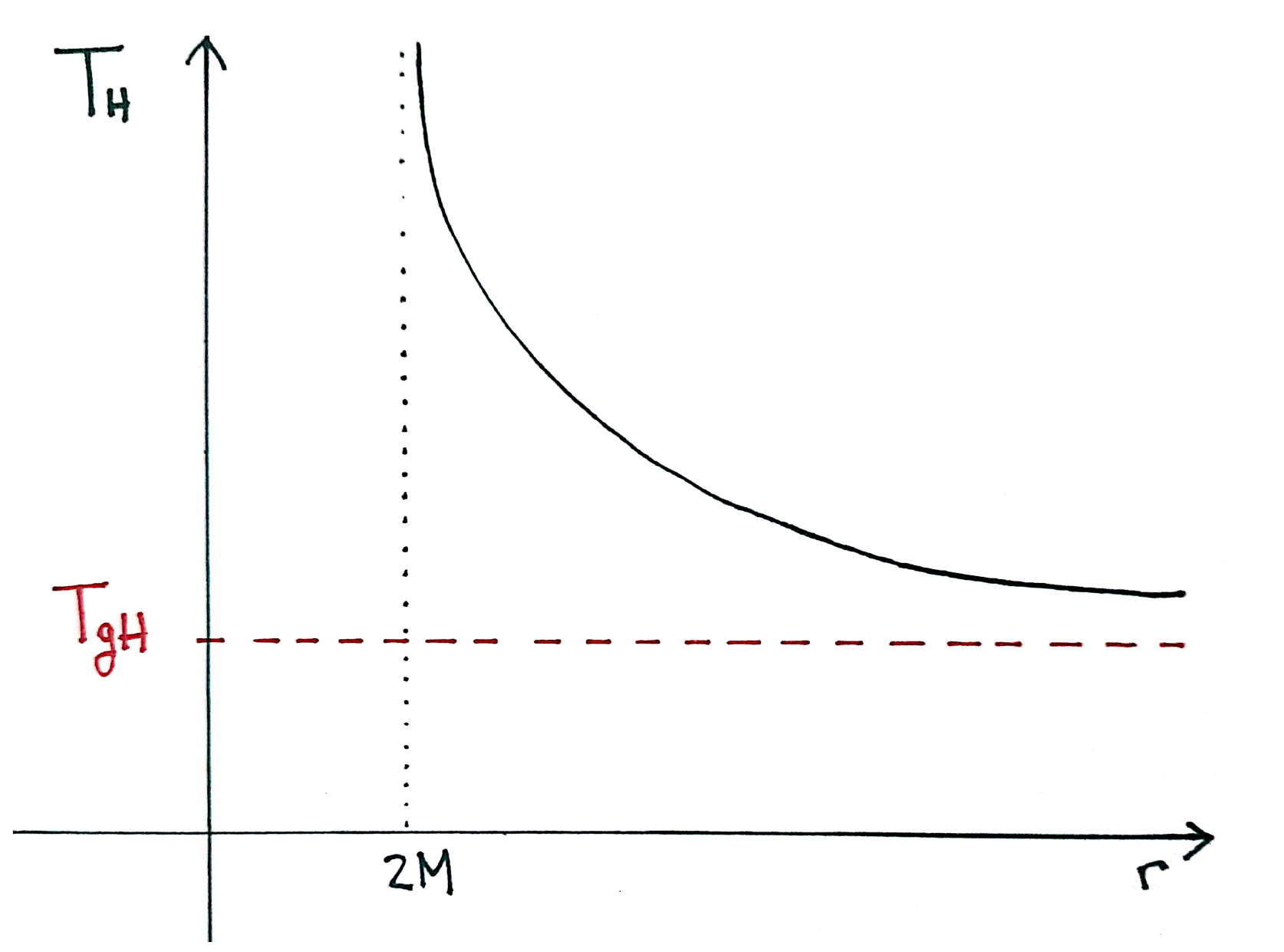}\hspace{.5cm}%
              \includegraphics[width=.38\textwidth]{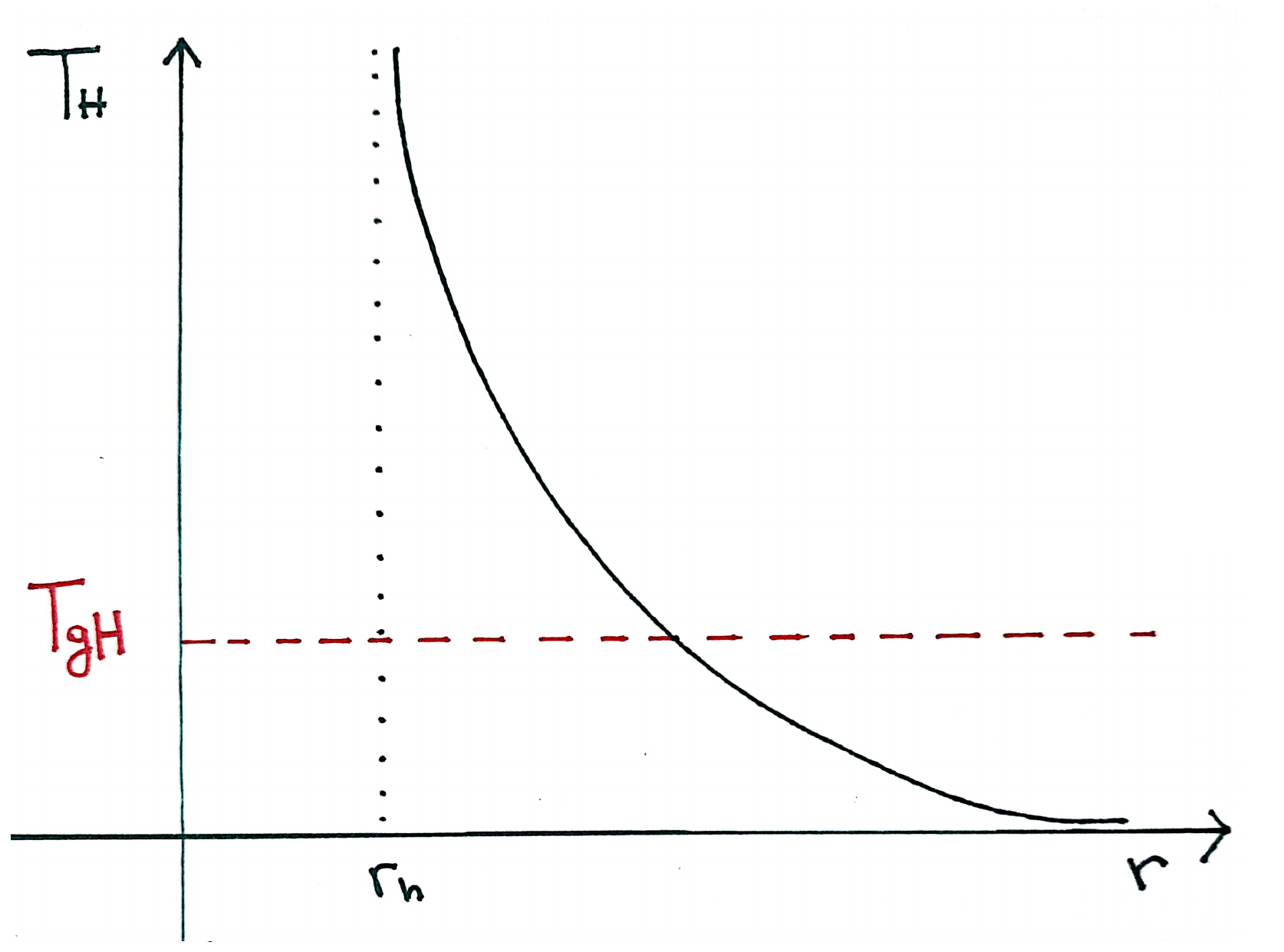}
              \caption{ On the left, on Schwarzschild spacetime; on the right, on Schwarzschild-AdS spacetime.}
               \label{fig: local temperature as function of r asymp flat and ads}
           \end{figure}\vspace{-.7cm}
            \end{example}

\section{A consequence of (non-)global hyperbolicity}
   \label{sec: A consequence of (non-)global hyperbolicity}

   Many problems in physics rely on solving differential equations. Suppose we want to solve a partial differential equation on a static spacetime $\mathcal{M}=\mathbb{R}\times\Sigma$ compatible with initial conditions on a time-slice $\{t_0\}\times\Sigma$. Under certain conditions we can guarantee a solution exists and is unique. In this section, I approach this problem by invoking the concept of global hyperbolicity and a well-known result regarding normally hyperbolic operators. This is particularly relevant for the discussion of the next chapter, where I consider the Klein-Gordon equation on static spacetimes.
   Let us start by defining a globally hyperbolic spacetime. To that end, the following definitions are necessary. On a time-oriented spacetime $\mathcal{M}$, the future Cauchy development $D^+(\mathcal{S})$ of a subset $\mathcal{S}\subset \mathcal{M}$ is the set of points $p\in\mathcal{M}$ for which every past-directed inextendible causal curve through $p$ intersects $\mathcal{S}$. The past Cauchy development $D^-(\mathcal{S})$ is defined analogously, and the \textit{Cauchy development} is given by the union $D(\mathcal{S}) := D^+(\mathcal{S})\cup D^-(\mathcal{S})$. Suppose $\mathcal{S}\subset \mathcal{M}$ is a closed subset for which no pair of points $p,q\in\mathcal{S}$ can be joined by a timelike curve (i.e. $\mathcal{S}$ is achronal). If $D(\mathcal{S})=\mathcal{M}$, then $\mathcal{S}$ is called a \textit{Cauchy surface}.

   \begin{definition}[Globally Hyperbolic Spacetime] \label{def: globally hyperbolic spacetime} A spacetime is \textit{globally hyperbolic} if it contains a Cauchy surface.
   \end{definition}
        An important feature of a globally hyperbolic spacetime is that it admits a foliation given by a one-parameter family of smooth Cauchy surfaces $\mathbb{R}\times\Sigma$ \cite[Thm.4.1.1]{WaldQFTCS}. In turn, there exists a coordinate system $(t,\underbar{x})$ such that its line element reads \cite[Thm.1.1]{bernal2005smoothness}
        \begin{equation*}
        \label{eq: metric globally hyperbolic spacetimes}
          ds^2=-f(t,\underbar{x})dt^2+h_{ij}(t,\underbar{x})d\underbar{x}^id\underbar{x}^j.
        \end{equation*}
        Note that there are two essential differences between this foliation and the one that static spacetimes admit, as per Equation \eqref{eq: metric static spacetimes} and Figure \ref{fig: foliation static spacetime}. First, the spacelike hypersurfaces of the foliation of a static spacetime are not necessarily Cauchy surfaces. Second, there might not be a timelike, irrotational Killing vector field backing the foliation of a globally hyperbolic spacetime. In brief, there are static spacetimes that are not globally hyperbolic, and there are globally hyperbolic spacetimes that are not static.

   \begin{example}[Static and globally hyperbolic] \label{eg: static and globally hyperbolic}
     Equal-time slices of the four-dimensional Minkowski spacetime, as in Example \ref{eg: 10 Killing in Mink}, are Cauchy surfaces. The Cauchy development of the surface $\{t=0\}$ is highlighted in Figure \ref{fig:domainofdependence}. In other words, Minkowski is a globally hyperbolic spacetime.
   \end{example}
   \begin{example}[Non-static and globally hyperbolic] \label{eg: non-static and globally hyperbolic} Consider a Kerr black hole, as in Example \ref{ceg: kerr is not static}. If $r_+$ is the largest real root of $\triangle(r)$, the exterior region $r>r_+$ is a non-static, stationary, globally hyperbolic spacetime.
   \end{example}
   \begin{counterexample}[Static and not globally hyperbolic]
     \label{eg: mink minus a point}
     Minkowski spacetime in spherical coordinates (Minkowski without the spatial origin $r=0$), whose line element in four dimensions is given by Equation \eqref{eq: woiejfjioiej383rjfjgbkd} with $k=M=0$, is not globally hyperbolic.
   \end{counterexample}
   \begin{counterexample}[Static and not globally hyperbolic]
     \label{ceg: timelike conformal boundary non-globally hyperbolic}
    If $\mathcal{M}$ is a static spacetime with a timelike conformal boundary, then a time slice $\{t_0\}\times\Sigma$ is not a Cauchy surface, as illustrated in Figure \ref{fig:domainofdependence}. In addition, it does not admit any Cauchy surface \cite[Pg.133]{hawking1973large}. Hence, a spacetime that is asymptotically AdS/Lifshitz is not globally hyperbolic.
     \end{counterexample}
          \begin{figure}[H]
            \centering
               \includegraphics[width=0.55\textwidth]{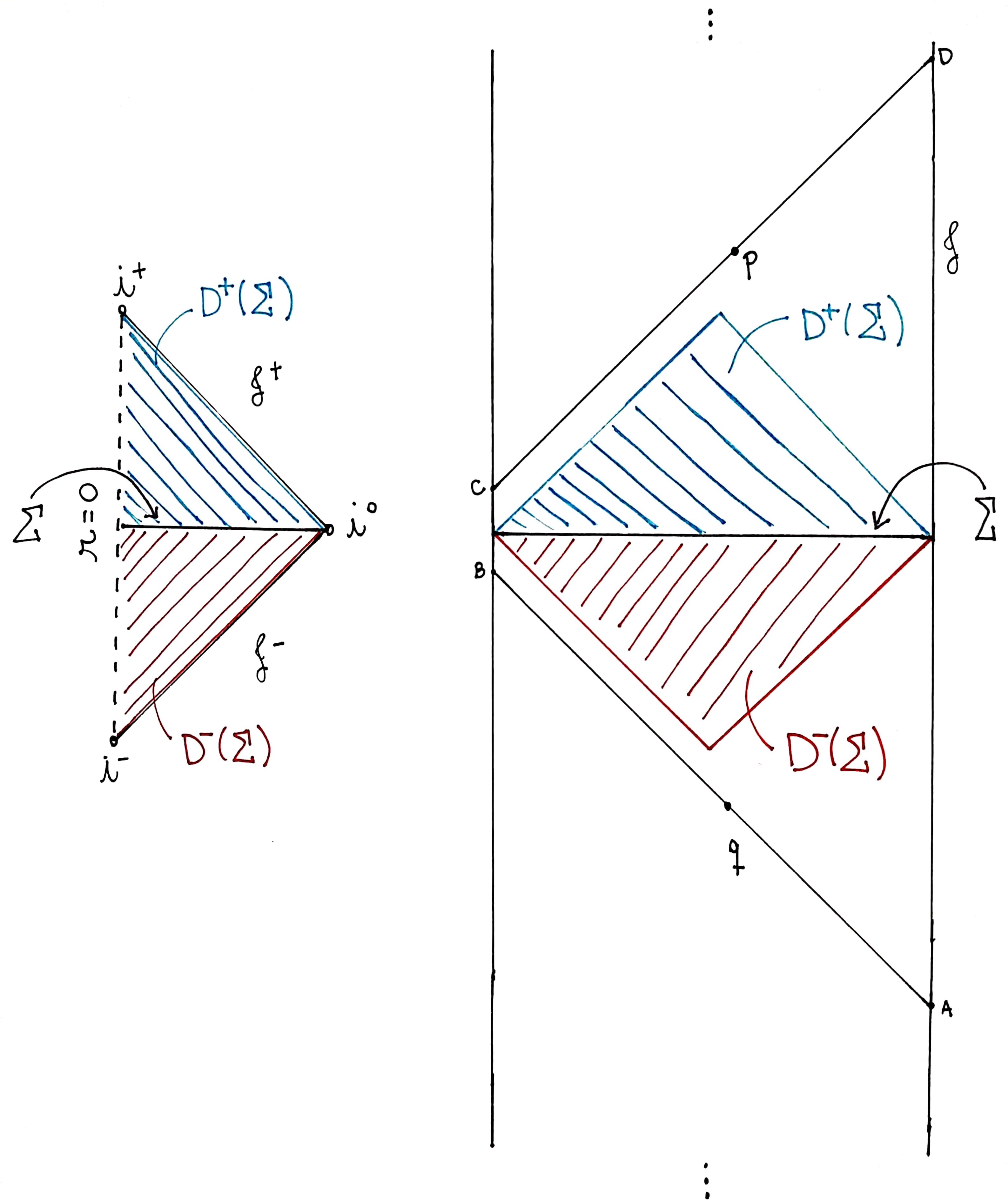}
               \caption{Domain of dependence of a constant time slice $\Sigma$. On the left, $D(\Sigma)$ is the whole Minkowski spacetime. On the right, null geodesics in AdS spacetime from $A$ to $B$ and from $C$ to $D$ do not intersect $\Sigma$ and the points $p$ and $q$ do not lie in $D(\Sigma)$.}
               \label{fig:domainofdependence}
         \end{figure}
Key results concerning equivalent characterizations of global hyperbolicity can be found in \cite{bernal2003smooth,bernal2007globally}, detailed discussions can be found in \cite{WELM,WaldQFTCS}, and explicit considerations relating the properties of being static and of being globally hyperbolic, in \cite[Ch.6]{fulling1989aspects}.

     The notion of global hyperbolicity is connected with the well-posedness of Cauchy problems. A \textit{Cauchy problem} consists of a partial differential equation on a smooth manifold together with Cauchy data given on a smooth submanifold. If a solution exists, is unique and has a continuous dependence on the Cauchy data, then we say the Cauchy problem is \textit{well-posed}. The problem set at the beginning of this section is in fact a concern on the well-posedness of a Cauchy problem, where the submanifold are time-slices and the Cauchy data constitute the set of initial conditions.

     The result connecting global hyperbolicity and Cauchy problems, relevant in the context of free quantum field theory on curved spacetimes, is the following. Let $P$ be a normally hyperbolic scalar operator on a spacetime $\mathcal{M}$ that admits a foliation $\mathbb{R}\times\Sigma$. Let $f\in C_0^\infty (\mathcal{M})$, and let $(t_0,\Psi_0,\dot\Psi_0)$ be the Cauchy data, i.e. $\Psi_0$, $\dot\Psi_0\in C_0^\infty( \{t_0\}\times\Sigma)$. The corresponding Cauchy problem,
     \begin{gather}
       \label{eq:cauchy problem example}
        P \Psi = f,  \quad  \Psi|_{\{t_0\}\times\Sigma} = \Psi_0   \quad \text{ and }\quad  \nabla_\xi\Psi|_{\{t_0\}\times\Sigma} = \dot\Psi_0,
     \end{gather}
     is well-posed if $\mathcal{M}$ is a globally hyperbolic spacetime \cite{WELM,WaldQFTCS,hawking1973large}.

     In the next chapter, I consider the Klein-Gordon equation on static spacetimes that may or may not be globally hyperbolic.  On the globally hyperbolic ones, well-posedness is guaranteed by the above result. On non-global hyperbolic spacetimes, on the other hand, well-posedness is not generally guaranteed. However, on static, non-globally hyperbolic spacetimes that are stably-causal---i.e. possessing a global time function, and consequently without closed timelike curves---
     we can still guarantee well-posedness of Equation \eqref{eq:cauchy problem example} by providing a boundary condition. In other words, the associated \textit{Cauchy-boundary value problem} is well-posed; see Section \ref{sec: Physically-sensible dynamics on static spacetimes}, where a summary of \cite{ishibashi2003dynamics} is given, or \cite[Thm.30]{dappiaggi2019fundamental}. 
\section{Examples}
\label{sec: Examples chapter 1}

As mentioned in the introduction of this chapter, all spacetimes considered in this thesis are static and have sections of constant sectional curvature. Beyond that, their shared degree of symmetry allows us to characterize each one of them using  Schwarzschild-like coordinates. In this section, I define such coordinates, I illustrate some of their properties and I gather examples of some relevant static spacetimes invoking the concepts discussed in this chapter.

\begin{definition}[Schwarzschild-like coordinates]
  \label{def: Schwarzschild-like coordinates}
  \pdfbookmark[subsection]{Schwarzschild-like coordinates}{Schwarzschild-like coordinates}
  Let $\mathcal{M}$ be a static spacetime of the form $\mathcal{M} \cong \mathbb{R}\times \text{I} \times \Sigma_{j}^{n-2}$, where $\text{I}\subset \mathbb{R}$ and $\Sigma_{j}^{n-2}$ are Cauchy-complete, connected, $(n-2)$-dimensional Riemannian hypersurfaces of constant sectional curvature $j$. Schwarzschild-like coordinates $(t,r,\theta,\varphi_1,...,\varphi_{n-3})$ is a coordinate system with respect to which the line element on $\mathcal{M}$ reads
  \begin{subequations}
    \label{eq:metric schwarzschild-like}
	\begin{equation}
		ds^2=-f(r)dt^2+h(r)dr^2+r^2d\Sigma_{j}^{n-2}.
	\end{equation}
  The Killing parameter $t\in\mathbb{R}$ is called the \textit{time coordinate}, $r\in\text{I}\subset\mathbb{R}$ is called the \textit{radial coordinate} and $\underline{\theta}=(\theta,\varphi_1,...,\varphi_{n-3})\in\Sigma_{j}^{n-2}$ are called \textit{angular coordinates}. If $n=4$, we shall denote $\varphi_1\equiv\varphi$. The functions $f(r)$ and $h(r)$ are continuous, strictly positive functions for $r\in\text{I}$. The line element $d\Sigma_{j}^{n-2}$ depends on $j$:
  \begin{equation}
    d\Sigma_{j}^{n-2} \doteq
    \begin{cases}
      d\Xi_{n-2}^2, \text{ the unit metric on the hyperbolic space,} &\text{ for }j<0;\\
      d\Pi_{n-2}^2, \text{ the flat metric,}  &\text{ for }j=0; \\
      d\Omega_{n-2}^2, \text{ the unit metric on the sphere,}  &\text{ for }j>0.
    \end{cases}
  \end{equation}
\end{subequations}
\end{definition}
In the following, let us discuss properties of a spacetime $\mathcal{M}$ that admits, globally, Schwarz\-schild-like coordinates having in mind the concepts introduced in this chapter. First, $\mathcal{M}$ is a static spacetime, as characterized in Section \ref{sec: The definition of static spacetimes}. Second, the sections $\Sigma_{j}^{n-2}$ of constant time and radius are maximally symmetric, as described in Section \ref{sec: A relation between symmetries and curvature}, yet $\mathcal{M}$ itself might not be. In particular, the two properties just mentioned imply that $\mathcal{M}$ has at least $1+\frac{(n-2)(n-1)}{2}$ Killing vector fields.
Also, regarding the notion of global hyperbolicity, as seen in Section \ref{sec: A consequence of (non-)global hyperbolicity} a static spacetime may or may not satisfy it.

Definition \ref{def: Schwarzschild-like coordinates} constitutes a generalization of the Schwarzschild solution, as in Example \ref{eg: Schd and Schd-AdS temperatures}, with two distinctions. First, $f(r)$ is not necessarily equal to $h(r)^{-1}$ and both are arbitrary (continuous, strictly positive) functions. Second, constant time and radius slices are not necessarily spheres. The spacetimes whose associated line element is of the form of Equation \eqref{eq:metric schwarzschild-like} consist instead of ``nested space forms'' $\Sigma_{j}^{n-2}$. Note that, Remark \ref{rem: Killing-Hopf} implies that there are different topological choices for $\Sigma_{j}^{n-2}$. This fact has an interesting consequence in Black Hole Physics as it implies that we can choose different topologies for black hole horizons, see e.g. \cite[Pg.5]{brill1997thermodynamics}
or Section \ref{ex: chapter 1 toplogical black holes}.

  As shown in Section \ref{sec: On the geometric Hawking temperature}, when a Killing vector field generates a non-degenerate bifurcate Killing horizon, the surface gravity is constant and the Hawking temperature is well-defined. Suppose that, in Schwarzschild-like coordinates, $\partial_t$ generates such a horizon at $r=r_h$. The timelike Killing vector is $\xi^{\mu} = (1, 0, 0, 0)$, thus $\xi_{\mu}=(-f(r), 0, 0, 0)$, and the only non-vanishing components of $\nabla^{t}\xi^{r}$ and $\nabla_{t}\xi_{r}$ are
  $$\nabla^{t}\xi^{r} = -\frac{f'(r)}{2 f(r) h(r)} = - \nabla^{r}\xi^{t} \quad \text{ and }\quad
    \nabla_{t}\xi_{r} = + \frac{f'(r)}{2} = - \nabla_{r}\xi_{t}.
  $$
  In view of Equations \eqref{eq: surface gravity def} and \eqref{eq: def local hawking temperature}, it follows that
   \begin{equation}
     \label{eq: surface gravity schd-like}
     \kappa_h=\frac{|f'(r)|}{2 \sqrt{f(r) h(r)}}\Bigg|_{r=r_h} \quad \text{ and }\quad  T_H=\frac{\kappa_h}{2\pi\sqrt{ f(r)}}.
   \end{equation}
   Moreover, in the case $h(r)=f(r)^{-1}$, the surface gravity simplifies to $\kappa_h=\frac{|f'(r)|}{2}\Big|_{r=r_h}$.

In the following sections I highlight the main features of some static spacetimes that admit Schwarz\-schild-like coordinates \eqref{eq:metric schwarzschild-like}. Most examples considered here solve Einstein equations in vacuum (but not all!). Therefore, before getting to them, let us check what restrictions Equation \eqref{eq: einstein vacuum with lambda}, impose on the functions $f(r)$ and $h(r)$ as in Equation \eqref{eq:metric schwarzschild-like}. For simplicity, let us consider only the cases $n=4$ and $n=3$, in this order. Defining the auxiliary function
\begin{align*}
  \label{eq: function J n=4}
    &J(\theta) := \begin{cases}
              \sinh(\theta),& j=-1,\\
              \theta,       & j=0,\\
              \sin(\theta), & j=+1,
            \end{cases}
\end{align*}
the line element \eqref{eq:metric schwarzschild-like} for $n=4$ can be written as
\begin{equation*}
  ds^2=-f(r)dt^2+h(r)dr^2+r^2d\theta^2 + r^2J(\theta)^2d\varphi^2.
\end{equation*}
Then, Einstein field equations \eqref{eq: einstein vacuum with lambda} reduce to
\begin{align*}
  &  r h'(r)-h(r) + \left(j-\Lambda  r^2\right)h(r)^2=0, \\
  & r f'(r)+f(r)  -\left(j-\Lambda  r^2\right)h(r)f(r)=0.
\end{align*}
These are satisfied if and only if
\begin{subequations}
\label{eq: wpioerfjwopi0pwr98}
\begin{align}
  &f(r) = \left( j  +  \frac{c}{r} - \frac{\Lambda}{3} r^2 \right), \\
  &h(r) = f(r)^{-1},
\end{align}
\end{subequations}
where $c$ is an arbitrary integration constant. Note that $r=0$ is a curvature singularity if and only if $c\neq 0$, since the corresponding Kretschmann scalar is given by
\begin{equation}
  \label{eq: Kretschmann 4D}
  \mathbf{K}_{4D} = \frac{12 c^2}{r^6}+\frac{8 \Lambda ^2}{3}.
\end{equation}
In addition, if we consider $\Lambda=0$ and spherical sections $\Sigma_{+1,2}$, we recover the Schwarzschild solution. As it happens, Birkhoff's theorem states that any spherically symmetric asymp\-to\-tic\-ally-flat solution of Einstein field equations in vacuum with $\Lambda=0$ must be static, i.e. it must be a Schwarzschild solution (with some real-valued mass) \cite{eisenstaedt1982histoire}.

Performing the same computation as described above in the three-dimensional case,
  \begin{equation*}
    ds^2=-f(r)dt^2+h(r)dr^2+r^2d\theta^2,
  \end{equation*}
yields \vspace{-.5cm}
\begin{subequations}
\label{eq: wpioerfjwopi0pwr9ddd833}
\begin{align}
  &f(r) = \left( c  - \Lambda r^2 \right), \\
  &h(r) = f(r)^{-1}.
\end{align}
\end{subequations}
What is interesting to note in this case is that the Kretschmann scalar is always a constant,
\begin{equation}
  \label{eq: Kretschmann 3D}
  \mathbf{K}_{3D} = 12 \Lambda ^2,
\end{equation}
which means that there are no curvature singularities on three-dimensional spacetimes covered by Schwarzschild-like chart coordinates within general relativity with vanishing stress-energy tensor. As a matter of fact, this simple example alludes to peculiar features of three-dimensional gravity. Namely, one can show that the Riemman tensor is completely determined by the local content of ``matter plus $\Lambda$'', and that the Weyl tensor vanishes identically. To read about the significance of these features, see e.g. \cite{collas1977general,giddings1984einstein}.

Next in order, we consider $n$-dimensional spacetimes endowed with Schwarzschild-like coordinates. For the ones that solve Einstein's equations, we have that Equations \eqref{eq: wpioerfjwopi0pwr98} and \eqref{eq: Kretschmann 4D} hold for $n=4$, while Equations \eqref{eq: wpioerfjwopi0pwr9ddd833} and \eqref{eq: Kretschmann 3D} hold for $n=3$. One should keep in mind, however, that Schwarzschild-like coordinates do not necessarily cover the maximal analytic extension of the examples given. Also, taking into account the discussions above, the following sections regarding particular spacetimes consist rather of a compactified set of results. Elaborate discussions on exact solutions of Einstein equations, with proofs and beautiful illustrations, can be found in \cite{hawking1973large, wald2010general,morris2012exact,frolov2012black}. In addition, I have summarized the curvature data, such as the Riemann tensor, the Ricci tensor, the Ricci scalar, the Kretschmann scalar and the Christoffel symbols of these examples in a Mathematica notebook available at \cite{git_myPhD}.

\subsection{Minkowski}
\label{ex: Chapter 1 Minkowski}
The $n$-dimensional Minkowski spacetime is a static spacetime that solves Einstein field equations in vacuum with vanishing cosmological constant. Its line element can be written in Cartesian coordinates as
\begin{equation*}
  \label{eq: metric mink nD Cartesian}
  ds^2=-dt^2+dx_1^2+ ... + dx_{n-1}^2,
\end{equation*}
with $(t,x_1,...,x_{n-1})\in\mathbb{R}^n$. It is maximally symmetric, admitting $\frac{n(n+1)}{2}$ Killing vector fields. Amongst these, $\partial_t$ is the global, non-vanishing, timelike, irrotational Killing vector field that yields the property of being static. Note that $\partial_t$ does not generate horizons, but Lorentz boosts generate degenerate bifurcate Killing horizons, as shown in Figure \ref{fig: bifurcate Killing in Minkowski}. Although there is not a preferred normalization for the surface gravity associated to boosts, by choosing a particular uniformly accelerated worldline, Equation \eqref{eq: def global hawking temperature} corresponds to the Unruh temperature. Moreover, Minkowski is a globally hyperbolic spacetime and Cauchy surfaces are highlighted in its conformal diagram in Figure \ref{fig: Minkowski conformal diagram}.

One interesting fact is that Minkowski spacetime does not admit \textit{global} Schwarzschild-like coordinates. In Schwarzschild-like coordinates, which are simply spherical coordinates, its line element reads
\begin{equation}
\label{eq:Minkowski metric spherical}
  ds^2=-dt^2+dr^2+r^2d\Sigma_{1,n-2}^2,
\end{equation}
where $t\in\mathbb{R}$, $r>0$, $\theta\in[0,2\pi)$, $\varphi_1,...,\varphi_{n-3}\in[0,\pi)$. In fact, it covers Minkowski spacetime minus the origin $r=0$. The spacetime covered by Equation \eqref{eq:Minkowski metric spherical} is not globally hyperbolic. Yet, depending on the analysis performed, we can use spherical coordinates to characterize Minkowski spacetime if we can fix, by hand, the ambiguity at $r=0$, as shown in Section \ref{subsec: On Minkowski spacetime without a plate}. The hypersurfaces $\Sigma_{1,n-2}$ are $(n-2)$-dimensional spheres with positive, constant sectional curvature. The Christoffel symbols are non-vanishing, but its curvature tensors and its sectional curvature vanish.

\begin{figure}[H]
  \centering
    \centering
     \includegraphics[width=0.6\textwidth]{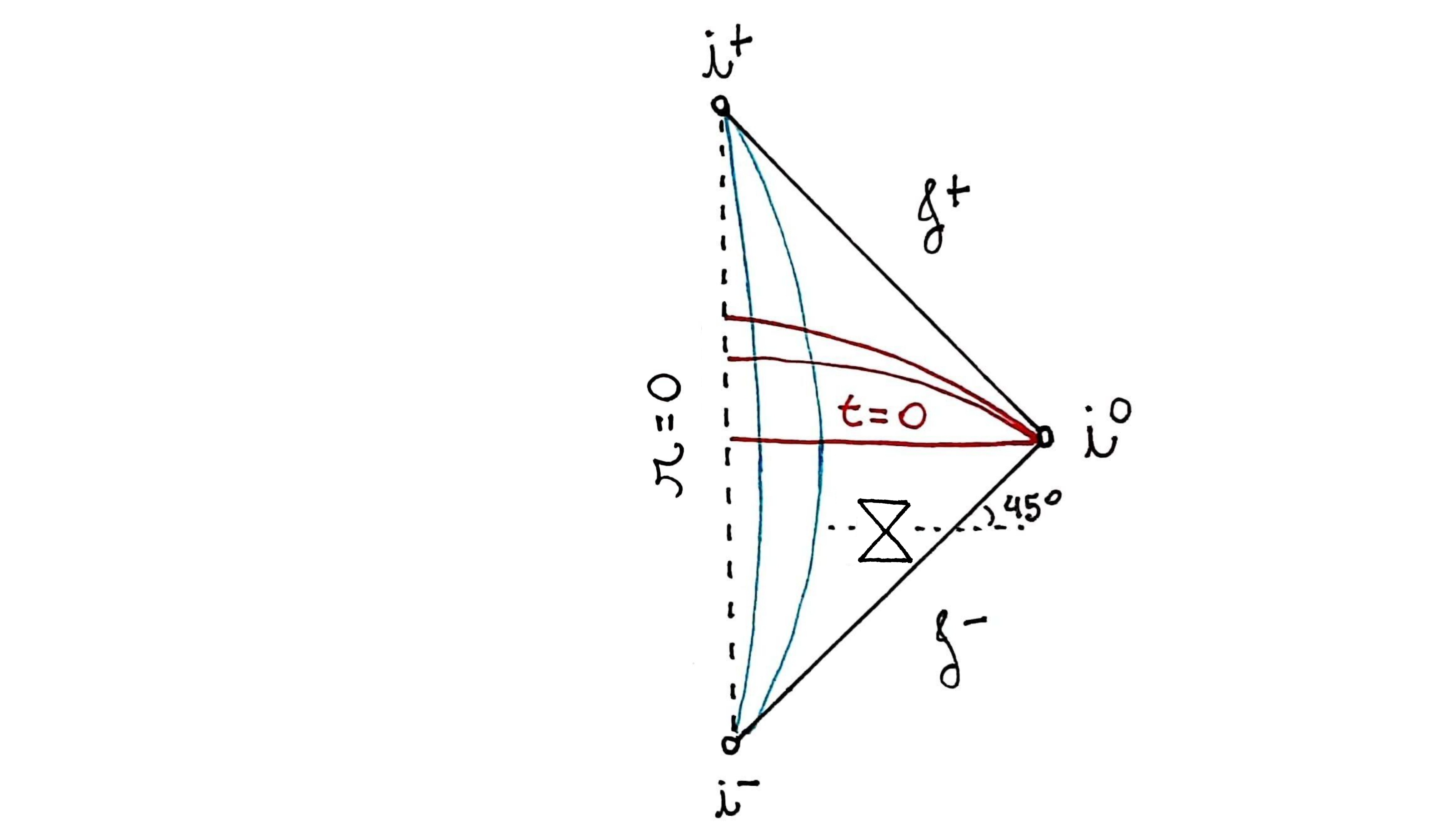}
     \caption{Conformal diagram of Minkowski spacetime. The blue lines connecting $i^-$ to $i^+$ are timelike, and the red lines connecting $r=0$ with spatial infinity are Cauchy surfaces. Lightlike radial curves are straight lines with a slope of $45^{\circ}$ degrees with respect to the $r=0$ line.}
     \label{fig: Minkowski conformal diagram}
\end{figure}

\subsection{de Sitter}
\label{ex: Chapter 1 de Sitter}

De Sitter (dS) spacetime is a maximally symmetric solution of Einstein field equations in vacuum with positive cosmological constant $\Lambda = \frac{(n-2)(n-1)}{2 L^2}$. The quantity $L>0$ determines a length scale and is called \textit{de Sitter radius}. In Schwarzschild-like coordinates, its line element reads
\begin{equation}
\label{eq:de Sitter metric static}
  ds^2=- \left(1 - \frac{r^2}{L^2}\right) dt^2 + \left(1 - \frac{r^2}{L^2} \right) ^{-1}dr^2+r^2d\Sigma_{+1,n-2}^2,
\end{equation}
where $t\in\mathbb{R}$, $r\in[0,L)$, $\theta\in[0,2\pi)$, and $\varphi_1,...,\varphi_{n-3}\in[0,\pi)$. Note that $r=0$ is not a curvature singularity, since $\mathbf{K}_{dS}\propto \Lambda$ is constant. 
These coordinates cover only a part of the maximal analytic extension shown in Figure \ref{fig: dS conformal diagram}. Also, the timelike Killing vector $\partial_t$ generates a bifurcate Killing horizon at $r=L$. Using Equation \eqref{eq: surface gravity schd-like},
we obtain $\kappa_h = \frac{1}{L} $, which in turn yields a well-defined Hawking temperature, and a geometric Hawking temperature of $T_{gH} = \frac{1}{2\pi L}$,
see e.g. \cite[Sec.5.4]{birrell1984quantum}.

\begin{figure}[H]
  \centering
  \includegraphics[width=0.6\textwidth]{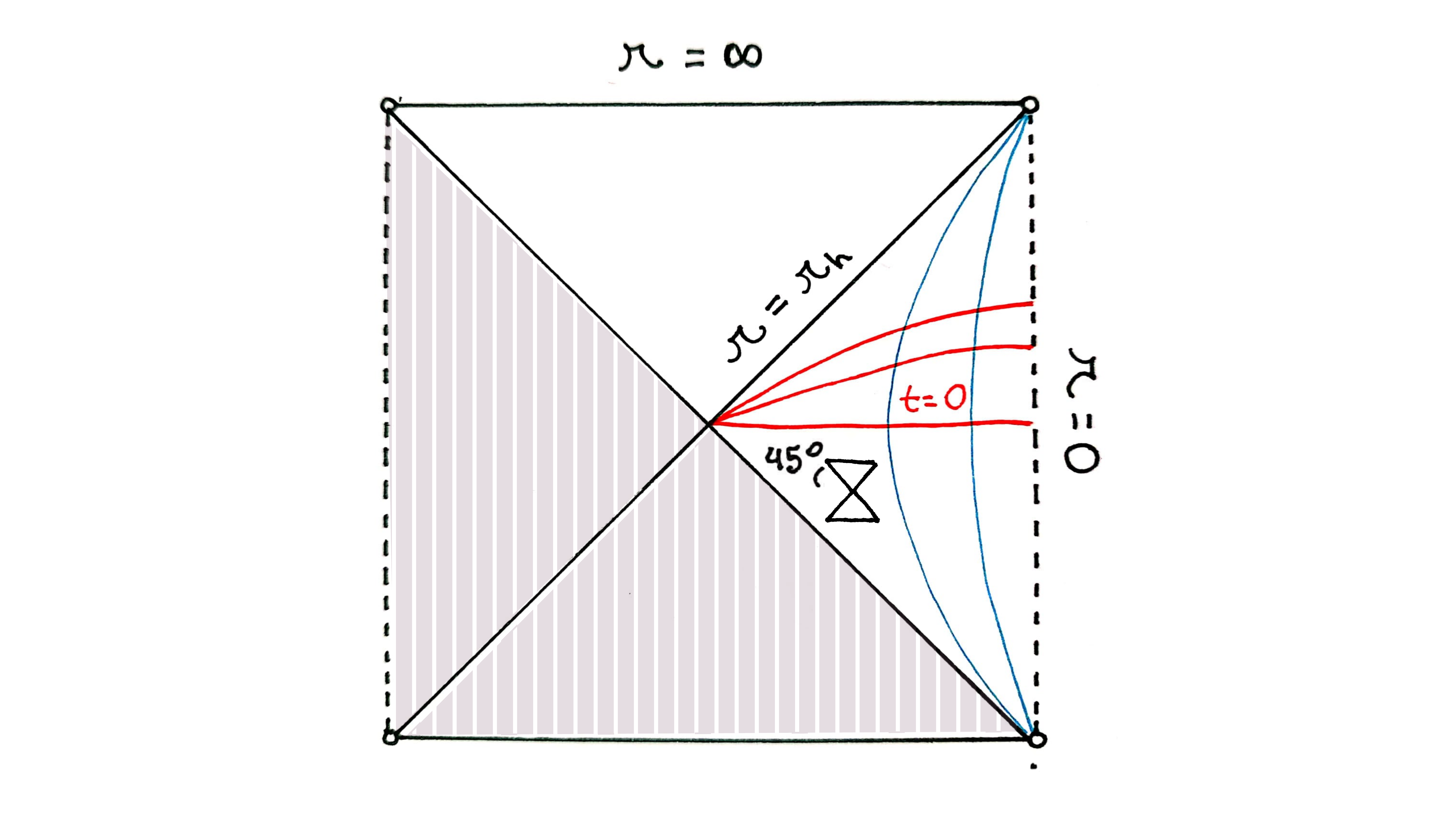}
    \caption{Maximal analytical extension of de Sitter spacetime. The region $r\in[0,L)$ is covered by Equation \eqref{eq:de Sitter metric static}.}
     \label{fig: dS conformal diagram}
\end{figure}

This spacetime is known as a \textit{cosmological solution} of Einstein equations and the hypersurface $r=L$ is called \textit{cosmological horizon}. This nomenclature makes reference to the observable universe and its limit of tangibility. For a discussion on the connection between the de Sitter solution and our universe, the reader can check \cite[Pg.124]{hawking1973large} or \cite{mohajan2017brief}. In addition, an elaborate exposition on charts and geodesics on de Sitter spacetime, as well as on the compactification procedure to obtain its Penrose diagram and on how to define it as a hyperboloid in a higher-dimensional Minkowski spacetime can be found in \cite{kim2002classical}.

\subsection{Anti-de Sitter}
\label{ex: Chapter 1 Anti-de Sitter}

The anti-de Sitter (AdS) spacetime is a static, maximally symmetric solution of Einstein field equations in vacuum with negative cosmological constant $\Lambda = -\frac{(n-2)(n-1)}{2 L^2}$. Analogously to de Sitter, the quantity $L>0$ determines a length scale and is called \textit{anti-de Sitter radius}. In Schwarzschild-like coordinates, its line element reads
\enlargethispage{2\baselineskip}
\begin{equation*}
\label{eq:anti de Sitter metric}
  ds^2=- \left(1 + \frac{r^2}{L^2}\right) dt^2 + \left(1 + \frac{r^2}{L^2} \right) ^{-1}dr^2+r^2d\Sigma_{+1,n-2}^2,
\end{equation*}
where $t\in\mathbb{R}$, $r\geq 0$, $\theta\in[0,2\pi)$, and $\varphi_1,...,\varphi_{n-3}\in[0,\pi)$. In contrast with the de Sitter case, $\partial_t$ does not generate a bifurcate Killing horizon. Be that as it may, similarly to the horizon generated by boosts in Minkowski spacetime, uniformly accelerated observers in the Rindler-AdS wedge see a horizon at $r=L$. In Section \ref{sec: On a static BTZ spacetime and on Rindler-AdS3}, we consider the latter and we see that indeed supercritically accelerated observers in AdS measure a temperature \cite{deser1997accelerated,deser1999mapping}. Moreover, AdS is a non-globally hyperbolic spacetime and its causal structure is illustrated in Figure \ref{fig: AdS and patches conformal diagram}. For more details, check \cite[Pg.131]{hawking1973large} or \cite{gibbons2000anti,sokolowski2016bizarre}.
\begin{figure}[H]
  \centering
    \includegraphics[width=.25\textwidth]{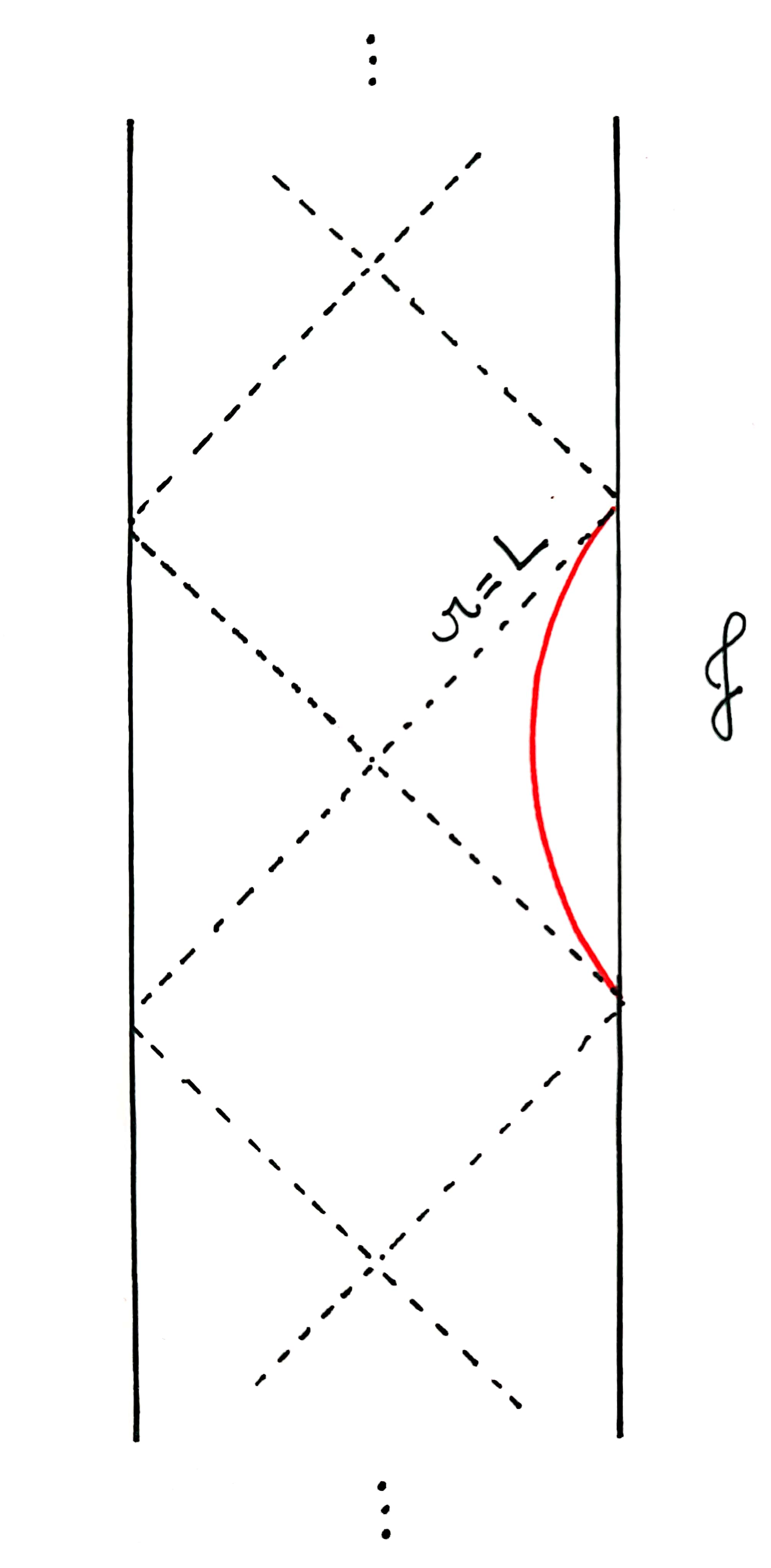}
    \caption{AdS spacetime. The red trajectory represents a supercritically accelerated observer. }
     \label{fig: AdS and patches conformal diagram}
\end{figure}

\subsection{Schwarzschild}
\label{ex: Chapter 1 Schwarzschild}

Schwarzschild spacetime is a static, globally hyperbolic solution of Einstein field equations in vacuum with vanishing cosmological constant. Its line element in Schwarzschild(-like) coordinates reads
\begin{equation*}
\label{eq:Schwarzschild metric}
  ds^2=- \left(1- \frac{2M}{r^{n-3}}\right) dt^2 + \left(1- \frac{2M}{r^{n-3}}\right) ^{-1}dr^2+r^2d\Sigma_{+1,n-2}^2,
\end{equation*}
where $t\in\mathbb{R}$, $r>(2M)^{-(n-3)}$, $\theta\in[0,2\pi)$, $\varphi_1,...,\varphi_{n-3}\in[0,\pi)$. The sections of constant time and radius are $(n-2)$-dimensional spheres, which are maximally symmetric. Hence, beyond time-translational symmetry, it admits $\frac{(n-2)(n-1)}{2}$ Killing vector fields associated with rotations. If $n>3$, then the global, non-vanishing, timelike, irrotational Killing vector field $\partial_t$ generates a non-degenerate bifurcate Killing horizon at $r=(2M)^{-(n-3)}$, with surface gravity
\begin{equation*}
  \kappa_h = (n-3) (2M)^{(n-3)(n-2)}
\end{equation*}
and corresponding local Hawking temperature of $T_H = \frac{\kappa_h}{ 2\pi\sqrt{f(r)} }$. In addition, it is interesting to note that its Kretschmann scalar diverges at $r=0$ but that its Ricci scalar vanishes everywhere. In the four-dimensional case, Equation \eqref{eq: Kretschmann 4D} holds.

The Kruskal-Szekeres coordinates provide its maximal analytical extension \cite[Pg.149]{hawking1973large}, whose conformal diagram is given in Figure \ref{fig: Schd conformal diagram}. Note that Schwarzschild-like coordinates cover only the exterior region given by $r>r_h$.

\begin{figure}[H]
  \centering
    \includegraphics[width=0.6\textwidth]{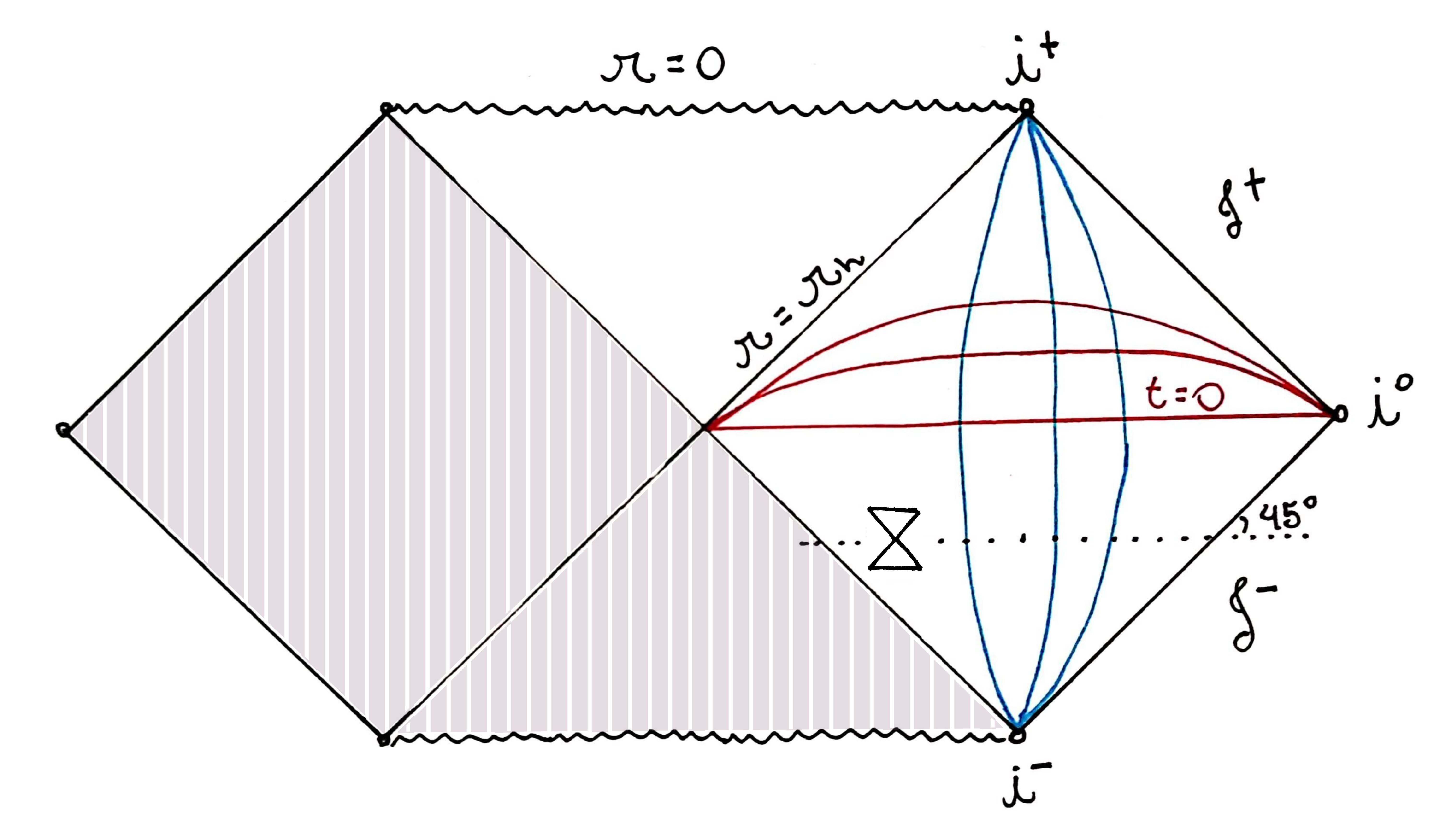}
    \caption{Conformal diagram of the maximal analytic extension of Schwarzschild spacetime. The non-hashed part corresponds to the black hole region (upper triangle) and to the exterior region (diamond). The blue lines connecting $i^-$ to $i^+$ are timelike, and the red lines connecting $r=0$ with spatial infinity are Cauchy surfaces. Lightlike radial curves are straight lines with a slope of $45^{\circ}$ degrees with respect to the $r=0$ line.}
     \label{fig: Schd conformal diagram}
\end{figure}

\subsection{BTZ}
\label{ex: Chapter 1 BTZ}

Gravity in three dimensions is manifestly odd. The absence of curvature singularities as indicated by Equation \eqref{eq: Kretschmann 3D} suggests that there are no ``black holes'' in three dimensions. However, another suitable way to characterize black holes involves, instead of the presence of singularities, the presence of causally disconnected regions in the spacetime \cite[Pg.300]{wald2010general}. In this sense, as shown in \cite{banados1992black,banados1993geometry}, black holes do exist in three dimensions. The Ba\~{n}ados-Teitelboim-Zanelli (BTZ) spacetime is characterized by the line element
\begin{equation}
\label{eq: BTZ metric}
  ds^2= - \left(\frac{r^2}{L^2} - M\right) dt^2 -J dtd\theta+  \left(\frac{r^2}{L^2} - M + \frac{J^2}{4r^2}\right)^{-1}dr^2+r^2d\theta^2,
\end{equation}
where $t\in\mathbb{R}$, $r>0$, $\theta\in[0,2\pi)$. $M$ and $J$ are called, respectively, mass and angular momentum. This spacetime solves Einstein field equations in vacuum with a cosmological constant of $\Lambda=-\frac{1}{L^2}$. It has constant sectional curvature $s=\Lambda$, as per Equation \eqref{eq: lambda constant sectional curvature}, but it is not maximally symmetric.%

When $J=0$, Equation \eqref{eq: BTZ metric} describes the static BTZ black hole with
\begin{equation}
\label{eq: static BTZ metric}
  ds^2= - \left(\frac{r^2}{L^2} - M\right) dt^2 +  \left(\frac{r^2}{L^2} - M\right)^{-1}dr^2+r^2d\theta^2,
\end{equation}
where $t\in\mathbb{R}$, $M>0$, $r>L\sqrt{M}$, $\theta\in[0,2\pi)$, and such that $\theta \cong \theta + 2\pi$. We could also consider $M<0$; in this case, $r=0$ is in fact a naked singularity for $M\neq -1$. In this thesis, however, we only work on the static BTZ spacetime with positive mass, see Section \ref{sec: On a static BTZ spacetime and on Rindler-AdS3}. In this case, $r_h= L\sqrt{M}$ is a non-degenerate bifurcate Killing horizon with surface gravity $\kappa_h=\frac{\sqrt{M}}{L}$ and Hawking temperature $T_{gH}=\frac{\kappa_h}{2\pi}$. Since this solution is asymptotically AdS, it is not globally hyperbolic.
Its conformal diagram is given in Figure \ref{fig: BTZ conformal diagram}. For the construction of a (rotating) BTZ spacetime and other details, beyond the already mentioned references, one can resort to \cite{birmingham2001exact, bussolaThesis}.

\begin{figure}[H]
  \centering
    \includegraphics[width=0.6\textwidth]{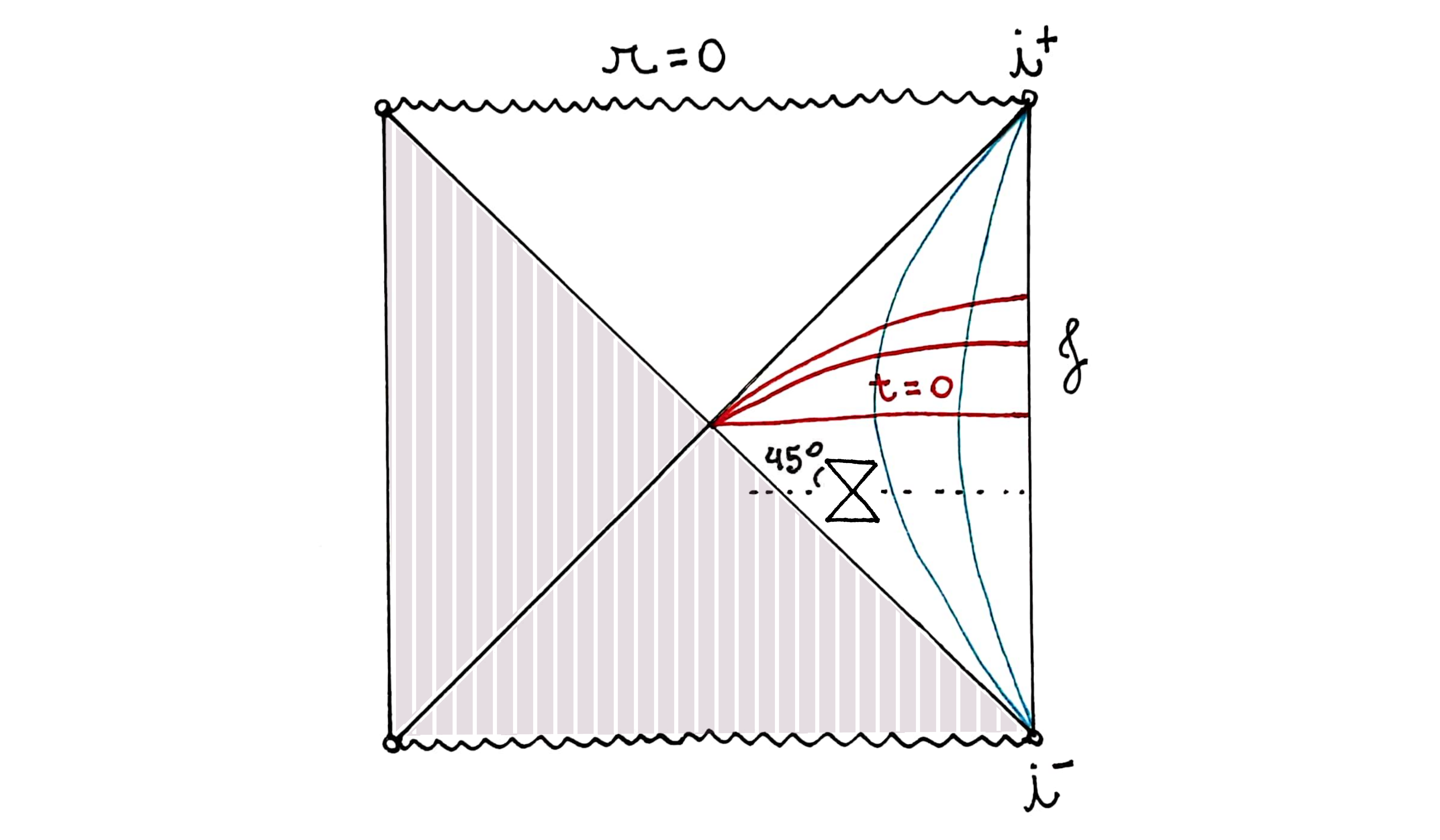}
    \caption{Conformal diagram of the maximal analytic extension of a static BTZ black hole spacetime. The blue lines connecting $i^-$ to $i^+$ are timelike, and the red lines connecting $r=0$ with spatial infinity are Cauchy surfaces. Lightlike radial curves are straight lines with a slope of $45^{\circ}$ degrees with respect to the $r=0$ line.}
     \label{fig: BTZ conformal diagram}
\end{figure}

\subsection{Schwarzschild-AdS}
\label{ex: Chapter 1 Schwarzschild-AdS}
Schwarzschild-AdS spacetime is an asymptotically AdS generalization of the Schwarzschild solution. It is a static solution of Einstein field equations in vacuum with negative cosmological constant, such that its line element in Schwarzschild-like coordinates reads%
\begin{equation}
\label{eq:Schwarzschild-AdS metric}
  ds^2=- \left(1- \frac{2M}{r^{n-3}} + \frac{r^2}{L^2}\right) dt^2 + \left(1- \frac{2M}{r^{n-3}}+ \frac{r^2}{L^2}\right) ^{-1}dr^2+r^2d\Sigma_{+1,n-2}^2,
\end{equation}
where $t\in\mathbb{R}$, $r>r_h$, $\theta\in[0,2\pi)$, $\varphi_1,...,\varphi_{n-3}\in[0,\pi)$. As in the Schwarzschild case, the sections $\Sigma_{+1,n-2}$ are spheres, its sectional curvature is not constant since Equation \eqref{eq: maximally symmetric curvature Riemann tensor} does not hold, and there is a curvature singularity at $r=0$.

Schwarzschild-AdS possesses a bifurcate Killing horizon generated by $\partial_t$ and a timelike conformal boundary. Hence, as AdS spacetime, it is not globally hyperbolic. In addition, from Equation \eqref{eq: surface gravity schd-like}, we obtain
$$\kappa_h = \left(\frac{(n-3)M}{r^{n-2}} + \frac{r}{L^2} \right)\Big|_{r=r_h} \quad \text{and}\quad T_H = \frac{T_{gH}}{\sqrt{1 - \frac{2M}{r^{n-3}}+ \frac{r^2}{L^2}}},$$
which generalizes, to $n$ dimensions, Example \ref{eg: Schd and Schd-AdS temperatures} with $k=-\frac{1}{L^2}$. Its conformal diagram is similar to the one Figure \ref{fig: BTZ conformal diagram}, with the only difference that, for $n\geq4$, the wiggly lines representing $r=0$ bulge inwards, see \cite{Fidkowski2003nf}. For more details on this spacetime, see e.g. \cite{socolovsky2017schwarzschild} or \cite[Sec.9.2.]{emparan2008black}.

There is a counterpart to Schwarzschild-AdS with positive curvature, called Schwarzschild-dS spacetime. It is a static solution of Einstein field equations in vacuum, but with positive cosmological constant. Its line element is given by Equation \eqref{eq:Schwarzschild-AdS metric} with the mapping $L^2\mapsto -L^2$. This subtle difference reverberates in drastic changes on the geometrical and causal structure of the spacetime. Such spacetime possesses both a black hole horizon, like Schwarzschild spacetime, and a cosmological horizon, like de Sitter spacetime; these horizons delineate a globally hyperbolic region, see Figure 3 in \cite{castineiras2003interaction}.

\subsection{Topological black holes}
\label{ex: chapter 1 toplogical black holes}

Topological black holes are static, non-globally hyperbolic solutions of Einstein field equations in vacuum with negative cosmological constant. Their line element in Schwarzschild-like coordinates, with mass $M\geq0$, AdS radius $L>0$, and $j\in\{-1,0,+1\}$, reads
\begin{equation}
\label{eq:massive topological black holes metric}
  ds^2=- \left(j- \frac{2M}{r^{n-3}} + \frac{r^2}{L^2}\right) dt^2 + \left(j- \frac{2M}{r^{n-3} }  + \frac{r^2}{L^2}\right) ^{-1}dr^2+r^2d\Sigma_{j}^{n-2},
\end{equation}
where $t\in\mathbb{R}$, $r>r_h\geq0$, $\theta\in[0,2\pi)$, $\varphi_1,...,\varphi_{n-3}\in[0,\pi)$. For $j=+1$, the line element \eqref{eq:massive topological black holes metric} reduces to Schwarzschild-AdS \eqref{eq:Schwarzschild-AdS metric}. For $j\in\{-1,0\}$, there are different possible choices for the topologies of the sections of constant time and radius---hence, these solutions are called ``topological'' black holes. We shall refer to the cases $j=-1$, $j=0$, and $j=+1$ respectively as \textit{hyperbolic}, \textit{flat}, and \textit{spherical} topological black holes.

The functions $f(r)$ and $h(r)$ in Equation \eqref{eq:metric schwarzschild-like} assume the forms as per Equations \eqref{eq: wpioerfjwopi0pwr98} and \eqref{eq: wpioerfjwopi0pwr9ddd833} for $n=4$ and $n=3$, respectively. In addition,
for $M=0$, $j=-1$ and $n=3$, Equation \eqref{eq:massive topological black holes metric} is similar to a unit-mass, static BTZ black hole, see Equation \eqref{eq: static BTZ metric}. Equality holds if we take $\Sigma_{-1,1}$ to be a circle. With this in mind, we can see massless hyperbolic topological black holes as $n$-dimensional generalizations of a unit-mass static BTZ black hole. In these spacetimes, there is a non-degenerate bifurcate Killing horizon at $r_h=L$ with surface gravity $\kappa_h=\frac{1}{L}$.

For the mathematical construction of these solutions, see \cite{mann1997topological}. For details regarding the different compactifications one can perform, the reader can check \cite[Sec.II.B]{brill1997thermodynamics}. For black hole thermodynamics on topological black holes, see e.g. \cite{Birmingham1998nr,birmingham2007stability,Vanzo1997gw}. Moreover, the three and four-dimensional massless hyperbolic solutions are considered in Section \ref{sec: On massless hyperbolic black holes} of Chapter \ref{chap: Applications} of this thesis, within the context of quantum field theory.

\subsection{Lifshitz topological black holes}
\label{ex: chapter 1 Lifshitz toplogical black holes}

The line element of a four-dimensional Lifshitz topological black hole $\mathcal{M}$ can be written in the form of Equation \eqref{eq:metric schwarzschild-like} by
\begin{subequations}
\label{eq: metric Lifshitz black holes no j}
\begin{align}
    ds^2 =- f(r)dt^2 + h(r)dr^2 +  r^2 d\theta^2 + r^2 J_\kappa(\theta)^2d\varphi^2,
\end{align}
with
\begin{align}
    &f(r):=\frac{r^{2}}{L^2} \left(\frac{r^{2}}{L^2} + \frac{\kappa}{2}\right),\\
    &h(r):=\frac{L^{2}}{r^2}f(r)^{-1}.
\end{align}
The parameter $\kappa$ assumes values only in the set $\{-1,0,+1\}$, and $J_\kappa$ is defined by
    \begin{align}
        &J_\kappa(\theta) := \begin{cases}
                  \sinh(\theta),& \kappa=-1,\\
                  \theta,       & \kappa=0,\\
                  \sin(\theta), & \kappa=+1.
                \end{cases}
    \end{align}
\end{subequations}

Note that $\mathcal{M}$ cannot be a solution of Einstein gravity given the fact that a four-dimensional spacetime that admits Schwarzschild-like coordinates is a solution of Einstein field equations in vacuum if and only if Equation \eqref{eq: wpioerfjwopi0pwr98} holds, which is not the case since  $h(r) \neq f(r)^{-1}$. Yet, as shown in \cite{Mann2009yx}, $\mathcal{M}$ is a metric solution of the action
\begin{align*}
  \label{eq: action Lifshitz gravity}
  S =\int &d^4x\sqrt{|g|}\Big( R-2\Lambda -\frac{1}{4}F_{\mu\nu}F^{\mu\nu}-\frac{1}{12}H_{\mu\nu\tau}H^{\mu\nu\tau} -\frac{C}{\sqrt{|g|}}\varepsilon^{\mu\nu\alpha\beta}B_{\mu\nu}F_{\alpha\beta} \Big),
\end{align*}
where the only non-vanishing field strengths components are given by
 \begin{equation*}
   \label{eq: non-vanishing field strengths components}
       F_{rt} = 2 L r \text{  and   } H_{r\theta\varphi}= 2L^2 r J_\kappa.
 \end{equation*}
In the above, $g$ is the metric tensor on $\mathcal{M}$, $\mathbf{R}= -\frac{22}{L^2} - \frac{\kappa}{r^2}$ the Ricci scalar, $L$ a length scale, $\Lambda=-\frac{5}{L^2}$, $\varepsilon^{\mu\nu\alpha\beta}$ the Levi-Civita tensor density, $F_{\mu\nu}=\partial_{[\mu}A_{\nu]}$ and $H_{\mu\nu\tau}=\partial_{[\mu}B_{\nu]}$ Abelian gauge fields, and $C=\frac{2}{L}$ a constant coupling parameter.

For all $\kappa$, $\mathcal{M}$ is a static, asymptotically Lifshitz, non-globally hyperbolic spacetime. However, the behaviour of Equation \eqref{eq: metric Lifshitz black holes no j} for small $r$ changes drastically with $\kappa$. Specifically, for $\kappa=\pm1$, $r=0$ is a curvature singularity that is naked for $\kappa=+1$, while hidden by a horizon at $r=\frac{L}{\sqrt{2}}$ for $\kappa=-1$. For $\kappa=0$, $r=0$ is a coordinate singularity, where neither the Ricci or Kretschmann scalars diverge. Although only the case $\kappa=-1$ possesses a horizon, in analogy with the topological black holes described in the previous section, for $\kappa=0$, $\kappa=-1$ and $\kappa=+1$, we call these spacetimes, respectively, \textit{flat}, \textit{hyperbolic} and \textit{spherical Lifshitz topological black holes}. Withal, note that a flat Lifshitz black hole is in fact a Lifshitz spacetime with critical exponent $z=2$ and with sectional line element written in polar coordinates. This means that the $\kappa=0$ solution also solves other generalized gravity theories; specifically, Einstein-Maxwell-Dilaton and Einstein-Proca gravity theories, as detailed in \cite[Pg. 27]{Hartnoll2016apf}.

\subsection{Global monopoles}
\label{ex: Chapter 1 Global monopoles}

Global monopoles are examples of topological defects (as are cosmic strings), arising from global symmetry breaking within grand unification theories, possibly produced during phase transitions in the early universe. An interested reader may check \cite{Barriola1989hx,vilenkin1994cosmic}. Here, a global monopole is another example of a static, non-globally hyperbolic spacetime that is not a solution of Einstein field equations in vacuum. For $t\in\mathbb{R}$, $r\in(0,\infty)$, $\theta\in[0,\pi)$, $\varphi\in[0,2\pi)$ and  $\alpha\in(0,1)$, its line element reads
\begin{align}
  \label{eq: metric Global monopole}
    ds^2 =- \alpha^2 dt^2 + \frac{1}{\alpha^2}dr^2 + r^2 (d\theta^2 + \sin^2\theta d\varphi^2).
\end{align}
Equation \eqref{eq: metric Global monopole} does solve Einstein field equations with vanishing cosmological constant with a classical energy-momentum tensor whose only non-vanishing components are
\begin{equation*}
    T_{tt} = -\alpha^4 T_{rr} = \frac{\alpha^2 \mathbf{R}}{2}, \quad \text{ where the Ricci scalar is }\quad \mathbf{R} = \frac{2(1-\alpha^2)}{r^2}.
\end{equation*}
There are three noteworthy features of Equation \eqref{eq: metric Global monopole}. First, for $\alpha\rightarrow 1$ we recover Minkowski spacetime in spherical coordinates. Second, although a two-dimensional conformal-diagram would look like Figure \ref{fig: Minkowski conformal diagram}, a global monopole is not asymptotically flat since each point in the diagram would be conformal to, but not coinciding with, a two-dimensional sphere. Another way of understanding this fact is by computing the Newtonian potential that does not vanish as $r\rightarrow\infty$, as shown in \cite{Barriola1989hx}. Last, $r=0$ is a (naked) curvature singularity since the Kretschmann scalar, $\mathbf{K}=\mathbf{R}^2$, diverges there.

 \cleardoublepage
 \chapter[Quantum Field Theory on Static Spacetimes]{Quantum Field Theory on Static Spacetimes}
 \label{chap: Quantum Field Theory on Static Spacetimes}
 \minitoc

 \pagestyle{myPhDpagestyle2}
\vfill
On static spacetimes we can avail of a global, timelike, irrotational Killing vector field to select a unique ground state, and to identify thermal states for a free, scalar, quantum field theory. In this chapter, I employ the algebraic approach to construct these states and guarantee they are well-defined and physically-sensible. First, in Section \ref{sec: A glimpse on the algebraic approach}, I highlight the main ingredients of quantum field theory on general spacetimes. Second, in Section \ref{sec: Physically-sensible dynamics on static spacetimes}, I define physically-sensible dynamics on static, stably-causal, not necessarily globally hyperbolic spacetimes. Then, in Section \ref{sec: Physically-sensible states in Schwarzschild-like coordinates} I specialize to spacetimes that admit Schwarzschild-like coordinates and I give explicit expressions for the two-point functions of physically-sensible states. In Section \ref{sec: Probing quantum states with particle detectors}, I describe, succinctly, how we can probe these quantum states using an Unruh-DeWitt particle detector. The framework described here sets the ground for all applications shown in the next chapter. To illustrate how all the building blocks connect to each other, in Section \ref{sec: Examples chapter 2}, I apply it on Minkowski spacetime endowed with spherical coordinates.
\newpage
 \section{A glimpse on the algebraic approach}
 \label{sec: A glimpse on the algebraic approach}

With the goal of constructing physically-sensible states in Section \ref{sec: Physically-sensible states in Schwarzschild-like coordinates}, let us review a few key notions of the algebraic description of a free, scalar, quantum field theory. On a spacetime $\mathcal{M}$ equipped with a Lorentzian metric tensor $g$ whose scalar curvature is $\mathbf{R}\in\mathbb{R}$, as defined at Page \pageref{page def spacetime} of Chapter \ref{chap: The Spacetimes}, consider a free, scalar field $\Psi:\mathcal{M}\rightarrow\mathbb{R}$ with mass $m_0\geq 0$. Its dynamics is ruled by the Klein-Gordon equation
 \begin{equation}
   \label{eq: KG}
 P\Psi : = (\Box - m_0^2 - \xi \mathbf{R})\Psi=0,
 \end{equation}
 where $\Box = \frac{1}{\sqrt{|g|}} \partial_\mu \left(\sqrt{|g|} g^{\mu\nu}\partial_\nu \right)$ is the d'Alembertian, while $\xi\in\mathbb{R}$ is a coupling parameter. If $\mathcal{M}$ is a vacuum solution of Einstein's equations with a cosmological constant, hence with constant scalar curvature such that the field-curvature coupling yields a simple shift of mass, we define an \textit{effective mass} by
 \begin{equation}
   \label{eq: eff mass}
   m_{\text{eff}}^2 := m_0^2 + \xi \mathbf{R}.
 \end{equation}

A particularly relevant bisolution of the Klein-Gordon equation is the causal propagator, also known as Pauli-Jordan distribution, or commutator function. It is one of the main ingredients in the implementation of the canonical quantization, which here goes directly in the definition of the (quantum) algebra of observables. Let us properly define both these objects, and clarify how they relate to each other. First, assume $\mathcal{M}$ is a globally hyperbolic spacetime. As mentioned in Section \ref{sec: A consequence of (non-)global hyperbolicity} of Chapter \ref{chap: The Spacetimes}, the Cauchy problem associated with the Klein-Gordon equation on such a spacetime is well-defined, and, as proven in detail in \cite[Ch.3]{WELM}, the following theorem holds.
\begin{theorem}[Advanced and retarded fundamental solutions]
\label{thm:advanced and retarded prop}
  On a globally hyperbolic spacetime $\mathcal{M}$, there exist unique advanced
$E_{+}$ and retarded $E_{-}$ fundamental solutions for the Klein-Gordon operator $P$, i.e.  $E_{\pm}:C_{0}^{\infty}(\mathcal{M},\mathbb{R})\rightarrow C^{\infty}(\mathcal{M},\mathbb{R})$ such that
\begin{equation}
  \label{eq: prop 1 Epm}
  P\circ E_{\pm}=E_{\pm}\circ P=\text{id}_{C_{0}^{\infty}(\mathcal{M},\mathbb{R})}.
\end{equation}
  Moreover, their supports are contained, respectively, in the causal future and past:
  \begin{equation}
    \label{eq: prop 2 Epm}
\text{supp}(E_{\pm}f)\subset J^{\pm}(\text{supp}f)\text{, }\forall f\in C_{0}^{\infty}(\mathcal{M},\mathbb{R})\text{.}
\end{equation}
\end{theorem}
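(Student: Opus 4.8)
The plan is to follow the classical Riesz-distribution/Hadamard-parametrix construction: first build $E_\pm$ locally on small causal neighbourhoods, then patch these together globally using the Cauchy foliation. This is precisely the route of \cite[Ch.~1--3]{WELM}, so I only describe its architecture.

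\textbf{Local step.} First I would restrict to a geodesically convex open set $\Omega\subseteq\mathcal{M}$ on which the squared geodesic distance and normal coordinates about every point are well defined. On $\mathbb{R}^{1,n-1}$ one has the Riesz distributions $R_\pm(\alpha)$, the analytic continuation in $\alpha\in\mathbb{C}$ of the locally integrable family supported in the future/past cone, characterised by $R_\pm(0)=\delta$ and $\Box\, R_\pm(\alpha+2)=R_\pm(\alpha)$; these transplant to $\Omega$ via the exponential map. One then forms the formal Hadamard series
\[
\widetilde R_\pm(x)\;=\;\sum_{k=0}^{\infty} V_k(x,\cdot)\, R_\pm(2+2k),
\]
where the Hadamard coefficients $V_k$ are the smooth functions (built from the van Vleck--Morette determinant and iterated applications of $P$) uniquely determined by transport ODEs along the geodesics through $x$. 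Since this series need not converge, the next move is to truncate it at a large order $N=N(n)$ and look for a smooth correction: the requirement $P(\,\cdot\,)=\delta_x$ turns into a Volterra-type integral equation whose Neumann series converges on $\Omega$. This produces genuine fundamental solutions $E_\pm^\Omega(x)$ with $\text{supp}\,E_\pm^\Omega(x)\subseteq J_\pm^\Omega(x)$, depending smoothly on $x$, hence operators $E_\pm^\Omega\colon C_0^\infty(\Omega,\mathbb{R})\to C^\infty(\Omega,\mathbb{R})$ satisfying \eqref{eq: prop 1 Epm}--\eqref{eq: prop 2 Epm} relative to $\Omega$.

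\textbf{Uniqueness.} The analytic heart of the argument is the claim that a distribution $u$ with $Pu=0$ and \emph{past-compact} support (resp.\ future-compact support) must vanish; this follows from an energy estimate for normally hyperbolic operators on a globally hyperbolic spacetime. Granting this, if $E_\pm$ and $\widetilde E_\pm$ both satisfy \eqref{eq: prop 1 Epm}--\eqref{eq: prop 2 Epm}, then for every $f\in C_0^\infty(\mathcal{M},\mathbb{R})$ the difference $(E_\pm-\widetilde E_\pm)f$ solves $P(\,\cdot\,)=0$ and is supported in $J^\pm(\text{supp}\,f)$, which is past-compact (resp.\ future-compact) because causal diamonds $J^+(K_1)\cap J^-(K_2)$ are compact in a globally hyperbolic spacetime. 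Hence $E_\pm$ is unique, and in particular the local solutions of the previous step agree on overlaps, so they may be glued consistently.

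\textbf{Globalization and main obstacle.} Finally I would use that $\mathcal{M}$ is globally hyperbolic, hence (Section~\ref{sec: A consequence of (non-)global hyperbolicity}) admits a foliation $\mathbb{R}\times\Sigma$ by smooth Cauchy hypersurfaces. Cover $\mathcal{M}$ by causal domains $\{\Omega_i\}$ carrying $E_\pm^{\Omega_i}$, take a subordinate partition of unity $\{\chi_i\}$, and for a fixed $f\in C_0^\infty(\mathcal{M},\mathbb{R})$ propagate slab by slab along the time function: on each slab only finitely many $\Omega_i$ meet $J^\pm(\text{supp}\,f)$ by compactness of causal diamonds, so the local pieces $E_\pm^{\Omega_i}(\chi_i\,\cdot\,)$ can be assembled and corrected into a global $E_\pm f\in C^\infty(\mathcal{M})$ still supported in $J^\pm(\text{supp}\,f)$ with $P E_\pm f=f$; the identity $E_\pm Pf=f$ then follows from the uniqueness step applied to $E_\pm Pf-f$, and linearity and continuity are immediate from the construction. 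I expect this last patching to be the main obstacle: keeping the support property \eqref{eq: prop 2 Epm} intact and controlling the otherwise divergent gluing is exactly what forces the use of global hyperbolicity, through the Cauchy foliation and the compactness of $J^+(K_1)\cap J^-(K_2)$; the local step is lighter, its only delicacy being the non-convergence of the Hadamard series, which the Volterra iteration repairs.
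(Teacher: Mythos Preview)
Your sketch is correct and follows exactly the route of \cite[Ch.~1--3]{WELM}, which is precisely what the paper does: it does not supply its own proof of this theorem but simply cites that reference. So your proposal and the paper's ``proof'' coincide by construction.
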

From Theorem \ref{thm:advanced and retarded prop} follows that the difference between the advanced and retarded fundamental solutions $\widetilde{E} = E_+ - E_-$, satisfies the following properties. Equation \eqref{eq: prop 1 Epm} entails
\begin{equation*}
 \label{eq: E+ - E-}
   P\circ \widetilde{E} = \widetilde{E} \circ P  = 0,
 \end{equation*}
while Equation \eqref{eq: prop 2 Epm} yields
 \begin{equation*}
  \label{eq: E+ - E- 2}
   \text{supp}(\widetilde{E}\,f)\subset J^{+}(\text{supp}f)\cup J^{-}(\text{supp}f)\text{.}
 \end{equation*}
Let $d\text{vol}_{\mathcal{M}^N}$ be the volume form induced by $g$ on $\mathcal{M}^N$, for $N\in\mathbb{N}^*$. By linearity and the Schwartz kernel theorem \cite[Pg.128]{hormander2015analysis}, $\widetilde{E}$ identifies a bidistribution $E\in\mathcal{D}'(\mathcal{M}^2)$,
\begin{equation*}
E(f,h) =\int_{\mathcal{M}^2}d\text{vol}_{\mathcal{M}^2}E(x,x')f(x)h(x')\text{ for }f,h\in C_{0}^{\infty}(M,\mathbb{R}),
\end{equation*}
which is skew-symmetric and vanishes if the supports of $f$ and $h$ are causally separated, i.e. in the level of distribution kernel, we have that $E(x,x')=-E(x',x)$ and that $E(x,x')=0$ if $x$ and $x'$ are spacelike separated. Accordingly, $E$ is called the \textit{causal propagator}.

The causal propagator identifies a symplectic form in the space of solutions to the Cauchy problem associated with Equation \eqref{eq: KG}, explicitly given in Equation \eqref{eq:cauchy problem example}, thus generating the classical phase-space of the underlying field theory. Quantization may then be implemented by a canonical procedure, as prescribed in \cite{WELM,benini2013quantum,fewster2015algebraic,hack2015cosmological,moretti}. Here, I choose to omit all details regarding this implementation and I directly define the (quantum) algebra of observables.

\begin{subequations}
 \begin{definition}[algebra of observables]
   \label{def: algebra of observables}
   The \textit{algebra of observables} $\mathcal{A}(\mathcal{M})$, also known as the CCR algebra, is the quotient of the unital $\ast$-algebra generated by the smeared fields $ \{\Psi(f) : f \in C^\infty_0(\mathcal{M}) \}$, with the $\ast$-ideal generated by the following relations: for $f,h\in C^\infty_0(\mathcal{M})$ and $c\in\mathbb{R}$,
         \begin{align}
           \text{Linearity} & \qquad0=\Psi(cf+h)-c\Psi(f)-\Psi(h), \\
           \text{Hermiticity} & \qquad0=\Psi(f)^{\ast}-\Psi(f),\\
           \text{Klein-Gordon} & \qquad0=\Psi((\square - m_0^{2} - \xi \mathbf{R})f),\\
           \text{CCR} & \qquad0=[\Psi(f),\Psi(h)]-iE(f,h)\mathbbm{1}. \label{eq:oweihoijef}
         \end{align}
 \end{definition}
\end{subequations}
\enlargethispage{2\baselineskip}
 The quantization procedure of imposing the canonical commutation relations on the fields, codified by Equation \eqref{eq:oweihoijef}, can be carried out once the causal propagator is built. This, in turn, requires solving a suitable initial value problem, which is identified in the following proposition.

 \begin{proposition}
   \label{prop:ccr} On globally hyperbolic spacetimes, $\mathcal{M}\simeq \mathbb{R}\times\Sigma$ with Cauchy surface $\Sigma$, the initial conditions satisfied by the causal propagator, i.e. for $x=(t,\underbar{x})$, $t\in\mathbb{R}$ and $\underbar{x}\in\Sigma$,
   \begin{subequations}
 				\label{eq:ccrs}
 				\begin{align}
 					&E(x,x')|_{t'=t}  = 0 \text{,}\label{eq:ccr1}\\
 					&\partial_t E(x,x')|_{t=t'}=-\partial_{t'} E(x,x')|_{t'=t} =\delta_{\Sigma}(x,x')\label{eq:ccr2}\text{,}
 				\end{align}
 	\end{subequations}
  yield the canonical commutation relations after the implementation of Equation (\ref{eq:oweihoijef}).
 \end{proposition}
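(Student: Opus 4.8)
The statement bundles two assertions: (a) that the bidistributional kernel $E$ of the causal propagator obeys the equal-time relations \eqref{eq:ccrs}, and (b) that \eqref{eq:ccrs} together with the abstract commutation relation \eqref{eq:oweihoijef} reproduce the traditional equal-time canonical commutation relations for the field and its conjugate momentum. I would prove (b) first, since it is short and makes transparent why \eqref{eq:ccrs} is the relevant input, and then turn to (a).

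\textbf{Part (b): unpacking \eqref{eq:oweihoijef}.} Fix a slice $\Sigma_{t}=\{t\}\times\Sigma$ and, for $\varphi_{0},\varphi_{1}\in C_{0}^{\infty}(\Sigma)$, choose sequences $f_{n},g_{n}\in C_{0}^{\infty}(\mathcal{M})$ with $f_{n}\to\delta(\cdot-t)\otimes\varphi_{0}$ and $g_{n}\to\partial_{t}\delta(\cdot-t)\otimes\varphi_{1}$. Put $\Psi_{t}(\varphi_{0}):=\lim_{n}\Psi(f_{n})$, and note that, after one integration by parts in the time variable, $\lim_{n}\Psi(g_{n})$ equals $-\int_{\Sigma}\varphi_{1}\,\partial_{t}\Psi$, i.e. the smeared conjugate momentum up to the factor $\sqrt{|g|}\,g^{tt}$. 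Then \eqref{eq:oweihoijef} gives $[\Psi(f_{n}),\Psi(g_{m})]=iE(f_{n},g_{m})\mathbbm{1}$, and it only remains to evaluate the limit of $E(f_{n},g_{m})$. Using \eqref{eq:ccr1}, $E(\delta_{t}\otimes\varphi_{0},\delta_{t}\otimes\varphi_{0}')$ vanishes, so $[\Psi_{t}(\varphi_{0}),\Psi_{t}(\varphi_{0}')]=0$; using \eqref{eq:ccr2} and the chain-rule identity $\partial_{t}E|_{t=t'}=-\partial_{t'}E|_{t'=t}$ one gets $E(\delta_{t}\otimes\varphi_{0},\partial_{t}\delta_{t}\otimes\varphi_{1})=\int_{\Sigma}\varphi_{0}\varphi_{1}$ (with the induced measure hidden inside $\delta_{\Sigma}$), which is the distributional relation $[\Psi(t,\underbar{x}),\pi(t,\underbar{x}')]=i\delta_{\Sigma}(\underbar{x},\underbar{x}')$ once the $\sqrt{|g|}\,g^{tt}$ factors in the momentum and in $\delta_{\Sigma}$ are reconciled; the analogous computation with two $\partial_{t}\delta_{t}$'s gives $[\pi(t,\underbar{x}),\pi(t,\underbar{x}')]=0$. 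These are exactly the canonical commutation relations.

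\textbf{Part (a): deriving \eqref{eq:ccrs}.} Here I would lean on the standard Cauchy-problem theory for the normally hyperbolic operator $P$ on $\mathcal{M}\simeq\mathbb{R}\times\Sigma$ underlying Theorem \ref{thm:advanced and retarded prop} (see \cite{WELM}). Applying Green's identity to $E_{+}f$ and $E_{-}f$ between $\Sigma$ and a point $x$, together with $P\circ E_{\pm}=\mathrm{id}$ and the support properties \eqref{eq: prop 2 Epm}, yields a representation of the smooth solution $Ef$ through its Cauchy data on $\Sigma$ — equivalently the identity $\sigma\big(\rho_{\Sigma}(Ef),\rho_{\Sigma}(Eh)\big)=E(f,h)$, with $\rho_{\Sigma}$ the Cauchy-data map and $\sigma$ the canonical symplectic form on $C_{0}^{\infty}(\Sigma)^{2}$; dually this says that $x\mapsto E(x,x_{0})$ is the unique spacelike-compact solution of $Pu=0$ whose Cauchy data on the slice through $x_{0}$ are $(u|_{\Sigma},\nabla_{n}u|_{\Sigma})=(0,\delta_{x_{0}})$. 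Reading the value and the normal derivative of this solution off on $\Sigma$ gives \eqref{eq:ccr1} and the last equality of \eqref{eq:ccr2}; differentiating \eqref{eq:ccr1} along $\Sigma$ gives the first equality of \eqref{eq:ccr2}. That $E|_{t'=t}$ and $\partial_{t}E|_{t'=t}$ make sense as distributions on $\Sigma\times\Sigma$ follows because $E$ vanishes on spacelike-separated pairs and, microlocally, $\mathrm{WF}(E)$ lies over null-related pairs, so it misses the conormal of the spacelike slice $\{t=t'\}$ and Hörmander's restriction criterion applies; the resulting distributions are supported on the diagonal, and the characterization above pins them down to $0$ and $\delta_{\Sigma}$.

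\textbf{Main obstacle.} The real work is the distributional bookkeeping rather than the algebra: (i) justifying the restrictions of $E$ and $\partial_{t}E$ to $\{t=t'\}$ via the wavefront set, and (ii) being honest that the ``field at a sharp time'' $\Psi(t,\underbar{x})$ and its momentum exist only as limits of smeared objects, so that \eqref{eq:oweihoijef} \emph{is} the rigorous commutation relation and the equal-time form is its formal consequence extracted by the limiting procedure above. A secondary nuisance is the lapse/metric factors: the proposition is phrased with the coordinate field $\partial_{t}$ rather than the unit normal $\nabla_{n}$, so $\delta_{\Sigma}$ in \eqref{eq:ccr2} must be read as the delta with respect to the measure that makes $\partial_{t}E|_{t=t'}$ equal to it (concretely $\sqrt{|g|}\,g^{tt}\,\partial_{t}E|_{t=t'}=-\delta^{(n-1)}(\underbar{x}-\underbar{x}')$), and these factors must cancel consistently against the normalization of the conjugate momentum in Part (b).
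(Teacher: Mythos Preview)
Your proposal is correct and follows the same underlying logic as the paper, but at a much higher level of rigor than the paper itself aims for. The paper's proof is extremely brief: for Part~(a) it simply invokes \cite[Cor.~1.2]{dimock1980} to obtain $Ef|_{\Sigma}=0$ and $\nabla_{\mathbf{n}}Ef|_{\Sigma}=f$, then passes to integral kernels; for Part~(b) it just pulls back the covariant relation $[\Psi(x),\Psi(x')]=iE(x,x')$ to $\Sigma\times\Sigma$ and reads off the two equal-time commutators directly, without any limiting-sequence construction or wavefront-set justification for the restriction.

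So your Green's-identity derivation in Part~(a) is essentially what lies behind the Dimock citation, and your careful approximation by smeared test functions in Part~(b) is the rigorous unpacking of what the paper treats as a formal pull-back. The microlocal argument you sketch for the existence of the restriction, and the explicit tracking of lapse/measure factors, are entirely absent from the paper's proof. Nothing you do is wrong or unnecessary from a rigorous standpoint, but for matching the paper you could compress dramatically: cite the Cauchy-data result for $E$ and then write the two pulled-back commutators in one line each.
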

 \begin{proof}
   Let $P$ be the Klein-Gordon operator, as in Equation \eqref{eq: KG}, let $(t_0,\Psi_0,\dot\Psi_0)$ be the corresponding Cauchy data with $\Psi_0$, $\dot\Psi_0\in C_0^\infty( \{t_0\}\times\Sigma)$, $f\in C_0^\infty (\Sigma)$, and $\mathbf{n}$ the future-directed, unit vector field normal to $\Sigma$. On one hand, consider the initial value problem given by Equation \eqref{eq:cauchy problem example}, i.e.
 \begin{equation*}
   P \Psi = f,  \quad  \Psi|_{\{t_0\}\times\Sigma} = \Psi_0   \quad \text{ and }\quad  \nabla_{\mathbf{n}}\Psi|_{\{t_0\}\times\Sigma} = \dot\Psi_0.
 \end{equation*}
 It implies, by \cite[Cor1.2]{dimock1980} and for $f\in C_{0}^{\infty}(\Sigma)$, that
 $$ Ef|_{\Sigma}=0   \text{ and }  \nabla_{\mathbf{n}}Ef|_{\Sigma}=f .$$
At the level of integral kernel, it holds
\begin{equation*}
E(x,x')|_{\Sigma\times\Sigma}=0  \text{ and }  \nabla_{\mathbf{n}}E(x,x')|_{\Sigma\times\Sigma}=\delta_{\Sigma}(x,x')\text{.}
\end{equation*}
On the other hand, the equal-time canonical commutation relations are obtained via pull-back of its covariant counterpart $[\Psi(x),\Psi(x')]=iE(x,x')$ to a generic Cauchy surface:\\

  \hspace{3cm}   $[\Psi(x)|_{\Sigma},\Psi(x')|_{\Sigma}]=i E(x,x')|_{\Sigma\times\Sigma}= 0,$

  \hspace{3cm}   $[\nabla_{\mathbf{n}}\Psi(x)|_{\Sigma},\Psi(x')|_{\Sigma}]=i \nabla_{\mathbf{n}}E(x,x')|_{\Sigma\times\Sigma}= i\delta_{\Sigma}(x,x').$
 \end{proof}
\vspace{.5cm}

\begin{remark}
\label{rem: causal propagator on static non-globally hyperbolic}
 On non-globally hyperbolic spacetimes, the Cauchy problem \eqref{eq:cauchy problem example} is generally ill-defined. However, on the static spacetimes considered in the next chapter it is straightforward to make it well-defined as a Cauchy-boundary value problem such that Theorem \ref{thm:advanced and retarded prop} and Proposition \ref{prop:ccr} still hold. Specifically, assigning a boundary condition at the timelike boundary that engenders the non-global hyperbolicity disambiguates the initial value problem. An elaborate discussion regarding quantum field theory on globally hyperbolic spacetimes with a timelike boundary can be found in \cite{dappiaggi2019fundamental}.
 \end{remark}
To obtain a minimally meaningful physical theory, on the algebra of observables we need a notion of expectation values, and for that we need the notion of states.
\begin{definition}[State and related notions]
  \label{def: state and related notions}
 A \emph{state} $\psi$ on the algebra of observables $\mathcal{A}(\mathcal{M})$, see Definition \ref{def: algebra of observables}, is a linear functional
 \begin{equation*}
     \psi:\mathcal{A} (\mathcal{M})\rightarrow\mathbb{C}
 \end{equation*}
 that is normalized, $\psi(\mathbbm{1})=1$, and positive, $\psi(a^{\ast}a)\geq0,\,\forall a\in\mathcal{A}(\mathcal{M})$. Moreover,
\begin{itemize}
  \item[i)] a state $\psi$ is called \textit{mixed} if it is a convex combination of states: $\psi = \lambda \psi_1 + (1-\lambda)\psi_2$, with $\psi_i\neq \psi$, $i\in\{1,2\}$, and $\lambda<1$;
  \item[ii)] a state is called \textit{pure} if it is not mixed;
 \item[iii)] $\psi(a)$ is called the \emph{expectation value} of the observable $a\in\mathcal{A}(\mathcal{M})$ on the state $\psi$;
   \item[iv)] \label{it: def n point function} for $\Psi(f_i)\in\mathcal{A}(\mathcal{M})$, $f_i\in C_0^\infty(\mathcal{M})$, $i\in\{1,...,n\}$, and $n\in\mathbb{N}^*$, the map
    $$(f_1,...,f_n) \mapsto \psi_n(f_1,...,f_n):=\psi(\Psi(f_1)...\Psi(f_n)) $$
   is called the \textit{$n$-point function} of the state $\psi$.
\end{itemize}\vspace{-.9cm}
 \end{definition}

If we equip $\mathcal{A}(\mathcal{M})$ with a suitable topology, as per \cite[Def. III.1.1.1]{hack2010backreaction}, we can show that the maps $\psi_n$ of item iv) of Definition \ref{def: state and related notions}, are well-defined distributions in $\mathcal{D}'(\mathcal{M}^n)$ such that, by the Schwartz kernel theorem, we can write
\begin{equation*}
  \psi_n(f_1,...,f_n)= \int\limits_{\mathcal{M}^n}d\text{vol}_{\mathcal{M}^n} \psi_n(x_1,...,x_n)f(x_1)...f(x_n).
\end{equation*}
For technical details regarding the topological structure, I refer the reader to \cite{brunetti2015advances,hack2010backreaction}. Here, I simply assume that the maps $\psi_n$ are continuous in the usual test function topology on $C_0^\infty(\mathcal{M}^n)$. The relevance of introducing these maps comes from the fact that the expectation values $\psi(a)$, for $a\in\mathcal{A}(\mathcal{M})$, are completely specified by the $n$-point functions of the state $\psi$, see  \cite[Sec.2.2]{moretti}. As an example, the Minkowski Poincaré invariant vacuum state is fully specified by its two-point function \cite[Pg.91]{fulling1989aspects}. This fact, in turn, inspires the notion of Gaussian states.

 \begin{definition}[Gaussian states] \label{def: gaussian state} A state $\psi$ on $\mathcal{A}(\mathcal{M})$ is \emph{Gaussian} if $\forall f_i\in C^\infty_0(\mathcal{M})$, $i\in\{1,...,n\}$, $n\in\mathbb{N}^*$, its $n$-point functions $\psi_n(f_1,...,f_n)$ satisfy
   \begin{subequations}
   \label{eq: wick procedure mimic gaussian state}
   \begin{align}
     & \psi_n(f_1,...,f_n) = 0, \text{ for odd }n,  \\
     & \psi_n(f_1,...,f_n) = \sum\limits_{\mini{\text{partitions}}} \psi_2(f_{i_1},f_{i_2})...\psi_2(f_{i_{n-1}},f_{i_n}), \text{ for even }n,
   \end{align}
 \end{subequations}
   where \enquote{partitions} indicates all permutations of the indexes $i_j$, $j\in\{1,...,n\}$; i.e. over all decompositions of the set $\{1,...,n\}$ into disjoint subsets of two elements.
 \end{definition}
 For a fixed state $\psi$ \label{page: GNS theorem} on $\mathcal{A}(\mathcal{M})$, there is a concrete realization of the algebra $\mathcal{A}(\mathcal{M})$ on a Hilbert space $\mathcal{H}_\psi$, given by the Gelfand, Naimark, Segal (GNS) theorem. The GNS theorem grants us the quadruple $(\mathcal{H}_\psi,\mathcal{D}_\psi,\pi_\psi,\Omega_\psi)$, where $\pi_\psi:\mathcal{A}(\mathcal{M})\rightarrow \mathcal{L}(\mathcal{D}_\psi)$ is a $*$-representation of $\mathcal{A}(\mathcal{M})$ as linear operators with dense domain $\mathcal{D}_\psi\subset \mathcal{H}_\psi$, such that $\pi_\psi(\mathcal{A}(\mathcal{M}))\Omega_\psi = \mathcal{D}_\psi$
 and $\psi(\Psi(f)) = \braket{\Omega_\psi|\pi_\psi(\Psi(f))|\Omega_\psi}$. Particularly for a Gaussian state $\psi$, which is completely determined by its two-point function as per Definition \ref{def: gaussian state}, the GNS representation corresponds to a Fock representation and Equation \eqref{eq: wick procedure mimic gaussian state} codify the standard Wick procedure \cite{brunetti2015advances,wick1950evaluation}. In \cite{moretti}, one can find a complete discussion of topics at the heart of the above comments, such as the precise definition of a $*$-representation \cite[Def.7]{moretti}, a detailed proof of the GNS theorem \cite[Thm.1]{moretti} and a characterization of Gaussian states \cite[Thm.2]{moretti}.

The algebra of observables as per Definition \ref{def: algebra of observables} is the simplest algebra one can consider. First and foremost, it does not contain all physical observables. Let us see two examples to illustrate that.

\begin{example}[Vacuum fluctuations] \label{eg: vacuum fluctuations}
 Let $\mathcal{M}$ be a four-dimensional Minkowski spacetime endowed with Cartesian coordinates, as in Example \ref{eg: 10 Killing in Mink}. The two-point function for the vacuum state of a massless, free, scalar field reads
 \begin{subequations}
   \label{eq: two point mink example and kernel}
 \begin{equation}
   \label{eq: two point mink example uiofhiosfio}
   \psi_2(f,f') = \psi(\Psi(f)\Psi(f')) =  \lim\limits_{\varepsilon\rightarrow 0^+}  \int\limits_{\mathcal{M}^2} d^4xd^4x' \psi_2(x,x') f(x)f'(x')
 \end{equation}
for $f,f'\in C_0^\infty(\mathcal{M})$, and
\begin{equation}
\label{eq: kernel mink example uiofhiosfio}
  \psi_2(x,x') = \frac{1}{4\pi^2} \frac{1}{\sigma_{\varepsilon}(x,x')},
\end{equation}
\end{subequations}
  with $\sigma_{\varepsilon}(x,x'):= (x-x')^2+i\varepsilon (t-t') + \varepsilon^2$. When $x$ and $x'$ are lightlike separated, $(x-x')^2=0$. This singular behavior is resolved by the $\varepsilon$-regularization already introduced in Equation \eqref{eq: two point mink example and kernel}: as a tempered distribution on $C_0^\infty(\mathcal{M})\times C_0^\infty(\mathcal{M})$, $\psi_2(f,f')$ is well-defined and $\Psi(f)\Psi(f')$ is an element of the algebra of observables. Next, with $ \psi_2(x,x')$ as given above, suppose we want to compute the expectation value of the vacuum fluctuations, denoted $\psi(\Psi^2(f))$. A first attempt is to simply define it by
\begin{equation}
  \label{eq: two point mink example coincidence blow up}
  \psi(\Psi^2(f)) = \lim\limits_{\varepsilon\rightarrow 0^+} \int\limits_{\mathcal{M}^2} d^4xd^4x' \psi_2(x,x') f(x)\delta(x-x').
\end{equation}
However, the right-hand side of Equation \eqref{eq: two point mink example coincidence blow up} is ill-defined since $\psi_2(x,x')f(x)\delta(x-x')$  does not satisfy the H\"{o}rmander criterion for multiplication of distributions \cite{hormander2015analysis}. Two consequences follow: first, to obtain a well-defined $\psi(\Psi^2(f))$ we must implement another regularization procedure; second, $\Psi^2(f)$ is not an element of $\mathcal{A}(\mathcal{M})$.
\end{example}

 \begin{example}[Energy-momentum tensor] \label{eg:Energy-momentum tensor} With the previous example in mind, it follows that the energy momentum tensor $T_{\mu\nu}(f)$, formally given by
   \begin{align}
     \label{eq: stress energy tensor def}
    T_{\mu\nu}(f) =   (1-2\xi) &\nabla_\mu\Psi(f)\nabla_{\nu}\Psi(f) -2\xi\Psi(f)\nabla_\mu\nabla_{\nu}\Psi(f) -\xi G_{\mu\nu}\Psi^2(f) +\nonumber\\ & \quad + g_{\mu\nu}\left[  2\xi \Psi^2(f) + \left(2\xi-\frac{1}{2}  \right) \nabla^\rho\Psi(f)\nabla_{\rho}\Psi(f) -\frac{1}{2}m_0^2\Psi^2(f)\right],
  \end{align}
  is also ill-defined and it is not an element of the CCR algebra $\mathcal{A}(\mathcal{M})$.
 \end{example}

 To extend the algebra of observables and obtain finite quantum fluctuations of all physical observables, we must implement a regularization procedure. Within the Fock space formalism on Minkowski spacetime this is accomplished by normal-ordering, which is tantamount to ``subtracting the vacuum energy'' \cite{birrell1984quantum,LecturesQEI_Fewster,fulling1989aspects}. In the past decades, a long debate regarding how to implement normal-ordering on curved spacetimes engendered the notion of Hadamard states. They play the role of Minkowski vacuum state, but on general spacetimes, having a specific singularity structure that coincides with that of Minkowski vacuum. The idea is that, even though a general spacetime lacks a preferred, reference, no-particle state to disambiguate normal-ordering, in the distinguished class of Hadamard states, we can perform covariant ``divergence-subtractions'' and obtain well-defined quantum fluctuations of all physical observables \cite{fewster2013necessity}. In addition, though we shall not enter into the details here, it is worth mentioning that the introduction of the concept of Hadamard states allows for a completely covariant (perturbative) construction of the algebra of Wick polynomials of an interacting quantum field theory, see e.g. \cite{brunetti2015advances,hollands2001local}. For a proper definition of Hadamard states and an example of a regularized observable, we need some preliminary notions, as follows.

Let $\mathcal{M}$ be a spacetime that admits a well-defined Cauchy time function $t$, such as a static spacetime or a globally hyperbolic one \cite[Prop.4]{bernal2003smooth}. On any geodesically convex neighbourhood $\mathcal{O}\subset\mathcal{M}$, we define the regularized half squared geodesic distance as
  \begin{equation*}
    \label{eq: reg synge function}
    \sigma_{\varepsilon}(x,x') := \sigma(x,x') + 2 i \varepsilon (t(x)-t(x')) + \varepsilon^2,
 \end{equation*}
where $x,x'\in\mathcal{M}$, $\varepsilon>0$, while $\sigma(x,x')$ is the standard half squared geodesic distance (or Synge world function). In addition, for $x,x'\in\mathcal{O}$ we define the modified, regularized Hadamard parametrix by
 \begin{equation}
   \label{eq:Parametrix}
h_{\varepsilon}(x,x'):= \frac{u(x,x')}{[\sigma_{\varepsilon}(x,x')]^{\frac{n}{2}-1} }+ \beta_n v(x,x')\ln\left(\frac{ \sigma_{\varepsilon}(x,x')}{\lambda^2}\right),
 \end{equation}
where $\beta_n:=\delta_{0,\text{ mod}(n,2)}$, and $\lambda>0$ is a reference length scale (see \cite[Pg.123]{hack2010backreaction} for considerations on the dependence on $\lambda$). The functions $u$ and $v$ are called Hadamard coefficients; they are smooth and uniquely determined by imposing that $h_{\varepsilon}(x,x^{\prime})$ satisfies the Klein-Gordon equation both at $x$ and $x^{\prime}$, see \cite{hack2012stress,kay1991theorems,moretti2003comments}. Under the assumptions just given, the following definition makes sense.
 \begin{definition}[Hadamard states] \label{def: Hadamard state, local Hadamard form} A state $\psi$ on $\mathcal{A}(\mathcal{M})$ is called \emph{Hadamard} if its two-point function $\psi_2\in\mathcal{D}'(\mathcal{M}^2)$ satisfies the \textit{local Hadamard condition}, i.e. if for all $x\in\mathcal{M}$ and for all $x'$ lying in any geodesically convex neighbourhood $\mathcal{O}$ of $x$,
 \begin{equation}
   \label{eq:local hadamard form}
 \psi_{2}(x,x^{\prime})= h_{0^+}(x,x') + w(x,x'),
 \end{equation}
  where $w(x,x')$ is a smooth function on $\mathcal{O}\times\mathcal{O}$. Equivalently, when $\psi_2$ satisfies the local Hadamard condition, we say it is of \textit{local Hadamard form}. Accordingly, $w(x,x')$ is the state dependent component of a two-point function of local Hadamard form.
 \end{definition}

\begin{example}[Regularized vacuum fluctuations] \label{eg: Regularized vacuum fluctuations}
  Consider the assumptions and definitions of Example \ref{eg: vacuum fluctuations}. Since the difference of two-point functions of local Hadamard form is a smooth function, then for a Gaussian, Hadamard state $\widetilde{\psi}$ it makes sense to define the regularized fluctuations, dubbed $\widetilde{\psi}(:\Psi^2(f):)$, as
  \begin{equation}
      \widetilde{\psi}(:\Psi^2(f):):= \lim\limits_{\varepsilon\rightarrow 0^+} \int\limits_{\mathcal{M}} d^4x\,   \widetilde{\psi}(:\Psi^2(x):)f(x),
  \end{equation}
  where
  \begin{equation}
    \label{eq: divergence subtraction eifjwojef}
    \widetilde{\psi}(:\Psi^2(x):) :=  \lim\limits_{x'\rightarrow x } \left\{\widetilde{\psi}_2(x,x') - \psi_2(x,x') \right\}.
  \end{equation}
  As mentioned before, the  ``divergence-subtraction'' form of implementing a regularization as in Equation \eqref{eq: divergence subtraction eifjwojef} is tantamount to the usual normal-ordering procedure performed on Minkowski spacetime, as explicitly specified in \cite[Pg.6]{LecturesQEI_Fewster}.
\end{example}

\begin{example}[Regularized energy-momentum tensor] \label{eg: Regularized energy-momentum tensor} As mentioned in Example \ref{eg:Energy-momentum tensor}, the energy-momentum tensor given in Equation \eqref{eq: stress energy tensor def} is ill-defined. However, for a Gaussian, Hadamard state with two-point function $\psi_2$, analogously to Example \ref{eg: Regularized vacuum fluctuations}, the standard regularized energy-momentum tensor
     \begin{equation}
       \label{eq: regularized stress energy tensor def}
         \braket{:T_{\mu\nu}(x):} := \lim\limits_{x'\rightarrow x}\left\{\mathcal{D}_{\mu\nu}(x,x')\left[\psi_2(x,x') - h_{0^+}(x,x')\right]\right\},
     \end{equation}
     with differential operator $\mathcal{D}_{\mu\nu}(x,x')$ given by
     \begin{align}
       \label{eq: differential operator stress energy tensor moretti}
       &  \mathcal{D}_{\mu\nu}(x,x') := (1-2\xi) g_{\nu}\hspace{.5pt}^{\nu'}(x,x')\nabla_\mu\nabla_{\nu'} -2\xi\nabla_\mu\nabla_{\nu} + G_{\mu\nu} +\nonumber\\ & \quad + g_{\mu\nu}\left[  2\xi \Box + \left(2\xi-\frac{1}{2}  \right) g_{\rho}\hspace{.5pt}^{\rho'}(x,x') \nabla^\rho\nabla_{\rho'} -\frac{1}{2}m_0^2\right],
     \end{align}
  is well-defined and finite, see e.g. \cite{decanini2008hadamard,hack2012stress}.
\end{example}

The technique of using a Hadamard parametrix, as per Equation \eqref{eq:Parametrix}, when dealing with linear partial differential equations was first introduced by Hadamard  \cite{Hadamard1923}, widely studied by Riesz \cite{riesz1949integrale}, and eventually brought to the context of quantum field theory, see \cite{kay1991theorems}. In Definition \ref{def: Hadamard state, local Hadamard form}, we introduced the \textit{local} version of the Hadamard condition. However, this notion is more intricate than the brief discussion above suggests. Precisely, as the nomenclature indicates, there is also a \textit{global Hadamard condition}, introduced in \cite{kay1991theorems}. Roughly speaking, to be of global Hadamard form a two-point function must not possess further singularities, other then the lightlike ones of the local Hadamard form, even for arbitrarily spacelike separated points. Almost thirty years ago, Radzikowski \cite{radzikowski1996micro} introduced another associated notion called \textit{microlocal spectrum condition}, which is a local energy positivity condition stated in the context of microlocal analysis and that restricts the wavefront set of the two-point function to coincide with that of Minkowski vacuum state. In addition, he showed that on globally hyperbolic spacetimes the microlocal spectrum condition is equivalent to both the local and global Hadamard conditions. The upshot of his work is that it allowed for the employment of techniques from microlocal analysis in the formalism of quantum field theory, for example, in the construction of explicit examples of Hadamard states on Schwarzschild and on cosmological spacetimes \cite{dappiaggi2011rigorous,dappiaggi2011approximate}.

It has been shown that the Hadamard condition ``propagates'' from one Cauchy surface to another, guaranteeing the existence of global, Gaussian, Hadamard states on any globally hyperbolic spacetime \cite{brunetti2015advances,fulling1981singularity,radzikowski1996micro,sahlmann2000passivity,WaldQFTCS}. However, on non-globally hyperbolic spacetimes the results above-mentioned do not hold in general: the local and global Hadamard conditions and the microlocal spectrum condition are not equivalent. In addition, there is not a unique recipe for the construction of Hadamard states on general spacetimes, see e.g.  \cite{brum2014hadamard,dappiaggi2016constructing,dappiaggi2017hadamard,gerard2014construction,gerard2016construction}. Be that as it may, even though we consider non-globally hyperbolic spacetimes in this thesis, they are all static. In this case, there is a standard prescription for the construction of physically-sensible states \cite{bussola2017ground,dappiaggi2018ground,dappiaggi2016hadamard,dappiaggi2019fundamental,dappiaggi2018mode} such that the local Hadamard condition specifies Hadamard states in every globally hyperbolic subregion of the spacetime, as shown in \cite{sahlmann2000passivity}.

In Section \ref{sec: Physically-sensible states in Schwarzschild-like coordinates}, I show how to obtain ground and thermal states that are both Gaussian and Hadamard on static spacetimes that admit Schwarzschild-like coordinates. Before that, in the next section, I define and describe how to choose a physically-sensible dynamics on static, stably-causal, not necessarily globally hyperbolic spacetimes.

 \section{Physically-sensible dynamics on static spacetimes}
 \label{sec: Physically-sensible dynamics on static spacetimes}
 \enlargethispage{1\baselineskip}
 In this section, I give a succinct account of the results obtained by Ishibashi and Wald in \cite{ishibashi2003dynamics,wald1980dynamics}, where they showed that, even though the dynamics is not unique, on static, stably-causal, non-globally hyperbolic spacetimes, there is a prescription to obtain a physically-sensible dynamics. Specifically, if we impose boundary conditions that yield self-adjoint extensions of the spatial part of the Klein-Gordon operator, then one can associate to the Klein-Gordon equation a well-posed Cauchy-boundary value problem, advanced and retarded fundamental solutions exist, and there is a naturally well-defined, positive and conserved, energy functional for a Klein-Gordon field. First, in Section \ref{subsec: A positive and conserved energy}, I collect their results in a single theorem. Then, in Section \ref{subsec: Obtaining the self-adjoint extensions}, I show how to obtain such self-adjoint extensions on backgrounds that admit Schwarzschild-like coordinates. %

\subsection{A positive and conserved energy}
\label{subsec: A positive and conserved energy}
Let $\mathcal{M}$ be a static, stably-causal, not necessarily globally hyperbolic spacetime with metric tensor $g$ associated with the line element given in Equation \eqref{eq: metric static spacetimes}, global timelike Killing field $\partial_t$, foliation $\{t\}\times \Sigma$ such that each point in $\mathcal{M}$ is labelled $(t,\underbar{x})$ and with $\mathbf{n}$ being the future-directed, unit vector field normal to $\Sigma$. Let $\Psi$ be a free, scalar field with mass $m_0$ and coupled, through the parameter $\xi$, to the scalar curvature $\mathbf{R}$, as per Equation \eqref{eq: KG}. The Klein-Gordon equation reads
\begin{equation}
\label{eq: KG static spacetimes tilde}
  \left(\frac{\partial^2}{\partial t^2} + A \right)\Psi = 0,
\end{equation}
with
\begin{equation*}
\label{eq: KG static spacetimes operator A}
 A := f(\underbar{x}) \left( \frac{1}{\sqrt{|g|}}\partial_i(\sqrt{|g|} h^{ij}(\underbar{x})\partial_j) + m_{0}^2 +\xi \mathbf{R}\right),
\end{equation*}
where the functions $f(\underbar{x})$ and $h^{ij}(\underbar{x})$ are that of Equation \eqref{eq: metric static spacetimes}. Let $d\Sigma$ be the natural volume on $\Sigma$ induced by the metric tensor $g$. The operator $A$ is positive and symmetric on $\mathcal{H}:=L^2(\Sigma,f^{-1}(\underbar{x}) d\Sigma )$ if we take its domain to be $C_0^\infty(\Sigma)$, see \cite{wald1980dynamics}. In this case, $A$ has at least one positive, self-adjoint extension, namely the Friedrichs extension \cite[Pg.193]{zettl2005sturm}. It is worth mentioning that if $\mathcal{M}$ is furthermore globally hyperbolic, then the operator $A$ is essentially self-adjoint on $C_0^\infty(\Sigma)$ \cite{kay1978linear,brum2013vacuum}, i.e. it has a unique self-adjoint extension.

\begin{theorem}[Physically-sensible dynamics]
\label{thm: physically-sensible dynamics} In the above setting, let $A_E$ be any positive, self-adjoint extension of $A$, with domain $\text{\em Dom}(A_E)\subset \mathcal{H} $. For $t\in\mathbb{R}$ and $(\Psi_0,\dot\Psi_0)\in (C_0^\infty(\mathcal{M})\times C_0^\infty(\mathcal{M}))$, define
 \begin{equation}
   \label{eq: sol Psi_t dynamics A_E}
   \Psi_t:=\cos(A_E^{1/2}t)\Psi_0+A_E^{-1/2}\sin(A_E^{1/2}t)\dot\Psi_0\in\mathcal{H}.
 \end{equation}
Let $ K_0 := \text{\em supp}(\Psi_0)\cup\text{ \em supp}(\dot\Psi_0)$ and define the time translation $\widehat{T}_t$ and time reflection $\widehat{R}_t $ operators acting on smooth functions $F:\mathcal{M}\rightarrow \mathcal{M}$ by
 \begin{align}
   &\widehat{T}_t F(s,\underbar{x}) = F(s-t,\underbar{x}), \\
   &\widehat{R}_t F(t,\underbar{x}) = F(-t,\underbar{x}).
 \end{align}
\noindent Then, there exists a unique solution $\Psi\in C^\infty(\mathcal{M})$ such that the following properties hold:
\begin{itemize}
  \item[]\normalfont{i. (Klein-Gordon equation)} $P\Psi = 0 \text{ on } \mathcal{M};$
  \item[]\normalfont{ii. (Agreement within $D(\Sigma)$)} $\Psi|_{\{t\}\times\Sigma} = \Psi_t   \text{ and } \nabla_{\mathbf{n}}\Psi|_{\{t_0\}\times\Sigma} = \partial_t\Psi_t;$
 \item[]\normalfont{iii. (Causality)} $\normalfont\text{supp}(\Psi) \subset J^+(K_0)\cup J^-(K_0);$
 \item[]\normalfont{iv. (Time-translation invariance)} if $\Psi$ is the solution associated with the initial data $(t_0,\Psi_0,\dot\Psi_0)$, then $\widehat{T}_{-t}\Psi$ is the one associated with $(t+t_0,\Psi_t,\dot\Psi_t)$;
 \item[]\normalfont{v. (Time-reflection invariance)} if $\Psi$ is the solution associated with the initial data $(t_0,\Psi_0,\dot\Psi_0)$, then $\widehat{R}_{t}\Psi$ is the one associated with $(-t_0,\Psi_0,-\dot\Psi_0)$.\\
\end{itemize}
Moreover, for solutions $\Psi,\Phi\in C_0^\infty(\mathcal{M})$ with Cauchy data given by, respectively, $(t_0,\Psi_0,\dot\Psi_0)$ and $(t_0,\Phi_0,\dot\Phi_0)$, define the energy functional
  \begin{equation}
  \label{eq: energy functional with A_E two diff sols}
    \mathcal{E}[\Psi,\Phi] := \braket{\dot\Psi_0,\dot\Phi_0}_{L^2} + \braket{\Psi_0,A_E\Phi_0}_{L^2}.
   \end{equation}
Let $\mathcal{W}$ be the space of solutions $\Psi$, given by the prescription above, with initial data in $C_0^\infty(\Sigma)\times C_0^\infty(\Sigma)$. Then, $\mathcal{E}$, as per Equation (\ref{eq: energy functional with A_E two diff sols}) defines an inner product
\begin{equation}
  \label{eq:inner product energy owiejfiowjf}
  \mathcal{E}:\mathcal{V}\times\mathcal{V}\rightarrow \mathbb{R}
\end{equation}
on the vector space $\mathcal{V}$ of solutions to the Klein-Gordon equation that can be written as finite linear combinations of $\widehat{T}_{t}\Phi$ for $\Phi \in \mathcal{W}$. It follows that Equation (\ref{eq: sol Psi_t dynamics A_E}) prescribes the unique dynamics for which it holds that, $\forall \Psi,\Phi\in\mathcal{V}$, and $\forall t\in\mathbb{R}$
\begin{itemize}
  \item[]\normalfont{vi. (Positivity)} $\mathcal{E}[\Psi,\Psi] \geq 0$;
  \item[]\normalfont{vii. (Time-translation invariance)} $\mathcal{E}[\widehat{T}_t \Psi,\widehat{T}_t \Phi] = \mathcal{E}[\Psi, \Phi]$;
  \item[]\normalfont{viii. (Time-reflection invariance)} $\mathcal{E}[\widehat{R}_t \Psi,\widehat{R}_t \Phi] = \mathcal{E}[\Psi, \Phi]$;
  \item[]\normalfont{ix. (Agreement with the globally hyperbolic case)} if $\mathcal{M}$ is globally hyperbolic, then $\mathcal{W}= \mathcal{V}$ and $\mathcal{E}[\Psi,\Phi] = \braket{\dot\Psi_0,\dot\Phi_0}_{L^2} + \braket{\Psi_0,A \Phi_0}_{L^2}$.
  \item[]\normalfont{x. (Compatibility of convergence with respect to the norm induced by $\mathcal{E}$)} if $\{\Psi_n\}$ in $\mathcal{V}$ is a Cauchy sequence (with respect to the Hilbert space norm specified by the inner product of Equation \eqref{eq:inner product energy owiejfiowjf}) such that its initial data $((\Psi_n)_0,(\dot{\Psi}_n)_0)$ converge uniformly on compact subsets of $\Sigma$ to the initial data of $\Psi$, and analogously to all their spatial derivatives, then it converges in norm to $\Psi$: $\lim\limits_{n\rightarrow \infty} \mathcal{E}[\Psi_n-\Psi,\Psi_n-\Psi]=0$.
\end{itemize}
\end{theorem}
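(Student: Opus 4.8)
The whole statement is driven by the spectral calculus for the positive self-adjoint operator $A_E$: formula \eqref{eq: sol Psi_t dynamics A_E} is nothing but the unique $\mathcal{H}$-valued solution of the abstract wave equation $\ddot{\Psi}_t=-A_E\Psi_t$ with data $(\Psi_0,\dot{\Psi}_0)$, and the plan is to (a) upgrade this curve in $\mathcal{H}$ to a genuine smooth function on $\mathcal{M}$, (b) show it propagates causally, and (c) read the energy identities off the functional calculus. First I would observe that, since $A_E\geq 0$, the operators $\cos(tA_E^{1/2})$ and $A_E^{-1/2}\sin(tA_E^{1/2})$ are bounded on $\mathcal{H}$ for every $t$ --- the second because $\lambda\mapsto\lambda^{-1/2}\sin(t\lambda^{1/2})$ extends continuously to $\lambda=0$ with value $t$ --- so $\Psi_t\in\mathcal{H}$, the map $t\mapsto\Psi_t$ is real-analytic into $\mathcal{H}$, and $\partial_t^2\Psi_t=-A_E\Psi_t$ once $\Psi_0\in\text{Dom}(A_E)$. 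Since $A$ is a differential operator it maps $C_0^\infty(\Sigma)$ into itself, so Cauchy data in $C_0^\infty(\Sigma)\times C_0^\infty(\Sigma)$ actually lie in $\bigcap_{k\ge1}\text{Dom}(A_E^k)$; then $t\mapsto\Psi_t$ is smooth as a curve into each $\text{Dom}(A_E^k)$, and elliptic regularity for $A$ together with Sobolev embedding identifies $\bigcap_k\text{Dom}(A_E^k)$ with a space of smooth functions on $\Sigma$, whence $\Psi(t,\underbar{x}):=\Psi_t(\underbar{x})\in C^\infty(\mathcal{M})$.

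Properties i--v then come cheaply. Property i is $\partial_t^2\Psi=-A_E\Psi=-A\Psi$, i.e. $P\Psi=0$, using that $A_E$ acts as the differential operator $A$ on the smooth elements of its domain; property ii holds by the very construction $\Psi(t,\underbar{x})=\Psi_t(\underbar{x})$ together with $\partial_t\Psi_t|_{t=0}=\dot\Psi_0$; properties iv and v follow from the addition and parity laws for $\cos(tA_E^{1/2})$ and $\sin(tA_E^{1/2})$, which turn shifted or reflected data into the shifted or reflected solution; and uniqueness of a smooth $\Psi$ with i--v is immediate from ii, which already fixes every time slice. The delicate item is property iii, causality: the extension $A_E$ carries no locality a priori, so I would argue geometrically. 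Since $\mathcal{M}$ is static and stably causal it can be exhausted by an increasing family of globally hyperbolic open subsets each containing a neighbourhood of $\Sigma\cap J(K_0)$; on any such subset the restriction of $\Psi$ is a smooth solution of $P\Psi=0$ with the prescribed Cauchy data, hence by the uniqueness in Theorem \ref{thm:advanced and retarded prop} it must coincide there with the local advanced-minus-retarded solution, which is supported in $J(K_0)$; passing to the union gives $\text{supp}(\Psi)\subset J^+(K_0)\cup J^-(K_0)$. This finite-speed statement for an evolution defined purely through an abstract self-adjoint extension is the technical heart of the construction.

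For the energy part, I would first note that for $\Phi\in\mathcal{W}$ the data lies in $C_0^\infty\times C_0^\infty\subset\text{Dom}(A_E^{1/2})\times\mathcal{H}$, and for a time translate $\widehat{T}_t\Phi\in\mathcal{V}$ the data at $t=0$ is $(\Phi_t,\dot\Phi_t)$, still in $\text{Dom}(A_E^{1/2})\times\mathcal{H}$ by the boundedness in the first step; hence $\mathcal{E}[\Psi,\Phi]=\braket{\dot\Psi_0,\dot\Phi_0}_{L^2}+\braket{A_E^{1/2}\Psi_0,A_E^{1/2}\Phi_0}_{L^2}$ is a well-defined symmetric bilinear form on $\mathcal{V}$, non-negative on the diagonal by positivity of $A_E$ (property vi) and positive-definite there because $\mathcal{E}[\Psi,\Psi]=0$ forces $\dot\Psi_0=0$ and $A_E^{1/2}\Psi_0=0$, so $\Psi_t\equiv\Psi_0$ solves $A\Psi_0=0$ as a uniform-on-compacts limit of $C_0^\infty$ solutions of the same elliptic equation and therefore vanishes by unique continuation. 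Property vii is a short computation: with $B=A_E^{1/2}$, $C=\cos(tB)$, $S=\sin(tB)$ one has $B\Psi_t=CB\Psi_0+S\dot\Psi_0$ and $\dot\Psi_t=-SB\Psi_0+C\dot\Psi_0$, and since $B,C,S$ mutually commute and $C^2+S^2=\mathbbm{1}$ the cross terms in $\|B\Psi_t\|^2+\|\dot\Psi_t\|^2$ cancel and the square terms recombine to $\|B\Psi_0\|^2+\|\dot\Psi_0\|^2=\mathcal{E}[\Psi,\Psi]$, with bilinearity extending this to pairs; property viii is immediate since reflection only flips the sign of $\dot\Psi_0$. Property ix uses that on a globally hyperbolic $\mathcal{M}$ the operator $A$ is essentially self-adjoint on $C_0^\infty(\Sigma)$, so $A_E=\overline{A}$ is forced and ordinary finite propagation speed keeps compactly supported data compact, giving $\mathcal{V}=\mathcal{W}$ and the energy formula with $A$; property x is a compatibility lemma --- an $\mathcal{E}$-Cauchy sequence converges in the completion to a solution whose data is the $\mathcal{H}$-limit of $((\Psi_n)_0,(\dot\Psi_n)_0)$, and extracting an almost-everywhere convergent subsequence and matching it with the assumed uniform-on-compacts limit identifies that solution with $\Psi$. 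Finally, the uniqueness of the dynamics subject to vi--x follows by reversing the construction: a positive, conserved, reflection-invariant energy of the form \eqref{eq: energy functional with A_E two diff sols} determines a positive symmetric extension of the quadratic form $\braket{\,\cdot\,,A\,\cdot\,}$, hence a positive self-adjoint extension of $A$, which then forces \eqref{eq: sol Psi_t dynamics A_E}.

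The genuine obstacle is the causality step iii --- recovering finite propagation speed for a dynamics built purely from the functional calculus of an abstract self-adjoint extension, which is exactly the non-trivial content of Ishibashi and Wald's analysis; the remaining items are either spectral-calculus bookkeeping (i, ii, iv--viii, x) or standard elliptic regularity (the smoothness claim), with the geometric hypotheses of staticness and stable causality entering only to supply the globally hyperbolic exhaustion that drives iii.
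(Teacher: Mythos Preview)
The paper does not supply its own proof of this theorem: immediately after the statement it says ``One can refer to \cite{ishibashi2003dynamics,ishibashi2004dynamics,wald1980dynamics} for a discussion on the comments above and for proofs concerning the theorem itself.'' So there is no in-paper argument to compare against; the theorem is quoted from Wald (1980) and Ishibashi--Wald (2003).

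Your sketch is essentially a faithful reconstruction of the Wald/Ishibashi--Wald argument: functional calculus of a positive self-adjoint $A_E$ to produce the abstract curve $t\mapsto\Psi_t$, elliptic regularity plus $C_0^\infty\subset\bigcap_k\mathrm{Dom}(A_E^k)$ to upgrade it to a smooth function on $\mathcal{M}$, addition/parity formulas for $\cos$ and $\sin$ for items iv--v and vii--viii, and a comparison with the locally unique PDE solution in globally hyperbolic subregions for the causality item iii. That is exactly the architecture of the cited references, and you correctly flag iii as the only genuinely delicate step. One small caution: your phrasing ``exhausted by an increasing family of globally hyperbolic open subsets'' is a bit stronger than what is actually used or needed. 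Wald's argument is local --- for each $p\notin J(K_0)$ one finds a globally hyperbolic neighbourhood whose Cauchy surface sits in $\Sigma$ and misses $K_0$, then invokes local uniqueness there; an honest exhaustion of a non-globally-hyperbolic $\mathcal{M}$ by globally hyperbolic opens with $\Sigma$-slices as Cauchy surfaces need not exist. Apart from that wording, your outline matches the literature the paper defers to.
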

We say that a dynamics determined by Equation \eqref{eq: sol Psi_t dynamics A_E} with an energy functional defined by Equation \eqref{eq: energy functional with A_E two diff sols}, such that properties i) to x) hold, is \textit{physically-sensible}. Accordingly, Theorem \ref{thm: physically-sensible dynamics} grants us a prescription for finding physically-sensible dynamics by finding the self-adjoint extensions of the spatial part of the Klein-Gordon operator. When there is more than one positive self-adjoint extension, then there are different, inequivalent, physically-sensible dynamics to be considered. This ambiguity cannot be resolved, in general, without an experiment or further assumptions regarding the physical scenario in consideration. An extra ambiguity emerge when a positive, self-adjoint extension have a discrete zero eigenvalue, i.e. when ``zero-modes'' are present \cite[Ch.8]{rajaraman1982solitons}. Since the latter does not occur in the cases treated in Chapter \ref{chap: Applications}, it is not taken into account here. In addition, non-positive, self-adjoint extensions may also exist, but they are ruled out by an instability argument: generic solutions exhibit exponential growth in time. One can refer to \cite{ishibashi2003dynamics,ishibashi2004dynamics,wald1980dynamics} for a discussion on the comments above and for proofs concerning the theorem itself. Also, as mentioned in the previous section, a discussion on this topic is given in \cite{dappiaggi2019fundamental} for a globally hyperbolic spacetime with timelike boundary.

\subsection{Obtaining the self-adjoint extensions}
\label{subsec: Obtaining the self-adjoint extensions}

Turning to singular Sturm-Liouville theory, in this section we obtain self-adjoint extensions for the spatial part of the Klein-Gordon operator. In the remainder of this chapter, the underlying background $\mathcal{M}$ is taken to be an $n$-dimensional, static, not necessarily globally hyperbolic spacetime that admits Schwarzschild-like coordinates, as defined in Chapter \ref{chap: The Spacetimes}.

We start by writing the Klein-Gordon equation in Schwarzschild-like coordinates $(t,r,\underline{\theta})$, as in Definition \ref{def: Schwarzschild-like coordinates}. Let $\Delta_{j}$ be the $(n-2)$-dimensional Laplacian on the hypersurfaces $\Sigma_{j}^{n-2}$, then the Klein-Gordon equation \eqref{eq: KG} reads
   \begin{equation}
     \label{eq: KG in Schd-like coordinates}
     \left\{-f(r)^{-1}\partial_t^2  + \frac{1}{\sqrt{f(r)h(r)}r^{n-2} } \partial_r (\sqrt{f(r)h(r)}r^{n-2}h^{-1}(r)  \partial_r)+  \frac{1}{r^2}\Delta_{j} -m_0^2-\xi \mathbf{R}\right\}\Psi=0.
   \end{equation}
Let $Y_j(\underline{\theta})$ be the eigenfunctions of $\Delta_{j}$ with eigenvalue $\lambda_j$ \cite{jorgensen1987harmonic}. Note that the specific value of $j$, and hence the cardinality of the set of eigenvalues $\lambda_j$, does not play a role as far as Equation \eqref{eq: KG in Schd-like coordinates} is concerned. In addition, since $\partial_t$ is a Killing vector field, we can attempt to find solutions of Equation \eqref{eq: KG in Schd-like coordinates} by taking the Fourier transform with respect to $t$; equivalently, let us assume its solutions have a harmonic time dependence:
 \begin{equation}
   \label{eq: ansatz KG in mode decomposition}
   \Psi(t,r,\underline{\theta})=e^{-i\omega t}R(r)Y_j(\underline{\theta}).
 \end{equation}
Substituting Equation \eqref{eq: ansatz KG in mode decomposition} into Equation \eqref{eq: KG in Schd-like coordinates} yields the \textit{radial equation}:
\begin{equation}
 \label{eq: the radial equation Shd-like coord}
 R''(r) +   \left[ \frac{1}{2} \left(   \frac{f'(r)}{f(r)} - \frac{h'(r)}{h(r)} \right) + \frac{n-2}{r}  \right] R'(r)+ \left[\left(\frac{\omega ^2}{f(r)}+\frac{\lambda_j }{r^2} -m_0^2 - \xi \mathbf{R}\right) h(r) \right] R(r)= 0 ,
\end{equation}
where the coordinate $r$ is defined in the interval $(a,b)=:I\in\mathbb{R}$, for $ -\infty \leq a < b \leq +\infty$, in accordance with Definition \ref{def: Schwarzschild-like coordinates}. For a fixed $\lambda_j$, Equation \eqref{eq: the radial equation Shd-like coord} specifies a radial operator:
\begin{equation}
 \label{eq: radial operator Shd-like coord}
  A_{\lambda_j} := \frac{d^2}{dr^2} +   \left[ \frac{1}{2} \left(   \frac{f'(r)}{f(r)} - \frac{h'(r)}{h(r)} \right) + \frac{n-2}{r}  \right]  \frac{d}{dr}+ \left[\left(\frac{\omega ^2}{f(r)}+\frac{\lambda_j }{r^2} -m_0^2 - \xi \mathbf{R}\right) h(r) \right].
\end{equation}

Next, we recast the radial equation as a Sturm-Liouville problem. Let
\begin{equation}
  \label{eq: p2 eigenvalues for general SL schd-like}
  p^2:=\begin{cases}
          \omega^2, \text{ if }f(r) \neq 1, \\
          \omega^2 - m_{\text{eff}}^2, \text{ if }f(r) = 1 \text{ and }  \mathbf{R} \text{ is constant}, \\
          \omega^2 - m_0^2, \text{ otherwise}.
    \end{cases}
\end{equation}
\begin{subequations}
  \label{eq: def P, Q S of sturm liouville op}
Equation \eqref{eq: the radial equation Shd-like coord} in Sturm-Liouville form corresponds to the eigenvalue problem
\begin{equation}
 \label{eq: Sturm-Liouville radial eq}
  L_{p^2} R(r) = p^2 R(r),
\end{equation}
with associated Sturm-Liouville operator given by
\begin{equation}
  \label{eq: Sturm-Liouville operator L_p2}
  L_{p^2}:=\frac{1}{S(r)} \left(- \frac{d}{d r} \left(P(r)\frac{d}{d r}\right) + Q(r)\right).
\end{equation}
Since $ L_{p^2} = A_{\lambda_j} + p^2$, we take its domain to coincide with that of $A_{\lambda_j}$. Particularly for $p^2 = \omega^2$, the auxiliary functions are
\begin{align}
  & P(r) := r^{n-2 }\sqrt{\frac{f(r)}{h(r)}} =: \frac{r^{2n-4}}{S(r)},\\
  & Q(r) := r^{n-4}(\lambda_j - r^2 m_{\text{eff}}^2 )\sqrt{f(r)h(r)}.
\end{align}
\end{subequations}
The functions $Q$ and $S$ for the other possibilities of $p^2$ as given in Equation \eqref{eq: p2 eigenvalues for general SL schd-like} can be directly obtained from Equation \eqref{eq: def P, Q S of sturm liouville op}.

Accordingly, we view $A_{\lambda_j}$ as an operator acting on the Hilbert space of square-integrable functions with respect to the measure induced by the function $S$, i.e. $L^2(I, S(r)dr)$ with scalar product and $L^2$-norm given, respectively, by
\begin{equation*}
  \braket{\psi,\phi} :=  \int_a^b \psi(r)\phi(r)^*S(r) dr \quad \text{ and }\quad ||\psi||_{L^2} := \braket{\psi,\psi}
  , \text{ }\quad \forall \psi,\phi\in L^2(I, S(r)dr).
\end{equation*}
If we take $C_0^\infty(I)$ to be the domain of $A_{\lambda_j}$, then it is a symmetric operator. However, in general, $A_{\lambda_j}$ is neither positive, bounded nor self-adjoint. The latter is due to the fact that, if $A_{\lambda_j}^*$ is the adjoint of $A_{\lambda_j}$, then its domain $D(A_{\lambda_j}^*) \subset L^2(I, S(r)dr)$ is, in general, much larger then that of $A_{\lambda_j}$.

In the following, we shall assume $A_{\lambda_j}$ is positive and symmetric. To obtain its self-adjoint extensions, we invoke the works of Weyl \cite{weyl1910gewohnliche} and Von Neumann \cite{neumann1930allgemeine}, as extensively discussed in the literature \cite{reed1975ii,zettl2005sturm}. The idea is that self-adjoint extensions of $A_{\lambda_j}$ relate to square-integrability conditions of its associated solutions at the endpoints $a$ and $b$, which are in correspondence with the admissible boundary conditions at these endpoints. In turn, the boundary conditions can be explicitly stated in terms of the principal and secondary solutions of the Sturm-Liouville problem. To understand these relations, let us introduce Weyl's endpoint classification and the concepts of principal and secondary solutions.

Let $L(J,\mathbb{C})$ be the set of Lebesgue integrable complex valued functions defined almost everywhere on a Lebesgue measurable subset $J$ of $\mathbb{R}$. Weyl's endpoint classification provides the following nomenclature  \cite[Ch.7]{zettl2005sturm}.

\begin{definition}[Weyl's endpoint classification] \label{def: Weyl's endpoint classification} Consider the Sturm-Liouville problem as per Equation \eqref{eq: def P, Q S of sturm liouville op} with $r\in(a,b)$. The left endpoint $a$ is
  \begin{enumerate}
    \item[i)] \textit{regular} if $1/P,Q,S\in L((a,r_0),\mathbb{C})$ for some (and hence any) $r_0\in I$;
    \item[ii)] \textit{singular} if it is not regular;
    \item[iii)] \textit{limit circle} if all solutions of Equation \eqref{eq: def P, Q S of sturm liouville op} are in $L^2((a,r_0),S(r)dr),\, \forall r_0\in I$;
    \item[iv)] \textit{limit point} if it is not limit circle.
  \end{enumerate}
An analogous definition holds for the right endpoint $b$, considering instead $( L((r_0,b),\mathbb{C}))$ in items i) and iii).
\end{definition}

\begin{definition}[Principal and secondary solutions] \label{def: principal and secondary solutions} Let $u$ and $v$ be solutions of the Sturm-Liouville problem as per Equation \eqref{eq: def P, Q S of sturm liouville op} that are non-vanishing for $r\in (a, r_0)\subset I$. Then
  \begin{enumerate}
   \item $u$ is called a \textit{principal solution at $a$} if, for any solution $y$ that is linearly-independent from $u$,
          \begin{equation}
            u(r) \xrightarrow{r\rightarrow a}o(y(r)).
          \end{equation}
    \item $v$ is called a \textit{secondary (or non-principal) solution at $a$} if it is not a principal solution.
  \end{enumerate}
Again, an analogous definition holds for a right endpoint $b$.
\end{definition}

 In all scenarios considered in the next chapter, the radial equation has singular endpoints such that one is limit point and the other is limit circle. This shall be verified case-by-case by computing, asymptotically at each endpoint, the $L^2$-norms of the radial solutions, according to Definition \ref{def: Weyl's endpoint classification}. The next theorem summarizes the necessary results from singular Sturm-Liouville theory that allow us to identify the desired self-adjoint extensions in the case of interest to this thesis, i.e. when one endpoint is limit point and the other is limit circle. For an extensive discussion in this direction, check \cite[Ch.10]{zettl2005sturm}.

\begin{theorem}[Self-adjoint extensions] \label{thm: self-adjoint extensions}
  Consider the Sturm-Liouville problem as per Equation (\ref{eq: def P, Q S of sturm liouville op}), $I=(a,b)$ and $\mathcal{H}:= L^2(I, S(r)dr) $. Suppose $a$ and $b$ are singular endpoints such that $a$ is limit point and $b$ is limit circle. Let $[.,.]$ be the Lagrange sesquilinear form, i.e. for $y,z\in \mathcal{H}$, $r_0\in I$:
  \begin{equation}
    [y,z](r_0) := \lim\limits_{r\rightarrow r_0} \left\{P(r) \left[y(r) \overline{z'(r)} -   y'(r) \overline{z(r) } \right]\right\}.
  \end{equation}
  Let $u$ and $v$ be, respectively, the principal and secondary solutions normalized to $[u,v](r)=1$. In addition, for $y\in \mathcal{H}$, consider boundary conditions of the form
  \begin{align}
      \label{eq: Robin bc with B1 and B2}
          B_1 [y,u](b) + B_2 [y,v](b) = 0,
    \end{align}
    where $B_1$ and $B_2$ are real-valued constants such that $(B_1, B_2) \neq (0,0)$.
    The boundary conditions as per Equation (\ref{eq: Robin bc with B1 and B2}) yields all self-adjoint extensions of the radial operator, as follows. For $A_{\lambda_j}$ as in Equation (\ref{eq: radial operator Shd-like coord}), let $A_{\lambda_j}^*$ be its adjoint with domain $D(A_{\lambda_j}^*)\subset \mathcal{H}$. Then, the operator
    \begin{align*}
            A_{\lambda_j,E} \,y   := A_{\lambda_j}^* y, \quad y\in D(A_{\lambda_j,E}):=  \left\{y\in \mathcal{H}:\, y \text{ satisfies Equation (\ref{eq: Robin bc with B1 and B2})} \right\}
    \end{align*}
    is a self-adjoint extension of $A_{\lambda_j}$.
\end{theorem}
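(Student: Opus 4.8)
The plan is to place the statement within the classical extension theory of symmetric Sturm--Liouville operators --- the von~Neumann / Glazman--Krein--Naimark (GKN) description --- and to identify the boundary conditions \eqref{eq: Robin bc with B1 and B2} with the Lagrangian subspaces of the Lagrange form at the limit-circle endpoint $b$.

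First I would fix the operator-theoretic setup. Write $A_{\min}$ for the closure of $A_{\lambda_j}|_{C_0^\infty(I)}$, a densely defined symmetric operator on $\mathcal{H}=L^2(I,S(r)dr)$, and $A_{\max}:=A_{\lambda_j}^*$ for the maximal operator, whose domain $D(A_{\max})$ consists of those $y\in\mathcal{H}$ for which $y$ and $Py'$ are locally absolutely continuous on $I$ and $L_{p^2}y\in\mathcal{H}$; singular Sturm--Liouville theory gives $A_{\min}^*=A_{\max}$, $A_{\max}^*=A_{\min}$, and Green's formula
\begin{equation*}
  \braket{A_{\max}y,z}-\braket{y,A_{\max}z}=[y,z](b)-[y,z](a),\qquad y,z\in D(A_{\max}),
\end{equation*}
expresses the obstruction to symmetry through the Lagrange sesquilinear form. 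By GKN, the self-adjoint extensions of $A_{\min}$ are in bijection with the maximal subspaces of $D(A_{\max})/D(A_{\min})$ on which the right-hand side vanishes identically (the Lagrangian subspaces of the boundary form).

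Next I would use the hypotheses on the endpoints. Since $a$ is limit point, $[y,z](a)=0$ for all $y,z\in D(A_{\max})$, so the boundary form is concentrated at $b$. Counting deficiency indices of $A_{\min}$ --- for $\mu\notin\mathbb{R}$ there are two $L^2$ solutions of $A_{\lambda_j}y=\mu y$ near the limit-circle endpoint $b$ and only one near the limit-point endpoint $a$ --- gives $n_\pm=2+1-2=1$, hence $\dim\bigl(D(A_{\max})/D(A_{\min})\bigr)=2$. Because $b$ is limit circle the principal and secondary solutions $u,v$ at $b$ are square-integrable there, so, multiplied by a smooth cutoff equal to $1$ near $b$ and to $0$ near $a$, they produce elements of $D(A_{\max})$ whose classes span the quotient; moreover the functionals $y\mapsto[y,u](b)$ and $y\mapsto[y,v](b)$ vanish on $D(A_{\min})$ and separate $D(A_{\max})/D(A_{\min})$. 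Normalizing $[u,v](b)=1$ one obtains
\begin{equation*}
  [y,z](b)=[y,u](b)\,\overline{[z,v](b)}-[y,v](b)\,\overline{[z,u](b)},\qquad y,z\in D(A_{\max}),
\end{equation*}
so that in the coordinates $y\mapsto\bigl([y,u](b),[y,v](b)\bigr)$ the Lagrange form at $b$ is the standard symplectic form on $\mathbb{C}^2$. A real pair $(B_1,B_2)\neq(0,0)$ cuts out the complex line $\{(p,q):B_1p+B_2q=0\}$, which is isotropic for this form --- its direction vector $(B_2,-B_1)$ satisfies $B_2\overline{(-B_1)}-(-B_1)\overline{B_2}=0$ because $B_1,B_2$ are real --- hence Lagrangian by dimension. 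Therefore the restriction $A_{\lambda_j,E}$ of $A_{\max}$ to the functions obeying \eqref{eq: Robin bc with B1 and B2} is symmetric, and, being of codimension one in $D(A_{\max})$, it is self-adjoint; conversely every self-adjoint extension corresponds to such a Lagrangian line and so arises from some $(B_1,B_2)$, unique up to a common nonzero real factor.

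The main obstacle is not a single deep estimate but the careful bookkeeping that makes $y\mapsto[y,u](b)$ and $y\mapsto[y,v](b)$ into faithful symplectic coordinates on $D(A_{\max})/D(A_{\min})$: one must verify that the cutoff versions of $u$ and $v$ genuinely belong to $D(A_{\max})$, that these functionals annihilate $D(A_{\min})$ and separate the two-dimensional quotient, and that the Lagrange form at $b$ has exactly the anti-diagonal shape displayed above --- all of which rest on the limit-circle behaviour at $b$ together with standard but technical facts about the maximal domain and the Lagrange identity collected in \cite{weyl1910gewohnliche,neumann1930allgemeine,reed1975ii,zettl2005sturm}. Which pairs $(B_1,B_2)$ moreover yield a \emph{positive} self-adjoint extension --- the property actually needed in Theorem~\ref{thm: physically-sensible dynamics} --- is a further restriction I would treat separately, case by case, in the applications; everything else is a routine application of the von~Neumann/GKN machinery.
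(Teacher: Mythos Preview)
Your proposal is correct and gives a clean outline of the standard von~Neumann/GKN argument: reduce self-adjoint extensions to Lagrangian subspaces of the boundary form, use the limit-point hypothesis at $a$ to kill the contribution there, compute deficiency indices $(1,1)$, and identify the real lines in the two-dimensional quotient $D(A_{\max})/D(A_{\min})$ with the one-parameter family of Robin conditions at $b$. The bookkeeping you flag as the main obstacle --- that the cutoff $u,v$ lie in $D(A_{\max})$, that $[\,\cdot\,,u](b)$ and $[\,\cdot\,,v](b)$ give symplectic coordinates, and the anti-diagonal form of the Lagrange bracket --- is exactly what is done in Chapter~10 of \cite{zettl2005sturm}.

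The paper, however, does not prove this theorem at all. It is presented as a summary of known results from singular Sturm--Liouville theory, with the sentence ``For an extensive discussion in this direction, check \cite[Ch.10]{zettl2005sturm}'' standing in lieu of a proof. So there is no proof in the paper to compare against; your sketch is essentially what one finds upon following that reference, and is the intended justification. Your closing remark that positivity of the extension is a separate, case-by-case issue is also in line with how the thesis proceeds in Chapter~\ref{chap: Applications}.
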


\begin{definition}[(Generalized) Robin boundary conditions] \label{def: generalized Robin boundary conditions} Let $y$, $u$ and $v$ be as in the theorem above. If $y$ satisfies Equation \eqref{eq: Robin bc with B1 and B2}, then we say $y$ satisfies a (generalized) Robin boundary condition parametrized by $B_1$ and $B_2$. Moreover, we can equivalently rewrite it in terms of a real-valued parameter $\gamma$ as
    \begin{align}
        \label{eq: Robin bc with gamma}
            \cos(\gamma)[y,u](b) + \sin(\gamma)[y,v](b) = 0
      \end{align}
and say $y$ satisfies a (generalized) Robin boundary condition parametrized by $\gamma$.
\end{definition}
The nomenclature ``generalized'' is justified by the fact that if the endpoint $b$ is regular, then Equation \eqref{eq: Robin bc with gamma} reduces to the standard form of Robin boundary conditions, as per Equation (7.1.6) in \cite[Pg.410]{greenBook}:
\begin{align}
    \label{eq: standard form of Robin bc}
        y(b) + \beta y'(b) = 0,  \text{ where } \beta=-\frac{\cos(\gamma)u(b) + \sin(\gamma)v(b)}{\cos(\gamma)u'(b) + \sin(\gamma)v'(b)}\in\mathbb{R}.
  \end{align}
If $\beta=0$ in Equation \eqref{eq: standard form of Robin bc}, then $y(b)=0$ and we say $y$ satisfies the Dirichlet boundary condition at $b$. If $\beta\rightarrow\infty$, then $y'(b)=0$ and we say $y$ satisfies Neumann boundary condition at $b$. In addition, if we take $u$ and $v$ such that
\begin{equation*}
\begin{aligned}
u(b) &= 0, \\
u'(b) &= c,
\end{aligned}
\hspace{.5cm}
\begin{aligned}
v(b) &= -c, \\
v'(b) &= 0,
\end{aligned}
\end{equation*}
where $c$ is a constant, then Equation \eqref{eq: standard form of Robin bc} yields $\beta = \tan(\gamma)$. That is, the boundary condition is independent of any other parameters of the problem.
\begin{remark}
A solution of the form $y = \cos(\gamma)u + \sin(\gamma)v$ satisfies Equation \eqref{eq: Robin bc with gamma}. Thus, the parameters $\gamma=0$ and $\gamma=\frac{\pi}{2}$ select, respectively, the principal and the secondary solutions. Taking into account the Definition \ref{def: principal and secondary solutions} and the nomenclature associated to the regular case, we shall refer to $\gamma=0$ and $\gamma=\frac{\pi}{2}$,
respectively,
as (generalized) Dirichlet and Neumann boundary conditions. However, in the singular case there is no preferred secondary solution. For example, if $v$ is a secondary solution, then $v + u$ is also a secondary solution. This means that what is called Neumann boundary condition in the singular case is ambiguous. Nevertheless, the nomenclature is useful.
\end{remark}

When the endpoints of the radial equation consists of one limit point and one limit circle, there is a one-parameter family of (generalized) Robin boundary conditions at the limit circle endpoint, which specify self-adjoint extensions of $A_{\lambda_j}$. Each one of these (acting on $L^2(I, S(r)dr)$), specifies a self-adjoint extension of the spatial part $A$ of the Klein-Gordon operator (acting on $ L^2(\Sigma, f^{-1}(\underbar{x})d\Sigma)$), as in Equation \eqref{eq: KG static spacetimes tilde}. This family of extensions, in turn, corresponds to a one-parameter family of inequivalent physically-sensible dynamics, by Theorem \ref{thm: physically-sensible dynamics}.

It is crucial to note that the class of boundary conditions associated with the self-adjoint extensions of the spatial part of the Klein-Gordon operator, although encompassing infinitely more possibilities than simply choosing Dirichlet boundary condition and the Friedrichs extension, still does not correspond to all possible physically-sensible dynamics one can obtain on a static, stably-causal, not necessarily globally hyperbolic spacetime. It excludes, for example, time dependent boundary conditions such as those of Wentzell type \cite{feller1957generalized,ueno1973wave}. This is not just a technical detail, since the class of Wentzell boundary conditions also describes physical systems, such as acoustic wave equations \cite{gal2003oscillatory}. They are also of interest within quantum field theory and work in the direction of generalizing the approach described here have been investigated in the last years \cite{zahn2018generalized,dappiaggi2018mode,dappiaggi2019fundamental}.

    \section{Physically-sensible states in Schwarzschild-like coordinates}
    \label{sec: Physically-sensible states in Schwarzschild-like coordinates}

    Let us construct the two-point functions of two physically-sensible states for a real, massive, free, scalar quantum field theory: a ground state and a KMS state; both Gaussian, as per Definition \ref{def: gaussian state} and of locally Hadamard form, as per Definition \ref{def: Hadamard state, local Hadamard form}. We start by constructing the causal propagator. Then, in Sections \ref{sec: Ground states} and \ref{sec: KMS states} we define and study each state separately. Again, assume the background spacetime $\mathcal{M}$ is static and admits Schwarzschild-like coordinates.

    Analogously to the ansatz of Equation \eqref{eq: ansatz KG in mode decomposition} for the solutions of the Klein-Gordon equation, and recalling that Schwarzschild-like coordinates are as in Definition \ref{def: Schwarzschild-like coordinates}, time translation invariance and the homogeneity of $\Sigma_{j}^{n-2}$ allow us to write the following ansatz for the integral kernel of the causal propagator
      \begin{equation}
        \label{eq:ansatz causal propagator}
        E(x,x')=\lim_{\varepsilon\rightarrow 0^+} \int_{\sigma(\Delta_{j})} d\eta_j\int_{\mathbb{R}} d\omega \frac{\sin(\omega (t-t'-i\varepsilon))}{\omega}\widetilde{E}(r,r') Y_{j}(\underline{\theta})\overline{Y_{j}(\underline{\theta}')}\text{,}
      \end{equation}
    where $Y_{j}$ are the eigenfunctions of the Laplacian $\Delta_{j}$, as before, while $d\eta_j$ is a suitable measure compatible with its spectrum $\sigma(\Delta_{j})$.
    In the particular case of $j=+1$, the underlying spacetime $\mathcal{M}$ is spherically-symmetric, $Y_{+1}$ are the spherical harmonics and the integral over $\sigma(\Delta_{+1})$ boils down to a discrete sum. In the next chapter, explicit examples involving the different possible values of $j$ are considered in detail. Here, focusing on the radial component $\widetilde{E}(r,r')$, the analysis is valid for all $j$. First, bearing in mind Proposition \ref{prop: causal propagator initial conditions}, let us impose the initial conditions satisfied by the causal propagator and restrict $\widetilde{E}(r,r')$.
    \begin{proposition}
     \label{prop: causal propagator initial conditions}
     A causal propagator with integral kernel given by Equation (\ref{eq:ansatz causal propagator}) satisfies Equation (\ref{eq:ccrs}) if and only if %
     \begin{equation}
          \label{eq:ccr2 implies}
            \int_{\mathbb{R}} d\omega\widetilde{E}(r,r^\prime) = \frac{\delta(r-r^\prime)}{ S(r)}.
        \end{equation}
    \end{proposition}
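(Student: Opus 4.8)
The plan is to reduce the two distributional requirements in Equation~\eqref{eq:ccrs} to a single identity for the radial kernel $\widetilde{E}(r,r')$, exploiting the explicit $t$-dependence built into the ansatz~\eqref{eq:ansatz causal propagator}.

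First, Equation~\eqref{eq:ccr1} is automatic for any kernel of the form~\eqref{eq:ansatz causal propagator}: setting $t'=t$ makes the argument of the sine vanish, so $E(x,x')|_{t'=t}=0$ irrespective of $\widetilde{E}$. Hence all the content sits in Equation~\eqref{eq:ccr2}. Differentiating under the integral sign, $\partial_t\big[\omega^{-1}\sin(\omega(t-t'-i\varepsilon))\big]=\cos(\omega(t-t'-i\varepsilon))$, and letting $t\to t'$ and then $\varepsilon\to 0^+$ leaves the constant factor $\cos 0=1$, so that, understood in $\mathcal{D}'$ on $\Sigma\times\Sigma$,
\begin{equation*}
  \partial_t E(x,x')\big|_{t=t'}\;=\;\int_{\sigma(\Delta_j)} d\eta_j\,\left(\int_{\mathbb{R}} d\omega\,\widetilde{E}(r,r')\right) Y_j(\underline{\theta})\,\overline{Y_j(\underline{\theta}')},
\end{equation*}
with the same expression, up to the cancelling sign, for $-\partial_{t'}E|_{t'=t}$. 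The interchange of $\partial_t$, the $\varepsilon$-limit and the $\omega$-integration is legitimate distributionally; crucially, the harmonic time dependence makes this step purely kinematical, in particular independent of the detailed form of $\widetilde{E}$.

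Next I would match this against $\delta_\Sigma(x,x')$. In Schwarzschild-like coordinates the spatial delta factorizes, $\delta_\Sigma(x,x')=\frac{\delta(r-r')}{S(r)}\,\delta_{\Sigma_j}(\underline{\theta},\underline{\theta}')$, with $S(r)$ the Sturm--Liouville weight of~\eqref{eq: def P, Q S of sturm liouville op} and $\delta_{\Sigma_j}$ the delta on $\Sigma_j^{n-2}$ with respect to its natural volume; the latter is precisely the completeness relation of the Laplacian eigenfunctions, $\int_{\sigma(\Delta_j)} d\eta_j\, Y_j(\underline{\theta})\overline{Y_j(\underline{\theta}')}=\delta_{\Sigma_j}(\underline{\theta},\underline{\theta}')$. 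Inserting both into Equation~\eqref{eq:ccr2} and using that $\{Y_j\}$ is a complete system---so that the $\eta_j$-coefficients of a distribution on $\Sigma_j^{n-2}$ are uniquely determined---equality forces $\int_{\mathbb{R}} d\omega\,\widetilde{E}(r,r')=\delta(r-r')/S(r)$, which is Equation~\eqref{eq:ccr2 implies}. For the converse, assume Equation~\eqref{eq:ccr2 implies}; reading the same chain of equalities backwards gives $\partial_t E|_{t=t'}=\delta_\Sigma(x,x')$ (and its time-reflected counterpart), while $E|_{t'=t}=0$ holds automatically, so Equation~\eqref{eq:ccrs} is satisfied.

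The argument is essentially a bookkeeping exercise; the main obstacle is getting the normalizations right---verifying that the radial factor of $\delta_\Sigma$ is exactly $\delta(r-r')/S(r)$ with the \emph{same} weight $S$ that appears in the Sturm--Liouville operator~\eqref{eq: Sturm-Liouville operator L_p2}, and that the completeness relation for the $Y_j$ is normalized to match. Since Equation~\eqref{eq:ccr2 implies} carries no $\lambda_j$-dependence, the forward direction also implicitly shows that $\int d\omega\,\widetilde{E}$ is independent of the angular eigenvalue $\lambda_j$, which is forced by uniqueness of the spectral decomposition. Once these normalizations are pinned down, the ``if and only if'' is immediate.
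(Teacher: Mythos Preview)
Your proof is correct and follows essentially the same approach as the paper's: verify \eqref{eq:ccr1} by direct inspection of the sine, differentiate in $t$ to turn the sine into a cosine that evaluates to $1$ at equal times, and then strip off the angular part via the completeness relation for the $Y_j$. You are slightly more careful than the paper in explicitly running the argument in both directions and in flagging the normalization of $\delta_\Sigma$ against the Sturm--Liouville weight $S(r)$, but the logical skeleton is identical.
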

    \begin{proof}
      Equation \eqref{eq:ccr1} holds true by direct inspection. Equation \eqref{eq:ccr2} gives
     \begin{align}
      \label{eq:adefdfiosdjjd}
        \partial_t E(x,x^\prime)|_{t=t^\prime}=\int_{\sigma(\Delta_{j})} d\eta_j\int_{\mathbb{R}} d\omega \cos(\omega 0^+)\widetilde{E}(r,r')  Y_j(\underline{\theta})\overline{Y_j(\underline{\theta}')}= \frac{\delta(r-r^\prime)\delta(\underline{\theta}- \underline{\theta}')}{S(r)}.
     \end{align}
     In view of the completeness relation of the eigenfunctions of the Laplacian on space forms \cite{limic1966continuous,limic1967eigenfunction,raczka1966discrete},%
     \begin{equation*}
       \label{eq:  completeness relation}
       \int_{\sigma(\Delta_{j})} d\eta_j Y_{j}(\underline{\theta})\overline{Y_{j}(\underline{\theta}')} =  \delta(\underline{\theta}- \underline{\theta}'),
     \end{equation*}
    Equation \eqref{eq:adefdfiosdjjd} formally implies Equation \eqref{eq:ccr2 implies}.
    \end{proof}
     By constructing a Green function of the radial equation and employing spectral techniques, we obtain the radial part $\widetilde{E}(r,r^\prime)$ explicitly, as clarified in the following. Consider the radial equation in Sturm-Liouville form, as per Equation \eqref{eq: def P, Q S of sturm liouville op}, and let $ \mathcal{G}_p(r,r')$ be such that
      \begin{equation}
        \label{eq: radial green function solves Lp2-p2 eq}
        ((L_{p^2}-p^2) \otimes \mathbbm{1})\mathcal{G}_p(r,r') = (\mathbbm{1} \otimes (L_{p^2}-p^2))\mathcal{G}_p(r,r')  =\frac{\delta(r-r')}{S(r)}.
      \end{equation}
    Assume the radial coordinate lies in the interval $I=(a,b)$, where $a$ is limit point and $b$ is limit circle. Let $R_{a}$ and $R_{b}$ denote the most general square-integrable solutions respectively at the left and right endpoints. Define the normalization
      \begin{equation}
       \label{eq: definition normalization of radial green function}
      \mathcal{N}_p := P(r) W_r\left[ R_{a}(r),R_{b}(r)\right].
      \end{equation}
      Then, $ \mathcal{G}_p(r,r')$ can be written as \cite{gerlach2009linear,greenBook}
        \begin{align}
          \label{eq: green function radial equation}
        \mathcal{G}_{p}(r,r') =\frac{1}{ \mathcal{N}_p}\left\{ \Theta(r'-r) R_{a}(r)R_{b}(r') + \Theta(r-r')R_{a}(r')R_{b}(r)\right\}.
        \end{align}

    Let $\mathcal{C}^\infty$ be an asymptotically infinite contour on the $p^2$-complex plane encompassing all the eigenvalues of the Sturm-Liouville problem at hand, i.e. containing all the poles of $\mathcal{G}_p(r,r')$. The spectral resolution of the radial Green function reads, see \cite[Ch.7]{greenBook},
    \begin{equation}
        \label{eq:spectral resolution radial green function}
        \frac{1}{2\pi i}\oint_{\mathcal{C}^\infty} d(p^2) \mathcal{G}_p(r,r') = -\frac{\delta(r-r')}{S(r)}.
    \end{equation}
    By a direct comparison of Equation \eqref{eq:ccr2 implies} with Equation \eqref{eq:spectral resolution radial green function}, we obtain
        \begin{equation}
            \label{eq: identity radial part causal prop}
          \int_{\mathbb{R}}d\omega \widetilde{E}(r,r') =  -\frac{1}{2\pi i} \oint_{\mathcal{C}^\infty} d(p^2) \mathcal{G}_p(r,r').
        \end{equation}
      \begin{remark}[Bound states] \label{rem: bound states}
    Since $\mathcal{G}_{p}(r,r') $ can have poles and branches in the $p^2$-complex plane, the spectrum of the operator $L_{p^2}$ can consist partially of a point spectrum and partially of a continuous spectrum. When this is the case, the contour integral of Equations \eqref{eq:spectral resolution radial green function} and \eqref{eq: identity radial part causal prop} reduces to an integral over the real line plus a sum of residues at the poles of $\mathcal{G}_{p}(r,r') $. Each of these poles, say $\omega_p$, correspond to an exponentially divergent ``bound state'' mode solution of the Klein-Gordon equation, with time component $e^{ + |\Imag(\omega_p)| t}$. It has been shown that amongst the bound states, some modes are related to a superradiance phenomenon \cite{Dappiaggi2017pbe}. However, their physical meaning is not well-understood and lies outside the scope of this thesis. When they are present, we cannot construct the two-point function of the ground state directly from Equation \eqref{eq: identity radial part causal prop}. Notwithstanding, when such modes are absent there exists a tempered distribution $ \widehat{\mathcal{G}}_p(r,r') $ such that the following proposition holds.
  \end{remark}
\begin{proposition}
  \label{prop: radial part of causal propagator}
    Let $ \mathcal{G}_p(r,r') $ be the radial Green function given in Equation (\ref{eq: green function radial equation}) and satisfying Equation (\ref{eq: radial green function solves Lp2-p2 eq}), where $L_{p^2}$ is the associated Sturm-Liouville operator with eigenvalue $p^2$, as per Equation (\ref{eq: def P, Q S of sturm liouville op}). Assume  $L_{p^2}$ has only a continuous spectrum, i.e. assume $ \mathcal{G}_p(r,r') $ has a branch but no poles on the $p^2$-complex plane. Then, there exists a tempered distribution $ \widehat{\mathcal{G}}_p(r,r') $ such that
    \begin{equation*}
    \oint_{\mathcal{C}^\infty} d(p^2) \mathcal{G}_p(r,r')=  \int_{\mathbb{R}}d \omega \widehat{\mathcal{G}}_p(r,r') ,
  \end{equation*}
    and the radial part of the causal propagator, using Equation (\ref{eq: identity radial part causal prop}), is given by
    \begin{equation*}
     \widetilde{E}(r,r') =  -\frac{1}{2\pi i}\widehat{\mathcal{G}}_p(r,r').
   \end{equation*}
\end{proposition}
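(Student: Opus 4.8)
The plan is to exploit the analytic structure that the hypothesis imposes on $\mathcal{G}_p(r,r')$ in the $p^2$-plane --- a single branch and no poles --- to collapse the contour integral onto the cut, and then to transport the result back to a real $\omega$-integral through the (case-dependent) substitution $p^2 = p^2(\omega)$ of Equation \eqref{eq: p2 eigenvalues for general SL schd-like}. The starting point is that $\mathcal{G}_p(r,r')$, built in Equation \eqref{eq: green function radial equation} from $R_a$, $R_b$ and the Wronskian normalization $\mathcal{N}_p$ of Equation \eqref{eq: definition normalization of radial green function}, is nothing but the integral kernel of the resolvent of the self-adjoint Sturm--Liouville operator $L_{p^2}$; under the standing assumption that this operator has purely continuous spectrum, $\mathcal{G}_p(r,r')$ is analytic in $p^2\in\mathbb{C}$ away from a real half-line --- the continuous spectrum --- on which it has a branch cut.

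First I would make the branch cut explicit: at the limit-point endpoint $a$ square-integrability selects (up to scale) a single solution $R_a$, whose $p^2$-dependence carries the branching, while $R_b$ and $\mathcal{N}_p$ are analytic off the real axis. Since $\mathcal{G}_p$ has no poles, Cauchy's theorem lets me deform the asymptotically infinite contour $\mathcal{C}^\infty$ --- which, by construction, encircles exactly the singular set of $\mathcal{G}_p$ --- onto a contour wrapping tightly around the cut, so that, up to orientation,
\begin{equation*}
\oint_{\mathcal{C}^\infty} d(p^2)\,\mathcal{G}_p(r,r') = \int_{\mathrm{cut}} d(p^2)\,\big[\mathcal{G}_{p^2+i0}(r,r') - \mathcal{G}_{p^2-i0}(r,r')\big].
\end{equation*}
Making this deformation rigorous \emph{in the distributional sense} --- i.e.\ showing that the arc closing the two lips of the cut at infinity contributes nothing once smeared against a test function, and that the resulting jump across the cut is a polynomially bounded, hence tempered, distribution --- is where the large-$|p^2|$ asymptotics of $R_a$, $R_b$, $\mathcal{N}_p$ from singular Sturm--Liouville theory (a limiting-absorption-type estimate for the radial operator) enter, and this is the step I expect to be the main obstacle.

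Next I would change variables from $p^2$ to $\omega$ via the relevant line of Equation \eqref{eq: p2 eigenvalues for general SL schd-like} (generically $p^2=\omega^2$, with $d(p^2)=2\omega\,d\omega$): the cut in the $p^2$-plane is the image of the real $\omega$-axis, its two lips corresponding to the $\omega\pm i\varepsilon$ prescriptions, and the jump integrated over the cut becomes an integral over $\omega\in\mathbb{R}$. Defining $\widehat{\mathcal{G}}_p(r,r')$ to be the discontinuity $\mathcal{G}_{p^2+i0}(r,r')-\mathcal{G}_{p^2-i0}(r,r')$ re-expressed in the variable $\omega$, with the Jacobian $d(p^2)/d\omega$ of the substitution absorbed, the tempered-distribution bound above guarantees that $\widehat{\mathcal{G}}_p(r,r')$ is well-defined, and by construction $\oint_{\mathcal{C}^\infty} d(p^2)\,\mathcal{G}_p(r,r') = \int_{\mathbb{R}} d\omega\,\widehat{\mathcal{G}}_p(r,r')$, which is the first asserted identity.

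Finally, inserting this into Equation \eqref{eq: identity radial part causal prop} gives $\int_{\mathbb{R}} d\omega\,\widetilde{E}(r,r') = -\tfrac{1}{2\pi i}\int_{\mathbb{R}} d\omega\,\widehat{\mathcal{G}}_p(r,r')$. To pass from equality of these $\omega$-integrals to the pointwise-in-$\omega$ identity $\widetilde{E}(r,r')=-\tfrac{1}{2\pi i}\widehat{\mathcal{G}}_p(r,r')$ for the kernels sitting under the $\omega$-integral in the ansatz \eqref{eq:ansatz causal propagator}, I would use that $\widetilde{E}$ is there the radial part of the commutator function, i.e.\ of the retarded minus advanced fundamental solution; since the retarded and advanced radial Green functions are precisely $\mathcal{G}_{p^2\pm i0}$, their difference is exactly the discontinuity of $\mathcal{G}_p$ across its cut, carried back to $\omega$ --- equivalently, $\widetilde{E}$ and $\widehat{\mathcal{G}}_p$ both depend on $\omega$ only through $p^2$ and share the boundary values fixed by the causal $i0$-prescription encoded in the factor $\sin(\omega(t-t'-i\varepsilon))/\omega$, so that the uniqueness of the causal propagator (Remark \ref{rem: causal propagator on static non-globally hyperbolic}) together with Equation \eqref{eq: identity radial part causal prop} pins down the integrand and fixes the normalization $-\tfrac{1}{2\pi i}$.
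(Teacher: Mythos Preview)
The paper does not supply a self-contained proof of this proposition: it is stated as a direct consequence of the spectral resolution \eqref{eq:spectral resolution radial green function} and the comparison \eqref{eq: identity radial part causal prop}, with the actual contour-deformation computation deferred to the worked examples (most explicitly the Minkowski case in Section~\ref{sec: Spectral resolution of the radial Green function Mink 4D}). There, the paper takes the ``pac-man'' contour around the branch cut, invokes Jordan's lemma to kill the arc at infinity, uses the conjugation property $\mathcal{G}_{\overline{p}}=\overline{\mathcal{G}_p}$ to relate the two lips of the cut, and reads off the jump $\overline{A}B-A\overline{B}$ via the connection coefficients between the $R_a$- and $R_b$-bases.

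Your proposal follows exactly this line --- collapse onto the cut, compute the discontinuity, change variables $p^2\to\omega$ --- and correctly identifies the limiting-absorption-type estimate (the large-$|p^2|$ decay needed for Jordan's lemma and temperedness) as the main analytic obstacle. This is the same approach as the paper's, so there is nothing to contrast. One small point: your final paragraph over-argues the passage from equality of $\omega$-integrals to equality of integrands; in the paper's logic this is simply a \emph{definition} of $\widetilde{E}(r,r')$ via the ansatz \eqref{eq:ansatz causal propagator}, so once the contour integral is rewritten as a real $\omega$-integral, the identification $\widetilde{E}=-\tfrac{1}{2\pi i}\widehat{\mathcal{G}}_p$ is immediate rather than requiring a uniqueness argument for the causal propagator.
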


With the causal propagator in hand, we identify an algebra of observables $\mathcal{A}(\mathcal{M})$ as outlined in Section \ref{sec: A glimpse on the algebraic approach}. In addition, since we focus on static spacetimes, $\mathcal{A}(\mathcal{M})$ may be naturally equipped with a notion of dynamics. %
Precisely, the global timelike Killing vector field on $\mathcal{M}$ generates a continuous, one-parameter group of isometries  $\{\varphi_{\tau} \}_{\tau\in\mathbb{R}}$:
\label{static spacetime, isometries and automorphisms}
  \begin{align*}
    \varphi_{\tau}:\mathcal{M}&\rightarrow\mathcal{M} \\
          (t,\underbar{x})&\mapsto \varphi_{\tau}(t,\underbar{x})=(t+\tau,\underbar{x}).
  \end{align*}
The latter identifies a continuous one-parameter group of $*$-automorphisms of $\mathcal{A}(\mathcal{M})$, say $\{\alpha_{\tau} \}_{\tau\in\mathbb{R}}$, such that
  \begin{align}
    \label{def: alpha_tau}
    \alpha_{\tau}:\mathcal{A}(\mathcal{M})&\rightarrow\mathcal{A}(\mathcal{M}) \nonumber\\
          \Psi(f)&\mapsto \alpha_{\tau}(\Psi(f))=\Psi(f\circ\varphi_{\tau}).
  \end{align}
In addition, it follows that $\{\alpha_{\tau} \}_{\tau\in\mathbb{R}}$ individuates the class of $\alpha_{\tau}$-invariant states on $\mathcal{A}(\mathcal{M})$. To wit, a state $\psi$ on $\mathcal{A}(\mathcal{M})$ is said to be $\alpha_{\tau}$-invariant if
\begin{equation}
  \label{eq: alpha_tau invariant state}
 \psi(\alpha_\tau (a)) = \psi(a)\text{, \quad }\forall a\in\mathcal{A}(\mathcal{M}).
\end{equation}
With the ingredients above, the causal propagator, an algebra of observables, and a notion of dynamics singled out by $\{\alpha_{\tau} \}_{\tau\in\mathbb{R}}$, we can define and construct ground and thermal states, as shown in the following two sections.

\subsection{Ground states}
    \label{sec: Ground states}

Consider an algebra of observables $\mathcal{A}(\mathcal{M})$ on a static spacetime $\mathcal{M}$ and a continuous, one-parameter group of $*$-automorphisms $\{\alpha_{\tau} \}_{\tau\in\mathbb{R}}$, as defined in Equation \eqref{def: alpha_tau}.

\begin{definition}[Ground state]
  \label{def: ground state} A state $\psi$ on $\mathcal{A}(\mathcal{M})$ is a \textit{ground state} if, for $a,b\in\mathcal{A}(\mathcal{M})$, its two-point function $\psi_2$ satisfies
  \begin{itemize}
    \item[i)] $\psi_2(a,\alpha_\tau(b))$ is a bounded function of $\tau$;
    \item[ii)] $\int_{\mathbb{R}} \hat{f}(\tau)\psi_2(a,\alpha_\tau(b))d\tau=0,
        \,\forall f\in C^\infty_0((-\infty,0))$, where $\hat{f}$ is the Fourier transform of $f$.
  \end{itemize}
\end{definition}

    From now on, assume $\mathcal{M}$ admits Schwarzschild-like coordinates such that each of its points is labelled $x=(t,r,\underline{\theta})$. Let $P$ be the Klein-Gordon operator on $\mathcal{M}$, and let  $\psi_2\in\mathcal{D}^\prime(\mathcal{M}^2)$ be a positive bisolution:
         \begin{align*}
           &(P\otimes\mathbb{I})\psi_2=(\mathbb{I}\otimes P)\psi_2=0,   \\
           &\psi_2(f,f)\geq0,\, \forall f\in C_0^\infty (\mathcal{M}).
         \end{align*}
    By imposing compatibility of $\psi_2$ with the isometries of the spacetime, and introducing an $\varepsilon$-regularization to guarantee it is distributionally well-defined, we make the following ansatz:
            \begin{equation}
              \label{eq:ansatz 2 point ground state}
              \psi_2(x,x')=\lim_{\varepsilon\rightarrow 0^+}\int_{\sigma(\Delta_{j})} d\eta_j\int_{\mathbb{R}} d\omega\Theta(\omega) e^{-i\omega (t-t'-i\varepsilon)} \widetilde{\psi}_{2}(r,r') Y_{j}(\underline{\theta})\overline{Y_{j}(\underline{\theta}')}.
            \end{equation}
    Since the antisymmetric part of the two-point function is the causal propagator,
    \begin{equation*}
      iE(x,x')=\psi_2(x,x')-\psi_2(x',x)\text{ for }x,x'\in \mathcal{M},
    \end{equation*}
    it follows that $ \widetilde{\psi}_{2}(r,r')= \widetilde{\psi}_{2}(r',r)$ and
           \begin{equation}
           \label{eq:radial part two point function}
             \widetilde{\psi}_2(r,r^\prime) = -\frac{1}{\omega}\widetilde{E}(r,r^\prime) ,
         \end{equation}
   see Proposition \ref{prop: causal propagator initial conditions}. When the radial Green function has no poles,  $\widetilde{E}(r,r^\prime) $ is obtained explicitly by Proposition \ref{prop: radial part of causal propagator}, and thence so is $\widetilde{\psi}_2(r,r^\prime)$ by Equation \eqref{eq:radial part two point function}:
   \begin{equation}
   \label{eq:radial part two point function 2}
     \widetilde{\psi}_2(r,r^\prime) = \frac{1}{2\pi i \omega}\widehat{\mathcal{G}}_p(r,r') .
   \end{equation}
   By direct inspection, it follows that a state $\psi$ on $\mathcal{A}(\mathcal{M})$ determined by the two-point function \eqref{eq:ansatz 2 point ground state} is, by construction, a ground state as in Definition \ref{def: ground state}. Condition i) is trivially satisfied due to the harmonic time-dependence of $\psi_2$. In turn, the latter together with the fact that the action of $\alpha_\tau$ corresponds to a time-translation of $\tau$, with the convolution theorem and noting that $\psi_2$ has support only on positive Fourier frequencies $\omega$, Condition ii) in Definition \ref{def: ground state} holds true. Therefore, as shown in \cite{sahlmann2000passivity}, the state $\psi$ satisfies the Hadamard condition in each globally hyperbolic subregion of $\mathcal{M}$. Accordingly, we say that the two-point function $\psi_2$ is of local Hadamard form. All in all, we can state the following theorem.
\begin{theorem}[Physically-sensible ground state]
\label{thm: 2 point ground state schd coord} Let
\begin{itemize}
  \item[i)] $\mathcal{M} $ be a static spacetime that admits Schwarzschild-like coordinates, with points in $\mathcal{M}$ labelled $x=(t,r,\underline{\theta})$, as per Definitions \ref{def: static spacetime} and \ref{def: Schwarzschild-like coordinates};
\item[ii)] $\mathcal{A}(\mathcal{M})$ be the algebra of observables, as in Definition \ref{def: algebra of observables};
\item[iii)] $\sigma(\Delta_{j})$, $d\eta_j$ and $Y_{j}$ be as defined for the causal propagator in Equation (\ref{eq:ansatz causal propagator}).
\end{itemize}
 Then, a Gaussian state $\psi$ on $\mathcal{A}(\mathcal{M})$ with a two-point function $\psi_2(f,f^\prime)$ whose integral kernel reads
   \begin{align}
   \label{eq: def 2 point ground state schd coord}
   \psi_2(x,x^\prime)= &  \lim_{\varepsilon\rightarrow 0^+} \int_{\sigma(\Delta_{j})}d\eta_j\int_{\mathbb{R}}d\omega \Theta(\omega) e^{-i\omega (t - t'- i\varepsilon)} \widetilde{\psi}_{2}(r,r')Y_{j}(\underline{\theta})\overline{Y_{j}(\underline{\theta}')},
   \end{align}
is a \textit{physically-sensible ground state}: it is the unique ground state, as per Definition \ref{def: ground state}, it satisfies the canonical commutation relations, as in Proposition \ref{prop: causal propagator initial conditions}, and it is of local Hadamard form, as in Definition \ref{def: Hadamard state, local Hadamard form}.
\end{theorem}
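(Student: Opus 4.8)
The plan is to verify in turn the three assertions of the theorem — that $\psi$ is the unique ground state, that it satisfies the canonical commutation relations, and that $\psi_2$ is of local Hadamard form — leaning on the ansatz \eqref{eq: def 2 point ground state schd coord} and the machinery already assembled in this section. Throughout, the dynamics $\{\alpha_\tau\}_{\tau\in\mathbb R}$ is kept fixed: the Robin boundary condition at the limit-circle endpoint selects a positive self-adjoint extension of the radial operator, hence the causal propagator $E$ of \eqref{eq:ansatz causal propagator}, hence the algebra $\mathcal A(\mathcal M)$ and, via \eqref{def: alpha_tau}, the one-parameter automorphism group.

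First I would establish that $\psi_2$ is a well-defined, positive bisolution of the Klein--Gordon equation. That it is a bisolution is immediate: for each $\omega$ and each eigenvalue of $\Delta_j$ the factor $e^{-i\omega(t-t')}$, the eigenfunction $Y_j$, and the radial part $\widetilde\psi_2(r,r')$ — which, absent the bound-state modes of Remark \ref{rem: bound states}, equals $\tfrac{1}{2\pi i\omega}\widehat{\mathcal G}_p(r,r')$ by Proposition \ref{prop: radial part of causal propagator} together with \eqref{eq:radial part two point function} — solve the radial equation; the $\varepsilon$-prescription $t-t'\mapsto t-t'-i\varepsilon$ and the cutoff $\Theta(\omega)$ make the $\omega$-integral a tempered distribution on $C_0^\infty(\mathcal M)^2$. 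Positivity, $\psi_2(f,f)\ge 0$, I would obtain by identifying the ansatz, through the spectral calculus of the chosen self-adjoint extension, with the canonical quasi-free two-point function $\langle Kf, Kf\rangle$ associated with the one-particle structure $K$ built from $A_E^{1/2}$; this is the step where the absence of bound states is genuinely needed and where the $\omega$-integral has to be controlled distributionally rather than formally.

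Next I would handle the canonical commutation relations and the ground-state conditions of Definition \ref{def: ground state}. For the antisymmetric part $\psi_2(x,x')-\psi_2(x',x)$ one uses $\widetilde\psi_2(r,r')=\widetilde\psi_2(r',r)$, completeness of the $Y_j$ and the substitution $\omega\mapsto-\omega$ to recognize it as $iE(x,x')$ with $\widetilde E = -\omega\,\widetilde\psi_2$, i.e. \eqref{eq:radial part two point function}; since $E$ satisfies the initial conditions by Proposition \ref{prop: causal propagator initial conditions} — which traces them back to \eqref{eq:ccr2 implies} and the spectral resolution \eqref{eq:spectral resolution radial green function} — pulling the covariant relation $[\Psi(x),\Psi(x')]=iE(x,x')$ back to a Cauchy surface of any globally hyperbolic subregion reproduces the equal-time CCR exactly as in the proof of Proposition \ref{prop:ccr}. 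Condition i) of Definition \ref{def: ground state} is immediate: $\alpha_\tau$ acts as a time translation, so $\tau\mapsto\psi_2(a,\alpha_\tau(b))$ is a superposition of $e^{-i\omega\tau}$ with $\omega\ge 0$, hence bounded. Condition ii) follows from the convolution theorem, since that function has Fourier support in $[0,\infty)$ while $\hat f$ for $f\in C_0^\infty((-\infty,0))$ has Fourier support in $(-\infty,0)$, so the pairing vanishes.

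Finally, uniqueness and the Hadamard property. For uniqueness I would argue that any $\alpha_\tau$-invariant ground state on $\mathcal A(\mathcal M)$ must have antisymmetric two-point function equal to the fixed $iE$ (by the CCR) and symmetric part with $\tau$-support in $[0,\infty)$ (by condition ii)); together with $\alpha_\tau$-invariance this pins down the one-particle structure over $A_E$ and hence the ansatz — the usual rigidity of ground states for a CCR dynamics generated by a positive self-adjoint operator. For the local Hadamard form, a ground state is in particular a quasi-free passive state for the group generated by the complete timelike Killing flow, so on each globally hyperbolic subregion of $\mathcal M$ the Sahlmann--Verch theorem \cite{sahlmann2000passivity} relating passivity to the microlocal spectrum condition applies, giving that $\psi_2$ there satisfies the microlocal spectrum condition, equivalently the local Hadamard condition of Definition \ref{def: Hadamard state, local Hadamard form}; since this covers a neighbourhood of every point, $\psi_2$ is of local Hadamard form on all of $\mathcal M$. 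The main obstacle is the positivity/quasi-freeness identification in the second step: once the ansatz is known to be a genuine state carrying the asserted one-particle structure, the CCR, the two ground-state conditions, uniqueness, and the appeal to Sahlmann--Verch are essentially bookkeeping.
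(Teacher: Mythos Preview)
Your proposal is correct and follows essentially the same approach as the paper: the paper's argument, given in the paragraph immediately preceding the theorem, verifies condition i) of Definition \ref{def: ground state} from the harmonic time-dependence, condition ii) via the convolution theorem and the positive-frequency support $\Theta(\omega)$, derives the CCR from the antisymmetric part coinciding with $iE$ through \eqref{eq:radial part two point function}, and obtains the local Hadamard form by invoking Sahlmann--Verch \cite{sahlmann2000passivity}. You are more explicit than the paper on two points it essentially takes for granted --- positivity of $\psi_2$ (which you relate to the one-particle structure built from $A_E^{1/2}$) and uniqueness of the ground state (which you argue via rigidity of the one-particle structure) --- but the core logic is the same.
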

   In the next section, I give the definition of KMS states and an explicit expression for their two-point functions.

    \subsection{Thermal states}
    \label{sec: KMS states}

       Within quantum field theory on general spacetimes, states in thermal equilibrium are characterized by the Kubo, Martin and Schwinger (KMS) condition, first introduced in the 50's \cite{kubo1957statistical,martin1959theory}. Since then, the KMS condition has been stated in many different forms, some equivalent and some more general than others. Here, I do not give a historical account, but rather a brief argument illustrating that a property satisfied by Gibbs states inspires, and is generalized by, the KMS condition. Subsequently, I write down an explicit expression for the two-point function of KMS states on the spacetimes of interest. Mathematically rigorous treatment of KMS states, with elaborate discussions regarding their physical meaning, can be found in  \cite{BragaDeGoesEVasconcellos2019dmn,bratteli2012operator,ricardito2018,haag1967equilibrium,hack2010backreaction}.

 Statistical Mechanics gives us three canonical descriptions to study thermodynamical systems of non-interacting particles: the microcanonical, the canonical and the Gibbs (grand) canonical ensembles. The Gibbs canonical ensemble is best suited to describe an open system in a thermal bath, i.e. at fixed temperature, with neither particle number nor total energy fixed. Since the notion of particle, hence of particle number, is not uniquely defined within quantum field theory, we rely on the Gibbs canonical ensemble to introduce the notion of thermality.

Consider a quantum system in a bounded region $\Omega$ of a static spacetime $\mathcal{M}$ whose dynamics is ruled by a Hamiltonian $H$, whose states are elements of a Hilbert space $\mathcal{H}$, and whose observables are elements of an algebra $\mathfrak{A}$ consisting of bounded operators on $\mathcal{H}$. Let $\{e_k\}$ be an orthonormal basis of $\mathcal{H}$, and let $\text{Tr}$ denote the trace over $\mathcal{H}$, i.e. $\text{Tr}(a) = \sum\limits_k \braket{a\,e_k|e_k}$ for $a\in\mathfrak{A}$. Assuming $e^{i\tau H}$ is bounded and linear, it follows that it is also of trace class, see \cite[Pg. 200]{haagLocalquantumphysics}, and that the quantum mechanical time-evolution may be implemented by $e^{i\tau H}a e^{-i\tau H}=:\alpha_\tau(a)$. Within the Gibbs canonical ensemble, the expectation value of the observable $a\in\mathfrak{A}$ on a Gibbs state $\psi_\Omega\in\mathcal{H}$ at fixed inverse-temperature $\beta>0$ is of the form
         \begin{equation}
           \label{eq:def Gibbs state}
           \psi_\Omega(a)=\frac{\text{Tr}(e^{-\beta H}a)}{\text{Tr}(e^{-\beta H})}.
         \end{equation}
        Assuming $\alpha_\tau $ has a suitable analytic extension to complex times, the cyclic property of the trace implies $ \text{Tr}(e^{-\beta H}\alpha_\tau (a) b) =\text{Tr}(e^{-\beta H}b \alpha_{\tau +i\beta}(a))$.  The latter property together with the definition of Gibbs states as per Equation \eqref{eq:def Gibbs state} yields, for $a,b\in\mathfrak{A}$
         \begin{equation}
         \label{eq:prop Gibbs state}
           \psi_\Omega( \alpha_\tau (a) b)=\psi_\Omega(b \alpha_{\tau +i\beta}(a)).
         \end{equation}

Amongst the plethora of complications that emerge when considering infinite degrees of freedom, one is that neither $H$ nor $e^{-\beta H}$, nor observables in general, are necessarily of trace class---Equation \eqref{eq:def Gibbs state} is generally ill-defined. Notwithstanding, in 1967, Haag, Hugenholtz, and Winnink \cite{haag1967equilibrium} showed that, under reasonable conditions, the thermodynamic limit  $$\lim_{\Omega\rightarrow\infty}\psi_\Omega(a)=\psi(a)$$ exists, and that, when it does, $\psi(a)$ satisfies the property of Gibbs states given in Equation \eqref{eq:prop Gibbs state}. Accordingly, for unbounded systems, the characterization of thermal states is given by a generalization of the property in Equation \eqref{eq:prop Gibbs state}. The precise definition follows.

     \begin{definition}[KMS state]
     \label{def: KMS state *alg}
      Let $\mathcal{A}(\mathcal{M})$ be the algebra of observables, as in Definition \ref{def: algebra of observables}, on a static spacetime $\mathcal{M}$, with a continuous, one-parameter group of $*$-automorphisms $\{\alpha_{\tau} \}_{\tau\in\mathbb{R}}$, as defined in Equation \eqref{def: alpha_tau}. Consider an $\alpha_\tau $-invariant state $\psi$ on $\mathcal{A}(\mathcal{M})$, as in Equation \eqref{eq: alpha_tau invariant state}. With $a,b\in \mathcal{A}(\mathcal{M})$, define the functions
     \begin{equation*}
        F_{ab}(\tau) := \psi(b\alpha_\tau(a)) \quad\text{ and }\quad G_{ab}(\tau) := \psi(\alpha_\tau(a)b) .
     \end{equation*}
     $\psi$ is called a \textit{KMS state with respect to $\alpha_\tau$ at inverse-temperature $\beta>0$} if it satisfies the \textit{KMS condition}, i.e. if $F_{ab}$ and $G_{ab}$ extend to analytic functions on the strips $\{z\in\mathbb{C}:\Imag(z) \in (-\beta,0)\cup(0,\beta)\}$, which are continuous at the boundaries $\{z\in\mathbb{C}:\Imag(z) \in \{-\beta,0,\beta\}$, and that satisfy
     \begin{equation*}
       F_{ab}(\tau +i\beta) = G_{ab}(\tau ).
     \end{equation*}
     \end{definition}
     Physically, temperature relates to energy, energy relates to time; hence, it is not unreasonable that the notion of thermality introduced by the KMS condition is linked to a notion of time. Specifically, by Definition \ref{def: KMS state *alg}, a temperature can be assigned to a quantum state on a general spacetime when there is a time-evolution that yields an automorphism on the algebra of observables. This is always the case when the underlying spacetime is static, as described on Page \pageref{static spacetime, isometries and automorphisms}. In this case, it is straightforward to construct KMS states, as given by the following theorem and proved, in details, in \cite[Pg. 138-142]{hack2010backreaction}.

       \begin{theorem}[Physically-sensible KMS state]
         \label{thm: 2 point KMS state schd coord} With the assumptions and notation of Theorem \ref{thm: 2 point ground state schd coord}, a two-point function $\psi_2(f,f^\prime)$ with integral kernel
         \begin{align}
           \label{eq:def 2 point KMS state Schd coord}
         \psi_2(x,x^\prime)=\lim_{\varepsilon\rightarrow 0^+}\int_{\sigma(\Delta_{j})}d\eta_j \int_{\mathbb{R}} d\omega \Theta(\omega) \left[ \frac{e^{-i\omega (t-t'-i\varepsilon)}}{1 - e^{-\beta\omega}}+ \frac{e^{+i \omega (t-t'+i\varepsilon)}}{{e^{\beta\omega}-1}} \right] \widetilde{\psi}_{2}(r,r') Y_j(\underline{\theta})\overline{Y_j(\underline{\theta}')}.
         \end{align}
        identifies a unique physically-sensible thermal state $\psi_\beta$ at inverse-temperature $\beta$ with respect to the dynamics induced by time translations: it is invariant under all symmetries of the underlying spacetime, it satisfies the canonical commutation relations, is of local Hadamard form, and satisfies the KMS condition.
     \end{theorem}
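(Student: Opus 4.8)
The plan is to leverage everything already established for the ground state in Theorem~\ref{thm: 2 point ground state schd coord}: since the candidate state is declared Gaussian, its entire content is encoded in the two-point function $\psi_2$, and by Definition~\ref{def: gaussian state} positivity, the canonical commutation relations, the invariance properties, and the KMS property at the level of the whole algebra all follow from the corresponding statements for $\psi_2$. So I would reduce the theorem to four claims about the integral kernel in Equation~\eqref{eq:def 2 point KMS state Schd coord}: (a) it is a positive bisolution of the Klein--Gordon operator whose antisymmetric part is $iE$, which yields the CCR through Proposition~\ref{prop: causal propagator initial conditions}; (b) it is invariant under time translations and under the isometries of $\Sigma_{j}^{n-2}$; (c) it satisfies the two-point KMS condition at inverse temperature $\beta$; (d) it is of local Hadamard form. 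Throughout I assume, as in Remark~\ref{rem: bound states}, that there are no bound states, so that $\widetilde\psi_2(r,r')$ is the well-defined object furnished by Proposition~\ref{prop: radial part of causal propagator}.

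For (a) and (b): the radial factor $\widetilde\psi_2(r,r')$ and the angular factors $Y_j(\underline{\theta})\overline{Y_j(\underline{\theta}')}$ are exactly those of the ground state, while each exponential satisfies $\partial_t^2 e^{\mp i\omega(t-t')}=-\omega^2 e^{\mp i\omega(t-t')}$; combining this with the radial equation gives $(P\otimes\mathbb I)\psi_2=(\mathbb I\otimes P)\psi_2=0$. Invariance under $\alpha_\tau$ is immediate because $\psi_2$ depends on $t,t'$ only through $t-t'$, and invariance under the spatial isometries follows from the homogeneity of $\Sigma_{j}^{n-2}$ built into the $Y_j$. For the antisymmetric part I would use the identity $\frac{1}{1-e^{-\beta\omega}}-\frac{1}{e^{\beta\omega}-1}=1$: swapping $x\leftrightarrow x'$ sends $t-t'\mapsto -(t-t')$ and interchanges the two summands up to their Bose weights, so $\psi_2(x,x')-\psi_2(x',x)$ collapses to $\int d\eta_j\int_{\mathbb R}d\omega\,\Theta(\omega)\big(e^{-i\omega(t-t'-i\varepsilon)}-e^{+i\omega(t-t'+i\varepsilon)}\big)\widetilde\psi_2(r,r')Y_j\overline{Y_j}$, which, using $\widetilde\psi_2=-\widetilde E/\omega$ from Equation~\eqref{eq:radial part two point function}, is precisely $iE(x,x')$ in the form of Equation~\eqref{eq:ansatz causal propagator}. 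Positivity follows because $\Theta(\omega)(1-e^{-\beta\omega})^{-1}\ge 0$ and the two terms are complex conjugates of one another under the integral.

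For (c): with $a=\Psi(f)$, $b=\Psi(h)$ and $F_{ab}(\tau)=\psi_2(h,f\circ\varphi_\tau)$, the $\tau$-dependence enters only through $e^{-i\omega\tau}$ weighted by $\Theta(\omega)(1-e^{-\beta\omega})^{-1}$ and through $e^{+i\omega\tau}$ weighted by $\Theta(\omega)(e^{\beta\omega}-1)^{-1}$; both weights decay like $e^{-\beta\omega}$ as $\omega\to+\infty$, so $F_{ab}$ and $G_{ab}=\psi_2(f\circ\varphi_\tau,h)$ extend holomorphically to the strip $\{\Imag z\in(-\beta,0)\cup(0,\beta)\}$ and are continuous up to its closure, the latter guaranteed by the $\varepsilon$-regularization. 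Shifting $\tau\mapsto\tau+i\beta$ multiplies the first weight by $e^{\beta\omega}$ and the second by $e^{-\beta\omega}$, which, by the same algebraic identity as above, turns $F_{ab}(\tau+i\beta)$ into $G_{ab}(\tau)$: this is the two-point KMS condition of Definition~\ref{def: KMS state *alg}. To promote it to the KMS condition on all of $\mathcal A(\mathcal M)$ and to derive uniqueness of the quasi-free KMS state with the prescribed radial data, I would quote the detailed treatment in \cite[Pg.~138--142]{hack2010backreaction}.

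For (d), the Hadamard property, I would write $\psi_2=\psi_2^{\mathrm{gr}}+w_\beta$, where $\psi_2^{\mathrm{gr}}$ is the ground-state kernel of Theorem~\ref{thm: 2 point ground state schd coord} and $w_\beta$ has kernel $\int d\eta_j\int_{\mathbb R}d\omega\,\Theta(\omega)(e^{\beta\omega}-1)^{-1}\big(e^{-i\omega(t-t')}+e^{+i\omega(t-t')}\big)\widetilde\psi_2(r,r')Y_j\overline{Y_j}$. Since the Bose factor $(e^{\beta\omega}-1)^{-1}$ decays exponentially as $\omega\to+\infty$ and is integrable at $\omega=0$, $w_\beta$ is a smooth, symmetric bisolution on every globally hyperbolic subregion of $\mathcal M$; hence $\psi_2$ has the same singular structure as $\psi_2^{\mathrm{gr}}$, which is of local Hadamard form. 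Equivalently, one may invoke directly that KMS states are passive and that passive states on stationary spacetimes satisfy the Hadamard condition in each globally hyperbolic subregion \cite{sahlmann2000passivity}. I expect the bookkeeping of this last step — making the formal mode manipulations distributionally rigorous, in particular controlling the $\omega\to0$ regime and the convergence of the $d\eta_j$ integral so that ``$w_\beta$ smooth'' is an honest statement rather than a heuristic — to be the main obstacle; restricting attention to globally hyperbolic neighbourhoods, where the Ishibashi--Wald dynamics of Theorem~\ref{thm: physically-sensible dynamics} coincides with the ordinary one, is what makes it tractable.
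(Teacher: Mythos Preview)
Your proposal is correct and, in fact, more explicit than what the paper itself offers: the paper does not prove this theorem in the text but simply states it and defers the detailed argument to \cite[Pg.~138--142]{hack2010backreaction}, which is the same reference you invoke. Your sketch of steps (a)--(d)---checking the bisolution property and CCR via the Bose identity $\frac{1}{1-e^{-\beta\omega}}-\frac{1}{e^{\beta\omega}-1}=1$, verifying invariance and the two-point KMS relation directly from the mode decomposition, and establishing the Hadamard property either by showing $\psi_2-\psi_2^{\mathrm{gr}}$ is smooth or by appealing to Sahlmann--Verch on passivity---is precisely the standard route one finds in that reference, so there is no divergence in approach to report.
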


    \section{Probing quantum states with particle detectors}
    \label{sec: Probing quantum states with particle detectors}
     	An Unruh-DeWitt detector is a spatially localized quantum system existing on a spacetime and interacting with a quantum field. It has been introduced by Unruh \cite{unruh1976notes} and DeWitt \cite{dewitt1979quantum} within quantum field theory on curved spacetimes, and it has been since then applied in several contexts, theoretically \cite{cong2021quantum,faure2020particle,toussaint2021detecting} and experimentally \cite{biermann2020unruh,fewster2016waiting,Good2020hav,vsoda2021acceleration} motivated. In the next chapter, I apply the particle detector approach described in this section in the study of two physical phenomena. One consists of anti-correlation effects on a suitable class of asymptotically AdS spacetimes, and the other concerns thermal effects on a naked singularity spacetime. Here, I describe the theoretical model chosen, as considered in \cite{birrell1984quantum}, and I obtain an explicit expression for the transition rate in Schwarzschild-like coordinates by using the two-point functions constructed in the previous section.

       Take the detector to be the simplest quantum model of an ``atom'': a pointlike two-level system with energy gap $\Omega>0$, characterized by a Hilbert space $\mathcal{H}_D$, and by a Hamiltonian $H_D$. Let $\{\ket{0_D},\ket{\Omega_D}\}$ be an orthonormal basis such that $H_D\ket{0_D}=0$ and $H_D\ket{\Omega_D}=\Omega\ket{\Omega_D}$. Suppose the detector is following a smooth timelike trajectory $x$ parametrized by its proper time $\tau$, on a static spacetime $\mathcal{M}$. Let $\Psi:\mathcal{M}\rightarrow \mathbb{R}$ be a Klein-Gordon field initially in a Gaussian, Hadamard state $\psi_i$ acting on an algebra of observables $\mathcal{A}(\mathcal{M})$ as in Definition \ref{def: algebra of observables}. Recall that $\psi_i$ identifies a Hilbert space $\mathcal{H}_{\psi_i}$ by the GNS theorem, as described on Page
       \pageref{page: GNS theorem}.

       Assume the detector couples to the quantum field through the interaction Hamiltonian
     		\begin{equation}
     		\label{eq:Hint}
     			H_{int}(\tau)=c\chi(\tau)\Psi(x(\tau))\otimes\mu(\tau)\text{,}
     		\end{equation}
     	where $c\in\mathbb{R}$ is a small coupling constant, $\chi\in C_0^\infty(\mathbb{R})$ is a switching function and
      \begin{equation}
        \label{eq: monopole-moment operator}
        \mu(\tau) : = \ket{\Omega_D}\bra{0_D}e^{i\Omega \tau} + \ket{0_D}\bra{\Omega_D}e^{-i\Omega \tau}
      \end{equation}
       is an operator acting on $\mathcal{H}_D$. The interaction Hamiltonian given in Equation \eqref{eq:Hint} is said to be of \textit{monopole-type} and $\mu$ is called a \textit{monopole-moment} operator. This terminology descends from recognizing this model as a ``zero-th order'' simplification of a light-matter interaction, such as that of an electron bound in an atom, and taking into account that it is customary to implement multipole expansions in ``moments'' when dealing with localized charges, see e.g. \cite[Sec.II]{pozas2016entanglement}.
 The total Hilbert space is $\mathcal{H}_{\psi_i}\otimes\mathcal{H}_D$ and the total Hamiltonian reads
      \begin{equation*}
        H=H_{\psi_i}\otimes \mathbb{I}_{\mathcal{H}_{D}}+\mathbb{I}_{\mathcal{H}_{\psi_i}}\otimes H_D+H_{int}.
      \end{equation*}
       The probability of a system, whose initial state at time $\tau_i$ is $\ket{\psi_i}\equiv\ket{\psi_i,0_D}$, to be found in the final state $\ket{\psi_f}\equiv\ket{\psi_f,\Omega_D}$ at time $\tau_f$, for $\psi_f\in\mathcal{H}_{\psi_i}$, can be computed as follows. In the interaction picture, we have that $\ket{\psi_f}=U(\tau_f,\tau_i)\ket{\psi_i}$, where $U$ is given by the Dyson series. Up to first order in perturbation theory,
     		\begin{equation*}
     			U(\tau_f,\tau_i)=\mathbb{I}-i\int_{\tau_i}^{\tau_f}d\tau H_{int}(\tau),
     		\end{equation*}
     	and the amplitude of a transition $\ket{\psi_i}\rightarrow\ket{\psi_f}$ is
     		\begin{equation}
          \label{eq: amplitude 1}
     			\mathscr{M}=-i\int_{\tau_i}^{\tau_f}d\tau \bra{\psi_f} H_{int}(\tau)\ket{\psi_i}.
     		\end{equation}
   Equation \eqref{eq: amplitude 1} together with Equations \eqref{eq:Hint} and \eqref{eq: monopole-moment operator} yields
     		\begin{equation*}
     		  \mathscr{M}=-ic\bra{\Omega_D}\mu(0)\ket{0_D}\int_{\tau_i}^{\tau_f}d\tau e^{i\Omega\tau}\chi(\tau)\bra{\psi_f}\Psi(x(\tau))\ket{\psi_i}.
     		\end{equation*}
     	Therefore, the probability of a transition $\ket{\psi_i}\rightarrow\ket{\psi_f}$ is
     		\begin{equation}
          \label{eq: prob excitation}
     			|\mathscr{M}|^2=c^2|\bra{\Omega_D}\mu(0)\ket{0_D}|^2\left|\int_{\tau_i}^{\tau_f}d\tau e^{i\Omega\tau}\chi(\tau)\bra{\psi_f}\Psi(x(\tau))\ket{\psi_i}\right|^2.
     		\end{equation}

     	Note that if we take $\Omega<0$ instead, then Equation \eqref{eq: prob excitation} corresponds to the probability of a transition $\ket{\psi_i,\Omega_D}\rightarrow \ket{\psi_f,0_D}$, i.e. the probability of the detector undergoing a de-excitation. The first term of Equation \eqref{eq: prob excitation}, $c^2|\bra{\Omega_D}\mu(0)\ket{0_D}|^2$, depends on the internal details of the detector. The second term is called \textit{response function}, and we shall denote it $\mathcal{F}$. Let us rewrite it in a more convenient form. First, by expanding the expression, we find
     \begin{align*}
       \mathcal{F} &=\int_{\tau_i}^{\tau_f}\int_{\tau_i}^{\tau_f}d\tau d\tau' e^{-i\Omega(\tau-\tau')}\chi(\tau)\chi(\tau')\bra{\psi_i}\Psi^\ast(x(\tau))\ket{\psi_f}\bra{\psi_f}\Psi(x(\tau'))\ket{\psi_i}.
     \end{align*}
     	Recalling that we are considering real-valued fields, and summing over all possible final states $\psi_f$, completeness entails
     		\begin{align*}
     			\mathcal{F} &= \int_{\tau_i}^{\tau_f}\int_{\tau_i}^{\tau_f}d\tau d\tau' e^{-i\Omega(\tau-\tau')}\chi(\tau)\chi(\tau')\bra{\psi_i}\Psi(x(\tau))\Psi(x(\tau'))\ket{\psi_i}.
     		\end{align*}
     	The term $\bra{\psi_i}\Psi(x(\tau))\Psi(x(\tau'))\ket{\psi_i}$ is the pullback of the two-point function $\psi_2(x,x')$, of the state $\psi_i$, to the detector's trajectory. For convenience, we shall denote it simply by $\psi_2(\tau,\tau')$. To explicitly compute the response function, and to obtain the probability of the detector undergoing an excitation, we are left with the problem of choosing a switching function. To bypass transient effects and the subtleties that emerge in this regard \cite{louko2006often,satz2007then,schlicht2004considerations}, we assume that the detector has always been, and will always be, switched on. In infinite interaction time limit the response function reads
      \begin{align}
        \label{eq:woperfkwopefp}
        \mathcal{F} &= \int_{-\infty}^{\infty}\int_{-\infty}^{\infty}d\tau d\tau' e^{-i\Omega(\tau-\tau')}   \psi_2(\tau,\tau').
      \end{align}

      The final simplification we make is that of considering the detector to follow a static trajectory. As exemplified in the next chapter, this simplified model is still useful to study physical phenomena. On one hand, due to the time-translation invariance and the infinite interaction time, the response function given by Equation \eqref{eq:woperfkwopefp} is divergent. On the other hand, taking into account that along stationary trajectories $\psi_2(\tau,\tau')$ depends on $\tau$ and $\tau'$ solely through their difference $\tau-\tau'=:s$ and writting $\psi_2(\tau,\tau')=:\psi_2(s)$, we can drop one of the two integrals in Equation \eqref{eq:woperfkwopefp} to define a finite quantity: the \textit{transition rate},
        \begin{align}
         \label{eq:transitionjustlike33}
         \dot{\mathcal{F}} &= \int_\mathbb{R}ds e^{-i\Omega s} \psi_2(s).
        \end{align}
        Note that the right-hand side of Equation \eqref{eq:transitionjustlike33} coincides with the Fourier transform with respect to $s$ of the pullback $\psi_2(s)$ of the two-point function, evaluated at $\Omega$.

        \subsection{The transition rate in Schwarzschild-like coordinates}

      Suppose an Unruh-DeWitt detector is following a static trajectory on a spacetime $\mathcal{M}$ that admits Schwarzschild-like coordinates, as per Definition \ref{def: Schwarzschild-like coordinates}. Let us write down the transition rate explicitly for the ground and thermal states constructed in the previous section. By Theorems \ref{thm: 2 point ground state schd coord} and \ref{thm: 2 point KMS state schd coord}, the two-point functions of the ground and thermal states are given by Equations \eqref{eq: def 2 point ground state schd coord} and \eqref{eq:def 2 point KMS state Schd coord}, respectively.
      Since the proper time $\tau$ is related to the coordinate time $t$ by
      $\tau=\sqrt{f(r)}\,t$, we have $t - t' = \frac{\tau - \tau'}{\sqrt{f(r)}}=: \frac{s}{\sqrt{f(r)}}$. Thus, for a thermal state, substituting Equation \eqref{eq:def 2 point KMS state Schd coord} into Equation \eqref{eq:transitionjustlike33} and using that
      \begin{equation}
        \label{eq: 2jioejiojiuoe938df}
      \int_\mathbb{R}ds e^{-i\Omega s}e^{\pm i\omega \frac{ s}{\sqrt{f}}}=2\pi\delta\left(\Omega\mp\frac{\omega}{\sqrt{f}} \right),
      \end{equation}
      one obtains
      \begin{align}
                \label{eq: twkemlwkjrpotgjk222tg}
               \dot{\mathcal{F}}_\beta 	= \lim_{\varepsilon\rightarrow 0^+}\int_{\sigma(\Delta_{j})}d\eta_j \int_{\mathbb{R}} d\omega \Theta(\omega) 2\pi \left[ \frac{\delta\left(\Omega+\frac{\omega}{\sqrt{f}} \right)}{1 - e^{-\beta\omega}}+ \frac{\delta\left(\Omega-\frac{\omega}{\sqrt{f}} \right)}{{e^{\beta\omega}-1}} \right] \widetilde{\psi}_{2}(r,r) Y_j(\underline{\theta})\overline{Y_j(\underline{\theta})}.
      \end{align}
      In the expression above, $(r,\underline{\theta})$ gives the fixed spatial position of the detector. By integrating Equation \eqref{eq: twkemlwkjrpotgjk222tg} in $\omega$, and doing an analogous computation for the ground state, the following theorem holds.

      \begin{theorem}[Transition rate for the physically-sensible states]
        \label{thm: Transition rate for the physically-sensible states} With the assumptions and notation of Theorem \ref{thm: 2 point ground state schd coord}, the transition rate, as per Equation (\ref{eq: twkemlwkjrpotgjk222tg}) of a static Unruh-DeWitt detector with energy gap $\Omega$ at fixed spatial position $(r,\underline{\theta})$, interacting for an infinite proper time with the ground state defined in Theorem \ref{thm: 2 point ground state schd coord}, is given by
        \begin{align}
            \label{eq: from thm transition rate ground Schd coord}
        \quad    \dot{\mathcal{F}}_\infty	= 2\pi \Theta( -\Omega )  \int_{\sigma(\Delta_{j})}d\eta_j |Y_j(\underline{\theta})|^2 \widetilde{\psi}_{2}(r,r) \big|_{\omega = -\sqrt{f(r)} \,\Omega}.
        \end{align}
        For a thermal state at inverse-temperature $\beta$ defined in Theorem \ref{thm: 2 point KMS state schd coord}, we have
        \begin{align}
            \label{eq: from thm transition rate KMS Schd coord}
                 \dot{\mathcal{F}}_\beta	= 2\pi \frac{\text{\em sign}(\Omega)}{e^{\beta\sqrt{f(r)}\Omega}-1} \int_{\sigma(\Delta_{j})}d\eta_j |Y_j(\underline{\theta})|^2 \widetilde{\psi}_{2}(r,r) \big|_{\omega =\sqrt{f(r)} |\Omega| }.
        \end{align}
      \end{theorem}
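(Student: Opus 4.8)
The plan is to begin with the general transition rate, Equation~\eqref{eq:transitionjustlike33}, where $\dot{\mathcal{F}}$ is the Fourier transform in $s=\tau-\tau'$ of the pull-back $\psi_2(s)$ of the two-point function to the static worldline, and to substitute the explicit integral kernels of the ground and thermal states given, respectively, by Equation~\eqref{eq: def 2 point ground state schd coord} of Theorem~\ref{thm: 2 point ground state schd coord} and Equation~\eqref{eq:def 2 point KMS state Schd coord} of Theorem~\ref{thm: 2 point KMS state schd coord}. Along a static trajectory at fixed spatial position $(r,\underline{\theta})$ the coordinate-time separation is $t-t'=s/\sqrt{f(r)}$, and at coincidence the angular factors collapse to $Y_j(\underline{\theta})\overline{Y_j(\underline{\theta})}=|Y_j(\underline{\theta})|^2$. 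The $s$-integral then acts only on the harmonic time dependence, and the identity~\eqref{eq: 2jioejiojiuoe938df} turns each exponential $e^{\mp i\omega s/\sqrt{f}}$ into $2\pi\,\delta\!\left(\Omega\pm\omega/\sqrt{f}\right)$; for the thermal state this reproduces Equation~\eqref{eq: twkemlwkjrpotgjk222tg}, and for the ground state the analogous formula with the single Bose factor replaced by $1$. What is left is to perform the $\omega$-integral against these Dirac deltas, paying attention to the support restriction imposed by $\Theta(\omega)$.

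\textbf{Ground state.} The kernel~\eqref{eq: def 2 point ground state schd coord} contains the single exponential $e^{-i\omega(t-t'-i\varepsilon)}$, whose $s$-transform produces $2\pi\,\delta\!\left(\Omega+\omega/\sqrt{f}\right)$. The Heaviside $\Theta(\omega)$ confines the integration to $\omega>0$ while the delta is supported at $\omega=-\sqrt{f(r)}\,\Omega$, so the contribution is non-zero precisely when $\Omega<0$; this is the factor $\Theta(-\Omega)$ appearing in~\eqref{eq: from thm transition rate ground Schd coord}. Carrying out the $\omega$-integral then fixes $\omega=-\sqrt{f(r)}\,\Omega=\sqrt{f(r)}\,|\Omega|$ in the radial factor $\widetilde{\psi}_2(r,r)$ and, together with the overall normalisation, leaves the residual integral over $\sigma(\Delta_{j})$ of $|Y_j(\underline{\theta})|^2\,\widetilde{\psi}_2(r,r)$, which is exactly the right-hand side of~\eqref{eq: from thm transition rate ground Schd coord}.

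\textbf{Thermal state.} Starting from~\eqref{eq: twkemlwkjrpotgjk222tg}, the delta $\delta\!\left(\Omega+\omega/\sqrt{f}\right)$ weighted by $(1-e^{-\beta\omega})^{-1}$ is supported at $\omega=-\sqrt{f(r)}\,\Omega>0$ and thus contributes only for $\Omega<0$, while the delta $\delta\!\left(\Omega-\omega/\sqrt{f}\right)$ weighted by $(e^{\beta\omega}-1)^{-1}$ is supported at $\omega=\sqrt{f(r)}\,\Omega>0$ and contributes only for $\Omega>0$; in both cases $\omega$ is fixed to $\sqrt{f(r)}\,|\Omega|$. The two thermal factors combine into a single one: for $\Omega>0$ one has $(e^{\beta\sqrt{f(r)}\,\Omega}-1)^{-1}$, and for $\Omega<0$ one has $(1-e^{-\beta\omega})^{-1}\big|_{\omega=-\sqrt{f(r)}\,\Omega}=(1-e^{\beta\sqrt{f(r)}\,\Omega})^{-1}=-(e^{\beta\sqrt{f(r)}\,\Omega}-1)^{-1}$, so both branches are summarised by $\sign(\Omega)\,(e^{\beta\sqrt{f(r)}\,\Omega}-1)^{-1}$. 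Inserting this and evaluating $\widetilde{\psi}_2(r,r)$ at $\omega=\sqrt{f(r)}\,|\Omega|$ gives Equation~\eqref{eq: from thm transition rate KMS Schd coord}; as a consistency check, letting $\beta\to\infty$ recovers the ground-state result on the support $\Omega<0$.

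\textbf{Main obstacle.} The algebra above is formal; the substantive work is in justifying each step. One must first legitimate exchanging $\lim_{\varepsilon\to0^+}$ with the $s$- and $\omega$-integrations, which is permissible because the $\varepsilon$-prescription is exactly what makes $\psi_2$ a tempered distribution (the states are of local Hadamard form by Theorems~\ref{thm: 2 point ground state schd coord} and~\ref{thm: 2 point KMS state schd coord}), so~\eqref{eq:transitionjustlike33} is a genuine distributional Fourier transform. The more serious point is the coincidence limit hidden in the symbol $\widetilde{\psi}_2(r,r)$: the radial part is singular as $r'\to r$, and the real content of the theorem is that once the $\omega$-integral has fixed a single frequency the remaining integral (or sum) over $\sigma(\Delta_{j})$ of $|Y_j(\underline{\theta})|^2\,\widetilde{\psi}_2(r,r)$ converges to a finite quantity; establishing this convergence — rather than manipulating the delta functions — is what requires the case-by-case asymptotic analysis of the radial solutions carried out in the next chapter. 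Finally, one must track carefully the interplay between $\Theta(\omega)$ and the location of the zero of the delta's argument, since this is precisely what converts $\Theta(\omega)$ into $\Theta(-\Omega)$ and the pair of Bose factors into a single $\sign(\Omega)$.
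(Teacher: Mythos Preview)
Your proposal is correct and follows essentially the same approach as the paper: substitute the two-point functions into Equation~\eqref{eq:transitionjustlike33}, use the static-trajectory relation $t-t'=s/\sqrt{f(r)}$ together with the delta identity~\eqref{eq: 2jioejiojiuoe938df} to perform the $s$-integral, and then integrate in $\omega$ against the resulting Dirac deltas, tracking the support imposed by $\Theta(\omega)$. The paper treats this as a formal computation and does not dwell on the distributional and convergence subtleties you flag in your ``Main obstacle'' paragraph, so you have in fact supplied somewhat more care than the original derivation.
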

      The transition rate of excitations ($\Omega>0$) for the ground state, given by Equation (\ref{eq: from thm transition rate ground Schd coord}), vanishes identically, while for the thermal state, it does not. Moreover, it is easy to see that the transition rate seen as a function of the energy gap, $\dot{\mathcal{F}}_\beta(\Omega)$, given by Equation (\ref{eq: from thm transition rate KMS Schd coord}), is such that
      \begin{align}
          \label{eq: detailed balance transition rate KMS Schd coord}
              \frac{ \dot{\mathcal{F}}_\beta(\Omega)}{\dot{\mathcal{F}}_\beta(-\Omega)}	=  e^{-\beta\sqrt{f(r)}\Omega}.
      \end{align}
      When Equation \eqref{eq: detailed balance transition rate KMS Schd coord} holds true, we say the detector satisfies the \textit{detailed balance condition at inverse-temperature $T_D^{-1}=\beta\sqrt{f(r)}$}. This condition specifies an equilibrium between the process of excitation and its reverse (de-excitation).

\subsection{Unruh, Hawking, anti-Unruh, and anti-Hawking effects}
\label{sec: Unruh, Hawking, anti-Unruh, anti-Hawking effects}

        Suppose the underlying background also admits a non-degenerate, bifurcate Killing horizon with surface gravity $\kappa_h$. As explained in Section \ref{sec: On the geometric Hawking temperature}, there is a naturally defined global Hawking temperature, as per Definition \ref{def: global hawking temperature}, given by $T_{gH} = \frac{\kappa_h}{2\pi}$. In this context, borrowing the therminology common within black hole physics \cite{candelas1980vacuum,frolov2012black}, we interpret the ground state given by Theorem \ref{thm: 2 point ground state schd coord} as a \textit{Boulware-like state}, and the thermal states given by Theorem \ref{thm: 2 point KMS state schd coord} as \textit{Hartle-Hawking-like states}. In particular, the \textit{Hartle-Hawking state} is the thermal state at temperature $T_{gH}$. Consider a detector coupled to the Hartle-Hawking state such that both Equations \eqref{eq: from thm transition rate KMS Schd coord} and \eqref{eq: detailed balance transition rate KMS Schd coord} hold with $\beta = \frac{1}{T_{gH}}$. Then, the temperature measured by the detector, $T_D =  \frac{T_{gH}}{\sqrt{f(r)}}$ in accordance with Equation \eqref{eq: detailed balance transition rate KMS Schd coord}, coincides with the local Hawking temperature $T_{H}$, as in Definition \ref{def: (Local) Hawking temperature}. In this scenario, we say the detector has \textit{thermalized} at $T_{H}$---and duly noted the \textit{Hawking effect}. Withal, three paramount remarks follow.
        \begin{itemize}
          \item[i)] The Boulware-like and Hartle-Hawking-like states are not, in general, Hadamard at the horizon and the transition rate, accordingly, diverges therein. However, if the temperature of the field coincides with $T_{gH}$---if it is the Hartle-Hawking state---then it is regular at the horizon and the transition rate is well-defined even there \cite{kay1991theorems,moretti2012state}.
          \item[ii)] If we consider a finite interaction time, the results above do not hold in general: the temperature measured by the detector does not match exactly the redshifted temperature of the field. This fact gives rise to relevant physical phenomena, such as the strong anti-Unruh/Hawking effects described in the remainder of this section.
          \item[iii)] As shown in \cite{carballo2019unruh}, a detector can  ``thermalize'' with a non-thermal state. Hence, to properly say that a detector that satisfies the detailed balance condition at the local Hawking temperature has detected Hawking radiation, we should take note of the state to which the detector is coupled. 
        \end{itemize}

      This scenario of when we can also study the transition rate as a function of temperature is of relevant to the applications shown in the next chapter. Consider a detector that, at each fixed spatial position $\underbar{x}$, thermalizes at a temperature $T_{D}(\underbar{x})$, i.e. Equation \eqref{eq: detailed balance transition rate KMS Schd coord} holds true. Since $\dot{\mathcal{F}}$ depends on the trajectory of the detector, we can see it as a function of $\underbar{x}$, hence we can also see it as a function of $T_D$ by looking at the inverse relation $\underbar{x}=\underbar{x}(T_D)$. In this scenario it makes sense to define the anti-correlation effects, as follows.

      \begin{definition}[Anti-correlation effects] \label{def: anti-correlation effects} If
        \begin{equation*}
            \frac{\partial \dot{\mathcal{F}}(T_D)}{\partial T_D} < 0,
        \end{equation*}
      then we say the transition rate manifests an \textit{anti-correlation effect}. If the bifurcate Killing horizon of the underlying background is an observer-dependent acceleration horizon, such as the Rindler horizon on Minkowski or AdS spacetimes, then we also refer to it as the \textit{anti-Unruh effect}. If the underlying spacetime is a black hole, accordingly, we call it \textit{anti-Hawking effect}.
      \end{definition}

      Both the anti-Unruh and anti-Hawking phenomena are negative differential effects that were recently discovered \cite{brenna2016anti,Henderson2019uqo} and possess a weak and a strong formulation. Both formulations concern the behaviour of the transition rate with respect to the observed temperature. Definition \ref{def: anti-correlation effects} above gives their weak versions. The strong anti-correlation effects concern the fact that a detector may approximately thermalize, i.e. for finite but long interaction times
      \begin{align*}
              \frac{ \dot{\mathcal{F}}(\Omega)}{\dot{\mathcal{F}}(-\Omega)}	\sim  e^{-\Omega/T_D},
      \end{align*}
      at a temperature ($T_D$) that decreases as the (redshifted) temperature of the field ($(\beta\sqrt{f(r)})^{-1}$) increases. Since here we only consider infinite interaction times, $T_D \equiv (\beta\sqrt{f(r)})^{-1} $ and only the weak form is of relevance. Thus, we omit the adjective ``weak'' when referring to the anti-Unruh and anti-Hawking effects.

      To understand the need of introducing Definition \ref{def: anti-correlation effects}, let us consider two standard scenarios. First, the following example from statistical physics, and subsequently the Unruh effect on Minkowski spacetime.

      \begin{example}[Bosons and Fermions] \label{eg: bosons and fermions gases} Let $k_B$ be the Boltzmann constant. The average number of particles with energy $E$ of an ideal gas of bosons and fermions are ruled, respectively, by the Bose-Einstein and Fermi-Dirac distributions:

        \begin{subequations}
        	\label{eq:BE FD distributions}
          \begin{tabularx}{.9\textwidth}{Xp{.5cm}X}
            \begin{equation}
                \label{eq:BE distributions}
                  N_{\text{B-E}} \propto  \frac{1}{e^{E/k_B T}-1},
            \end{equation}& &
          \begin{equation}
            \label{eq:FD distributions}
              N_{\text{F-D} }\propto   \frac{1}{e^{E/k_B T}+1}.
          \end{equation}
        	\end{tabularx}
        \end{subequations}

        \noindent Computing the derivatives with respect to the temperature $T$ of Equations \eqref{eq:BE FD distributions}, we find that in both cases $c\in\{\text{B-E}, \text{F-D}\}$, it holds true
        $$ \text{sign}\left( \frac{\partial N_c}{\partial T}\right) = \text{sign}(E)>0.$$
        Since only positive $E$ makes sense in this context, we conclude that for ideal gases of bosons and fermions the number of particles in a given state increases with temperature.
      \end{example}

      Now let us return to quantum field theory and consider the Unruh effect: ``a uniformly accelerated observer sees Minkowski vacuum as a thermal bath''. On one hand, it can be formalized within standard quantum field theory in terms of creation and annihilation operators. In this context, one computes the expectation value of the number operator in Minkowski vacuum as seen by a Rindler observer, and obtains a certain distribution that, in four dimensions, is given by a certain polynomial (the local density of states) multiplied by a Planckian distribution that selects the Unruh temperature \cite{unruh1976notes, takagi1986vacuum}. On the other hand, analogously to the Hawking effect, the Unruh effect can also be stated in the framework of an Unruh-DeWitt detector: on Minkowski spacetime, a detector following a Rindler trajectory with proper acceleration $a$, and interacting with Minkowski vacuum for an infinite proper time, will observe the Unruh effect by thermalizing at the Unruh temperature $T_U=\frac{a}{2\pi}$. In this scenario, as shown in \cite{takagi1986vacuum,hodgkinson2012often}, the explicit form of the transition rate, which depends on the number of dimensions, is given by

      \begin{subequations}
        \setlength{\tabcolsep}{4pt}
        \renewcommand{\arraystretch}{.1}
      	\label{eq:transition rate Minkowski 3 to 6}
        \begin{tabularx}{.9\textwidth}{XX}
        \begin{equation}
        		\label{eq:transition rate Minkowski 3}
            	\dot{\mathcal{F}}_{\text{Mink}_3}= \frac{1}{2} \frac{1}{e^{2\pi \Omega/a}+1},
        \end{equation}&
        \begin{equation}
        	\label{eq:transition rate Minkowski 4}
        	\dot{\mathcal{F}}_{\text{Mink}_4}= \frac{1}{2\pi} \frac{\Omega}{e^{2\pi \Omega/a}-1},
        \end{equation}\\
      	\begin{equation}
      			\label{eq:transition rate Minkowski 5}
      			\dot{\mathcal{F}}_{\text{Mink}_5}= \frac{1}{32\pi} \frac{4\Omega^2 + a^2}{e^{2\pi \Omega/a}+1},
      	\end{equation}&
      	\begin{equation}
      		\label{eq:transition rate Minkowski 6}
      		\dot{\mathcal{F}}_{\text{Mink}_6}= \frac{1}{12\pi^2} \frac{\Omega(\Omega^2+a^2)}{e^{2\pi \Omega/a}-1}.
      	\end{equation}
      \end{tabularx}
      \end{subequations}

For all spacetime dimensions, the closer the detector is to the horizon, the higher its proper acceleration, and the higher the temperature measured. However, with respect to $T_U$ the transition rate is not always a monotonically increasing function. The expressions in Equation \eqref{eq:transition rate Minkowski 3 to 6} are plotted in Figure \ref{fig:transition rate Minkowski 3 4 5 6 as function of a}, which makes it clear that for $n\geq 4$ we have that temperature, acceleration and transition rate are all directly proportional, while for $n=3$ together with $\Omega<0$ (de-excitation) we have that $\dot{\mathcal{F}}_{\text{Mink}_3}$ decreases with $a$---that is, the anti-Unruh effect is manifest.

      Guiding our intuition by what happens in standard scenarios such as the one in Example \ref{eg: bosons and fermions gases}, we would expect that the higher the temperature, the higher the energy available, the higher the number of collisions, the higher the number of particles, the higher the transition probabilities for an Unruh-DeWitt detector. That is indeed the case for a detector experiencing the Unruh effect on the four-dimensional Minkowski spacetime (and on five- and six-dimensional Minkowski spacetimes). Therefore, the contrasting behaviour observed in three dimensions gains the name of \textit{anti-}Unruh effect. This spacetime dimension dependence for the manifestation of these anti-correlation effects has been verified on topological black holes in \cite{deSouzaCampos2020ddx,deSouzaCampos2020bnj}, works that I summarize in Sections \ref{sec: On a static BTZ spacetime and on Rindler-AdS3} and \ref{sec: On massless hyperbolic black holes} of the next chapter, and on other spacetimes such as the Bertotti-Robinson solution \cite{conroy2021response}.

      \newpage
\vspace*{\fill}
      \begin{figure}[H]
      \centering
       \includegraphics[width=.45\textwidth]{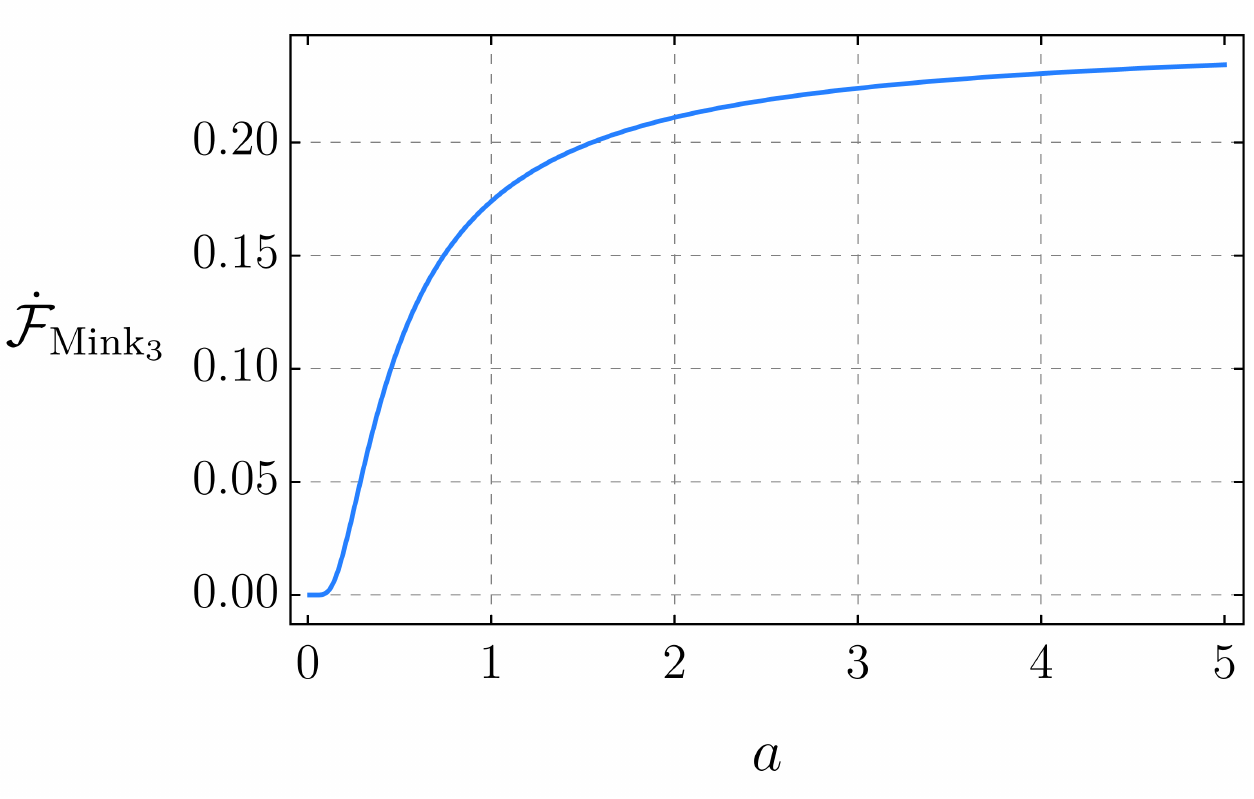}\hspace{.5cm}
        \includegraphics[width=.45\textwidth]{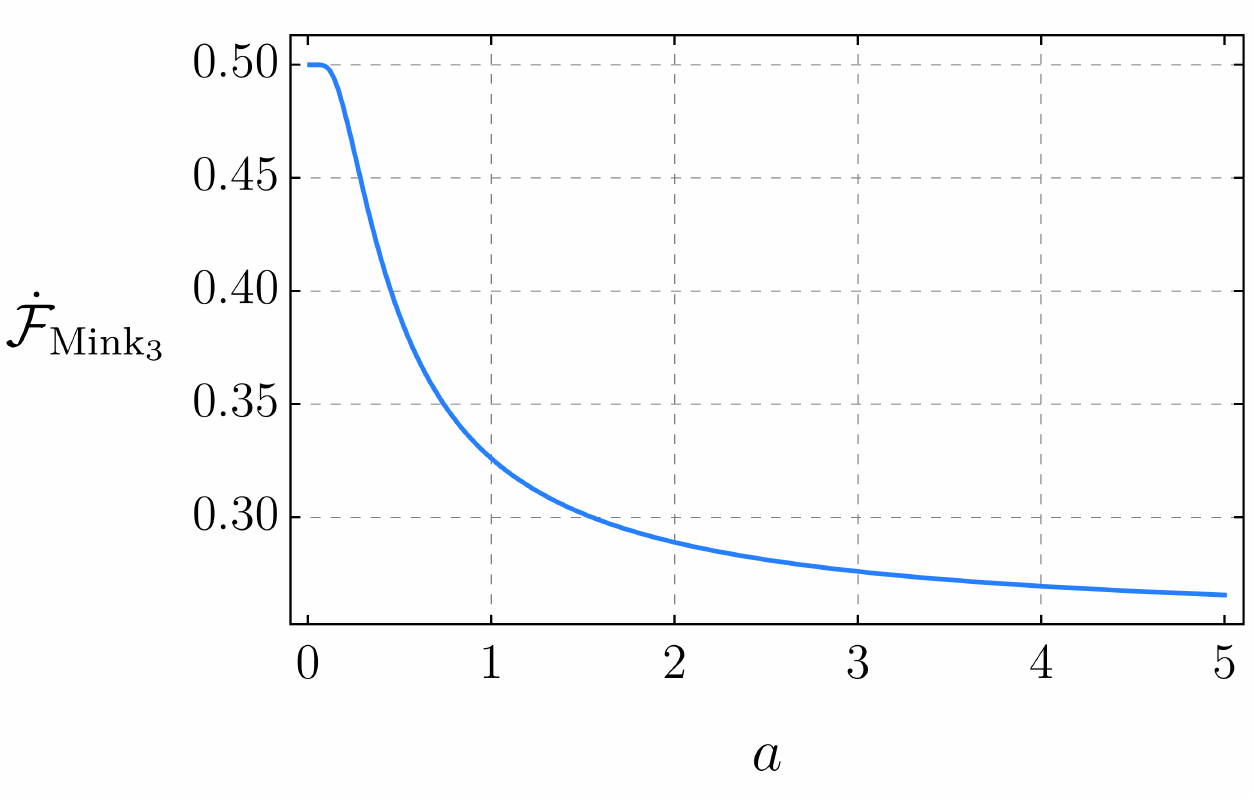}
        \includegraphics[width=.45\textwidth]{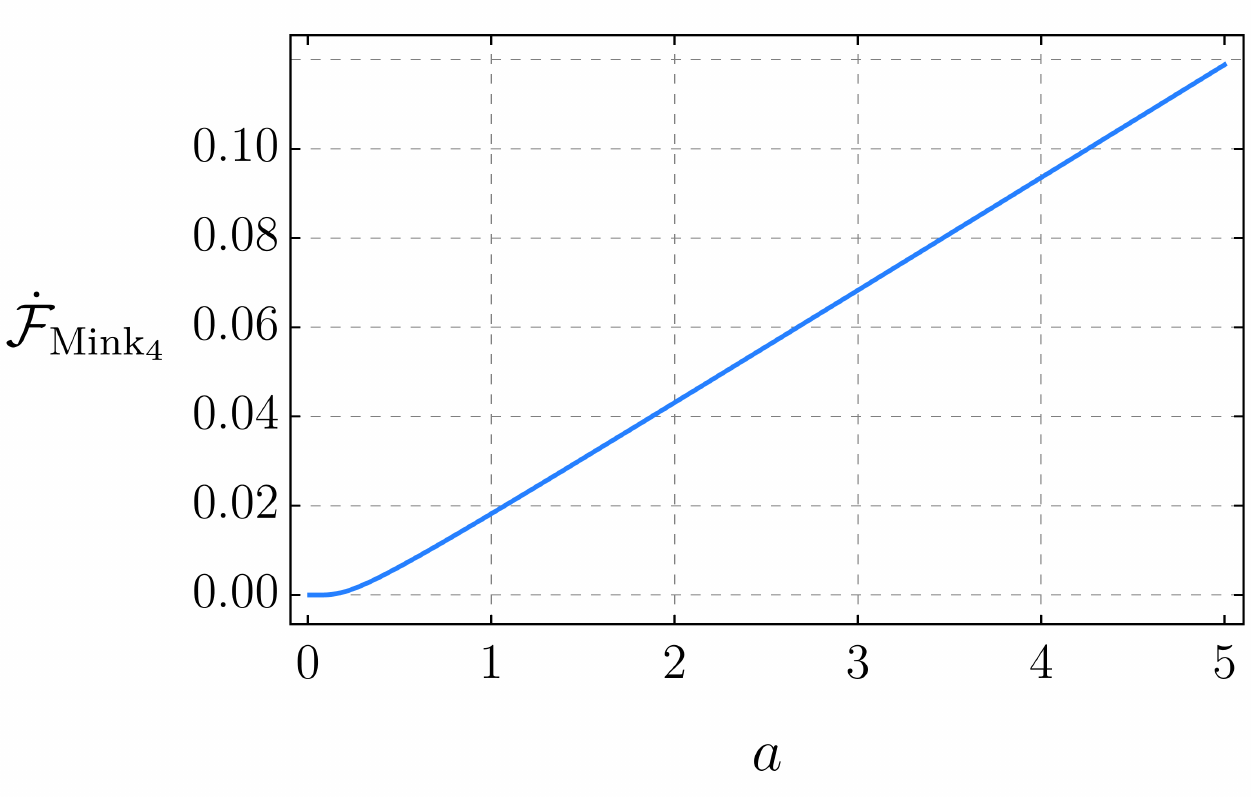}\hspace{.5cm}
      	\includegraphics[width=.45\textwidth]{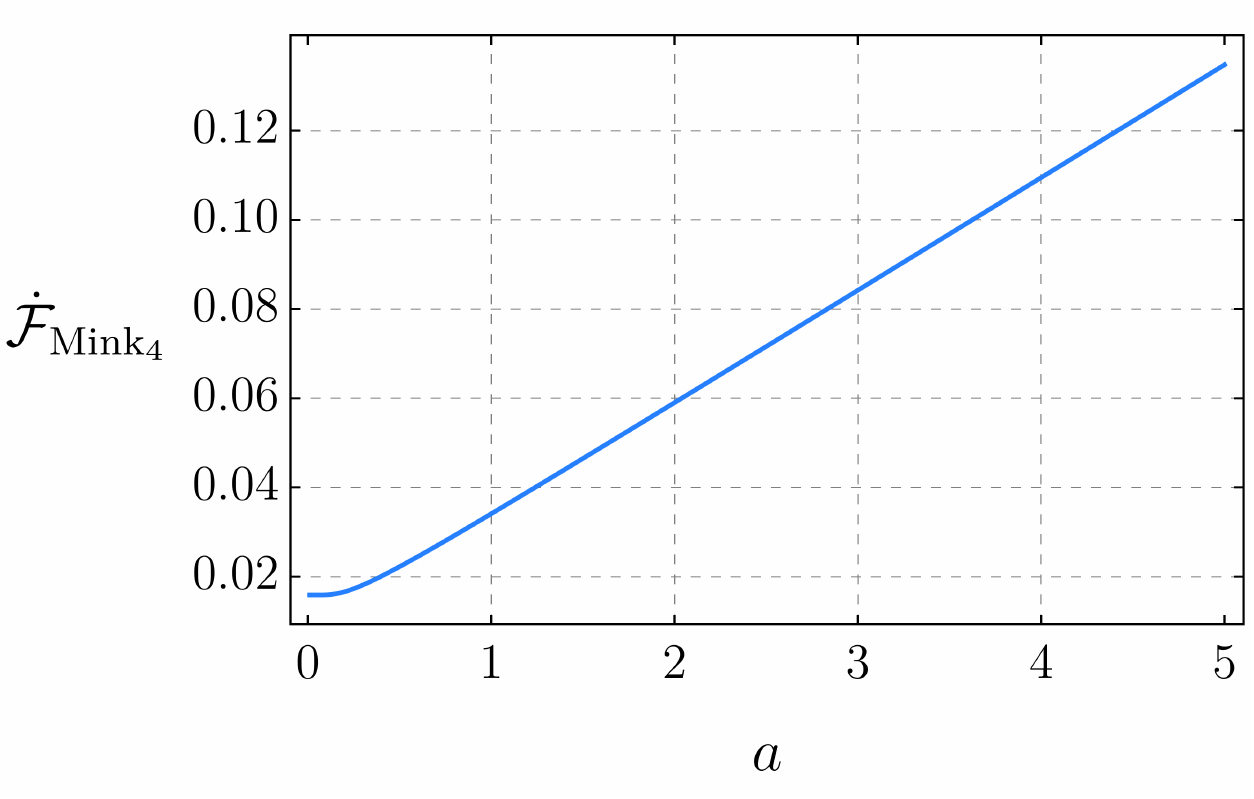}
      	\includegraphics[width=.45\textwidth]{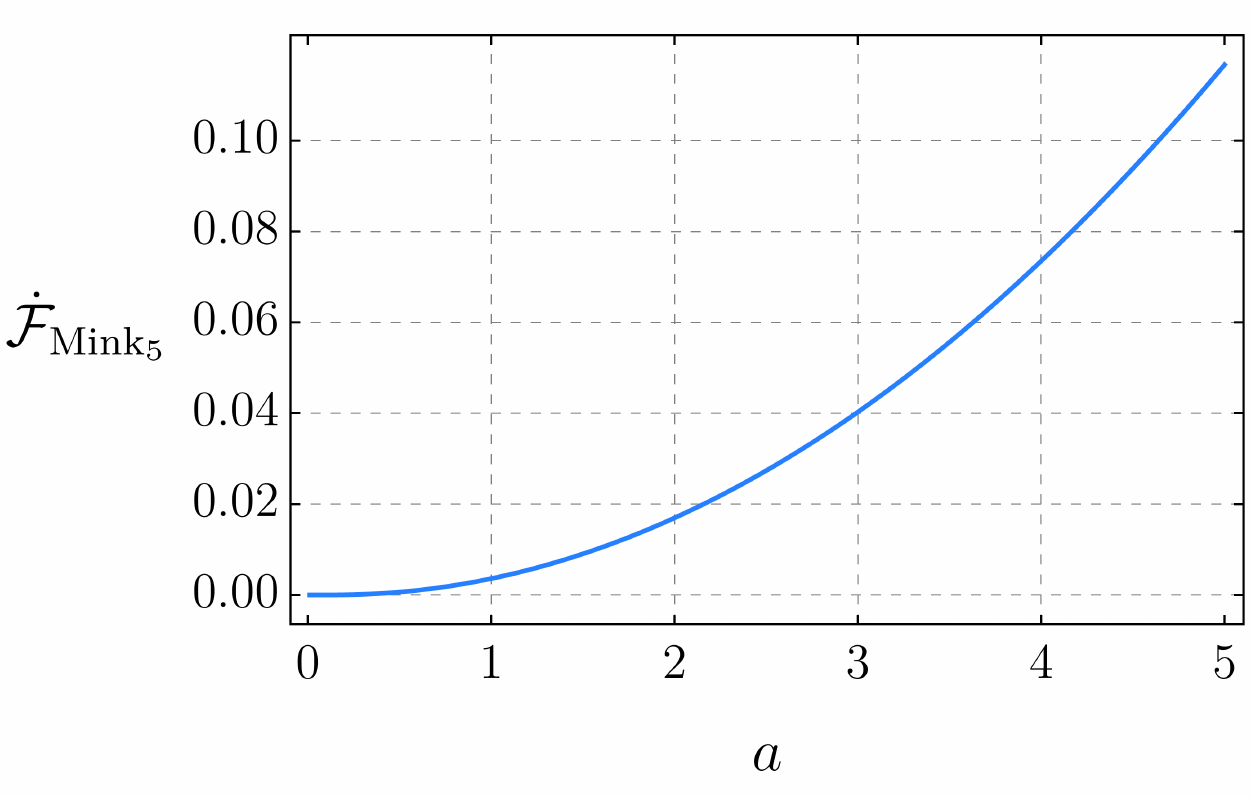}\hspace{.5cm}
      	\includegraphics[width=.45\textwidth]{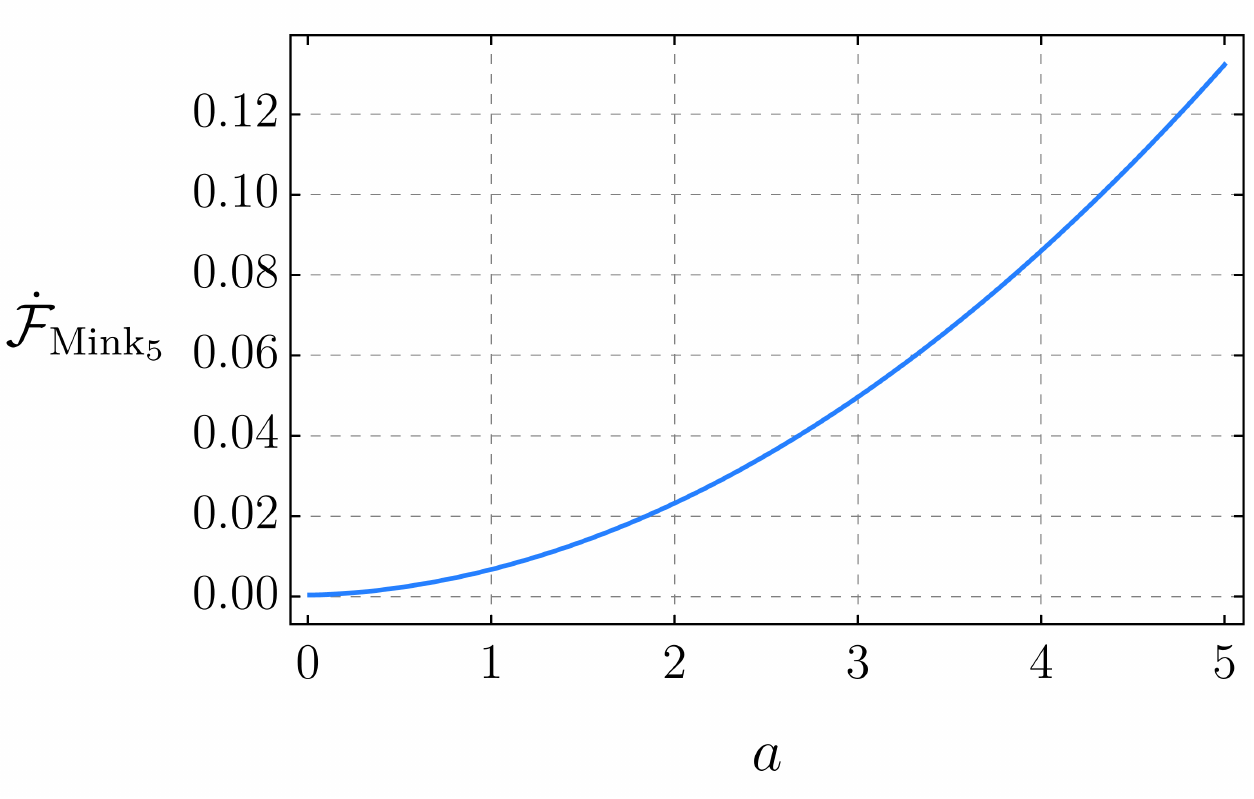}
      	\includegraphics[width=.45\textwidth]{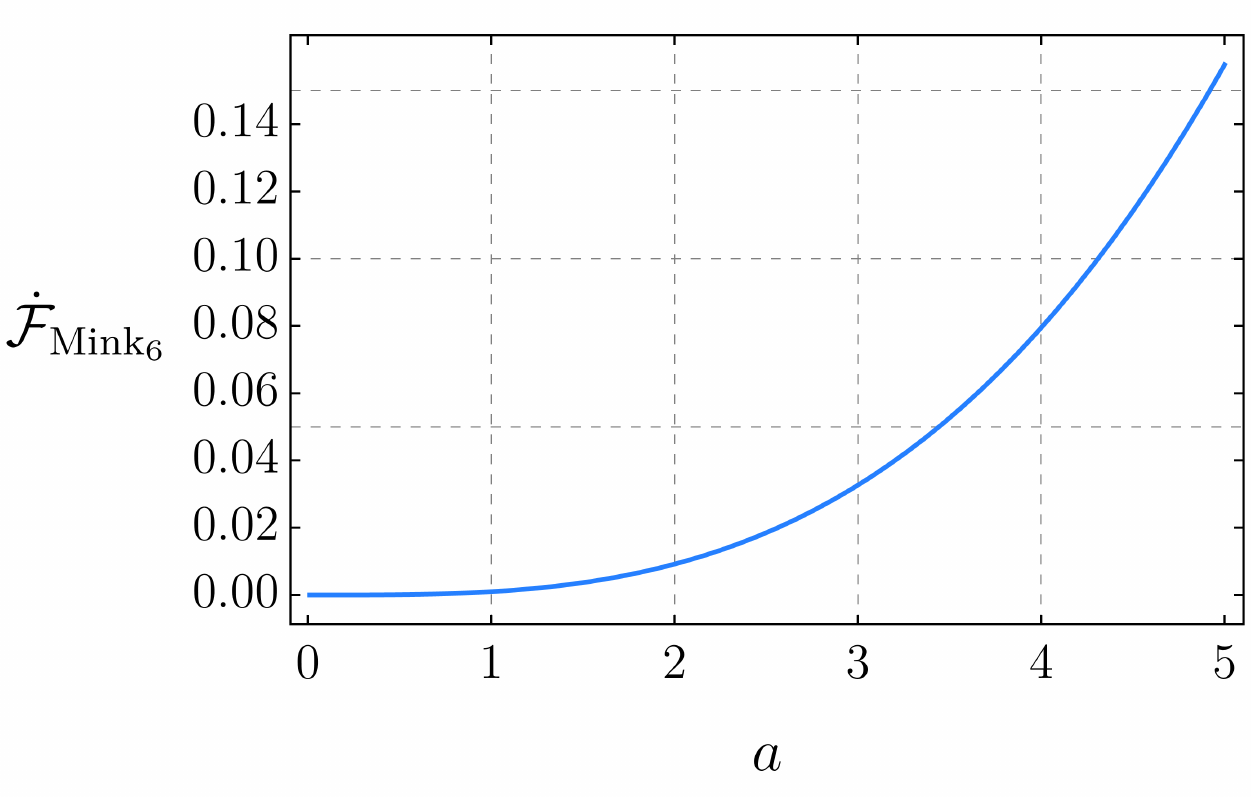}\hspace{.5cm}
      	\includegraphics[width=.45\textwidth]{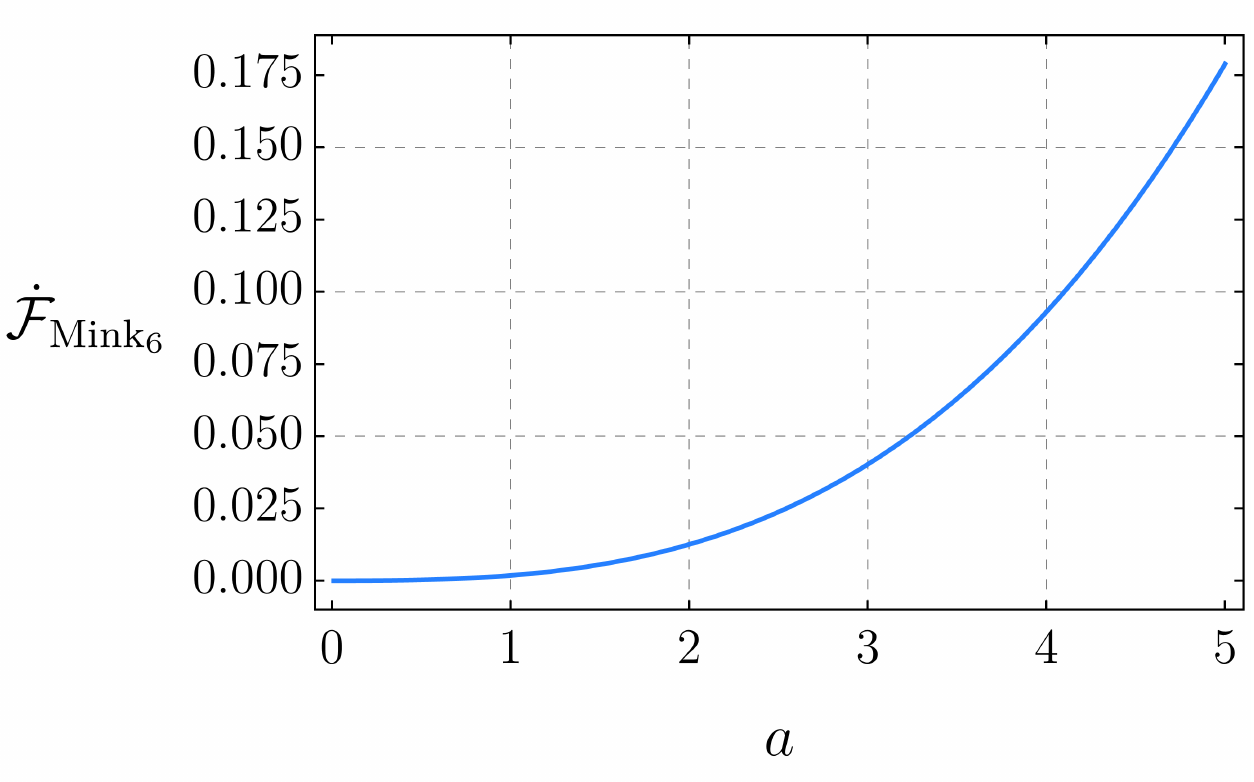}
      \caption{ The transition rate on Minkowski spacetime as a function of the proper acceleration of the detector. On the left, for $\Omega=0.1$; on the right, for  for $\Omega=-0.1$.}
      \label{fig:transition rate Minkowski 3 4 5 6 as function of a}
      \end{figure}
\newpage

      For completeness and for future reference, I include Figure \ref{fig:transition rate Minkowski 3 4 5 6 as a function of the energy gap} that illustrates the behaviour of the transition rates given in Equation \eqref{eq:transition rate Minkowski 3 to 6} as a function of the energy gap. Note that seen as a function of $\Omega$, the transition rate also shows a rather different behavior for $\Omega<0$ in the three-dimensional case.
\vspace{.8cm}
      \begin{figure}[H]
      \centering
        \includegraphics[width=.45\textwidth]{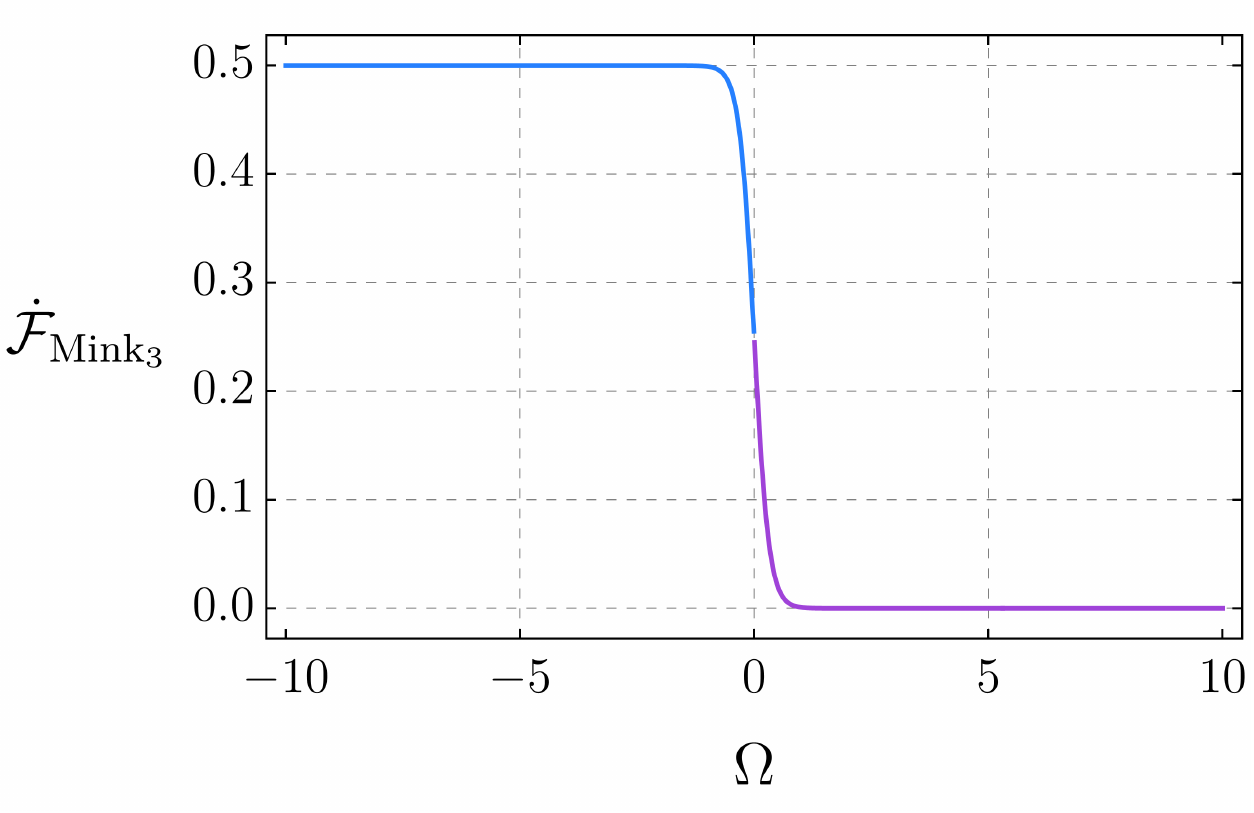}\hspace{.5cm}
          \includegraphics[width=.45\textwidth]{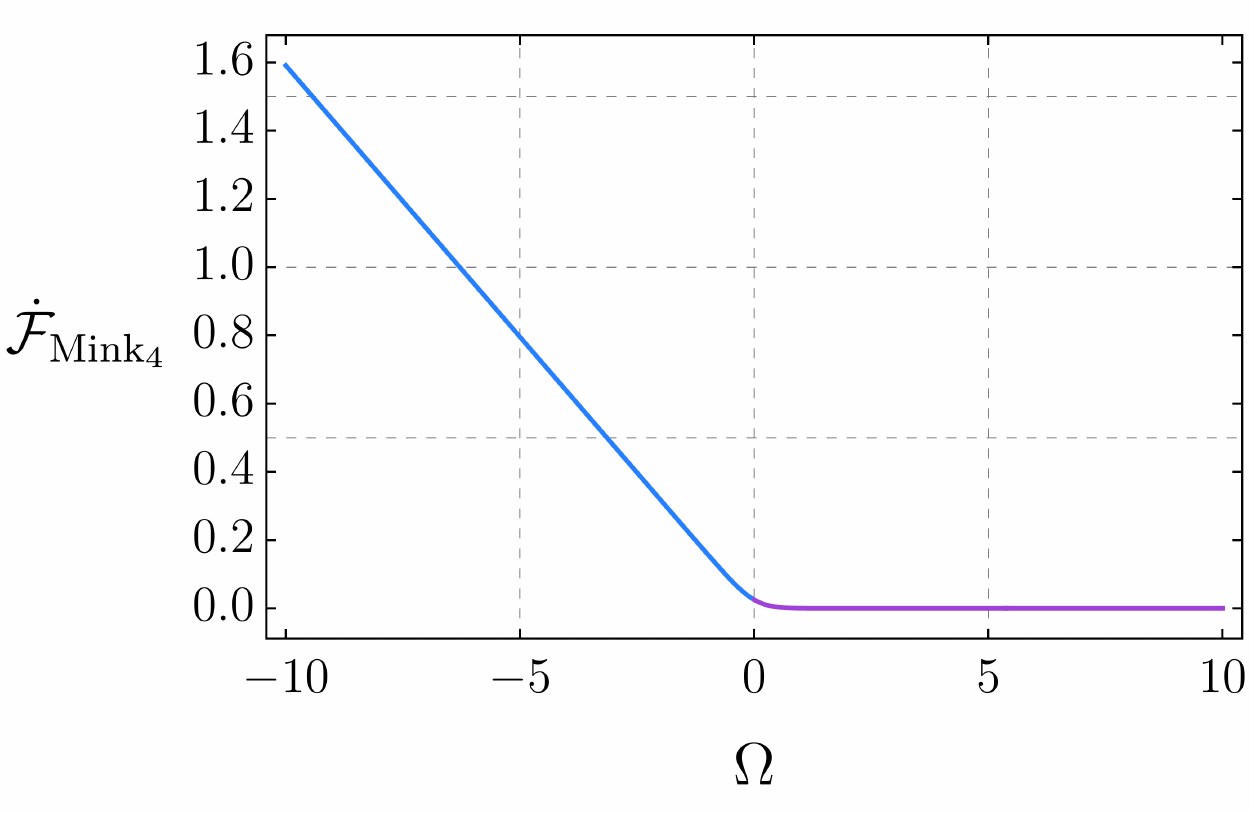}
            \includegraphics[width=.45\textwidth]{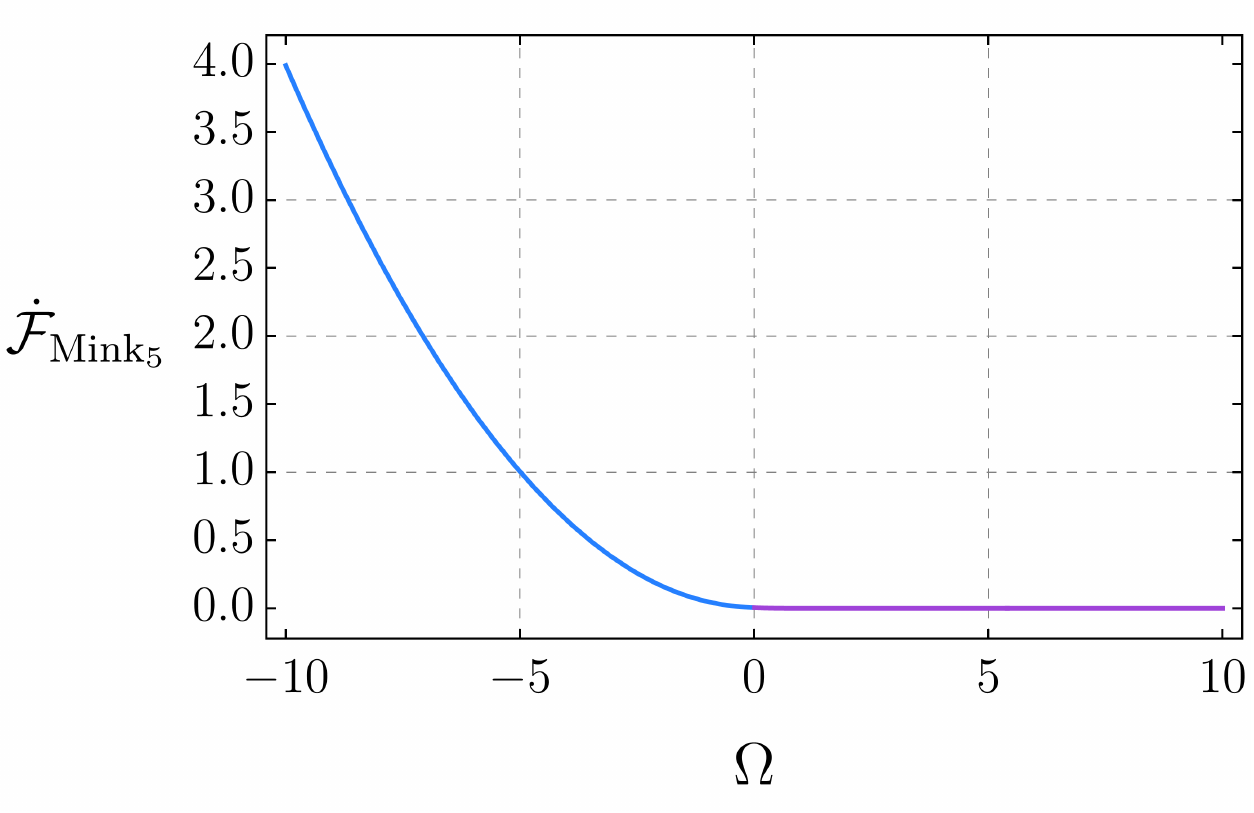}\hspace{.5cm}
              \includegraphics[width=.45\textwidth]{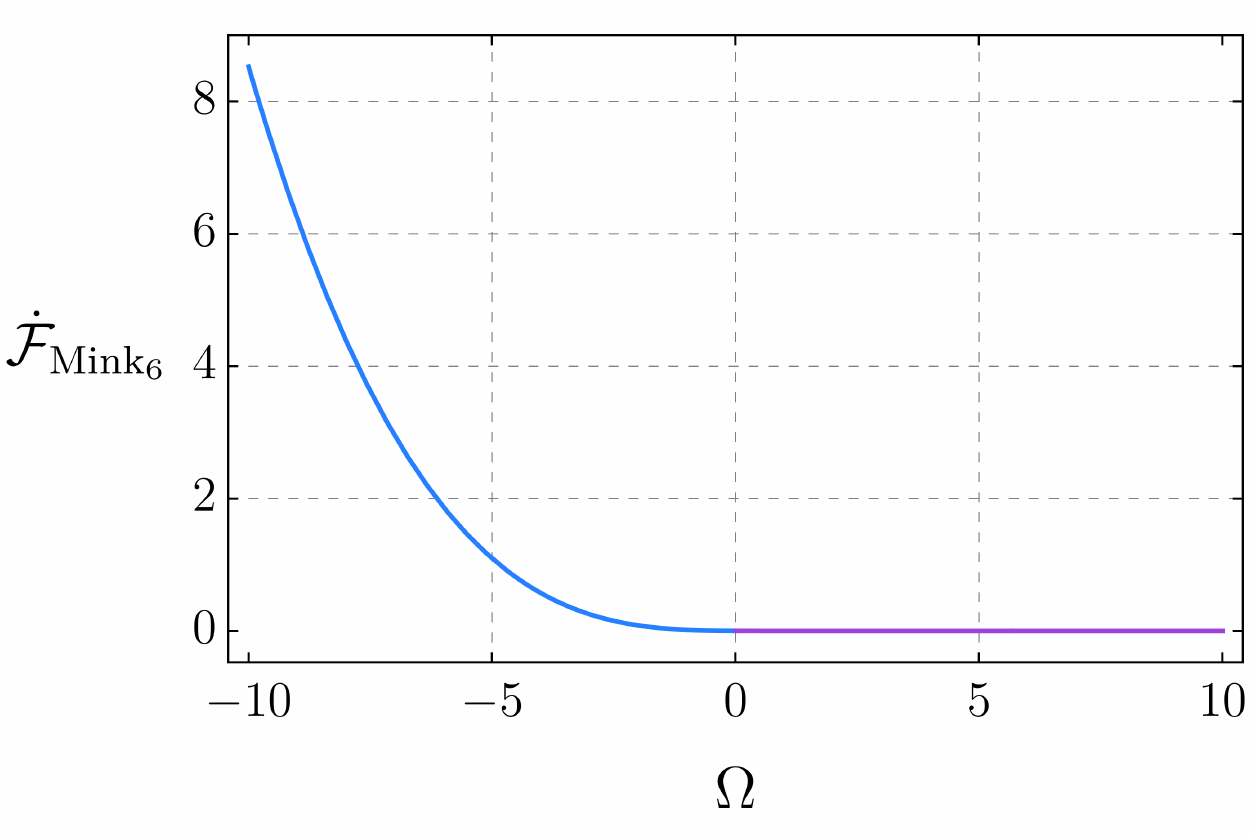}
      \caption{The transition rate on Minkowski spacetime as a function of the energy gap $\Omega$ for a fixed proper acceleration $a=1$ and for spacetime dimensions three to six.}
      \label{fig:transition rate Minkowski 3 4 5 6 as a function of the energy gap}
      \end{figure}
      \vspace{.2cm}


\section{Examples}
\label{sec: Examples chapter 2}

To illustrate how to apply the framework described in this chapter, in Section \ref{sec: On Minkowski spacetime in spherical coordinates} I construct the ground and KMS states on a four-dimensional Minkowski spacetime endowed with spherical coordinates. This is a nice example to consider since not only it constitutes a scenario where abstract expressions can be written in closed form in terms of special functions, but it also admits a one-parameter family of physically-sensible boundary conditions. For completeness, in Section \ref{subsec: On Minkowski spacetime without a plate}, I show that choosing Dirichlet boundary condition at the coordinate singularity at $r=0$ yields the standard results on a four-dimensional Minkowski spacetime.

\subsection{On Minkowski spacetime with a wire}
\label{sec: On Minkowski spacetime in spherical coordinates}

Consider a scalar field $\Psi : \mathcal{M} \to \mathbb{R}$ with mass $m_0$ on a four-dimensional Minkowski spacetime $\mathcal{M}$ endowed with Schwarzschild-like coordinates $(t,r,\theta,\varphi)$. For $t\in\mathbb{R}$, $r\in(0,\infty)$, $\theta\in[0,\pi)$, $\varphi\in[0,2\pi)$, its line element is given by Equation \eqref{eq:Minkowski metric spherical} with $n=4$:
\begin{align}
  \label{eq:Minkowski metric spherical 4D}
    ds^2 =-dt^2 + dr^2 + r^2d\theta^2 + r^2\sin(\theta)^2d\varphi^2.
\end{align}
As mentioned in Section \ref{ex: Chapter 1 Minkowski}, spherical coordinates do not cover the whole Minkowski spacetime. However, the coordinate singularity at $r=0$ can be interpreted as a physical boundary (not a mathematical boundary): the hypersurface $r=0$ is the set of points $(t,0,\theta,\varphi)\in\mathcal{M}$, forming a ``line'', a ``wire'', in $\mathcal{M}$. If we impose Dirichlet boundary condition at $r=0$, in the sense of Definition \ref{def: generalized Robin boundary conditions} with $\gamma=0$ and as done in the next section, we recover the results for the transition rate on Minkowski spacetime (without a wire).

By invoking Theorems \ref{thm: 2 point ground state schd coord} and \ref{thm: 2 point KMS state schd coord}, there is only one step to undertake to obtain the ground and thermal states. More precisely, we need to obtain the spectral resolution of the Green function of the radial equation. With this goal in mind, I first obtain the radial equation and its solutions. Subsequently, I study the admissible boundary conditions and I construct the Green function of the radial equation, obtaining also its spectral resolution. With the two-point functions in hand, I consider, respectively, the thermal contributions to the ground state fluctuations and to the energy density and I compute the transition rate of an Unruh-deWitt particle detector.

\subsubsection*{The radial equation}
\label{The radial equation Mink 4D}

Let $Y_\ell^m(\theta,\varphi)$ be the spherical harmonics on the $2$-sphere with eigenvalues $\lambda_\ell^m:=-\ell(\ell+1)$. The ansatz
\begin{align*}
\Psi(t,r,\theta,\varphi) =  e^{-i\omega t } R(r)Y_\ell^m(\theta,\varphi)
\end{align*}
solves the Klein-Gordon equation \eqref{eq: KG} if and only if the function $R(r)$ solves the radial equation \eqref{eq: the radial equation Shd-like coord}. In the coordinates of Equation \eqref{eq:Minkowski metric spherical 4D}, for $p^2 := \omega^2 - m_0^2$, the radial equation reduces to one of Bessel type:
\begin{equation}
  \label{eq: the radial equation Minkowski 4D}
    R''(r)+ \frac{2}{r}R'(r)+ \left(p^2 + \frac{\lambda_\ell^m}{r^2} \right)R(r)=0.
\end{equation}

\subsubsection*{The radial solutions}
\label{The radial solutions Mink 4D}
The Sturm-Liouville operator $  L_{p^2}$ associated with the radial equation, as in Equation \eqref{eq: the radial equation Minkowski 4D}, reads
\begin{equation*}
  L_{p^2}:=-\frac{1}{r^2} \left(\frac{d}{d r} \left(r^2\frac{d}{d r}\right)+ \lambda_\ell^m \right).
\end{equation*}
Accordingly, let us check the square-integrability of the solutions at each endpoint with respect to the measure $r^2dr$. Bases of solutions of Equation \eqref{eq: the radial equation Minkowski 4D} can be written in terms of the spherical Bessel functions $j_{\nu}(pr)$ and $y_{\nu}(pr)$ with index
\begin{align*}
& \nu := \frac{-1 + \sqrt{1 - 4\lambda_\ell^m}}{2}\geq0.
\end{align*}
 At $r\rightarrow 0$, a suitable basis is
\begin{align*}
&R_{1(0)}(pr) = j_{\nu}(pr),
\quad R_{2(0)}(pr) = p \,y_{\nu}(pr).
\end{align*}
The $L^2$-norms of the solutions above behave asymptotically as
    \begin{align*}
    R_{1(0)}(pr) \overset{r\rightarrow 0}{\sim} r^{\nu} \Rightarrow || R_{1(0)}(pr) ||_{L^2}  \overset{r\rightarrow 0}{\sim} r^{2\nu+3}<\infty \iff \nu>-\frac{3}{2},\\
    R_{2(0)}(pr) \overset{r\rightarrow 0}{\sim} r^{- \nu-1} \Rightarrow || R_{2(0)}(pr) ||_{L^2}  \overset{r\rightarrow 0}{\sim}  r^{-2\nu+1}<\infty \iff \nu<\frac{1}{2},
  \end{align*}
thus $r= 0$ is limit circle for $\ell=0$ (i.e. $\nu=0$), and limit point for $\ell>0$. Defining
\begin{align*}
  \gamma_{\ell}:=
\begin{cases}
  \gamma\in[0,\pi), \quad &\text{ if }\ell=0\\
  0, \quad &\text{ if }\ell>0\\
\end{cases},
\end{align*}
the most general solution that is square-integrable at the singularity can be written as
\begin{align}
    \label{eq: R_gamma Mink}
  R_{\gamma_{\ell}}(pr) = \cos(\gamma_{\ell})R_{1(0)}(pr) - \sin(\gamma_{\ell})R_{2(0)}(pr).
  \end{align}
Markedly, for $\ell=0$, the solution \eqref{eq: R_gamma Mink} satisfies the (generalized) Robin boundary conditions parametrized by $\gamma$, see Equation \eqref{eq: Robin bc with gamma}. For $\ell>0$, it reduces to the principal solution $R_{1(0)}(pr)$, consistent with imposing Dirichlet boundary condition.

At infinity, a suitable basis is given by the Hankel functions
\begin{subequations}
\label{eq: basis radialfunction infty0}
\begin{align}
&R_{1(\infty)}(pr) =h^{(1)}_{\nu}(pr) = j_{\nu}(pr) + i  y_{\nu}(pr)=  R_{1(0)}(pr) + \frac{i}{p} R_{2(0)}(pr) ,\label{eq: basis radialfunction infty}\\
&R_{2(\infty)}(pr) = h^{(2)}_{\nu}(pr) =j_{\nu}(pr) - i  y_{\nu}(pr)= R_{1(0)}(pr) - \frac{i}{p}R_{2(0)}(pr)  .
\end{align}
\end{subequations}
It follows that $r= \infty$ is limit point, since
  \begin{align*}
    R_{1(\infty)}(pr)  \overset{r\rightarrow \infty}{\sim}  e^{+ipr}r^{-1}\Rightarrow || R_{1(\infty)}(pr) ||_{L^2} \overset{r\rightarrow \infty}{\sim} e^{-\Imag(p)r}<\infty \iff \Imag(p)>0,\\
    R_{2(\infty)}(pr) \overset{r\rightarrow \infty}{\sim}  e^{-ipr}r^{-1}\Rightarrow || R_{2(\infty)}(pr) ||_{L^2} \overset{r\rightarrow \infty}{\sim}  e^{+\Imag(p)r}<\infty \iff \Imag(p)<0 .
  \end{align*}
Therefore, the most general square-integrable solution at infinity can be written as
\begin{align*}
   R_{\infty}(pr)&= R_{1(\infty)}(pr)\Theta(\Imag(p)) + R_{2(\infty)}(pr)\Theta(-\Imag(p))\nonumber\\
                       &= \left(R_{1(0)}(pr) + \frac{i}{p} R_{2(0)}(pr)\right)\Theta(\Imag(p)) + \left(R_{1(0)}(pr) - \frac{i}{p} R_{2(0)}(pr)\right)\Theta(-\Imag(p))\nonumber\\
                              &= R_{1(0)}(pr) +  \frac{i\, \text{sign}(\Imag{p})}{p} R_{2(0)}(pr).
 \end{align*}

Note that the Sturm-Liuville operator and the radial solutions defined above satisfy the following useful properties with respect to complex-conjugation
\begin{subequations}
  \label{eq: Schwartz reflections}
\begin{align}
  & \overline{L_{p^2}} = L_{\overline{p^2}} ,\\
  & \overline{R_{2(\infty)}(pr)} = R_{1(\infty)}(\overline{p}r) ,\\
  & \overline{R_{j(0)}(pr)} = R_{j(0)}(\overline{p}r), \,j\in\{1,2\}.
\end{align}
\end{subequations}

\subsubsection*{The radial Green function}
\label{sec: The radial Green function Mink 4D}

With $R_{\gamma_\ell}$ and $R_\infty$ as in the previous section, the Green function of the radial equation, as per Equation \eqref{eq: green function radial equation}, reads
  \begin{align}
    \label{eq: radial green function mink 4d}
  \mathcal{G}_{p}(r,r') =\frac{1}{ \mathcal{N}_p}\left( \Theta(r'-r) R_{\gamma_{\ell}}(r)R_{\infty}(r') + \Theta(r-r')R_{\gamma_{\ell}}(r')R_{\infty}(r)\right).
  \end{align}
Using that $P(r)=r^2$ and that the Wronskian between the spherical Bessel functions is $W_z[j_{\nu}(z),y_{\nu}(z)] = \frac{1}{z^2}$, for the normalization defined in Equation \eqref{eq: definition normalization of radial green function} we find
\begin{align}
  \label{eq: normalization mink 4d}
  \mathcal{N}_p &=  \left\{ \cos(\gamma_{\ell}) \text{sign}(\Imag{p}) \frac{i}{p}  +\sin(\gamma_{\ell})\right\}.
\end{align}
In addition, given Equations \eqref{eq: Schwartz reflections}, it holds
\begin{align}
      \label{eq: property normalization and green conjugate}
    \mathcal{N}_{\overline{p}} &= \overline{\mathcal{N}_{p}} \quad \text{ and }\quad\mathcal{G}_{\overline{p}}=\overline{\mathcal{G}_{p}}.
    \end{align}
Note that
\begin{equation*}
  \mathcal{N}_p=0\iff \ell=0 \text{ and }\tan(\gamma)=-\frac{1}{|\Imag(p)|}\iff \gamma\in\left(\frac{\pi}{2},\pi\right).
\end{equation*}
hence for $\gamma\in\left[0,\frac{\pi}{2}\right]$ the radial Green function has no poles.

\subsubsection*{Spectral resolution of the radial Green function}
\label{sec: Spectral resolution of the radial Green function Mink 4D}

The spectral resolution of the radial Green function $\mathcal{G}_{p}(r,r')$ as in Equation \eqref{eq: radial green function mink 4d} is given by Equation \eqref{eq:spectral resolution radial green function}. For $\gamma\in\left[0,\frac{\pi}{2}\right]$, it simplifies to
\begin{equation*}
    \frac{1}{2\pi i}\oint_{\mathcal{C}^\infty} d(p^2) \mathcal{G}_p(r,r') = -\frac{\delta(r-r')}{S(r)}.
\end{equation*}
Recalling that $\mathcal{G}_p(r,r')$ is defined whenever $\Imag(p)\neq 0 $, $\mathcal{C}^\infty$ can be taken as the infinite radius limit of either a ``pac-man'' contour, say $\mathcal{C}_1$, in the $p^2$-complex plane or of two semi-disks, say $\mathcal{C}_2$, in the $p$-complex plane, since it holds
\begin{equation*}
  \oint_{\mathcal{C}_1} d(p^2)\mathcal{G}_{p}(r,r') = \oint_{\mathcal{C}_2} dp \, p \mathcal{G}_{p}(r,r').
\end{equation*}
Both contours are illustrated in Figure \ref{fig:contour pacman and disks}.
\begin{figure}[H]
\centering
\includegraphics[width=.8\textwidth]{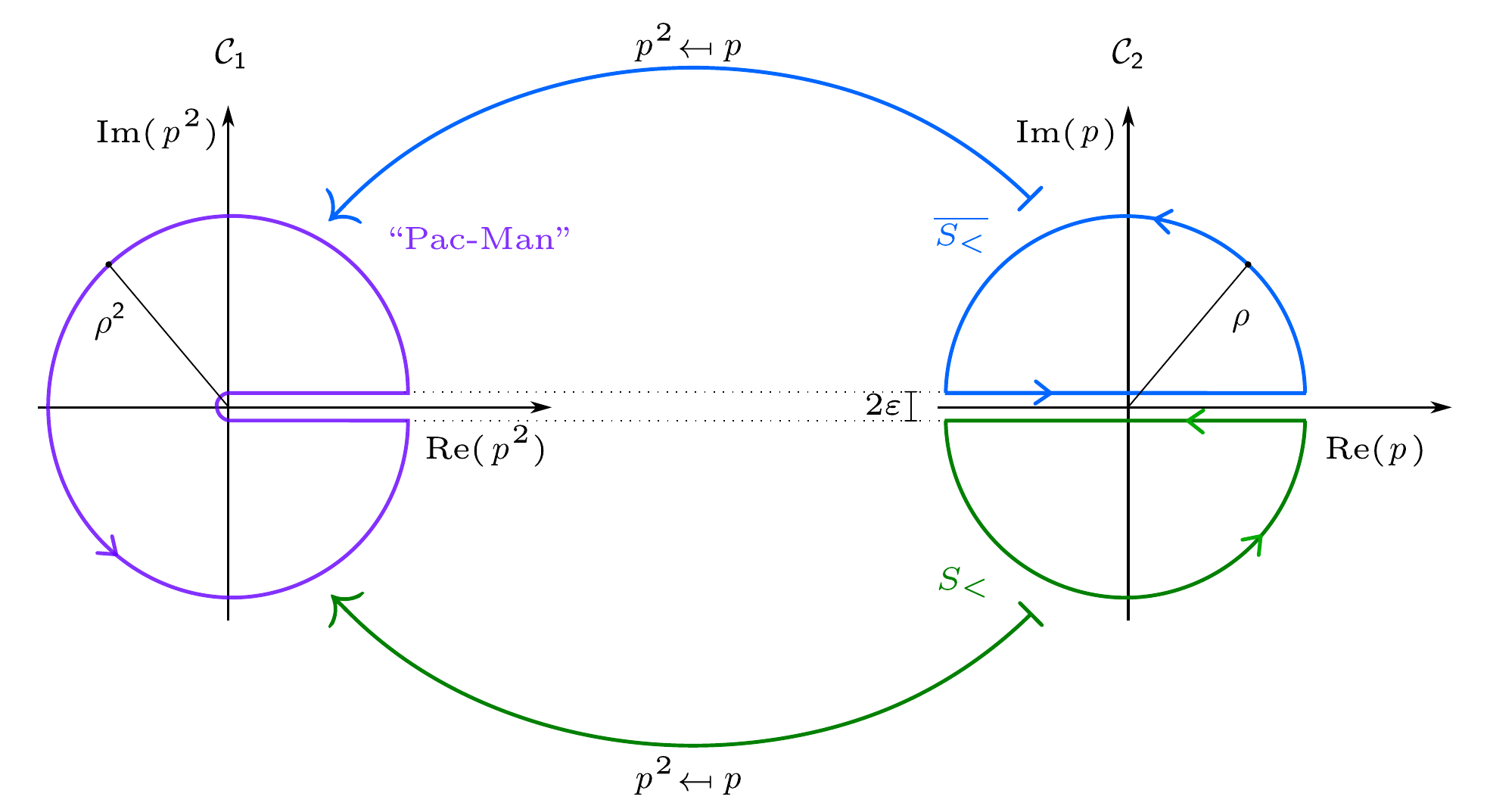}
\caption{Each semi-circle on the right-hand side is mapped to a ``pac-man'' contour.}
\label{fig:contour pacman and disks}
\end{figure}
Let us choose the ``pac-man'' contour and compute the integral
\begin{align*}
  \oint_{\mathcal{C}^\infty} d(p ^2) \mathcal{G}_{p}(r,r')=\lim\limits_{\varepsilon\rightarrow0}\lim\limits_{\rho\rightarrow\infty} \oint\limits_{\mathcal{C}_1} d(p^2)\mathcal{G}_{p}(r,r').
\end{align*}
In view of Equation \eqref{eq: property normalization and green conjugate}, we have
\begin{align}
  \label{eq: conjugate G omega < = G > conjugate omega}
\mathcal{G}_{p}(r,r')|_{\Imag(p)<0}=\overline{\mathcal{G}_{p}(r,r')|_{\Imag(\overline{p})>0}}.
\end{align}
Taking into account Jordan's lemma and Equation \eqref{eq: conjugate G omega < = G > conjugate omega}, we get
\begin{align}
  \label{eq:oirjfoijrf2233}
  \oint_{\mathcal{C}^\infty} d(p ^2) \mathcal{G}_{p}(r,r')&= \lim\limits_{\varepsilon\rightarrow0} \int_{0}^\infty  d(p^2) \left\{\mathcal{G}_{p+i\varepsilon}(r,r') - \overline{\mathcal{G}_{p+i\varepsilon}(r,r')}\right\}.
\end{align}
Consider the right and left propagating terms of $\mathcal{G}_{p}(r,r')$ for $\Imag(p)>0$, respectively:
\begin{align}
  \label{eq: def right and left propag mink 4d}
&\mathcal{G}_{p}^>(r,r') = \frac{1}{ \mathcal{N}_{p}} \Theta(r'-r) R_{\gamma_\ell}(r)R_{1(\infty)}(r'),
&\mathcal{G}_{p}^<(r,r') =\mathcal{G}_{p}(r,r') - \mathcal{G}_{p}^>(r,r').
\end{align}
The fundamental relation that connects the solution $R_{1(\infty)}(r)$ with those at $r=0$ reads
  \begin{equation}
  \label{eq: fundamental relation R1infty and R10 and R20 }
  R_{1(\infty)}(r) = A R_{1(0)}(r) + B R_{2(0)}(r),
  \end{equation}
with coefficients $A=1$ and $B=\frac{i}{p}$, as in Equation \eqref{eq: basis radialfunction infty}. The normalization \eqref{eq: normalization mink 4d} in terms of $A$ and $B$, reads
  \begin{equation}
  \label{eq: normalization A and B mink 4d}
  \mathcal{N}_p= \{B\cos(\gamma_\ell) + A\sin(\gamma_\ell)\}, \text{ for }\Imag(p)>0.
  \end{equation}
By direct substitution of Equations \eqref{eq: def right and left propag mink 4d}, \eqref{eq: fundamental relation R1infty and R10 and R20 } and \eqref{eq: normalization A and B mink 4d} into Equation \eqref{eq:oirjfoijrf2233}, for the right propagating term we find
\begin{align*}
  \oint_{\mathcal{C}^\infty} d(p ^2) \mathcal{G}_{p}^>(r,r')%
= & \Theta(r'-r)\int_{0}^{\infty}  d(p^2)\, \left\{ \frac{\overline{A}B- A\overline{B}}{|B\cos(\gamma_\ell) + A\sin(\gamma_\ell)|^2}  \right\}R_{\gamma_\ell}(r)R_{\gamma_\ell}(r').
\end{align*}
For the left propagating term we find an analogous expression. It follows that
\begin{align}
  \label{eq:this one to compare it}
\oint_{\mathcal{C}^\infty} d(p^2) \, \mathcal{G}_{p}(r,r') =   \int_{\mathbb{R}} d(p^2) \Theta(\omega-m_0) \left\{  \frac{1}{\pi}\frac{p}{\cos(\gamma_\ell)^2+p^2 \sin(\gamma_\ell)^2} R_{\gamma_\ell}(pr) R_{\gamma_\ell}(pr')\right\}.
\end{align}

\subsubsection*{Two-point functions }
\label{sec: Two-point functions Mink 4D}

The two-point functions for the ground and thermal states of a free, scalar, massive Klein-Gordon theory is given by Theorems  \ref{thm: 2 point ground state schd coord} and \ref{thm: 2 point KMS state schd coord}. On the spacetime $\mathcal{M}$ with line element \eqref{eq:Minkowski metric spherical 4D}, let us write their integral kernels as explicitly as possible. Consider first the ground state, whose two-point function has integral kernel
\begin{equation}\label{eq: 2 point ground state mink 4}
 \psi_{2,\infty}(x,x') = \lim_{\varepsilon \to 0^{+}} \sum_{\ell=0}^\infty \sum_{m=-\ell}^\ell \int_{0}^{\infty} d\omega e^{-i\omega(t-t'-i\varepsilon)}\widetilde{\psi}_2(r,r')Y_\ell^m(\theta,\varphi)\overline{Y_\ell^m(\theta',\varphi')}.
\end{equation}
where $Y_\ell^m(\theta,\varphi)$ are the spherical harmonics on the $2$-sphere with eigenvalues $\lambda_\ell^m=-\ell(\ell+1)$. On account of Equations \eqref{eq:radial part two point function 2} and \eqref{eq:this one to compare it}, the radial part $\widetilde{\psi}_2(r,r')$ is given by
\begin{equation}
  \label{eq: radial part psi mink 4d}
    \widetilde{\psi}_2(r,r')=\frac{1}{\pi}\frac{p}{\cos(\gamma_\ell)^2+p^2 \sin(\gamma_\ell)^2}\Theta(\omega-m_0)R_{\gamma_\ell}(pr) R_{\gamma_\ell}(pr'),
\end{equation}
where $p$ is taken as the positive solution to $p^2 = \omega^2-m^2_0$. We can sum with respect to the index $m$ by using the addition formula of the spherical harmonics \cite[Pg. 105]{jorgensen1987harmonic}
 \begin{equation*}%
  \sum_{m=-\ell}^\ell Y_\ell^m(\theta,\varphi)\overline{Y_\ell^m(\theta',\varphi')}= \frac{2\ell+1}{4\pi} P_\ell(\Phi),
\end{equation*}
 where $P_\ell$ is the Legendre function and $\Phi\equiv \cos(\theta)\cos(\theta')+\sin(\theta)\sin(\theta')\cos(\varphi-\varphi')$. Note that $\Phi$ is symmetric under the mappings $\theta\leftrightarrow\theta'$ and $\varphi\leftrightarrow\varphi'$. Then, Equation \eqref{eq: 2 point ground state mink 4} reads
 \begin{equation}\label{eq: 2 point ground state mink 4 summed in m}
  \psi_{2,\infty}(x,x') = \lim_{\varepsilon \to 0^{+}} \sum_{\ell=0}^\infty \int_{0}^{\infty} d\omega\, e^{-i\omega(t-t'-i\varepsilon)}\widetilde{\psi}_2(r,r')\frac{2\ell+1}{4\pi} P_\ell(\Phi).
 \end{equation}

For KMS states at inverse-temperature $\beta$ with respect to the Killing field $\partial_t$, we obtain
\begin{equation}\label{eq: 2 point kms state mink 4 summed in m}
 \psi_{2,\beta}(x,x') = \lim_{\varepsilon \to 0^{+}} \sum_{\ell=0}^\infty \int_{0}^{\infty} d\omega  \left[ \frac{ e^{-i \omega (t-t'-i\varepsilon)}}{1-e^{-\beta\omega}}  +  \frac{e^{+i \omega (t-t'+i\varepsilon)} }{e^{\beta\omega}-1} \right]\widetilde{\psi}_2(r,r')\frac{2\ell+1}{4\pi} P_\ell(\Phi).
\end{equation}

\subsubsection*{The transition rate}
\label{sec: The transition rate Mink 4D}
To compute the transition rate of a static Unruh-DeWitt detector as given by Theorem \ref{thm: Transition rate for the physically-sensible states}, one needs to take into account Equation \eqref{eq: 2jioejiojiuoe938df}, together with the identities $P_\ell(1)=1$ and, as given in \cite[(10.60.12)]{NIST_10},
 \begin{equation*}
   \sum_{\ell=0}^\infty (2\ell+1)  j_{\ell}(pr)^2 =1 .
 \end{equation*}
For the ground state, with two-point function as in Equation \eqref{eq: 2 point ground state mink 4}, we obtain
\begin{align}
  \label{eq: this one which is right above}
\dot{\mathcal{F}}_{\infty} =&  \frac{\Theta(-\Omega-m_0)}{2\pi r^2} \Bigg\{ p r^2 + \frac{\sin(\gamma)^2}{\cos(\gamma)^2+ p^2 \sin(\gamma)^2}\bigg[p\cos(2pr) + \cot(\gamma) \sin(2pr)\bigg]
\Bigg\}.
\end{align}
For the thermal state, with two-point function as in Equation \eqref{eq: 2 point kms state mink 4 summed in m}, Equation \eqref{eq: from thm transition rate KMS Schd coord} together with Equation \eqref{eq: this one which is right above} yields
\begin{equation*}
\dot{\mathcal{F}}_\beta =\frac{\text{sign}(\Omega)}{e^{\Omega\beta}-1}\dot{\mathcal{F}}_{\infty} ( -|\Omega|).
\end{equation*}

\subsection{On Minkowski spacetime (without a wire)}
\label{subsec: On Minkowski spacetime without a plate}

With the assumptions and definitions of the previous section, let us set Dirichlet boundary condition at the singularity $r=0$ and show that it yields the standard, closed form expressions on Minkowski spacetime \cite{birrell1984quantum}.

\subsubsection*{Two-point functions }
\label{sec: Two-point functions Mink 4D Dirichlet}

Define the auxiliary parameter $z :=  \sqrt{r^2 + r'^2 -2 r r' \Phi}$ and consider the identities \cite[(10.60.2)]{NIST_10} together with \cite[(10.23.6)]{NIST_10}):
\begin{equation*}
  \sum_{\ell=0}^\infty (2\ell+1) j_{\ell}(pr) j_{\ell}(pr')  P_\ell (\Phi) = \frac{\sin(pz)}{pz}.
\end{equation*}
Substituting Equation \eqref{eq: radial part psi mink 4d} in Equation \eqref{eq: 2 point kms state mink 4 summed in m} with $\gamma=0$, for a KMS state we have
\begin{equation}\label{eq: G+ k=0 mink thermal 1}
\psi_{2,\beta}(x,x') = \lim_{\varepsilon \to 0^{+}} \int_{m_0}^{\infty} d\omega \left[ \frac{e^{-i \omega (t-t'-i\varepsilon)}}{1-e^{-\beta\omega}}  +  \frac{e^{+i \omega (t-t'+i\varepsilon)} }{e^{\beta\omega}-1}\right] \frac{1}{4\pi^2}
 \frac{\sin(\sqrt{\omega^2-m_0^2} z)}{ z}.
\end{equation}
For the ground state, instead, we have
\begin{equation}\label{eq: G+ k=0 mink 4}
  \psi_{2,\infty}(x,x') = \lim_{\varepsilon \to 0^{+}} \int_{m_0}^{\infty} d\omega e^{-i \omega (t-t'-i\varepsilon)}\frac{1}{4\pi^2}
   \frac{\sin(\sqrt{\omega^2-m_0^2} z)}{ z}.
\end{equation}
Let us focus on the ground state and write the above expression in a closed form. Let $\sigma_\varepsilon:= z^2 -(t-t'-i\varepsilon)^2 $. For $m_0=0$, the integral in Equation \eqref{eq: G+ k=0 mink 4} yields
\begin{equation}\label{eq: G+ k=0 mink 7}
\psi_{2,\infty}(x,x') =  \lim_{\varepsilon \to 0^{+}} \frac{1}{4\pi^2\sigma_\varepsilon}.
\end{equation}
For $m_0>0$, writing it as an integral in $p$ and using expression \cite[3.914.9]{gradshteyn} we obtain
\begin{equation}\label{eq: G+ k=0 mink 9}
\psi_{2,\infty}(x,x') =   \lim_{\varepsilon \to 0^{+}} \frac{m_0}{4\pi^2}  \frac{K_1(m_0 \sqrt{\sigma_\varepsilon})}{\sqrt{\sigma_\varepsilon} },
\end{equation}
where $K_1$ is the modified Bessel function of second kind. Note that taking the limit $m_0\rightarrow 0$ of Equation \eqref{eq: G+ k=0 mink 9} we obtain Equation \eqref{eq: G+ k=0 mink 7}.

\subsubsection*{Thermal fluctuations}
\label{sec: Thermal fluctuations Mink 4D Dirichlet}

To obtain the regularized thermal fluctuations $\psi_\beta(\Psi^2(f))$ for a Gaussian, Hadamard, thermal state $\psi_\beta$ with kernel \eqref{eq: G+ k=0 mink thermal 1}, one could apply the regularization prescribed in Example \ref{eg: Regularized vacuum fluctuations}. However, for simplicity, and since the difference between the two-point functions of two Hadamard states yields a smooth function, per Definition \ref{def: Hadamard state, local Hadamard form}, let us regularize the thermal state using the ground state instead of using the Hadamard parametrix. That is, let us compute the thermal contribution to the ground state fluctuations. We call
\begin{align}\label{eq: G+ k=0 mink thermal diff 1}
 \Delta \psi_{2,\beta}(x,x') :&= \psi_{2,\beta}(x,x') -  \psi_{2,\infty}(x,x')\nonumber \\
       &= \int_{m_0}^{\infty} d\omega \frac{ 2\cos(\omega(t-t'))}{e^{\beta\omega}-1}\frac{1}{4\pi^2}
 \frac{\sin(\sqrt{\omega^2-m_0^2} z)}{ z}.
\end{align}
For the massless case, we can perform the integration in Equation \eqref{eq: G+ k=0 mink thermal diff 1} analytically. Its coincidence point limit yields the thermal contribution to the ground state fluctuations:
\begin{equation*}%
 \Delta \psi_{2,\beta}  := \lim\limits_{x'\rightarrow x}  \Delta \psi_{2,\beta}(x,x')=\frac{1}{12 \beta ^2}.
\end{equation*}

\subsubsection*{Regularized energy-momentum tensor}
\label{sec: Regularized energy-momentum tensor Mink 4D Dirichlet}

Let us compute the thermal contributions to the energy density of the ground state. With Example \ref{eg: Regularized energy-momentum tensor} in mind, consider the energy-momentum tensor of the regularized thermal state determined by Equation \eqref{eq: G+ k=0 mink thermal diff 1}:

\begin{equation*}
  \braket{:T_{\mu\nu}:}_\beta := \lim\limits_{x'\rightarrow x}\left\{\mathcal{D}_{\mu\nu}(x,x')\left[ \Delta \psi_{2,\beta}(x,x') \right]\right\},
\end{equation*}
where the differential operator is the same as that of Equation \eqref{eq: differential operator stress energy tensor moretti}. The thermal contribution to the energy density is simply $ \rho := \braket{:T_{00}:}_\beta $, with
\begin{align*}%
\mathcal{D}_{00}(x,x') = \frac{1}{2 }\left\{\partial_t\partial_{t'} +  \partial_r\partial_{r'} + \frac{1}{r r'}\partial_\theta\partial_{\theta'} + \frac{1}{r r'} \csc(\theta)\csc(\theta')\partial_\phi\partial_{\phi'}\right\}.
\end{align*}
Numerically, it is straightforward to verify that%
\begin{equation*}
     \rho = \frac{\pi ^2 }{30 \beta ^4}.
\end{equation*}
\subsubsection*{Transition rate}
\label{sec: Transition rate Mink 4D Dirichlet}
For the transition rate, setting $\gamma=0$ in Section \ref{sec: The transition rate Mink 4D} yields the following standard results on Minkowski spacetime: for the vacuum state
\begin{align*}
\dot{\mathcal{F}}_{\infty}=\frac{\sqrt{\Omega^2-m_0^2}\Theta(-\Omega-m_0)}{2\pi},
\end{align*}
and for a thermal state at inverse-temperature $\beta$
\begin{align*}
\dot{\mathcal{F}}_\beta=\frac{\text{sign}(\Omega)}{e^{\Omega\beta}-1}\frac{\sqrt{\Omega^2-m_0^2}\Theta(|\Omega|-m_0)}{2\pi}.
\end{align*}

\cleardoublepage
\chapter{Applications}
\label{chap: Applications}
\minitoc
\pagestyle{myPhDpagestyle3}
\vfill


In this chapter I employ the quantum field theoretical framework outlined in Chapter \ref{chap: Quantum Field Theory on Static Spacetimes} on static spacetimes described in Chapter \ref{chap: The Spacetimes}. Let $\mathcal{M}$ be an $n$-dimensional, static, stably-causal, non-globally hyperbolic spacetime equipped with a Lorentzian metric tensor $g$, with scalar curvature $\mathbf{R}\in\mathbb{R}$, and endowed with Schwarzschild-like coordinates $(t,r,\underline{\theta})$, $\underline{\theta}\equiv(\theta,\varphi_1,...,\varphi_{n-3})$, as per Definition \ref{def: Schwarzschild-like coordinates}.
On $\mathcal{M}$, consider a free, scalar Klein-Gordon field $\Psi:\mathcal{M}\rightarrow\mathbb{R}$ with mass $m_0\geq 0$ (or effective mass $m_{\text{eff}}^2$, as per Equation \eqref{eq: eff mass}), coupled to curvature through the parameter $\xi\in\mathbb{R}$ and subjected to Equation \eqref{eq: KG}. Throughout this chapter we take $\mathcal{M}$ to be either a static BTZ black hole, Rindler-AdS$_3$, a massless flat, hyperbolic or spherical topological black hole, a flat, hyperbolic or spherical Lifshitz topological black hole, or a global monopole. In each scenario, the procedure followed for the construction of physically-sensible states is the same as the one exemplified for Minkowski spacetime in Section \ref{sec: Examples chapter 2}. Notwithstanding, different thermal phenomena are studied on different spacetimes, and they are discussed accordingly within each section. The results obtained and summarized in this chapter have been published in \cite{deSouzaCampos2020ddx,deSouzaCampos2020bnj,deSouzaCampos2021awm,deSouzaCampos2021role}.

\newpage

\section{On a static BTZ and Rindler-AdS$_3$}
\label{sec: On a static BTZ spacetime and on Rindler-AdS3}

In this section I consider Klein-Gordon fields on a static BTZ black hole and on Rindler-AdS$_3$. First, I outline the construction of physically-sensible states, following the same steps as those on Minkowski spacetime taken in Section \ref{sec: Examples chapter 2}. I remark that such analysis has been performed for the more general and intricate scenario of rotating BTZ black holes in \cite{bussola2017ground,bussola2018tunnelling}. Here, in this regard, I merely summarize their results imposing vanishing angular momentum. Then, with the ground and thermal states in hand, in Section \ref{subsec: The transition rate on a static BTZ spacetime and on Rindler-AdS3}, I employ the Unruh-DeWitt detector approach to study quantum effects: the recently disclosed and still-puzzling anti-Unruh and anti-Hawking effects, as described in Section \ref{sec: Unruh, Hawking, anti-Unruh, anti-Hawking effects}. These anti-correlation effects were discovered and studied on BTZ black holes for the ground state with Dirichlet, transparent and Neumann boundary conditions in \cite{Henderson2019uqo}.
Notwithstanding, the results exposed here have been published in \cite{deSouzaCampos2020ddx} and generalize this previous work by admitting more general boundary conditions, by also taking into account thermal states and by including Rindler-AdS$_3$ in the analysis.

Geometric features of BTZ black holes are outlined in Section \ref{ex: Chapter 1 BTZ}. For $f(r):=\frac{r^2-r_h^2}{L^2}$, the line element for the static scenario in Schwarzschild-like coordinates reads
  \begin{equation}
    \label{eq: metric on BTZ chapter 3}
      ds^2=-f(r)dt^2+f(r)^{-1}dr^2+r^2d\theta^2,
  \end{equation}
where $t\in\mathbb{R}$, $r>r_h>0$, $\theta\in[0,2\pi)$, and $M\equiv r_h^2/L^2$ is the black hole mass. The Killing vector field $\partial_t$ generates a bifurcate Killing horizon at $r=r_h$ with associated local Hawking temperature
  \begin{equation}
    \label{eq:local hawking temperature btz}
      T_H := \frac{r_h}{2\pi L\sqrt{r^2-r_h^2}}.
  \end{equation}
The unwrapping of the angular coordinate yields the universal covering of BTZ: on Rindler-AdS$_3$, we have $\theta\in\mathbb{R}$. Since both spacetimes are described by Equation \eqref{eq: metric on BTZ chapter 3}, we take both into account concomitantly, bearing in mind the different behaviour with respect to the angular coordinate.

\subsection{Ground and thermal states}
\label{subsec: Ground and thermal states BTZ spacetime and on Rindler-AdS3}

\subsubsection*{The radial equation}
\label{subsec: The radial equation on a static BTZ spacetime and on Rindler-AdS3}

By mode-decomposition, as per Equation \eqref{eq: ansatz KG in mode decomposition}, we assume
\begin{equation*}
    \Psi(t,r,\theta)=e^{-i\omega t} R(r)Y_\ell(\theta),
\end{equation*}
where $Y_\ell(\theta) = e^{i\ell\theta}$ are the eigenfunctions of the one-dimensional Laplacian $\Delta_1$. Note that on BTZ we have $\ell\in \mathbb{Z}$, while on Rindler-AdS$_3$, $\ell\in\mathbb{R}$. Taking into account the line element given in Equation \eqref{eq: metric on BTZ chapter 3}, the Klein-Gordon Equation \eqref{eq: KG in Schd-like coordinates} yields the radial Equation \eqref{eq: the radial equation Shd-like coord}. Applying the coordinate change
\begin{equation}
  \label{eq: coord transformation BTZ}
r\mapsto z= \frac{r^2-r_h^2}{r^2}\in(0,1),
\end{equation}
and using the Frobenius method, we find that the radial solutions behave asymptotically as
\begin{align}
  \label{eq: radial sol asymp btz rindlerads}
    R(z)\sim \begin{cases}
            z^{\alpha_\pm},   &\text{ as }z\rightarrow 0 \quad (r\rightarrow r_h),\\
          (1-z)^{\beta_\pm}, & \text{ as }z\rightarrow 1\quad (r\rightarrow \infty);
    \end{cases}
\end{align}
with exponents
\begin{equation*}
  \alpha_\pm := \pm \frac{i L^2 \omega }{2 r_h}  \quad \text{ and }\quad \beta_\pm := \frac{1}{2}\left(1 \pm \nu \right),
\end{equation*}
and auxiliary parameter
\begin{equation*}
\nu:=\sqrt{1+ L^2 m_{\text{eff}}^2}.
\end{equation*}
Consequently, we obtain that the ansatz
      \begin{equation*}
        R(z) =z^{\alpha_+} (1-z)^{\beta_+} \zeta(z)
      \end{equation*}
solves the radial equation if and only if $\zeta(z)$ solves the hypergeometric equation
    \begin{equation}\label{eq:hypergeom eq BTZ}
    \left\{z(1-z)\frac{d^2}{dz^2}+(c-(a+b-1)z))\frac{d}{dz}-ab \right\}\zeta(z)=0,
    \end{equation}
such that, for $\Upsilon_\ell:= i\frac{\ell}{ 2r_h}$, the parameters are given by
  		\begin{align*}
  			a &:= \alpha_+ + \beta_+ + \Upsilon_\ell	, \\
  			b &:=  \alpha_+ + \beta_+ -	\Upsilon_\ell, \\
  			c &:= 1 + 2\alpha_+. 
  		\end{align*}

\subsubsection*{The radial solutions}
\label{subsec: The radial solutions on a static BTZ spacetime and on Rindler-AdS3}

The integrality conditions of the hypergeometric parameters $a$, $b$ and $c$ given in the previous section, as well as the square-integrability analyses regarding the radial solutions, and the selection of suitable bases of solutions at the endpoints are carefully described in \cite{bussola2017ground,bussola2018tunnelling}. In addition, in Section \ref{subsec: The radial solutions on massless hyperbolic black holes} I go into detail of the analysis for the $n$-dimensional generalizations of BTZ spacetimes that involves the same hypergeometric functions and also holds for $n=3$. For those two reasons, in this section I simply state which are the radial solutions to be taken into account.

First, let $\Gamma(z)$ be the Euler Gamma function, and let
\begin{equation*}
F(a,b,c;z):= \sum_{s=0}^\infty \frac{(a)_s(b)_s}{\Gamma(c+s)s!}z^s
\end{equation*}
be the hypergeometric function with a branch cut in the sector $|\text{ph}(1-z)|\leq\pi$ of the complex plane. Denote the \textit{standard} hypergeometric solutions by
\begin{subequations}
  \label{eq:solhypergeo BTZ RinderAdS}
  \begin{align}
    &\zeta_{1(0)}(z)=F(a,b;c;z),\label{eq:sol10hypergeo BTZ RinderAdS}\\
    &\zeta_{2(0)}(z)=z^{1-c}F(a-c+1,b-c+1;2-c;z),\label{eq:sol20hypergeo BTZ RinderAdS}\\
    &\zeta_{1(1)}(z)=F(a,b;a+b+1-c;1-z),\label{eq:sol11hypergeo BTZ RinderAdS}\\
    &\zeta_{2(1)}(z)=(1-z)^{c-a-b}F(c-a,c-b;c-a-b+1;1-z).\label{eq:sol21hypergeo BTZ RinderAdS}
  \end{align}
\end{subequations}

We find that the endpoint $z=0$ is limit point and that the most general square-integrable solution is given by
    \begin{equation*}
      R_{0}(z) := z^{\alpha_+} (1-z)^{\beta_+} \{ \zeta_{1(0)}(z)\Theta(-\Imag(\omega)) +  \zeta_{2(0)}(z)\Theta(\Imag(\omega)) \}.
    \end{equation*}
The endpoint $z=1$, however, is limit circle if $\nu\in(0,1)$, and limit point otherwise. In the former case, the solution $R_{\gamma}(z)$ that satisfies the generalized Robin boundary condition at $z=1$ parametrized by $\gamma\in[0,\pi)$ is given by
  		\begin{equation*}
      			  R_{\gamma}(z)=z^{\alpha_+} (1-z)^{\beta_+} \big\{\cos(\gamma) \zeta_{1(1)}(z)  + \sin(\gamma) \zeta_{2(1)}(z)\big\}.
        \end{equation*}

\subsubsection*{Spectral resolution of the radial Green function}
\label{subsec: Spectral resolution of the radial Green function on a static BTZ spacetime and on Rindler-AdS3}

The spectral analysis of the radial Green function described in Section \ref{sec: Physically-sensible states in Schwarzschild-like coordinates}, performed for the BTZ black hole in \cite{bussola2017ground,bussola2018tunnelling}, selects the boundary conditions for which the radial Green function has no poles and yields the radial part of the two-point functions of interest, see Remark \ref{rem: bound states}. Specifically, it gives a threshold,
\begin{equation}
\gamma_c^\ell = \arctan\left(\frac{\Gamma(\nu)}{\Gamma(-\nu)}\frac{\Gamma(1-\beta_+ +\Upsilon_\ell)}{\Gamma(\beta_+ +\Upsilon_\ell)}\frac{\Gamma(1-\beta_+ -\Upsilon_\ell)}{\Gamma(\beta_+ -\Upsilon_\ell)}\right),
\end{equation}
for the existence of bound state modes in the two-point function. That is, for $\gamma\in[0,\gamma_c^\ell)$, the radial part (seeing $z$ as a function of $r$)
\begin{equation}
  \label{eq: phitilde2 BTZ Rindler-AdS}
  \widetilde{\psi}_{2}(r,r') = \frac{1}{2 i \pi \nu}\frac{(\overline{A}B-A\overline{B})}{ |A\sin(\gamma) -B\cos(\gamma)|^2} R_\gamma(r)R_{\gamma}(r'),
\end{equation}
with coefficients
\begin{subequations}
  \label{eq:CoefAeB}
    \begin{align}
      A=\frac{\Gamma(c)\Gamma(c-a-b)}{\Gamma(c-a)\Gamma(c-b)},\quad\text{ and }\quad
      B=\frac{\Gamma(c)\Gamma(a+b-c)}{\Gamma(a)\Gamma(b)}, \label{eq:Coefs A B BTZ Rindler-AdS}
    \end{align}
\end{subequations}
is well-defined. To impose the same boundary condition for all $\ell$-modes, then one should take $\gamma\in [0,\gamma_c]$, where $\gamma_c:=\gamma_c^\infty=\frac{\pi}{2}.$

\subsubsection*{Two-point functions}
\label{subsec: Two-point functions on a static BTZ spacetime and on Rindler-AdS3}

With Equation \eqref{eq: phitilde2 BTZ Rindler-AdS} and Theorems \ref{thm: 2 point ground state schd coord} and \ref{thm: 2 point KMS state schd coord}, we obtain the two-point functions for physically-sensible states. The results reported in \cite{bussola2017ground,bussola2018tunnelling} for a BTZ spacetime are straightforwardly generalized to include Rindler-AdS$_3$ spacetime simply by changing the behaviour of the angular component of the two-point function accordingly. The two-point function for a ground state reads
    \begin{equation}
      \label{eq:2 point ground state BTZ RindlerAdS3}
      \psi_{2,\infty}(x,x')=\lim_{\varepsilon\rightarrow 0^+}\int_{\sigma(\Delta_1)}d\eta_\ell\int_{\mathbb{R}} d\omega\,\Theta(\omega) e^{-i\omega (t-t'-i\varepsilon)}   \widetilde{\psi}_{2}(r,r')  e^{i\ell(\theta-\theta')},
    \end{equation}
while for a KMS state at inverse-temperature $\beta$ it is given by
    \begin{align}
      \label{eq:2 point KMS state BTZ RindlerAdS3}
      \psi_{2,\beta}(x,x')=\lim_{\varepsilon\rightarrow 0^+} \int_{\sigma(\Delta_1)} d\eta_\ell \int_{\mathbb{R}} d\omega \,\Theta(\omega) \left[ \frac{e^{-i\omega (t-t'-i\varepsilon)}}{1 - e^{-\beta\omega}}+ \frac{e^{+i \omega (t-t'+i\varepsilon)}}{{e^{\beta\omega}-1}} \right]   \widetilde{\psi}_{2}(r,r')  e^{i\ell(\theta-\theta')}
    .%
    \end{align}
On BTZ spacetime, the integral with respect to the measure $d\eta_\ell$ associated with the spectrum $\sigma(\Delta_1)$ of the Laplacian reduces to a sum over $\ell\in\mathbb{Z}$, while on Rindler-AdS$_3$ it corresponds to an integral over $\ell\in\mathbb{R}$ with respect to the standard Lebesgue measure. Moreover, note that the Hartle-Hawking state is given by Equation \eqref{eq:2 point KMS state BTZ RindlerAdS3} with $\beta=2\pi/r_h$.

\subsection{The transition rate of an Unruh-DeWitt detector}
\label{subsec: The transition rate on a static BTZ spacetime and on Rindler-AdS3}

Let us now probe the quantum states given in the previous section. Consider an Unruh-DeWitt detector modelled by a two-level system with energy gap $\Omega$ as outlined in Section \ref{sec: Probing quantum states with particle detectors}. For simplicity, assume the detector is following a static trajectory, hence at fixed spatial position $(r,\theta)$, of constant (supercritical) proper acceleration
\begin{equation*}
  a = \frac{r}{L\sqrt{r^2-r_h^2}}>\frac{1}{L},
\end{equation*}
given by a complete integral line of the Killing field $\partial_t$. It is worth mentioning that the temperature observed via the GEMS method \cite{deser1997accelerated,deser1999mapping} coincides with the local Hawking temperature:
\begin{equation*}
  T_{GEMS}=\frac{1}{2\pi}\sqrt{a^2 - \frac{1}{L^2}}\equiv T_H.
\end{equation*}

Note that the temperature is only well-defined for supercritically accelerated observers, that is $T_{GEMS}>0$ if and only $a>\frac{1}{L}$. In this scenario, the transition rate for when the detector is coupled to physically-sensible states is given by Theorem \ref{thm: Transition rate for the physically-sensible states}. Substituting Equations \eqref{eq:2 point ground state BTZ RindlerAdS3} and \eqref{eq:2 point KMS state BTZ RindlerAdS3} in Equations \eqref{eq: from thm transition rate ground Schd coord} and \eqref{eq: from thm transition rate KMS Schd coord}, respectively, we obtain the following expressions. When coupled to the ground state the transition rate is
    		\begin{align}
    			\label{eq:transition rate ground state BTZ}
          \dot{\mathcal{F}}_\infty =  \Theta( -\Omega )   \int_{\sigma(\Delta_1)} d\eta_\ell \,  \frac{2}{ \nu}\frac{\Imag(\overline{A}B)}{ |A\sin(\gamma) -B\cos(\gamma)|^2} R_\gamma(r)^2 \Big|_{\omega = -\sqrt{f(r)} \,\Omega},
    		\end{align}
while for a KMS state at inverse-temperature $\beta$, it reads
    		\begin{align}
    			\label{eq:transition rate KMS state BTZ}
          \dot{\mathcal{F}}_{\beta} = \frac{\text{ sign}(\Omega)}{e^{\beta\sqrt{f(r)}\Omega}-1} \int_{\sigma(\Delta_1)} d\eta_\ell \,  \frac{2}{\nu}\frac{\Imag(\overline{A}B)}{ |A\sin(\gamma) -B\cos(\gamma)|^2} R_\gamma(r)^2 \big|_{\omega =\sqrt{f(r)} |\Omega| }.
    		\end{align}

For a massless conformally coupled Klein-Gordon field, the behaviour of the $\ell=0$ contribution to the transition rates given by Equations \eqref{eq:transition rate ground state BTZ} and \eqref{eq:transition rate KMS state BTZ} and with respect to the local Hawking temperature $T_H$ is illustrated in Figure \ref{fig:groundandkms1}. Figure \ref{fig:unddfdfderst1} illustrates their behavior with respect to the truncation order  $\ell_{max}$. Note that this variable temperature is not the temperature of the Boulware-like ground state, but rather it is the one that would be detected for the KMS state at the corresponding locus. In addition, for aesthetics, they are normalized with respect to the maximum value of the quantity plotted.

\begin{figure}[H]
\centering
    \includegraphics[width=.45\textwidth]{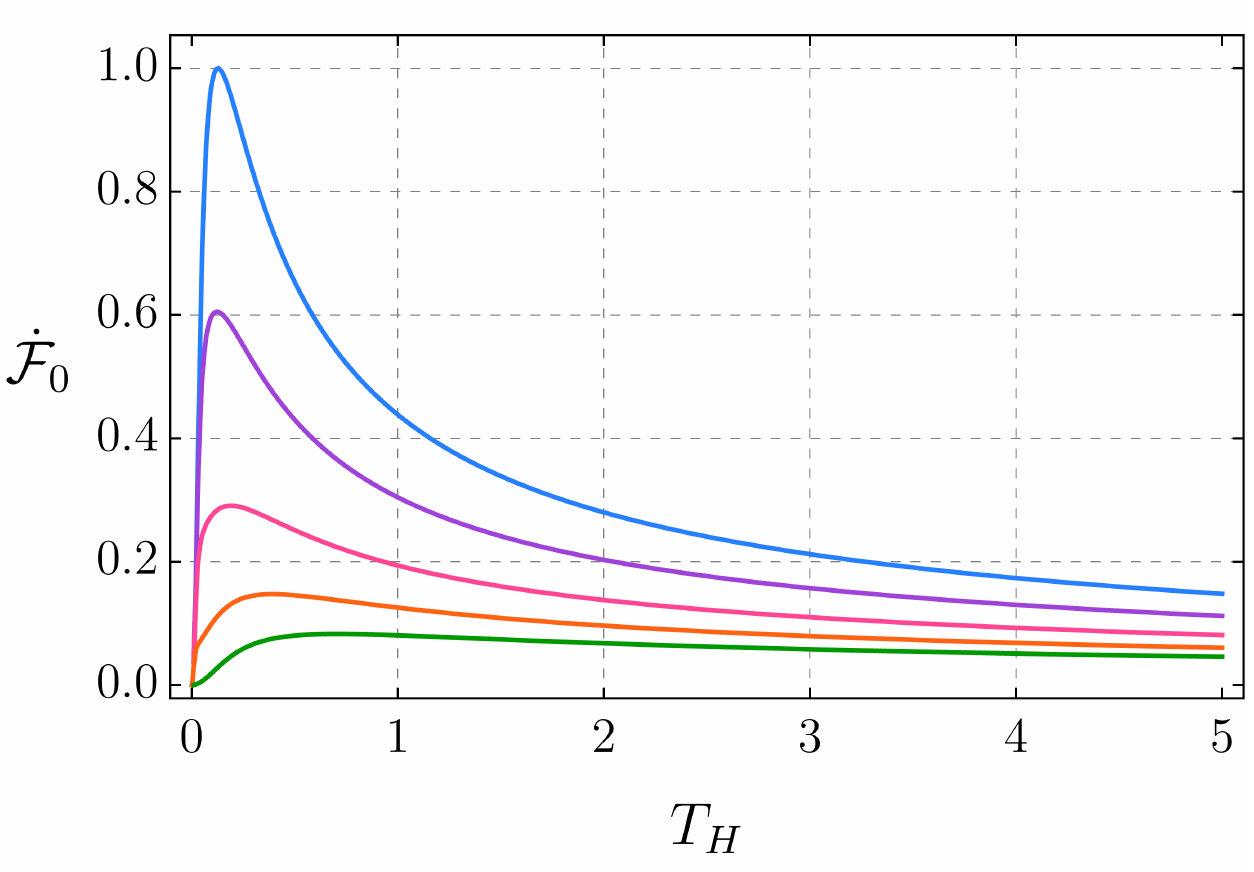}\hspace{.5cm}%
    \includegraphics[width=.45\textwidth]{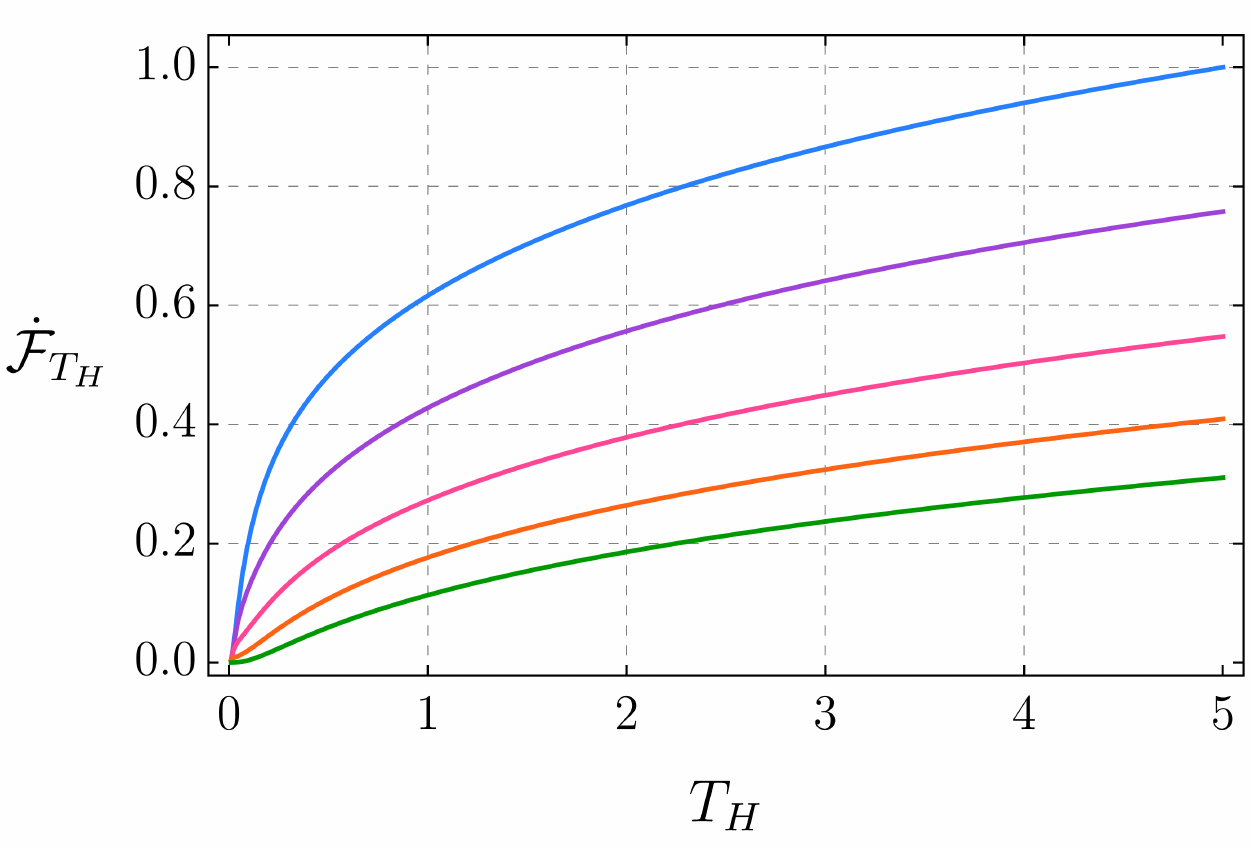}
\caption{ \label{fig:groundandkms1}$\ell=0$ contribution to the transition rate as a function of the Hawking temperature for $r_h=1$, $\Omega=-0.1$ and different boundary conditions; from top to bottom, respectively, $\gamma=(0.50,0.47,0.40,0.25,0)\pi$. On the left, for the ground state; on the right, for the Hartle-Hawking state.}
\end{figure}

Recall the anti-correlation effects defined in Section \ref{sec: Unruh, Hawking, anti-Unruh, anti-Hawking effects}. We see that the $\ell=0$ term of the transition rate for the ground state does decrease as we approach the horizon, which corresponds to $T_H\rightarrow \infty$, for all boundary conditions. That is, the anti-Hawking effect is manifest. However, performing the sum over $\ell$ may drastically change said behavior. The numerical analyses performed, which are available at \cite{git_unruh_deWitt_BTZ} in a Mathematica notebook and in a Jupyter notebook, confirmed that to be the case, depending on the value of $r_h$. Specifically, we verify that for $r_h=1$ the anti-Hawking effect seems to be cancelled out by the $\ell>0$ contributions for Dirichlet boundary condition, while it remains clearly visible for Neumann boundary condition. At the same time, for the thermal state, we find that the transition rate is a monotonically increasing function of $T_H$ in all scenarios, i.e. for all boundary conditions, for small or large values of $r_h$ and also after summing in $\ell$. The counterpart of this analysis on Rindler-AdS$_3$ simply requires that we integrate in $\ell$ instead of performing a discrete sum. In this case, we obtain analogous results for the occurrence of the anti-Unruh effect.

          				\begin{figure}[H]
          				\centering
          						\includegraphics[width=0.45\textwidth]{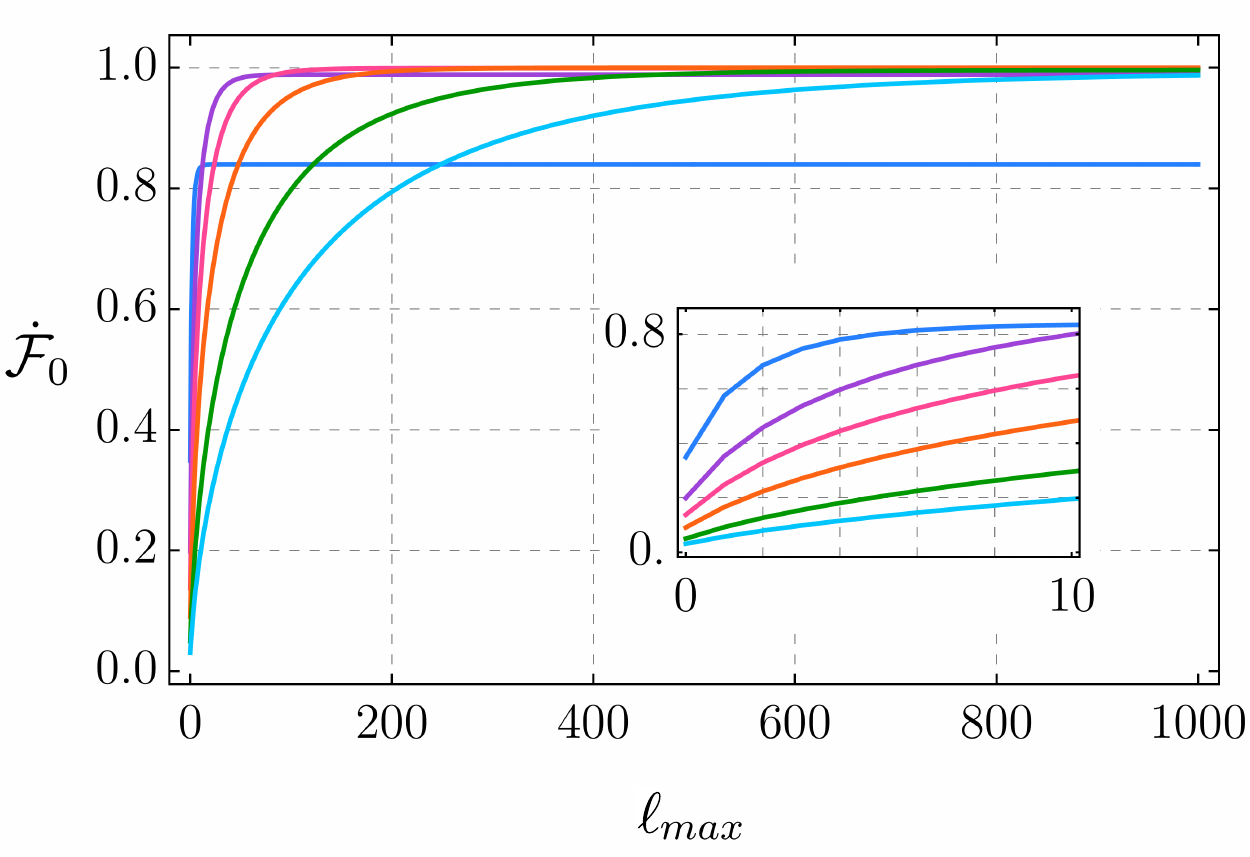}\hspace{.5cm}%
                      \includegraphics[width=.45\textwidth]{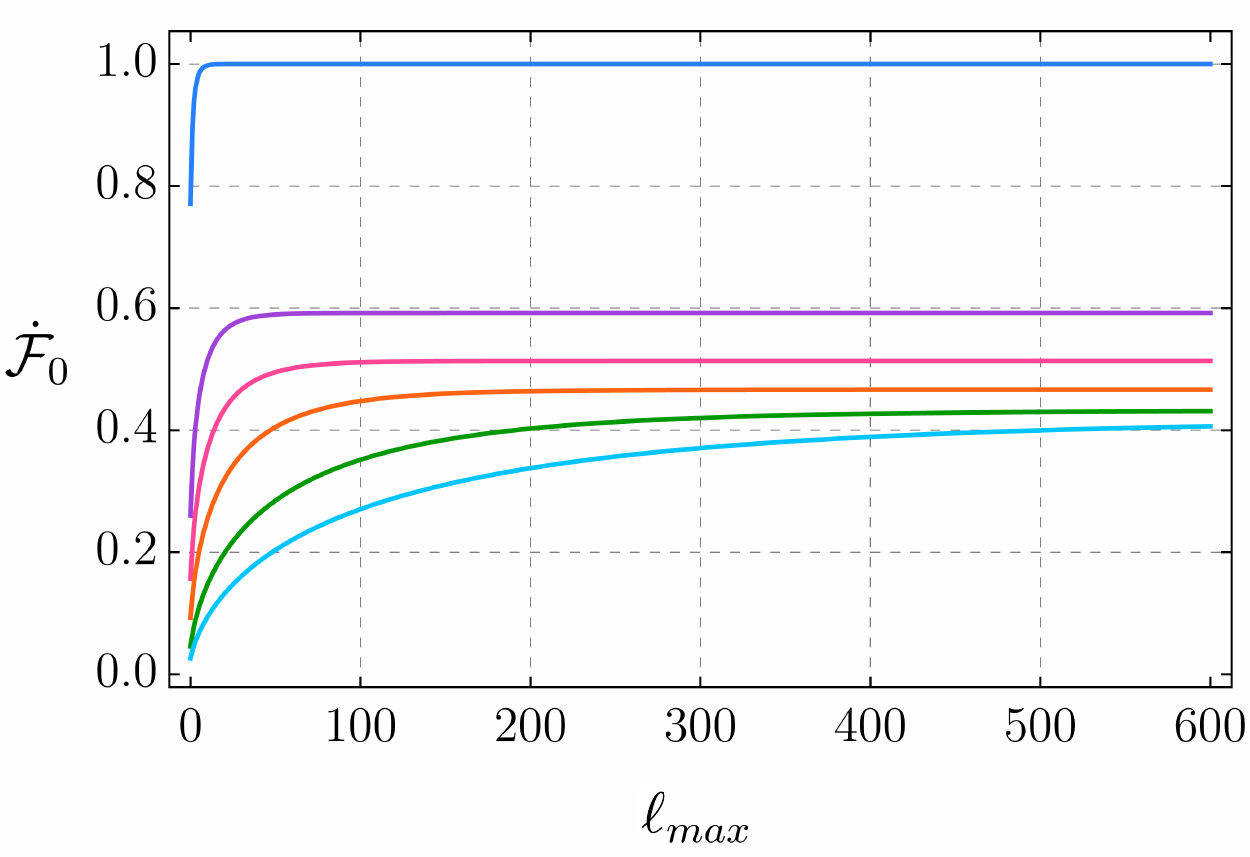}
          				\caption{ Transition rate summed over all natural numbers up to $\ell_{max}$ at several temperatures for $r_h=1$, $\Omega=-0.1$ and Dirichlet boundary condition, from top to bottom in the zoom plot $T_H=(1,5,10,20,50,100)$. On the left, for the ground state: it does look like the transition rate converges to a monotonically increasing function of temperature. On the right, for the KMS state: the curves are manifestly separated, and curves of smaller temperatures remain higher than curves of higher temperatures.}
          				\label{fig:unddfdfderst1}
          				\end{figure}

%


\section{On massless hyperbolic black holes}
\label{sec: On massless hyperbolic black holes}

In this section I apply the quantum field theoretical framework of Chapter \ref{chap: Quantum Field Theory on Static Spacetimes} to the study of the anti-correlation effects on massless hyperbolic black holes. I construct both ground and thermal states for arbitrary effective masses, general Robin boundary conditions at AdS conformal infinity and for spacetimes of any dimension $n\geq3$, generalizing previous works with specific parameters \cite{MannAndAbdalla,morley2021vacuum}. First, in Section \ref{subsec: Ground and thermal states on massless hyperbolic black holes}, I construct these physically-sensible states. Then, in Section \ref{subsec: The transition rate on massless hyperbolic black holes}, I give explicit expressions for the transition rate of a static Unruh-DeWitt detector.  For the case of massless conformally coupled scalar fields, I summarize the numerical analysis performed to study the anti-Hawking effect. The results I present generalize those in the three dimensional case \cite{Henderson2019uqo,deSouzaCampos2020ddx} described in the previous section and that were published in \cite{deSouzaCampos2020bnj}.

Topological black holes are described in Section \ref{ex: chapter 1 toplogical black holes}. The case of $M=0$ and negative, constant sectional curvature corresponds to massless hyperbolic black holes. Let $L$ be the AdS radius, $n\geq 3$ the number of spacetime dimensions, and $d\Xi_{n-2}^2$ be the unit metric on the $(n-2)$-dimensional hyperbolic space. In Schwarzschild-like coordinates, we take $t\in \mathbb{R}$, $r>r_h>0$, $\theta \in\mathbb{R}$, $\varphi_1 \in [0,2\pi), \text{ for }n\geq 4$ and $\varphi_2,...,\varphi_{n-3}\in [0,\pi), \text{ for } n>4$. For $ f(r) := \frac{r^2- r_h^2}{L^2}$, their line element reads
\begin{equation*}
		ds^2=-f(r)dt^2+f(r)^{-1}dr^2+r^2d\Xi_{n-2}^2.
	\end{equation*}
The Killing horizon at $r=r_h$ yields the Hawking temperature as per Equation \eqref{eq:local hawking temperature btz}.

\subsection{Ground and thermal states}
\label{subsec: Ground and thermal states on massless hyperbolic black holes}

\subsubsection*{The radial equation}
\label{subsec: The radial equation on massless hyperbolic black holes}
Given the symmetries of the underlying spacetime, let $Y_\ell(\theta)$ be the eigenfunctions of the Laplacian on the $(n-2)$-hyperbolic space with eigenvalues $\lambda_\ell=-\left( \ell^2 + \left(\frac{n-3}{2}\right)^2 \right)$ and let us assume that a solution of the Klein-Gordon equation can be written as
\begin{equation*}
  \Psi(t,r,\underline{\theta})=e^{-i\omega t}R(r)Y_\ell(\underline{\theta}).
\end{equation*}

Applying the coordinate transformation given by Equation \eqref{eq: coord transformation BTZ}, the radial equation, as per Equation \eqref{eq: the radial equation Shd-like coord} here written in the new coordinate $z$, reads
\begin{equation}
  \label{eq: radial eq z coord massless top bh}
\Big\{4z(1-z)\frac{\partial^2  }{\partial z^2}  + \left( 4+(2n-10)z \right) \frac{\partial  }{\partial z}   +  \frac{L^2 }{r_h^2}\lambda_\ell + \frac{L^4\omega^2}{r_h^2 z} -\frac{m_{\text{eff}}^2 }{1-z}\Big\} R(z) =0.
\end{equation}
Analogously to the previous section, the radial solutions behave asymptotically as per Equation \eqref{eq: radial sol asymp btz rindlerads}, though with exponents given by
		\begin{align}
			\label{eq:auxiliary parameters alpha and beta plus and minus massless hyperbolic bh}
				&\alpha_\pm= \pm \frac{iL^2\omega}{2r_h} \quad \text{ and }\quad \beta_\pm = \frac{1}{2} \left( \frac{(n-1)}{2} \pm \nu\right).
		\end{align}
Thus, considering the ansatz
\begin{equation*}
  R(z) =z^{\alpha_+} (1-z)^{\beta_+} \zeta(z),
\end{equation*}
we find that $\zeta(z)$ satisfies the Gauss hypergeometric equation \eqref{eq:hypergeom eq BTZ}, but with parameters generalized to $n$ dimensions
\begin{subequations}
  \label{eq:hypergeom param BTZ 2}
  \begin{align}
    a &:= \alpha_+ + \beta_+ + \Upsilon_\ell	-\frac{n-3}{4}, \label{eq:a n}\\
    b &:=  \alpha_+ + \beta_+ -	\Upsilon_\ell -\frac{n-3}{4}, \label{eq:b n}\\
    c &:= 1 + 2\alpha_+. \label{eq:c n}
  \end{align}
with auxiliary parameters
\begin{align}
	\label{eq:auxiliary parameter nu upsilon massless top bh}
  &\nu :=  \sqrt{\frac{(n-1)^2}{4} + L^2 m_{\text{eff}}^2 }>0 \quad \text{ and }\quad \Upsilon_\ell :=  \sqrt{  \left(\frac{n-3}{4}\right)^2 + \frac{L^2}{4 r_h^2} \lambda_\ell }.
\end{align}
\end{subequations}
The restriction $\nu>0$ corresponds to the Brei\-ten\-lohn\-er-Freed\-man bound \cite{Breitenlohner1982jf} on the effective mass given by $L^2m_{\text{eff}}^2 >-\frac{(n-1)^2}{4}$. Moreover, it is interesting to note that in the special case $L=r_h$, the auxiliary parameter $\Upsilon_\ell$ does not depend on the spacetime dimension $n$ and it reads simply $i\ell/2$. In fact, in this case, none of the parameters $a$, $b$ and $c$ depend on $n$.

\subsubsection*{The radial solutions}
\label{subsec: The radial solutions on massless hyperbolic black holes}

Casting the radial equation \eqref{eq: radial eq z coord massless top bh} as a Sturm-Liouville problem, as defined in Equation \eqref{eq: def P, Q S of sturm liouville op} with eigenvalue $\omega^2$ and Sturm-Liouville operator
\begin{align*}
  L_{\omega^2}:&=-\frac{1}{\frac{L^4(1-z)^{\frac{1-n}{2}}}{4r_h^2 z}}\left(\frac{d}{dz}\left( z(1-z)^{\frac{3-n}{2}} \frac{d}{dz} \right)  -\frac{(1-z)^{\frac{1-n}{2}}}{4} \left(\frac{L^2\lambda_\ell}{r_h^2} + \frac{m_{\text{eff}}^2 }{1-z}\right)  \right),
\end{align*}
we obtain that the suitable space of solutions is that of square-integrable functions with respect to the measure $\frac{(1-z)^{\frac{1-n}{2}}}{z}$. In a neighbourhood of each singular endpoint $z_0\in\{0,1\}$, there are different convenient choices of bases that, in addition, depend on the integrality of the parameters $a$, $b$ and $c$ of the hypergeometric equation. Let $\{\zeta_{1(z_0)},\zeta_{2(z_0)}\}$ denote a convenient basis for the solutions of the hypergeometric equation close to $z_0$. Accordingly, let us denote $\{R_{1(z_0)},R_{2(z_0)}\}$ a basis for the radial solutions close to $z_0$, which is written in terms of the hypergeometric functions as $R_{i(z_0)}(z)=z^{\alpha_+}(1-z)^{\beta_+} \zeta_{i(z_0)}(z)$, for $i\in\{1,2\}$.

If neither $c$ nor $c-a-b$ are integers, convenient bases are given by the standard hypergeometric functions in Equation \eqref{eq:solhypergeo BTZ RinderAdS}. If either $a$, $b$, $c-a$, or $c-b$ is an integer, then we say this is a degenerate case of the hypergeometric differential equation, and one of the hypergeometric series of the standard solutions includes only a finite number of terms, as explained in \cite[Pg69]{higher}. Let us start analysing the square-integrability of the standard solutions at $z=0$. Asymptotically, we have
\begin{align*}
  &\zeta_{1(0)}(z)=F(a,b;c;z) \overset{z\rightarrow0}{\sim} 1, \\
  &\zeta_{2(0)}(z)=z^{1-c}F(a-c+1,b-c+1;2-c;z) \overset{z\rightarrow0}{\sim}z^{1-c},\\
  &s(z)=\frac{(1-z)^{\frac{1-n}{2}}}{z}\overset{z\rightarrow0}{\sim}\frac{1}{z}.
\end{align*}
At zero, $R_{i(0)} (z) \overset{z\rightarrow0}{\sim} z^{\alpha_+} \zeta_{i(0)}(z)$. Thus
\begin{align*}
  \int |R_{1(0)}(z)|^2s(z)dz \sim
                  \begin{cases}
                   \ln(z) ,& \text{ if }\text{Re}(\alpha_+)=0,\\
                   z^{2\text{Re}(\alpha_+)},&\text{ if }\text{Re}(\alpha_+)\neq0.
                  \end{cases}
\end{align*}
Assuming $\omega\in\mathbb{C}$, we have that $\text{Re}(\alpha_+) = -\frac{L^2\text{Im}(\omega)}{2r_h} $. Therefore, $R_{1(0)}(z)$ is square-integrable in a neighbourhood of $z=0$ if and only if $\text{Im}(\omega)<0$. For the other solution, an analogous computation yields that $R_{2(0)}(z)$ is square-integrable in a neighbourhood of $z=0$ if and only if $\text{Im}(\omega)>0$. That is, we find that if $c\notin\mathbb{Z}$, then $z=0$ is a limit point according to Weyl's endpoint classification (see Definition \ref{def: Weyl's endpoint classification}) and the only square-integrable solution is
\begin{align}
\label{eq:solution at 0 Rz massless top bh}
R_{0}(z) =  \Theta(-\Imag(\omega))R_{1(0)}(z) + \Theta(\Imag(\omega)) R_{2(0)}(z).
\end{align}

Let $N\in\{1,2,,3,...\}$, and suppose $a,b\neq 0,1,...,N-1$. On one hand, if $c=1+N$, then we must replace the standard $\zeta_{2(0)}(z)$ by \cite[Pg.564]{handbook}
    \begin{align}
      \label{eq: pwokepke0r90i3}
      \zeta_{2(0)}(z)=&F(a,b;1+N;z)\ln z +\sum\limits_{n=1}^\infty\frac{(a)_n(b)_n}{(1+N)_nn!}z^n\Big[ \psi(a+n)-\psi(a) +\psi(b+n)-\psi(b) \nonumber\\
      &-\psi(N+1+n) +\psi(N+1)-\psi(n+1)+\psi(1)\Big] -\sum\limits_{n=1}^m\frac{(n-1)!(-N)_n}{(1-a)_n(1-b)_n}z^{-n}.
    \end{align}
If $c=1-N$, then we must replace the standard $\zeta_{1(0)}(z)$ by
    \begin{align*}
      &\zeta_{1(0)}(z)=z^NF(a+N,b+N;1+N;z)\ln z +\sum\limits_{n=1}^\infty\frac{(a+N)_n(b+N)_n}{(1+N)_nn!}z^n\Big[ \psi(a+N+n)\nonumber\\
      & \quad \qquad\qquad -\psi(a+N) +\psi(b+N+n)-\psi(b+N)-\psi(N+1+n) +\psi(N+1)-\psi(n+1) \nonumber\\
      & \quad \qquad \quad \qquad \quad \qquad+\psi(1)\Big]-\sum\limits_{n=1}^N\frac{(n-1)!(-N)_n}{(1-a-N)_n(1-b-N)_n}z^{-n}.
    \end{align*}
On the other hand, using Equations \eqref{eq:auxiliary parameters alpha and beta plus and minus massless hyperbolic bh} and \eqref{eq:c n} we obtain
      \begin{align*}
          & c=1\pm N \iff \omega= \mp i \frac{r_h N}{L^2} \in i\, \mathbb{N}^\mp_*.
       \end{align*}
Therefore, for the values of $\omega$ for which we have $c=1+N$, which have a negative imaginary part, we take the standard $\zeta_{1(0)}(z)$ but replace the standard $\zeta_{2(0)}(z)$ with the solution given in Equation \eqref{eq: pwokepke0r90i3}. Since neither the standard $R_{2(0)}(z)$ nor the replaced one are square-integrable for $\Imag(\omega)<0$, the replacement does not affect the classification of the endpoint $z=0$ as limit point. An analogous argument holds when $c=1-N$. Furthermore, if, when $c$ is an integer, also $a$ or $b$ is an integer, then we have another degenerate case \cite[Pgs71-73]{higher}. In any case, the secondary solution with the logarithmic term is never square-integrable.

Now let us check the square-integrability of the standard solutions at $z=1$. Asymptotically, we have
  \begin{align*}
    &\zeta_{1(1)}=F(a,b;a+b+1-c;1-z)  \overset{z\rightarrow1}{\sim}  1,\\
    &\zeta_{2(1)}=(1-z)^{c-a-b}F(c-b,c-a;c-a-b+1;1-z) \overset{z\rightarrow1}{\sim}  (1-z)^{c-a-b},\\
    &s(z)=\frac{(1-z)^{\frac{1-n}{2}}}{z}\overset{z\rightarrow1}{\sim}(1-z)^{\frac{1-n}{2}}.
  \end{align*}
Hence, assuming $a+b-c=\nu\notin\mathbb{Z}$, we have
  \begin{align*}
    \int |R_{j(1)}(z)|^2s(z)dz \sim \begin{cases} (1-z)^{1+\nu}, &\text{ if }j=1, \\ (1-z)^{1-\nu}, &\text{ if }j=2. \end{cases}
  \end{align*}
For $\nu\in(0,1)$, both solutions are square-integrable and $z=1$ is limit circle. For $\nu\geq1$, which correspond to $m_{\text{eff}}^2 \geq0$, only $R_{1(1)}(z)$ is square-integrable and $z=1$ is limit point.

The hypergeometric solutions \eqref{eq:sol11hypergeo BTZ RinderAdS} and \eqref{eq:sol21hypergeo BTZ RinderAdS} are given as the analytic continuation of \eqref{eq:sol10hypergeo BTZ RinderAdS}. They are related by the fundamental formula
    \begin{align}
    \label{eq:fundamentalfkop}
    F(a,b;c;z) =  A_0 F(a,b&,a+b+1-c;1-z)\nonumber\\&+B_0 (1-z)^{c-a-b}F(c-a,c-b;c-a-b+1;1-z),
  \end{align}
where $A_0$ and $B_0$ are constants depending on the parameters $a$, $b$ and $c$. However, these coefficients $A_0$ and $B_0$ have poles at $c=a+b\pm N$, for integer $N$. That is, Equation \eqref{eq:fundamentalfkop} is not well-defined when $c-a-b$ is an integer. Thus, we need different expressions to study the singularity $z=1$, and these can be found in \cite[Pg74]{higher}. For $c=a+b-N$, $N\in\mathbb{N}^*$, the standard solution $F(a,b,c;z)$ can be written as
\begin{align}
  \label{eq:wijefoiwjfoiej8888}
  &F(a,b;a+b-N;z)\nonumber\\ &\qquad =A_1 (1-z)^{-N}\sum\limits_{n=0}^{N-1}\frac{(a-N)_n(b-N)_n}{n!(1-N)_n}(1-z)^n + \nonumber\\
  &\qquad \qquad\qquad +B_1 \sum\limits_{n=0}^{\infty}\frac{(a)_n(b)_n}{n!(n+N)!}(1-z)^n[\ln(1-z)+k_n],
\end{align}
\noindent where $k_{n}=-\psi(n+1)-\psi(n+N+1)+\psi(a+n)+\psi(b+n)$, while $\psi(z)$ is the digamma function. Using Equation \eqref{eq:wijefoiwjfoiej8888}, we can identify a basis that is numerically satisfactory close to the singularity at $z=1$, dubbed $\{\zeta_{1(1)}(z),\zeta_{2(1)}(z) \}$, by taking
    \begin{align}
      &\zeta_{1(1)}(z) = (1-z)^{-N}\sum\limits_{n=0}^{N-1}\frac{(a-N)_n(b-N)_n}{n!(1-N)_n}(1-z)^n ,\label{eq:zeta11degenerate}\\
      &\zeta_{2(1)}(z) = \sum\limits_{n=0}^{\infty}\frac{(a)_n(b)_n}{n!(n+N)!}(1-z)^n[\ln(1-z)+k_n]\label{eq:zeta21degenerate}.
    \end{align}
Equation \eqref{eq:zeta11degenerate} is nothing but the principal solution at $z=1$, $F(a,b,a+b+1-c;z)$, as given in Equation \eqref{eq:solhypergeo BTZ RinderAdS}. However, $\zeta_{2(1)}(z)$ should be replaced by Equation \eqref{eq:zeta21degenerate}.

In the case $c-a-b=-\nu$ is an integer $N$, which is tantamount to taking effective masses such that $L^2 m_{\text{eff}}^2  = N^2 - \frac{(n-1)^2}{4}$ for $N\in\mathbb{Z}$, the secondary solution is given by Equation \eqref{eq:zeta21degenerate} and it is not square-integrable due to the logarithmic term. It follows that for each $N\in\mathbb{Z}$ there is a value of $m_{\text{eff}}$ for which $z=1$ is limit point. However, take into account the Breitenlohner-Freedman bound and the fact that $n\geq 3$. Also, recall that for non-integer $\nu>1$ the endpoint $z=1$ is also limit point. Then, we have that for any $\nu>1$, integer or not, only the principal solution is square-integrable, $z=1$ is limit point, and only Dirichlet boundary condition can be imposed.

For convenience, from now on I focus on the case of $\nu\in(0,1)$, which corresponds to $L^2 m_{\text{eff}}^2\in \left( -\frac{(n-1)^2}{4} ,0 \right)$ and is more interesting to us since it admits a large class of boundary conditions. In this case, the most general square-integrable solution at $z=1$ satisfies a Robin boundary condition parametrized by $\gamma\in [0,\pi)$ and is given by
\begin{equation}
  \label{eq: Rgamma massless top bh}
  R_\gamma(z)=\cos(\gamma) R_{1(1)}(z)+\sin(\gamma) R_{2(1)}(z).
\end{equation}

\subsubsection*{The radial Green function}
\label{subsec: The radial Green function on massless hyperbolic black holes}

With the radial solutions given by Equations \eqref{eq:solution at 0 Rz massless top bh} and \eqref{eq: Rgamma massless top bh} for $\nu\in(0,1)$, the Green function for the radial equation reads
  \begin{equation*}
    \mathcal{G}(z,z')=\frac{1}{\mathcal{N}_\omega}\left\{ \Theta(z'-z)R_0(z)R_\gamma(z')+ \Theta(z-z')R_0(z')R_\gamma(z)\right\}
  \end{equation*}
with normalization constant, defined in Equation \eqref{eq: definition normalization of radial green function}, given by
  \begin{equation}
    \label{eq:pjofej49j0f9j}
    \mathcal{N}_\omega=-z(1-z)^{\frac{3-n}{2}}\mathcal{W}_z[R_0(z),R_\gamma(z)].
  \end{equation}
  To obtain an explicit expression for the normalization it is useful to take into account the well-known connection formulas of the hypergeometric functions. As given by expressions (15.10.17) and (15.10.18) in \cite{NIST_15} (or implemented in a Mathematica notebook in my GitHub page \cite{git_Hypergeometric2F1}), the functions $\zeta_{i(1)}(z)$, $i\in\{1,2\}$, can be written with respect to the basis $\{\zeta_{1(0)}(z),\zeta_{1(0)}(z)\}$:
  \begin{subequations}\label{eq: oiejf0u984jf}
  \begin{align}
    \zeta_{1(1)}(z) = A \zeta_{1(0)}(z) + B \zeta_{2(0)}(z),\\
    \zeta_{2(1)}(z) = C \zeta_{1(0)}(z) + D \zeta_{2(0)}(z),
  \end{align}
  \end{subequations}
  with coefficients $A,B,C$ and $D$ given by
\begin{subequations}
  \label{eq: A B C D massless top bh}
    \begin{align}
      A&=\frac{\Gamma(c)\Gamma(c-a-b)}{\Gamma(c-a)\Gamma(c-b)} \xmapsto{\omega\mapsto\omega^*} C=\frac{\Gamma(2-c)\Gamma(c-a-b)}{\Gamma(1-a)\Gamma(1-b)},\\
      B&=\frac{\Gamma(c)\Gamma(a+b-c)}{\Gamma(a)\Gamma(b)}  \xmapsto{\omega\mapsto\omega^*} D=\frac{\Gamma(2-c)\Gamma(a+b-c)}{\Gamma(a-c+1)\Gamma(b-c+1)} .
    \end{align}
  \end{subequations}
One can check that these coefficients, seen as functions of the frequency, satisfy
    \begin{equation}
      \label{eq:wrgrgpoprgkopko999}
      \overline{A(\omega)}=A(\omega^*)= C(\omega)=\overline{C(\omega^*)} \quad \text{ and }\quad
     \overline{B(\omega)}=B(\omega^*)= D(\omega)=\overline{D(\omega^*)}.
    \end{equation}
Given the Wronskian $$\mathcal{W}_z[\zeta_{1(1)}, \zeta_{2(1)}]=(a+b-c)z^{-c}(1-z)^{c-a-b-1},$$ and using Equations \eqref{eq: oiejf0u984jf} with \eqref{eq:pjofej49j0f9j}, we find, for $\Imag(\omega)<0$
\begin{align}
  \label{eq:wklrnogo9}
\mathcal{N}_\omega= -\nu(A\sin(\gamma)-B\cos(\gamma))=:\mathcal{N_<}.
\end{align}
From Equation \eqref{eq:wklrnogo9} and the properties of the coefficients given in Equation \eqref{eq:wrgrgpoprgkopko999}, we directly obtain the normalization for $\text{Im}(\omega) > 0$:
  \begin{align*}
    \mathcal{N}_\omega=\mathcal{N}_<|_{\omega\mapsto\omega^*}=-\nu(C\sin(\gamma)-D\cos(\gamma))=:\mathcal{N}_>.
  \end{align*}

\subsubsection*{Spectral resolution of the radial Green function}
\label{subsec: Spectral resolution of the radial Green function on massless hyperbolic black holes}

To construct physically-sensible states by invoking Theorems \ref{thm: 2 point ground state schd coord} and \ref{thm: 2 point KMS state schd coord} we need to know where the poles of the radial Green function are. For convenience, let us focus on the case   $\text{Im}(\omega)<0$, so that the solution that is square-integrable at $z=0$ is simply $R_{1(0)}(z)$. In this section, let $N$ be a non-positive integer. For Dirichlet and Neumann boundary conditions, we have, respectively,
\begin{align*}
\gamma=0\Rightarrow \mathcal{N_<}=0&\iff B=0\iff( a\text{ or }b  \text{ is a non-positive integer})\nonumber\\
&\iff \omega_D= + i (-2N + 1 + \nu \pm 2\Upsilon),
\end{align*}
and
\begin{align*}
\gamma=\pi/2\Rightarrow \mathcal{N_<}=0&\iff A=0\iff( c-a\text{ or }c-b  \text{ is a non-positive integer})\nonumber\\
&\iff \omega_N= + i (-2N + 1 - \nu \pm 2\Upsilon).
\end{align*}
For $L\neq r_h$, $\Upsilon$ assumes real values and a detailed analysis regarding the poles should be taken into account. Assuming $\nu\in(0,1)$, $\text{Im}(\omega)<0$, and $L=r_h$ (so $\Upsilon$ is purely imaginary), the frequencies $\omega_D$ and $\omega_N$ have non-negative imaginary part and thus are not poles of the radial Green function.

Now consider $\gamma\in(0,\pi/2)$ and define $\Xi(\omega) \doteq \frac{B}{A}$. That is
\begin{align*}
  \Xi(\omega)%
             =&\frac{\Gamma(\nu)}{\Gamma(-\nu)}\frac{\Gamma(c-a)\Gamma(c-b)}{\Gamma(a)\Gamma(b)}.
\end{align*}
The poles of the radial Green function, in this case, are the frequencies that, for each $\gamma$, solve the equation \begin{equation}
\label{eq: transcendental xi = tan}
\Xi(\omega)=\tan(\gamma).
\end{equation}
That is, whenever $\Xi(\omega)$ is real, there exist a boundary condition $\gamma=\arctan(\Xi(\omega))\in\mathbb{R}$ for which the radial Green function has a pole. Details on how to approach the transcendental Equation \eqref{eq: transcendental xi = tan} can be found in \cite{bussola2017ground,deSouzaCampos2020bnj}. All in all, we find that there are no bound states for $\gamma\in[0,\pi/2]$. In this case, we can use Jordan's lemma to compute the contour integral of the spectral resolution of the radial Green function given in Equation \eqref{eq:spectral resolution radial green function}. This computation is analogous to that on Minkowski spacetime performed in Section \ref{sec: On Minkowski spacetime in spherical coordinates} and given in details in \cite{deSouzaCampos2020bnj}. Hence, here, I simply state the result: the spatial part of the two-point function of physically-sensible states, as defined in Section \ref{sec: Physically-sensible states in Schwarzschild-like coordinates}, is given by
    \begin{align}
    \label{eq:widetilde psi2 z massless top bh}
    \widetilde{\psi}_2(z,z')  = \frac{1}{\pi \nu }\frac{\Imag(\overline{A}B)}{|A\sin(\gamma) -B\cos(\gamma)|^2} R_\gamma(z)R_\gamma(z'),
    \end{align}
with $A$ and $B$ given in Equation \eqref{eq: A B C D massless top bh} and $R_\gamma(z)$ as per Equation \eqref{eq: Rgamma massless top bh}.

\subsubsection*{Two-point functions}
\label{subsec: Two-point functions on massless hyperbolic black holes}
For $ \widetilde{\psi}_{2}(r,r')$ given by Equation \eqref{eq:widetilde psi2 z massless top bh} with $z=z(r)=\frac{r^2-r_h^2}{r^2}$, the two-point functions for the ground and thermal states at inverse-temperature $\beta$ are given, respectively due to Theorems \ref{thm: 2 point ground state schd coord} and \ref{thm: 2 point KMS state schd coord}, by
\begin{align}
\label{eq: 2 point ground state massless top bh}
\psi_{2,\infty}(x,x^\prime)= &  \lim_{\varepsilon\rightarrow 0^+} \int_{\mathbb{R}} d\ell \int_{\mathbb{R}}d\omega \Theta(\omega) e^{-i\omega (t - t'- i\varepsilon)} \widetilde{\psi}_{2}(r,r')Y_\ell(\underline{\theta})\overline{Y_\ell(\underline{\theta}')},
\end{align}
\begin{align}
  \label{eq: 2 point KMS state massless top bh}
\psi_{2,\beta}(x,x^\prime)=\lim_{\varepsilon\rightarrow 0^+}\int_{\mathbb{R}} d\ell \int_{\mathbb{R}} d\omega \Theta(\omega) \left[ \frac{e^{-i\omega (t-t'-i\varepsilon)}}{1 - e^{-\beta\omega}}+ \frac{e^{+i \omega (t-t'+i\varepsilon)}}{{e^{\beta\omega}-1}} \right] \widetilde{\psi}_{2}(r,r') Y_\ell(\underline{\theta})\overline{Y_\ell(\underline{\theta}')}.
\end{align}

\subsection{The transition rate of an Unruh-DeWitt detector}
\label{subsec: The transition rate on massless hyperbolic black holes}
Consider an Unruh-DeWitt detector with energy gap $\Omega$ following a static trajectory of fixed spatial coordinates $(z=z(r),\underline{\theta})$, as in Section \ref{subsec: The transition rate on a static BTZ spacetime and on Rindler-AdS3}. Theorem \ref{thm: Transition rate for the physically-sensible states} together with Equations \eqref{eq: 2 point ground state massless top bh} and \eqref{eq: 2 point KMS state massless top bh} yields the transition rate for, respectively, the ground and thermal states
\begin{align}
  \label{eq:transition rate ground state massless top bh}
  \dot{\mathcal{F}}_\infty =  \Theta( -\Omega )  \int_{\mathbb{R}}  d\ell\,  \frac{2}{\nu }\frac{\Imag(\overline{A}B)}{|A\sin(\gamma) -B\cos(\gamma)|^2} |Y_\ell(\underline{\theta})|^2 R_\gamma(r)^2 \Big|_{\omega = -\sqrt{f(r)} \,\Omega},
\end{align}
and
\begin{align}
  \label{eq:transition rate KMS state massless top bh}
  \dot{\mathcal{F}}_{\beta} = \frac{\text{ sign}(\Omega)}{e^{\beta\sqrt{f(r)}\Omega}-1}  \int_{\mathbb{R}}  d\ell\,   \frac{2}{ \nu }\frac{\Imag(\overline{A}B)}{|A\sin(\gamma) -B\cos(\gamma)|^2}  |Y_\ell(\underline{\theta})|^2  R_\gamma(r)^2 \big|_{\omega =\sqrt{f(r)} |\Omega| }.
\end{align}

For massless conformally coupled scalar fields on three- and four-dimensional massless hyperbolic black holes, we performed a numerical analysis of the transition rate when the detector is coupled with the ground state, or with thermal states. As a function of the energy gap, we verify that its behaviour, given in Figure \ref{fig:n = 3 and 4 transition of Egap KMS}, is analogous to that on Minkowski spacetime, illustrated in Figure \ref{fig:transition rate Minkowski 3 4 5 6 as a function of the energy gap}. They are similar in the sense that in three dimensions the transition rate plateaus for decreasing energy gaps, and in four dimensions it increases without bound. Their contrast lies in the oscillatory behavior observed in the black hole scenarios. Furthermore, considering the anti-correlation effects, the results we obtain in the three dimensional case are compatible with those on a BTZ black hole shown in the previous section: the anti-Unruh effect is manifest for the ground state, but it does not occur for thermal states. In the four-dimensional case, no anti-correlation effect is observed for either state. This spacetime dimension dependence agrees with the one observed on Minkowski spacetime, as per Figure \ref{fig:transition rate Minkowski 3 4 5 6 as function of a}.
\begin{figure}[H]
 \centering
 \includegraphics[width=0.45\textwidth]{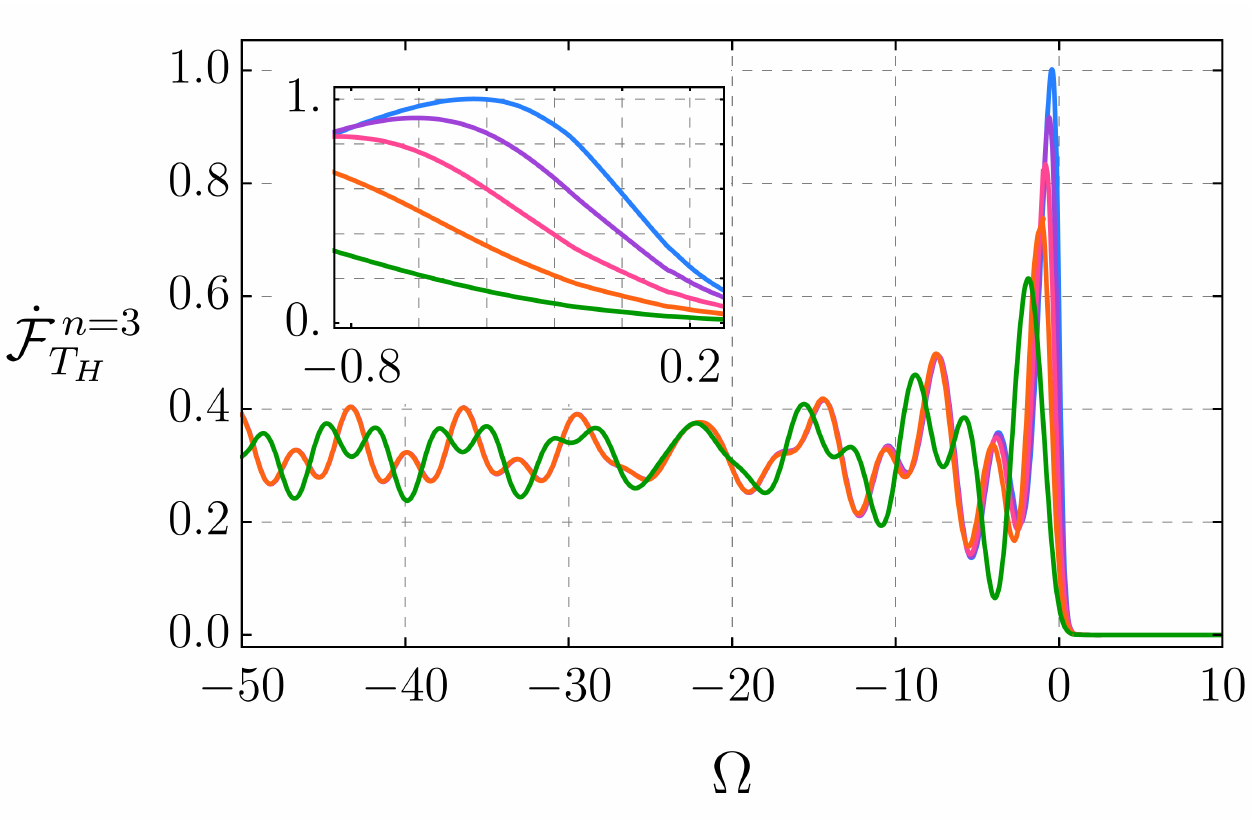}\hspace{.5cm}
 \includegraphics[width=0.45\textwidth]{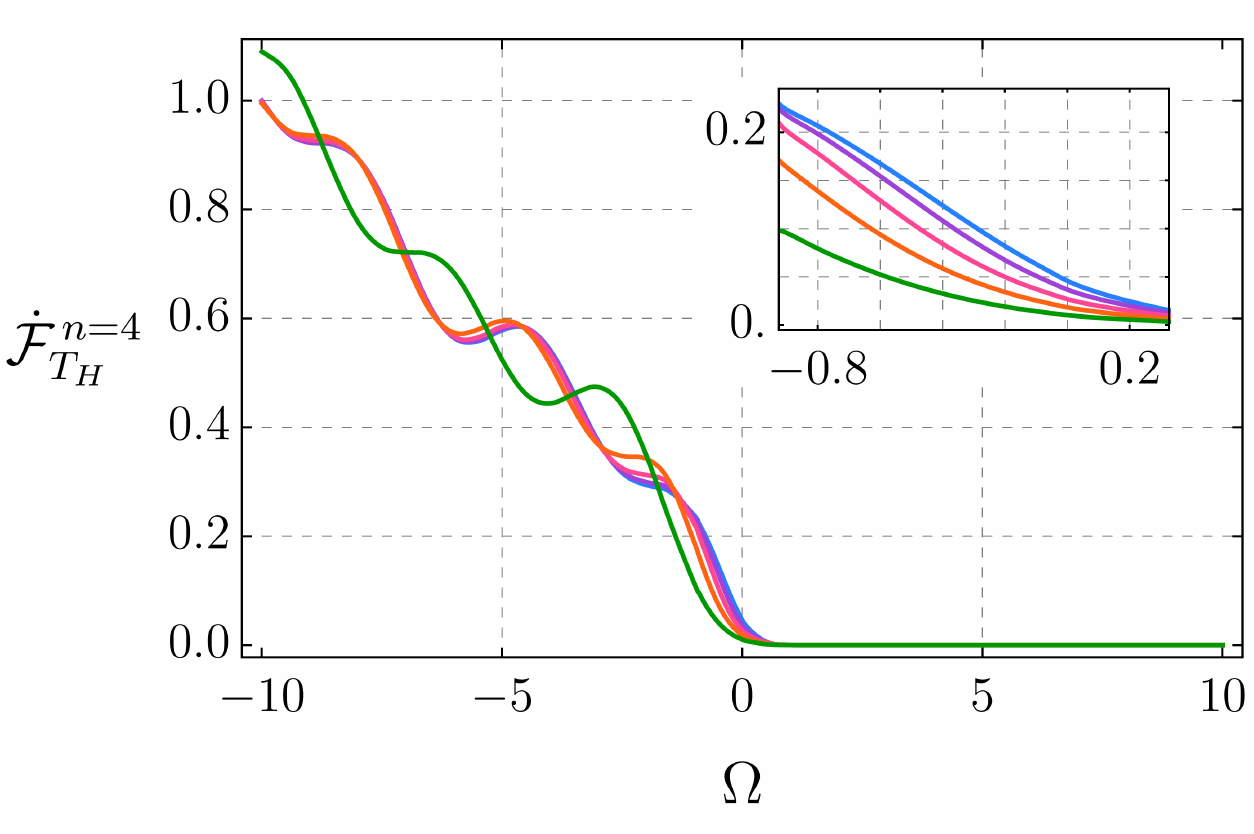}
 \caption{Transition rate as a function of the energy gap for the KMS state at $z_{\mini{D}} = 1/2$, $\theta_{\mini{D}}=\pi^{-1}$ and for different boundary conditions: from top to bottom, $\gamma=(0.50,0.47,0.40,0.25,0)\pi$. On the left, for $n=3$ and with the integration performed up to $\ell=100$; on the right, for $n=4$, at $\varphi_1=0$ and with the summation performed up to $m_1=20$ and integration, up to $\ell=20$.}
 \label{fig:n = 3 and 4 transition of Egap KMS}
 \end{figure}
 \begin{figure}[H]
 \centering
 \includegraphics[width=0.45\textwidth]{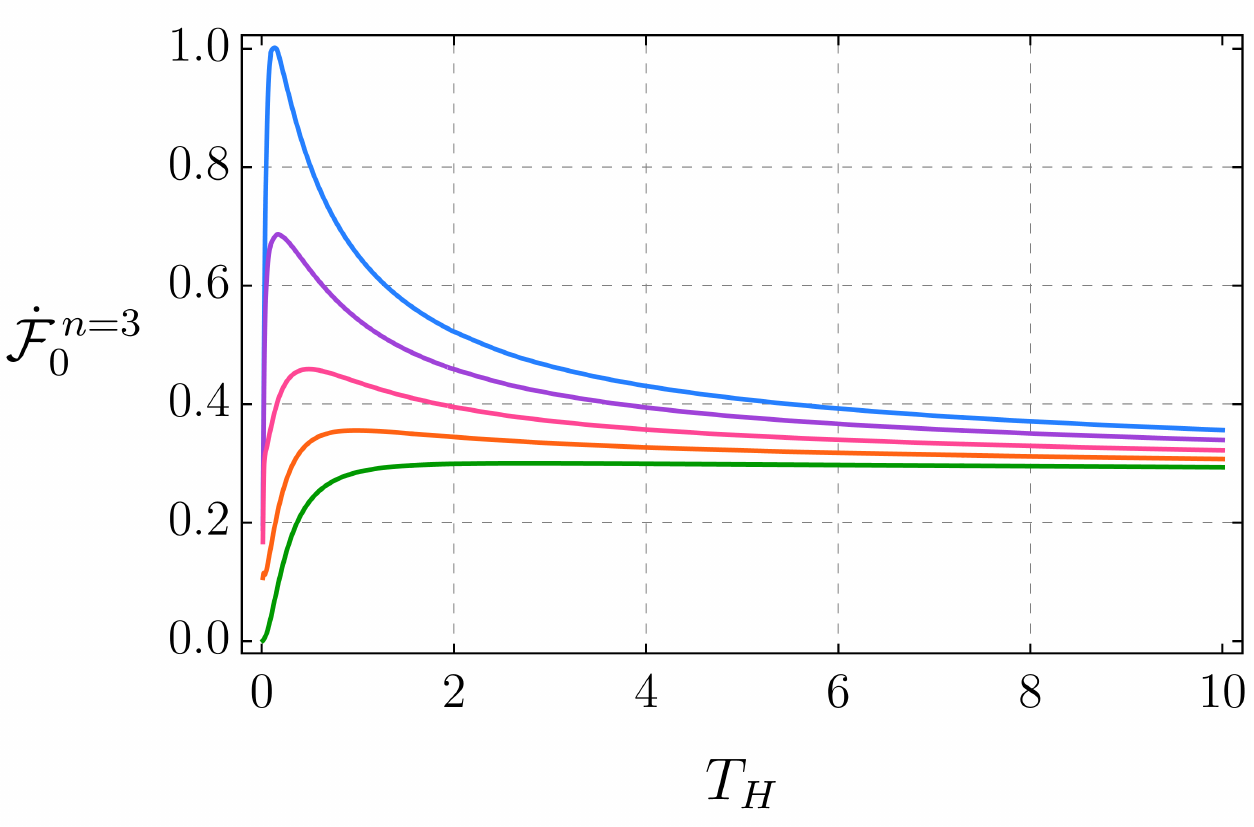}\hspace{.5cm}
 \includegraphics[width=0.45\textwidth]{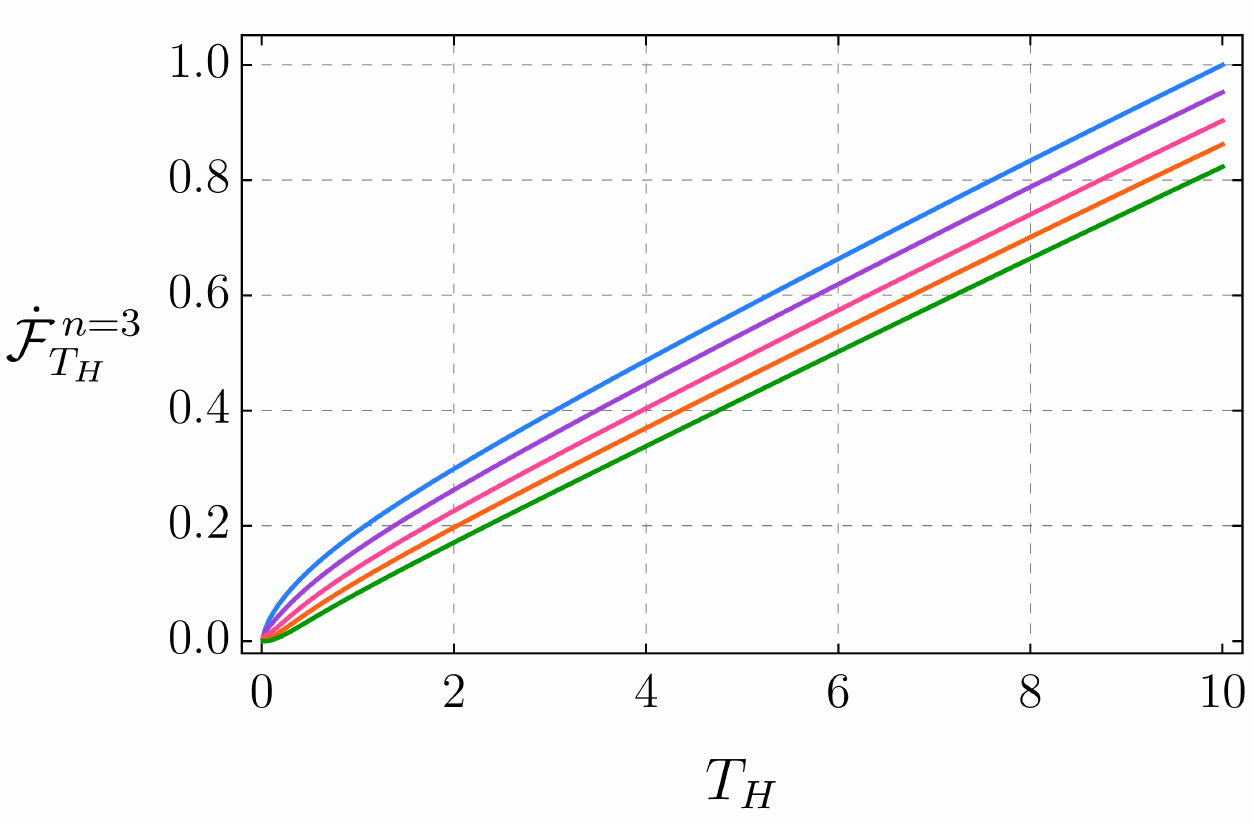}
 \caption{Transition rate, integrated up to $\ell=100$, as a function of the local Hawking temperature on the three-dimensional hyperbolic black hole for $\Omega=-0.1$, $\theta=\pi^{-1}$ and for different boundary conditions. From top to bottom $\gamma=(0.50,0.47,0.40,0.25,0)\pi$. On the left, for the ground state; on the right, for the KMS state.}
 \label{fig:n=3 transition of TH ground and KMS}
 \end{figure}
 \begin{figure}[H]
 \centering
 \includegraphics[width=0.45\textwidth]{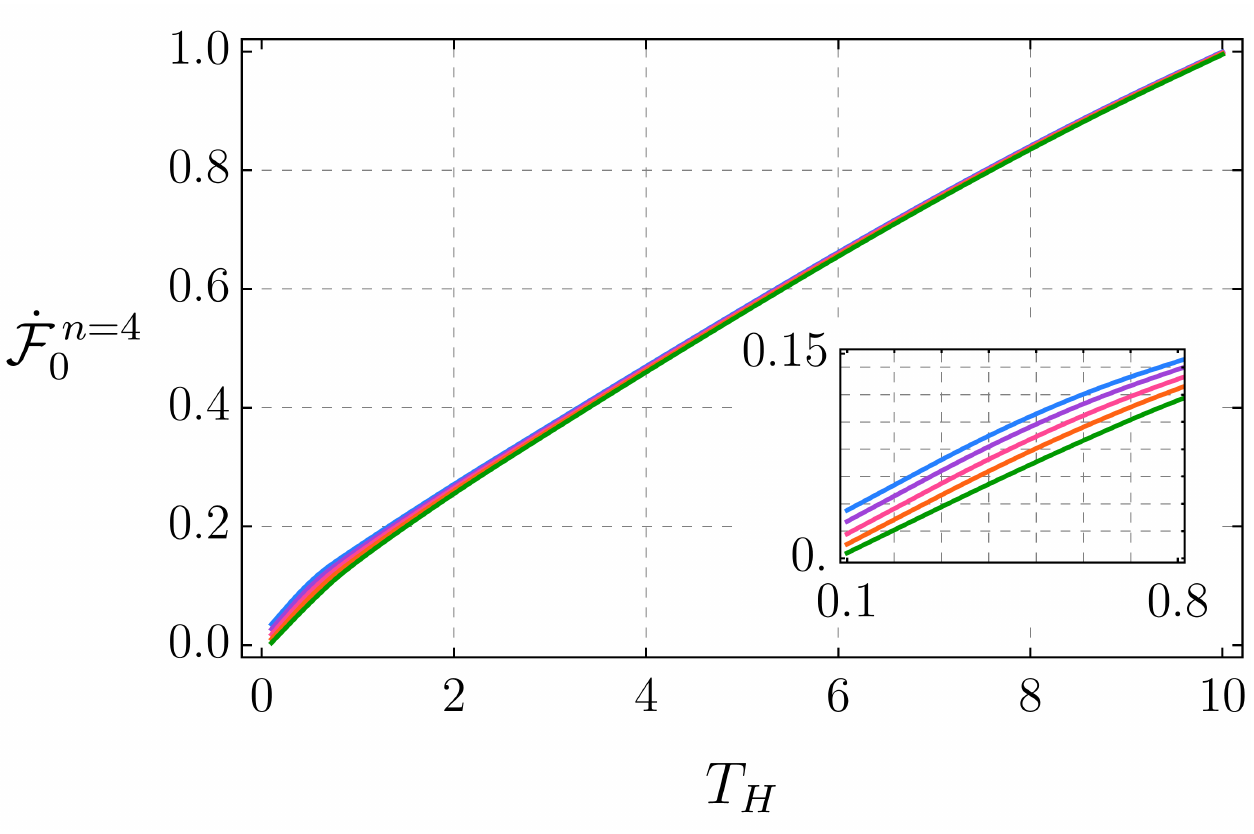}\hspace{.5cm}
 \includegraphics[width=0.45\textwidth]{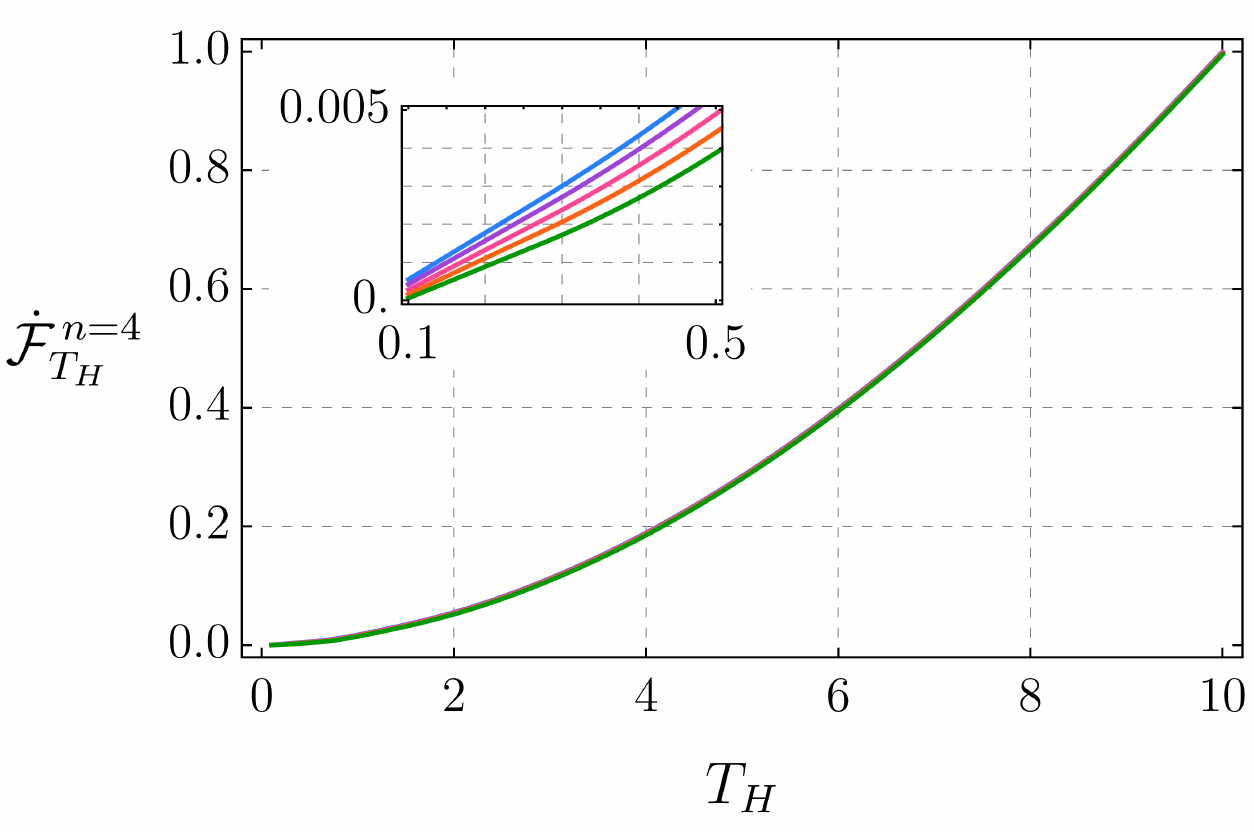}
 \caption{Transition rate as a function of the local Hawking temperature, summed up to $m_1=100$ and integrated up to $\ell=100$, on the four-dimensional hyperbolic black hole for $\Omega=-0.1$, at $\theta_{\mini{D}}=\pi^{-1}$, $\varphi_{1,\mini{D}}=0$ and for different boundary conditions: from top to bottom $\gamma=(0.50,0.47,0.40,0.25,0)\pi$. On the left, for the ground state; on the right, for the KMS state.}
 \label{fig:n=4 transition of TH ground and KMS max 5 and 100}
 \end{figure}
%


\section{On Lifshitz topological black holes}
\label{sec: On Lifshitz topological black holes}

In this section, let us consider a free, scalar, quantum field on four-dimensional Lifshitz topological black holes. In Section \ref{subsec: Ground and thermal states on Lifshitz top bh}, I summarize the construction of the ground and thermal states, analogously to the previous sections. This work has been published in \cite{deSouzaCampos2021role}. In addition, it expands previous analysis \cite{kachru2008gravity}, by considering massive fields, by allowing for more general boundary conditions, and by also setting three types of spacetime interiors.

As given in Section \ref{ex: chapter 1 Lifshitz toplogical black holes}, in Schwarzschild-like coordinates, the line element of Lifshitz topological black holes reads
\begin{align}
  \label{eq:metric lif top bh chap lif top bh}
        ds^2=-f(r)dt^2+\frac{L^{2}}{r^2}f(r)^{-1}dr^2+r^2d\Sigma_{\kappa,2}^2,
\end{align}
where $d\Sigma_{\kappa,2}^2$ is given in Definition \ref{def: Schwarzschild-like coordinates} with $j=\kappa$ and
\begin{align*}
    &f(r):=\frac{r^{2}}{L^2} \left(\frac{r^{2}}{L^2} + \frac{\kappa}{2}\right).
\end{align*}
For $\kappa=0$, $\kappa=-1$ and $\kappa=+1$, Equation \eqref{eq:metric lif top bh chap lif top bh} characterizes, respectively, flat, hyperbolic and spherical Lifshitz topological black holes. In addition, the corresponding Ricci scalar is given by $\mathbf{R}= -\frac{22}{L^2} - \frac{\kappa}{r^2}$.

\subsection{Ground and thermal states}
\label{subsec: Ground and thermal states on Lifshitz top bh}

\subsubsection*{The radial equation}
\label{subsec: The radial equation on Lifshitz top bh}
Let $Y_\kappa(\theta,\varphi)$ be the eigenfunctions \cite{jorgensen1987harmonic,limic1967eigenfunction,deSouzaCampos2020bnj}, with eigenvalues
 \begin{align*}
       \lambda_\kappa = \begin{cases}
                       -(\ell^2 + m^2),\, \ell,m\in\mathbb{R}, &\text{ for }\kappa=0,\\
                       -(\frac{1}{4}+\ell^2),\, \ell\in\mathbb{R}, &\text{ for }\kappa=-1,\\
                       -\ell(\ell+1),\, \ell\in\mathbb{N}_0, &\text{ for
}\kappa=+1,
                   \end{cases}
 \end{align*}
 of the Laplacian
	\begin{equation*}
	\Delta_\kappa = \begin{cases}
	         \frac{1}{\theta^2}\frac{\partial^2}{\partial \varphi^2} + \frac{1}{\theta}\frac{\partial}{\partial\theta}+\frac{\partial^2}{\partial \theta^2}, &\text{ for }\kappa=0; \\
		 \frac{1}{\sinh(\theta)^2}\frac{\partial^2}{\partial \varphi^2} + \frac{1}{\sinh(\theta)}\frac{\partial}{\partial\theta}\left(\sinh(\theta )\frac{\partial}{\partial\theta}\right), &\text{ for }\kappa=-1; \\
	\frac{1}{\sin(\theta)^2}\frac{\partial^2}{\partial \varphi^2} + \frac{1}{\sin(\theta)}\frac{\partial}{\partial\theta}\left(\sin(\theta )\frac{\partial}{\partial\theta}\right), &\text{ for }\kappa=+1.
	\end{cases}
\end{equation*}
The ansatz
  \begin{align*}
  \Psi(t,r,\theta,\varphi) =  e^{-i\omega t } R(r)Y_\ell^m(\theta,\varphi)
\end{align*}
is a solution of the Klein-Gordon Equation \eqref{eq: KG} if $R(r)$ solves the radial equation:
\begin{subequations}
  \label{eq: the radial equation Lifshitz}
\begin{equation}
    R''(r)+ \left[\frac{4 r}{\kappa L^2+2 r^2}+\frac{3}{r}\right]R'(r)+ \left[\frac{4 L^6}{\left(\kappa L^2 r+2 r^3\right)^2} \omega ^2 +
\frac{2 L^2}{\kappa L^2 r^2+2 r^4}\lambda -\frac{2 L^2}{\kappa L^2+2 r^2}
\mu^2\right]R(r)=0,
\end{equation}
where
\begin{equation}
  \label{eq: lambda shited by xi}
  \lambda := \lambda_\kappa + \kappa \xi.
\end{equation}
\end{subequations}
For each value of $\kappa$ we apply a coordinate transformation to write the radial equation in a well-known form:
\begin{align}
\label{eq: coord trans Lifshitz top bh}
    r\mapsto \begin{cases}
                    u = \frac{iL^3\omega}{r^2}\in(0,i\infty) , &\text{ for }\kappa=0,\\
                    s = \frac{2 r^2 - L^2}{2r^2}\in (0,1), &\text{ for }\kappa=-1,\\
                    s = \frac{2 r^2 + L^2}{2r^2}\in(1,\infty), &\text{ for }\kappa=+1.
                  \end{cases}
\end{align}
For $\kappa=0$, Equation \eqref{eq: the radial equation Lifshitz} with Equation \eqref{eq: coord trans Lifshitz top bh} and the ansatz
\begin{align}
  \label{eq: ansatz R(u) k=0}
    &R(u) =  e^{-u/2} \left(\frac{u}{i L^3 \omega}\right)^{\frac{1}{2} \left(2 + \nu \right)} \zeta(u).
\end{align}
yield a confluent hypergeometric equation \cite{NIST_13}
\begin{align*}
  u \zeta''(u) + (b_0 - u)\zeta'(u) -a_0 \zeta(u) =0,
\end{align*}
with
\begin{align*}
    &a_0 = \frac{1+\nu}{2} + \frac{i \lambda }{4 L \omega }\quad \text{ and }\quad b_0 = 1+ \nu .
\end{align*}
For $\kappa=\pm 1$, Equation \eqref{eq: the radial equation Lifshitz} with Equation \eqref{eq: coord trans Lifshitz top bh} and the ansatz
\begin{align*}
    &R(s) =  s^{\kappa i L\omega} [\kappa(s-1)]^{\frac{1}{2} \left(2 + \nu \right)}\zeta(s),
\end{align*}
yield a hypergeometric equation  \cite{NIST_15}
  \begin{align*}
        s(1-s)\zeta''(s) + (c-(a+b+1)s)\zeta'(s) -ab\zeta(s)=0,
  \end{align*}
with parameters
  \begin{align*}
      &a = \frac{1+\nu}{2} +i\kappa L\omega - \Upsilon,\\
      &b = \frac{1+\nu}{2} +i\kappa L\omega + \Upsilon,\\
      &c = 1+ 2 i \kappa L\omega,
  \end{align*}
  where
  \begin{align}
    \label{eq: parameter upsilon lif top bh}
      &\Upsilon =  \kappa\frac{\sqrt{1 -2\kappa\lambda -4L^2\omega^2}}{2}  .
  \end{align}

\subsubsection*{The radial solutions}
\label{subsec: The radial solutions on Lifshitz top bh}

In Sturm-Liouville form, Equation \eqref{eq: the radial equation Lifshitz} reads
\begin{equation*}
        L_{\omega^2}  R(r) :=-\frac{1}{q(r)} \left(\frac{d}{d r} \left(p(r)\frac{d}{d r}\right)+v(r) \right)R(r)= \omega ^2 R(r),
\end{equation*}
with coefficient functions
\begin{align*}
  &q(r) = \frac{4 L^6 r}{\kappa L^2+2 r^2}, \\
  &p(r) = \kappa L^2 r^3+2 r^5, \\
  &v(r) = 2 L^2 r \left(-\mu^2 r^2+\lambda \right).
\end{align*}
After we verify the square-integrability of the radial solutions with respect to the measure $q(r)$ for each case $\kappa\in\{0,-1,+1\}$, we obtain the following results. First, at radial infinity the behaviour is the same in all cases:
\begin{align*}
  &R(r)\sim r^{-2 \pm \nu}   \text{, as  }r\rightarrow \infty,
\end{align*}
where
\begin{align*}
  &\nu := \sqrt{4 + L^2 m_{\text{eff}}^2}>0.
\end{align*}
In addition, $r=\infty$ is limit circle if $\nu \in (0,1)\cup(1,2)$ and the most general square-integrable solution satisfies Robin boundary conditions parametrized by $\gamma\in[0,\pi)$:
\begin{align*}
R_{\gamma}(r)= \cos(\gamma)R_{1(\infty)}(r) + \sin(\gamma)R_{2(\infty)}(r)
\end{align*}
The other endpoint is always limit point, but the most general square-integrable solution is case dependent, as follows.
\noindent For $\kappa=0$, close to $r=0$ we take:
\begin{align*}
    R_0(r) = \Theta(-\Imag(\omega))R_{1(0)}(r) + \Theta(\Imag(\omega))R_{2(0)}(r),
\end{align*}
where the radial solutions are, consistently with Equation \eqref{eq: ansatz R(u) k=0}, given by
\begin{align*}
    &R_{i(u_0)}(u) =  e^{-u/2} \left(\frac{u}{i L^3 \omega}\right)^{\frac{1}{2} \left(2 + \nu \right)} \zeta_{i(u_0)}(u), \quad \text{ for }i\in\{1,2\},
\end{align*}
which are written in terms of the confluent hypergeometric functions $M$ and $U$ \cite{NIST_13}:
\begin{align*}
&\zeta_{1(0)}(u) =  M(a_0,b_0;u),\\
&\zeta_{2(0)}(u)= \left(i L^3\omega \right)^\nu u^{1-b_0}M(a_0-b_0+1,2-b_0;u),\\
&\zeta_{1(\infty)}(u) =  U(a_0,b_0;u),\\
&\zeta_{2(\infty)}(u)=   e^{u}U(b_0-a_0,b_0;-u).
\end{align*}
\noindent For $\kappa=\pm 1$, we have
\begin{align*}
    &R_{i(s_0)}(s) =  s^{\kappa i L\omega} [\kappa(s-1)]^{\frac{1}{2} \left(2 + \nu \right)}\zeta_{i(s_0)}(s), \quad \text{ for }i\in\{1,2\}.
\end{align*}
\noindent For $\kappa=-1$, then close to $s=L/\sqrt{2}$ we take
\begin{align*}
    R_0(r) = \Theta(+\Imag(\omega))R_{1(0)}(r) + \Theta(-\Imag(\omega))R_{2(0)}(r),
\end{align*}
\begin{align*}
&\zeta_{1(0)}(s)=F(a,b;c;s),\\
&\zeta_{2(0)}(s)=s^{1-c}F(a-c+1,b-c+1;2-c;s),\\
&\zeta_{1(1)}(s)=F(a,b;a+b+1-c;1-s),\\
&\zeta_{2(1)}(s)=(1-s)^{c-a-b}F(c-a,c-b;c-a-b+1;1-s).
\end{align*}
\noindent For $\kappa=+1$, then close to $s=0$ we take:
\begin{align}
  \label{eq: square-integrable R(r) at r=0 k=+1}
  &R_{0}(r) := \Theta(\Real(\Upsilon))R_{2(0)}(r).%
\end{align}
\begin{align*}
  &\zeta_{1(1)}(s)=F(a,b;a+b+1-c;1-s),\\
  &\zeta_{2(1)}(s)=(s-1)^{c-a-b}F(c-a,c-b;c-a-b+1;1-s),\\
  &\zeta_{1(\infty)}(s)=s^{-a}F(a,a-c+1;a-b+1;1/s),\\
  &\zeta_{2(\infty)}(s)=s^{-b}F(b,b-c+1;b-a+1;1/s).
\end{align*}

\subsubsection*{The radial Green function}
\label{sec: The radial Green function lif top bh}

For $\nu \in (0, 1)\cup(1,2)$, the radial Green function reads

  \begin{align*}
  \mathcal{G}_{\omega}(r,r') =\frac{1}{ \mathcal{N}_{\omega}}\left( \Theta(r'-r) R_{0}(r)R_{\gamma}(r') + \Theta(r-r')R_{0}(r')R_{\gamma}(r)\right).
\end{align*}
Using the fundamental relation connecting the solutions, for $  i_\kappa = 1 + \delta_{\kappa,\pm 1}$,
  \begin{equation*}
  R_{i_\kappa(0)}(r) = A_\kappa R_{1(\infty)}(r) + B_\kappa R_{2(\infty)}(r),
\end{equation*}
we obtain the normalization $\mathcal{N}_{\omega}$ that is defined in Equation \eqref{eq: definition normalization of radial green function}. For $\Imag(\omega)<0$ and $  c_\kappa:=4^{\delta_{\kappa,0}} L^{4\delta_{\kappa,\pm 1}}\nu $ it is given by
\begin{equation*}
\mathcal{N}_\omega= c_\kappa \{B_\kappa\cos(\gamma) - A_\kappa\sin(\gamma)\},
\end{equation*}
with coefficients given by \cite[(15.10.22)]{NIST_15}
  \begin{align*}
    &A_0:= \frac{\Gamma(1-b_0)}{\Gamma(a_0-b_0+1)},\\
    &B_0:= \frac{\Gamma(b_0-1)}{\Gamma(a_0)}(iL^3\omega)^{-\nu},\\
    &A_{\mini{-1}}:= \frac{\Gamma(2-c)\Gamma(c-a-b)}{\Gamma(1-a)\Gamma(1-b)},\\
    &B_{\mini{-1}}:= \frac{\Gamma(2-c)\Gamma(a+b-c)}{\Gamma(a-c+1)\Gamma(b-c+1)},\\
    &A_{\mini{+1}}:= \frac{\Gamma(b-a+1)\Gamma(c-a-b)}{\Gamma(1-a)\Gamma(c-a)},\\
    &B_{\mini{+1}}:= \frac{\Gamma(b-a+1)\Gamma(a+b-c)}{\Gamma(b)\Gamma(b-c+1)}.
  \end{align*}

\subsubsection*{Spectral resolution of the radial Green function}
\label{sec: Spectral resolution of the radial Green function lif top bh}

The analysis regarding the existence of bound states is rather intricate and fully distilled in \cite{deSouzaCampos2021role}. On one hand, for $\kappa=0$, we obtain results analogous to those on massless hyperbolic black holes as in Section \ref{subsec: Spectral resolution of the radial Green function on massless hyperbolic black holes}: for each $\lambda$-mode, there is a critical boundary condition,
\begin{equation*}
  \gamma^\lambda_{c} := \arctan\left( \left(\frac{-\lambda L^2}{4}\right)^{-\nu}\frac{\Gamma(\nu)}{\Gamma(-\nu)} \right),
\end{equation*}
such that for Robin boundary conditions parametrized by
 \begin{equation*}
 \gamma\in[0,\gamma^\lambda_c] \quad \text{ with }\quad  \gamma^\lambda_c
 \in \begin{cases}
                [\pi/2,\pi) \, &\nu\in(0,1),\\
                 [0,\pi/2) \,     &\nu\in(1,2),
               \end{cases}
 \end{equation*}
the radial Green function has no poles. In addition, note that for all $\lambda$-modes, if $\nu\in(0,1)$, then there are no bound states for $\gamma\in[0,\pi/2]$, as it happened in Sections \ref{subsec: Spectral resolution of the radial Green function on a static BTZ spacetime and on Rindler-AdS3} and \ref{subsec: Spectral resolution of the radial Green function on massless hyperbolic black holes}. On the other hand, for $\kappa=\pm 1$, the existence of bound states for $\nu\in(0,1)\cup(1,2)$ and $\gamma\in[0,\pi)$ depends more specifically on the values of the mass of the field and of the coupling parameter.

Let us consider the case $\kappa=-1$ and define $\lambda_c:=-\frac{\nu(2-\nu)}{2}$, and
\begin{align*}
                \xi_{\pm} :=   \frac{1}{100} \left(9+ 5 L^2 m_0^2 \pm \sqrt{81-10 L^2 m_0^2}\right).
\end{align*}
For $\gamma=0$, there are no bound states, as expected. For $\gamma=\pi/2$, the values of $L$, $\mu_0$ and $\xi$ for which the radial Green function has no poles must satisfy
\begin{align*}
    &\nu\in(0,1) \begin{cases}
    \text{(i) } \,\,L^2 m_0^2 < 8.1, \xi\in \left(\xi_-,\xi_+\right) \text{ and }\lambda\leq\lambda_c; \\
    \text{(ii) }\, L^2 m_0^2 < 8.1 \text{ and }\xi\notin \left(\xi_-,\xi_+\right) ; \\
    \text{(iii) } L^2 m_0^2 \geq 8.1; \end{cases}\\
    &\nu\in(1,2)\begin{cases}
    \text{(iv) } L^2 m_0^2 < 8.1,\, \xi \in \left(\xi_-,\xi_+\right) \text{ and }\lambda\geq\lambda_c; \\
    \text{(v) }\,L^2 m_0^2 = 8.1, \, \xi=\xi_-=\xi_+ \text{ and }\lambda=-\xi.\end{cases}
  \end{align*}
Note that conditions (ii) and (iii) hold true for all $\lambda$-modes, but the others are mode-dependent. For $\gamma\in(0,\pi)\setminus \{\pi/2\}$ and $\lambda\neq\lambda_c$, we find that the radial Green function has no poles if
\begin{equation*}
  \gamma\in \left(0, \gamma_{c}^\lambda \right) \quad \text{ with }\quad
\gamma_{c}^\lambda := \arctan\left( \lim\limits_{\omega\rightarrow 0}\frac{B_{\mini{-1}} }{A_{\mini{-1}}} \right)\in \begin{cases}
                                                                \left(\frac{\pi}{2},\pi\right), &\text{ if }
                                                                  \begin{cases}
                                                                    \lambda<\lambda_c, &\text{ for }\nu\in(0,1)\\
                                                                    \lambda>\lambda_c, &\text{ for }\nu\in(1,2)
                                                                  \end{cases}.\\
                                                                \left(0,\frac{\pi}{2}\right), &\text{ otherwise.}
                                                        \end{cases}
\end{equation*}

Now consider $\kappa=+1$ and define $\omega_c := \frac{\sqrt{| 1-2\lambda| }}{2L}$. First, for simplicity, we impose the restriction $\xi<\frac{1}{2}$. In this case, for $\Upsilon$ given by Equation \eqref{eq: parameter upsilon lif top bh}, we have that $\text{Re}(\Upsilon)>0$ and the radial solution defined in Equation \eqref{eq: square-integrable R(r) at r=0 k=+1} is well-defined for all $\omega\in\mathbb{C}$ such that if $\omega=\text{Re}(\omega)$ then $|\omega|<\omega_c$. For $\nu\in(0,1)\cup(1,2)$, we find that can impose mode-dependent boundary conditions with
\begin{equation*}
\gamma\in[0,\gamma^\lambda_{c}), \text{ with }\gamma^\lambda_{c} := \arctan\left( \lim\limits_{\omega\rightarrow \omega_c}\frac{B_{\mini{-1}} }{A_{\mini{-1}}}  \right)\in[0,\pi).
\end{equation*}
To impose the same boundary condition on all modes, for $\nu\cup(1,2)$ only $\gamma=0$ is admissible. However, for $\nu\in(0,1)$, there are no bound states for $\gamma\in\left[0,\pi/2\right]$.

When the radial Green function has no poles we can perform the contour integration on the complex-plane and obtain the radial component of the two-point functions, as described in general terms in Section \ref{sec: Physically-sensible states in Schwarzschild-like coordinates} and written down in details and with figures in \cite{deSouzaCampos2021role}. All in all, we find that the radial part of the two-point function of a physically-sensible state reads
\begin{equation}
  \label{eq:term lif top bh}
    \widetilde{\psi}_{2}(r,r')=\frac{1}{ \pi c_\kappa}\frac{\Imag\left(\overline{A_\kappa}B_\kappa\right)}{|B_\kappa\cos(\gamma)-A_\kappa\sin(\gamma)|^2} R_{\gamma}(r) R_{\gamma}(r').
\end{equation}

\subsubsection*{Two-point functions}
\label{subsec: Two-point functions on Lifshitz top bh}

Using Equation \eqref{eq:term lif top bh} in Theorems \ref{thm: 2 point ground state schd coord} and \ref{thm: 2 point KMS state schd coord}, we find the two-point functions of the ground state
  \begin{align}
  \label{eq:two-point function ground state lif top bh}
  \psi_2(x,x^\prime)= &  \lim_{\varepsilon\rightarrow 0^+} \int_{\sigma(\Delta_{\kappa})}d\eta_{\kappa}\int_{\mathbb{R}}d\omega \Theta(\omega) e^{-i\omega (t - t'- i\varepsilon)} \widetilde{\psi}_{2}(r,r')Y_{\kappa}(\underline{\theta})\overline{Y_{\kappa}(\underline{\theta}')},
  \end{align}
and of KMS states at inverse-temperature $\beta$
  \begin{align}
    \label{eq:two-point function KMS state lif top bh}
  \psi_2(x,x^\prime)=\lim_{\varepsilon\rightarrow 0^+}\int_{\sigma(\Delta_{\kappa})}d\eta_{\kappa} \int_{\mathbb{R}} d\omega \Theta(\omega) \left[ \frac{e^{-i\omega (t-t'-i\varepsilon)}}{1 - e^{-\beta\omega}}+ \frac{e^{+i \omega (t-t'+i\varepsilon)}}{{e^{\beta\omega}-1}} \right] \widetilde{\psi}_{2}(r,r') Y_{\kappa}(\underline{\theta})\overline{Y_{\kappa}(\underline{\theta}')}.
  \end{align}
\noindent In the expressions above, the integral over the spectrum of the Laplacian with respect to the measure $d\eta_{\kappa}$, with some abuse of notation, is given by
\begin{align*}
    \int_{\sigma(\Delta_{\kappa})}d\eta_{\kappa} \equiv
            \begin{cases}
                    \int\limits_{\mathbb{R}}d\ell\int\limits_{\mathbb{R}}dm, &\text{ for }\kappa=0,\\[10pt]
                    \int\limits_{0}^\infty d\ell\sum\limits_{m=0}^\infty , &\text{ for }\kappa=-1,\\[10pt]
                    \sum\limits_{\ell=0}^\infty\sum\limits_{m=-\ell}^\ell, &\text{ for }\kappa=+1.
            \end{cases}
\end{align*}

\subsection{The transition rate of an Unruh-DeWitt detector}
\label{subsec: The transition rate on Lif top bh}

Using Equations \eqref{eq:two-point function ground state lif top bh} and \eqref{eq:two-point function KMS state lif top bh} in Theorem \ref{thm: Transition rate for the physically-sensible states}, we obtain that the transition rate for a static Unruh-DeWitt detector on Lifshitz topological black holes for the ground and thermal states is given by, respectively,
\begin{align}
  \label{eq:transition rate ground state massless top bh}
  \dot{\mathcal{F}}_\infty =  \Theta( -\Omega ) \int_{\sigma(\Delta_{\kappa})}d\eta_{\kappa}\, \frac{1}{ \pi c_\kappa}\frac{\Imag{(\overline{A_\kappa}B_\kappa)}   }{|B_\kappa\cos(\gamma)-A_\kappa\sin(\gamma)|^2} |Y_\kappa(\underline{\theta})|^2  R_\gamma(r)^2 \Big|_{\omega = -\sqrt{f(r)} \,\Omega},
\end{align}
and
\begin{align}
  \label{eq:transition rate KMS state massless top bh}
  \dot{\mathcal{F}}_{\beta} = \frac{\text{ sign}(\Omega)}{e^{\beta\sqrt{f(r)}\Omega}-1}  \int_{\sigma(\Delta_{\kappa})}d\eta_{\kappa} \,  \frac{1}{ \pi c_\kappa}\frac{\Imag{(\overline{A_\kappa}B_\kappa)}   }{|B_\kappa\cos(\gamma)-A_\kappa\sin(\gamma)|^2}  |Y_\kappa(\underline{\theta})|^2  R_\gamma(r)^2 \big|_{\omega =\sqrt{f(r)} |\Omega| }.
\end{align}

\section{On a global monopole}
\label{sec: On a global monopole}
Within the framework described in Chapter \ref{chap: Quantum Field Theory on Static Spacetimes}, in Section \ref{subsec: Ground and thermal states}, I construct physically-sensible states for a real, massive, arbitrarily coupled Klein-Gordon field on global monopoles, which are described in Section \ref{ex: Chapter 1 Global monopoles}. Free, scalar fields have been studied on global monopoles before. Ground and thermal states for Dirichlet boundary condition are well-known \cite{mazzitelli1991vacuum,carvalho2001vacuum}, and recently ground states for Robin boundary conditions have been constructed \cite{pitelli2009quantum,pitelli2018gmonopole}. For Dirichlet boundary condition, thermal effects have also been taken into account \cite{carvalho2001vacuum}. Here, I generalize these previous works by constructing thermal states compatible with Robin boundary conditions. Then, in Section \ref{sec: Transition rate global monopoles}, by employing the Unruh-DeWitt detector approach, I study thermal effects and their relation with these different boundary conditions admissible at the naked singularity. To better understand the behaviour of the detector, I also compute the thermal contributions to the expectation value of the field squared and the energy density of the thermal states renormalized by the ground state, analogously to the computation in Section \ref{sec: On Minkowski spacetime in spherical coordinates} on Minkowski spacetime. The analysis I report here has been published in \cite{deSouzaCampos2021awm}.

In Schwarzschild-like coordinates, the line element of a global monopole is given in Equation \eqref{eq: metric Global monopole}. However, for consistency with the notation used in \cite{deSouzaCampos2021awm}, I apply the coordinate transformations $t\mapsto t/\alpha $ and $r\mapsto \alpha r$. That is, here, I consider the line element
\begin{align*}
    ds^2 =-dt^2 + dr^2 + \alpha^2 r^2 d\theta^2 + \alpha^2 r^2 \sin^2\theta d\varphi^2.
\end{align*}
In these coordinates, the hypersurface $\theta=\frac{\pi}{2}$ is a cone with a deficit angle of $2\pi(1-\alpha)$, where $\alpha\in(0,1)$. The hypersurface $r\rightarrow0$ is a timelike, naked singularity of curvature type since both the Ricci and the Kretschmann scalars diverge. The latter are given by, respectively
\begin{equation*}
  \mathbf{R} = \frac{2(1-\alpha ^2)}{\alpha ^2 r^2}\quad \text{ and }\quad \mathbf{K} = \mathbf{R}^2.
\end{equation*}

\subsection{Ground and thermal states}
\label{subsec: Ground and thermal states}

Let $Y_\ell^m(\theta,\varphi)$ be the spherical harmonics with eigenvalues $-\ell(\ell+1)$. If
\begin{align*}
\Psi_{\omega,\ell}(t,r,\theta,\varphi) =  e^{-i\omega t } R(r)Y_\ell^m(\theta,\varphi),
\end{align*}
satisfies the Klein-Gordon equation, then the function $R(r)$ solves the Bessel equation
\begin{equation*}
    R''(r)+ \frac{2}{r}R'(r)+ \left(p^2 + \frac{\lambda_{\ell,\xi,\alpha}}{r^2} \right)R(r)=0,
\end{equation*}
with
\begin{align*}
  &p^2 := \omega^2 - m_0^2, \quad \text{ and }\quad\lambda_{\ell,\xi,\alpha} :=- \frac{\ell(\ell+1)+ 2\xi(1-\alpha^2)}{\alpha^2}.
\end{align*}
Let $j_\nu$ and $y_\nu$ be the spherical Bessel functions of first and second kind \cite{NIST_10} and define
\begin{align*}
&R_{1}(pr)= j_{\nu}(pr), \quad \text{ and }\quad
R_{2}(pr) = p\, y_{\nu}(pr),
\end{align*}
where, assuming $\xi\geq 0$,
\begin{align*}
& \nu := \frac{-1 + \sqrt{1 - 4\lambda_{\ell,\xi,\alpha}}}{2}\geq0.
\end{align*}
The most general solution that is square-integrable at the naked singularity with respect to the measure $r^2dr$ satisfies a Robin boundary condition parametrized by $\gamma$ and is given by
\begin{align*}
R_{\gamma_{\nu}}(pr):= \cos(\gamma_{\nu})R_{1}(pr) - \sin(\gamma_{\nu})R_{2}(pr),
\end{align*}
where the auxiliary quantity $\gamma_{\nu}$ is defined as
\begin{align*}
\gamma_{\nu}:=
\begin{cases}
\gamma\in[0,\pi), \quad &\text{ if }\nu<\frac{1}{2},\\
0, \quad &\text{ if }\nu>\frac{1}{2}.
\end{cases}
\end{align*}
Note that $\nu<\frac{1}{2}$ only for $\ell=0$ and only if $\xi\in\left[0,\frac{3}{8}\frac{\alpha^2}{1-\alpha^2} \right)$. Moreover, it holds that for $\gamma\in\left[0,\frac{\pi}{2}\right]$ the radial Green function, defined as per Equation \eqref{eq: green function radial equation}, has no poles. These results can be verified by performing the same analysis of Section \ref{sec: Spectral resolution of the radial Green function Mink 4D} on Minkowski spacetime and by simply adding the $\alpha$-terms in the right places. In the end, the radial part of the two-point functions is given by
\begin{equation}\label{eq: radial part global monopole}
  \widetilde{\psi}_{2}(r,r') = \Theta(\omega-m_0)\frac{p}{\pi \alpha^2}\frac{R_{\gamma_\nu}(pr) R_{\gamma_\nu}(pr')}{\cos(\gamma_\nu)^2+p^2 \sin(\gamma_\nu)^2} .
\end{equation}
Analogously to Equations \eqref{eq: 2 point ground state mink 4 summed in m} and \eqref{eq: 2 point kms state mink 4 summed in m}, with $\widetilde{\psi}_2(r,r')$ given by Equation \eqref{eq: radial part global monopole}, we find that the two-point functions read, for the ground and thermal states on global monopoles respectively,
\begin{equation}\label{eq: 2 point ground state global monopole}
 \psi_{2,\infty}(x,x') = \lim_{\varepsilon \to 0^{+}} \sum_{\ell=0}^\infty \int_{0}^{\infty} d\omega\, e^{-i\omega(t-t'-i\varepsilon)}\widetilde{\psi}_2(r,r')\frac{2\ell+1}{4\pi} P_\ell(\Phi),
\end{equation}
and
\begin{equation}\label{eq: 2 point kms state global monopole}
 \psi_{2,\beta}(x,x') = \lim_{\varepsilon \to 0^{+}} \sum_{\ell=0}^\infty \int_{0}^{\infty} d\omega   \left[ \frac{e^{-i\omega (t-t'-i\varepsilon)}}{1 - e^{-\beta\omega}}+ \frac{e^{+i \omega (t-t'+i\varepsilon)}}{{e^{\beta\omega}-1}} \right]\widetilde{\psi}_2(r,r')\frac{2\ell+1}{4\pi} P_\ell(\Phi),
\end{equation}
where $P_\ell(\Phi)$ is the Legendre function and $\Phi\equiv \cos(\theta)\cos(\theta')+\sin(\theta)\sin(\theta')\cos(\varphi-\varphi')$.

\subsection{Transition rate of an Unruh-DeWitt detector}
\label{sec: Transition rate global monopoles}
As per Theorem \ref{thm: Transition rate for the physically-sensible states}, we can directly compute the transition rate of a static Unruh-DeWitt detector of energy gap $\Omega$ using Equations \eqref{eq: 2 point ground state global monopole} and \eqref{eq: 2 point kms state global monopole}. For a Klein-Gordon field of mass $m_0\geq 0$, we find for the ground state
\begin{equation}
  \label{eq: transition rate ground state}
\dot{\mathcal{F}}_{\infty,\gamma} (r)= \frac{\Theta(-\Omega-m_0)\sqrt{\Omega^2-m_0^2}}{2\pi \alpha^2}\sum_{\ell=0}^\infty    \frac{2\ell+1}{\cos(\gamma_\nu)^2+ (\Omega^2-m_0^2) \sin(\gamma_\nu)^2}   \left[R_{\gamma_\nu}\left(\sqrt{\Omega^2-m_0^2}r\right)\right]^2
\end{equation}
and for a thermal state at inverse-temperature $\beta$
\begin{equation}
  \label{eq: transition rate KMS state global monopole}
\dot{\mathcal{F}}_{\beta,\gamma } (r)= \frac{\text{sign}(\Omega)}{e^{\beta\Omega}-1}\left[\dot{\mathcal{F}}_{\infty} (r)\big|_{\Omega\mapsto -|\Omega|}\right].
\end{equation}
In the particular case of $m_0=0$, Equation \eqref{eq: transition rate KMS state global monopole} simplifies to
\begin{equation}
  \label{eq: transition rate KMS state massless field}
\dot{\mathcal{F}}_{\beta,\gamma}(r)= \frac{1}{2\pi \alpha^2}  \frac{\Omega}{e^{\beta\Omega}-1}   \sum_{\ell=0}^\infty \frac{(2\ell+1)\left[R_{\gamma_\nu}\left(|\Omega|r\right)\right]^2}{\cos(\gamma_\nu)^2+ \Omega^2 \sin(\gamma_\nu)^2}.
\end{equation}
In the limit $\alpha\rightarrow 1$, the expressions above match the expected results on Minkowski spacetime, as computed in Section \ref{sec: On Minkowski spacetime in spherical coordinates}.

Note that since there is no horizon, even though we can construct thermal states neither Equation \eqref{eq: def global hawking temperature} nor Equation \eqref{eq: def local hawking temperature} makes sense on global monopoles, i.e. there is no preferred temperature a priori, no Hawking temperature. Accordingly, the focus in this section is not on anti-correlation effects, but rather on the consequences of imposing different boundary conditions at the naked singularity on the detector transition rates. Thence, to auxiliate in the interpretation of the behaviour of the detector with respect to its distance from the naked singularity for different boundary conditions, we also compute the thermal fluctuations and the energy density, as exemplified on Minkowski spacetime in Section \ref{sec: On Minkowski spacetime in spherical coordinates}.

The thermal contribution to the expectation value of the field squared in the ground state, which we call thermal fluctuations, is given by
\begin{equation}
  \label{eq:erjopfopek4004040404}
\Delta\mathcal{G}_{\beta,\gamma}(x,x'):=  \psi_{2,\beta}(x,x') -  \psi_{2,\infty}(x,x'),
\end{equation}
with the two-point functions defined in Equations \eqref{eq: 2 point ground state global monopole} and \eqref{eq: 2 point kms state global monopole}. Considering the regularized thermal state defined by Equation \eqref{eq:erjopfopek4004040404}, we define the energy density as
\begin{equation*}
  E_{\beta,\gamma}(r)   :=  \braket{:T_{00}(r):}_{\beta,\gamma},
\end{equation*}
where the energy-momentum tensor, together with Equation \eqref{eq: differential operator stress energy tensor moretti}, is defined as
\begin{equation*}
    \braket{:T_{\mu\nu}(r):}_{\beta,\gamma} = \lim\limits_{x'\rightarrow x}\left\{\mathcal{D}_{\mu\nu}(x,x')\left[\Delta\mathcal{G}_{\beta,\gamma}(x,x')\right]\right\}.
\end{equation*}

For the three quantities of interest here---transition rate, thermal fluctuations and energy density---seen as a function of $\alpha$, let us define their counterpart on Minkowski spacetime for the sake of comparison:
\begin{equation*}
\text{quantity}^{\text{Mink}}:= \lim\limits_{\alpha\rightarrow 1} \text{quantity}(\alpha).
\end{equation*}

In the following, I summarize the numerical analysis performed for the massless, minimally and conformally coupled field. This analysis is available at \cite{git_global_monopole} and detailed in \cite{deSouzaCampos2021awm}. All plots follow the color code defined in Figure \ref{fig: legends alpha and gamma}.
\begin{figure}[H]
  \begin{center}
    \includegraphics[width=0.12\textwidth]{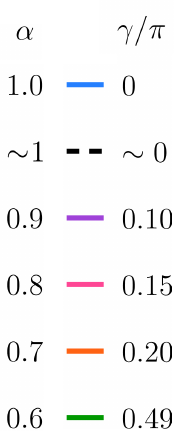}
  \end{center}
  \caption{Legends for the following plots.}
  \label{fig: legends alpha and gamma}
\end{figure}

For both $\xi=0$ and $\xi=\frac{1}{6}$, we find that the transition rate behaves similarly when the detector is far away from the singularity, i.e. in the limit $r\rightarrow \infty$, it approximates, but oscillating around, its value on Minkowski spacetime:
\begin{equation*}
\lim\limits_{r\rightarrow \infty} \dot{\mathcal{F}}_{\beta,\gamma}(r) \sim  \dot{\mathcal{F}}_{\beta,\gamma}^{\text{Mink}} (r) .
\end{equation*}
However, the behavior of the three quantities as we approach the global monopole is dramatically different in each case, as shown next.

\subsubsection*{Minimal coupling}
\label{subsec: Minimal coupling global monopole}

For $m_0=0$ and $\xi=0$, we find that as we approach the global monopole the angular deficit affects the $\ell=0$ terms of the transition rate, of the thermal fluctuations and of the energy density by amplifying them by the same factor, i.e. the three quantities are such that
\begin{equation}
  \label{eq: quantity vs quantity mink}
 \text{quantity}(\alpha) \rightarrow  \frac{1}{\alpha^2}  \text{quantity}^{\text{Mink}}, \quad\text{ as }r\rightarrow 0 .
\end{equation}
This behavior is visible in Figure \ref{fig: transition rate KMS several alphas}. Equation \eqref{eq: quantity vs quantity mink} also holds for the transition rate and the thermal fluctuations if we sum over $\ell$. As a matter of fact, directly from Equation \eqref{eq: transition rate KMS state massless field}, it follows that as $r\rightarrow 0$ only $\ell=0$ contributes to the transition rate. With respect to the boundary condition chosen, we verify that for $\gamma>0$, the three quantities diverge at the naked singularity and remain finite there if and only if $\gamma=0$.

\begin{figure}[H]
  \centering
     \includegraphics[align=c,width=\textwidth]{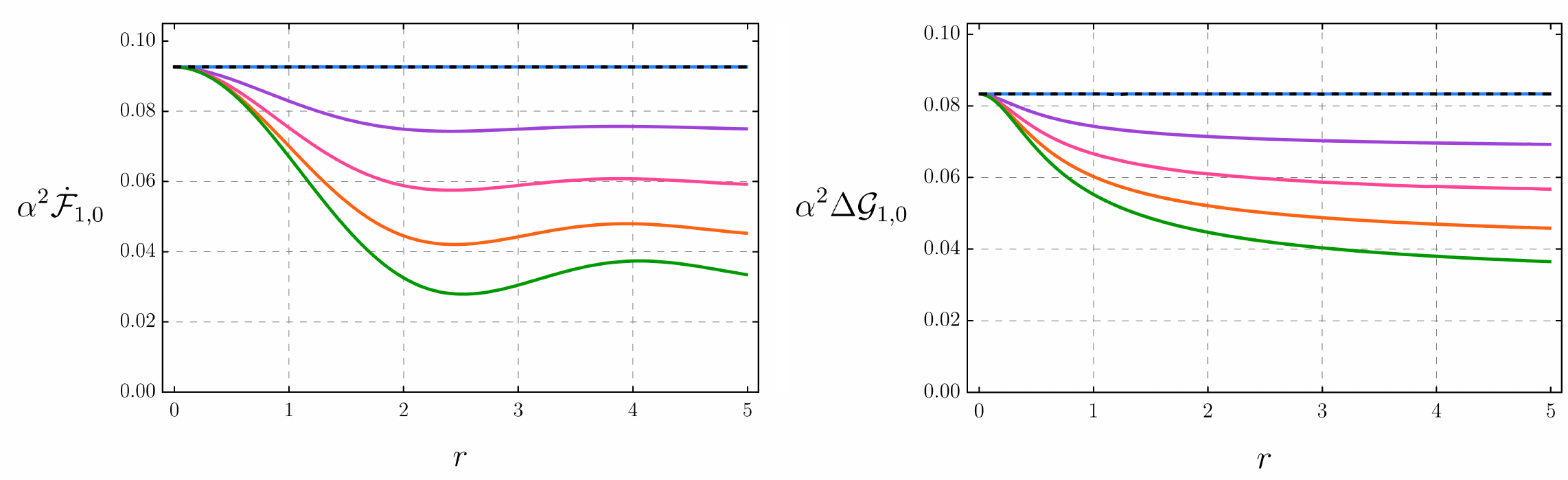}
      \caption{For the thermal state with $m_0=0$, $\xi=0$, $\beta=1$, $\Omega=1$, $\gamma=0$, and, from top to bottom with respect to the apex, $\alpha\in\{0.6,0.7,0.8,0.9,0.99999,1.0\}$. On the left: the transition rate multiplied by $\alpha^2$ with $\ell_{\text{max}}=10$. On the right: thermal fluctuations multiplied by $\alpha^2$ with $\ell_{\text{max}}=50$.}
  \label{fig: transition rate KMS several alphas}
\end{figure}

   \begin{figure}[H]
     \centering
    \includegraphics[align=c,width=\textwidth]{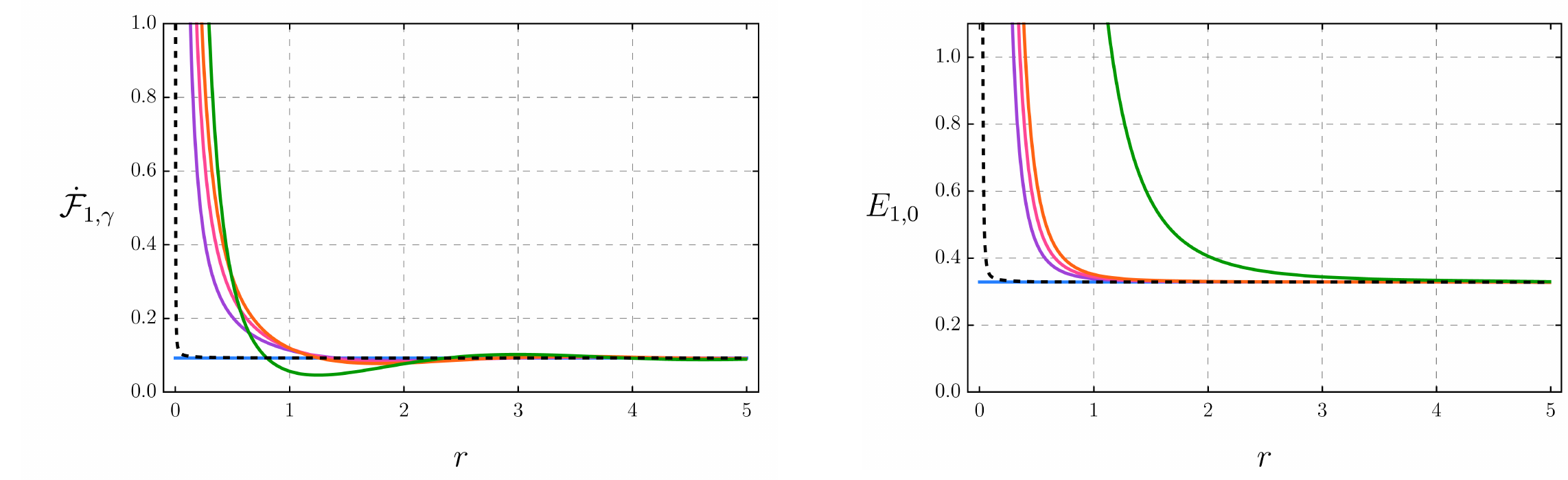}
     \caption{For the thermal state with $m_0=0$, $\xi=0$, $\beta=1$, $\Omega=1$, $\alpha=0.99999$ and several values of $\gamma$. On the left: the transition rate with $\ell_{\text{max}}=10$. On the right: energy density with $\ell_{\text{max}}=50$.}
     \label{fig: transition rate and energy density KMS several gammas}
       \end{figure}

\subsubsection*{Conformal coupling}
\label{subsec: Conformal coupling global monopole}
  \enlargethispage{2\baselineskip}
For $m_0=0$, $\xi= \frac{1}{6}$, $\alpha\in(0,1)$, and $\gamma\geq 0$, the transition rate given by Eq.~\eqref{eq: transition rate KMS state massless field} vanishes as $r\rightarrow 0$, as shown in Figure \ref{fig: transition rate KMS several alphas conformal}. This is in sharp contrast with the minimal coupling case of Figure \ref{fig: transition rate KMS several alphas}, but it is consistent with the behaviour of the thermal fluctuations illustrated on the left in Figure \ref{fig: app conformal thermal fluctuations and energy density l0}. However, it is noteworthy that the transition rate of excitations and the thermal fluctuations vanish at the naked singularity even though the energy density diverges there, as shown on the left in Figure \ref{fig: app conformal thermal fluctuations and energy density l0}. In addition, this divergence occurs for all boundary conditions, even for the Dirichlet one.

\begin{figure}[H]
  \centering
\includegraphics[align=c,width=\textwidth]{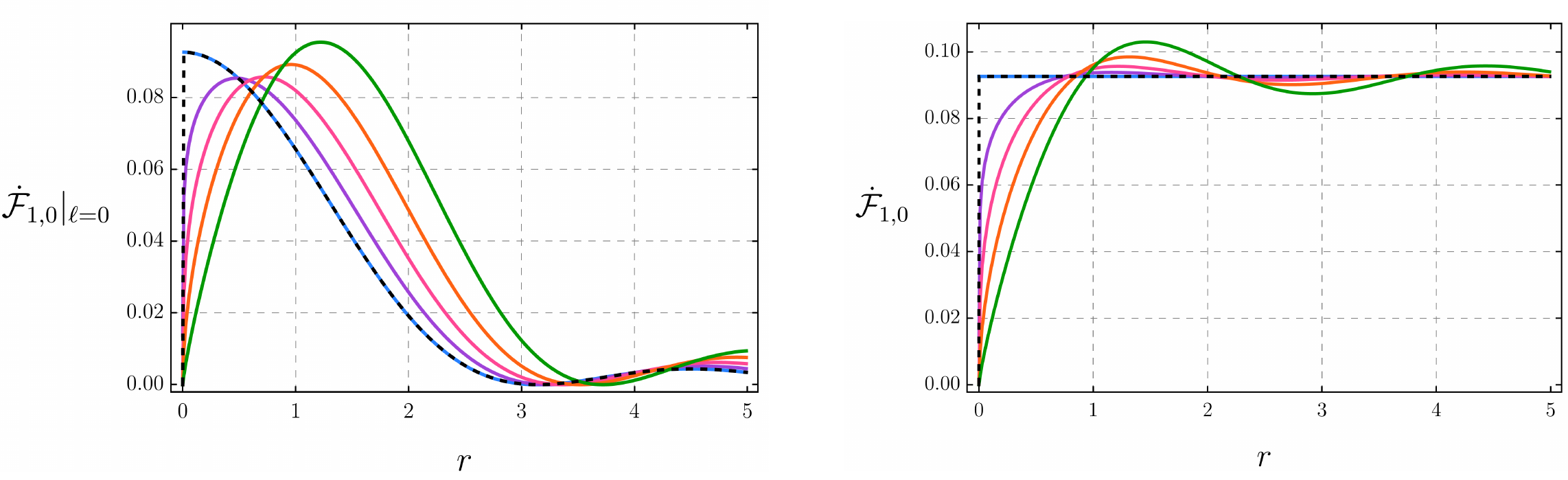}
  \caption{The transition rate for the thermal state with $m_0=0$, $\xi=\frac{1}{6}$, $\beta=1$, $\Omega=1$, $\gamma=0$ and several values of $\alpha$. On the left: $\ell_{\text{max}}=0$. On the right: $\ell_{\text{max}}=10$.}
  \label{fig: transition rate KMS several alphas conformal}
\end{figure}

\begin{figure}[H]
  \centering
   \includegraphics[width=\textwidth]{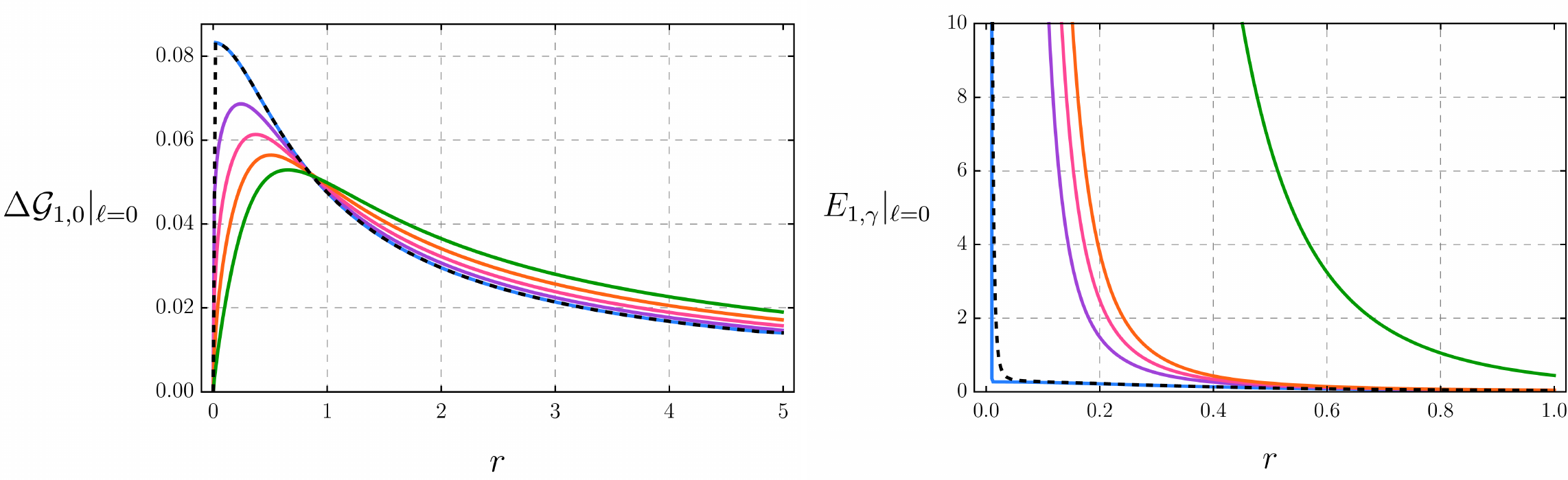}
  \caption{For $m_0=0$, $\xi=\frac{1}{6}$, $\beta=1$, $\gamma=0$, and $\ell_{\text{max}}=0$. On the left: thermal fluctuations for several $\alpha$'s. On the right: energy density for several $\gamma$'s.}
  \label{fig: app conformal thermal fluctuations and energy density l0}
\end{figure}


\pagestyle{myPhDpagestyle}
\chapter*[Conclusions]{Conclusions}
\addcontentsline{toc}{chapter}{Conclusions}

\vspace{-.5cm}

With the overarching aim of probing the interaction between quantum physics and general relativity, I considered physical phenomena within quantum field theory on curved spacetimes. First, in Chapter \ref{chap: The Spacetimes}, I gave key definitions from general relativity to clarify the geometric features of the underlying backgrounds. Second, in Chapter \ref{chap: Quantum Field Theory on Static Spacetimes}, I outlined the steps for establishing a quantum field theoretical framework for free Klein-Gordon fields on static spacetimes. In particular, I delineated a procedure to follow to obtain physically-sensible ground and thermal states and the respective transition rate for static Unruh-DeWitt detectors on spacetimes that admits Schwarzschild-like coordinates. Then, in Chapter \ref{chap: Applications}, I employed the above framework on some spacetimes of interest. Next, I give a summary of the results obtained per spacetime considered. Subsequently, I contemplate associated follow up works and open questions.

\subsubsection*{Summary of the results obtained per spacetime considered}
\begin{itemize}
    \item[$\star$] \underline{On a static BTZ black hole and its universal covering Rindler-AdS$_3$:}
  \item[] In Section \ref{sec: On a static BTZ spacetime and on Rindler-AdS3} we reviewed the construction of physically-sensible states on static BTZ black holes and we computed the transition rate of a static Unruh-DeWitt detector coupled either to ground or thermal states. In particular, by analysing the behavior of the transition rate with respect to the Hawking temperature, we studied the occurrence of the anti-Hawking effect and its relation with the mass of the black hole, the boundary condition chosen at AdS infinity and the state of the quantum field. The results obtained considering the ground state comport and generalize the ones from \cite{Henderson2019uqo}. That is, we obtained that for sufficiently small masses the $\ell=0$ contribution dictates the behaviour of the transition rate such that the anti-Hawking effect is manifest for all Robin boundary conditions parametrized by $\gamma\in[0,\pi/2]$. For larger masses, the numerical analysis does indicate that the effect observed for the $\ell=0$ term may be cancelled for Dirichlet boundary condition by performing the $\ell$ sum, but would still occur for Neumann boundary condition. In addition, we found that the anti-Hawking effect is absent for a KMS state on a static BTZ spacetime---and so is the anti-Unruh effect on its universal covering Rindler-AdS$_3$. These no-show results call upon further investigation on the relation between the effects, the KMS condition and the choice of boundary condition. The results reported here have been published in \cite{deSouzaCampos2020ddx}.\\
  \item[$\star$] \underline{On massless hyperbolic black holes:}
  \item[] In Section \ref{sec: On massless hyperbolic black holes}, we generalized the results from the previous item by constructing physically-sensible states on massless hyperbolic black holes, which can be seen as higher-dimensional generalizations of static BTZ black holes. We gave explicitly expressions for the two-point functions of the ground and thermal states and for the transition rate of a static Unruh-DeWitt detector. For the particular case of massless, conformally coupled fields on three- and four-dimensional spacetimes, we numerically analysed the transition rate and studied the manifestation of the anti-Hawking effect. The main result we found is that no anti-correlation is observed when the underlying spacetime is four-dimensional. On one hand, this is consistent with what we noticed to be the case on Minkowski spacetime concerning the anti-Unruh effect. On the other hand, the confirmation of this spacetime dimension dependence brings into question the statistical significance of these anti-correlation effects. The results reported here have been published in \cite{deSouzaCampos2020bnj}.
  \item[$\star$] \underline{On Lifshitz topological black holes:}
  \item[] In Section \ref{sec: On Lifshitz topological black holes}, we established a free, quantum field theoretical framework on flat, hyperbolic and spherical topological black holes within Lifshitz gravity. Particularly interesting is the fact that this side-by-side construction on these three spacetimes allowed us to compare the effects of the three different types of singularities: coordinate, curvature but hidden and curvature but naked. We found that in all three scenarios ground and thermal states for Klein-Gordon fields are well-defined for a large set of effective masses and Robin boundary conditions. Specifically, for $m_{\text{eff}}\in\left(-\frac{4}{L^2},-\frac{3}{L^2}\right)$, all Robin boundary conditions parametrized by $\gamma\in\left[0, \frac{\pi}{2}\right]$ are admissible at Lifshitz infinity. In addition, we noticed that we can also impose mode-dependent boundary conditions for $m_{\text{eff}}\in\left(-\frac{4}{L^2},-\frac{3}{L^2}\right)\cup\left(-\frac{3}{L^2},0\right) $. The results reported here were published in \cite{deSouzaCampos2021role}.
  \item[$\star$] \underline{On global monopoles:}
  \item[] In Section \ref{sec: On a global monopole}, we constructed ground and thermal states for a Klein-Gordon theory on global monopoles with Robin boundary conditions. We studied the transition rate of a static Unruh-DeWitt detector, its dependence on the boundary condition chosen at the naked singularity, and its interplay with the thermal contributions to the ground state fluctuations and energy density. We found that for a massless, minimally coupled Klein-Gordon field, the spontaneous emission rate of a detector interacting with a thermal state, as well as the thermal fluctuations and the thermal contribution to the energy density are finite everywhere, including at $r\rightarrow 0$, if and only if we impose Dirichlet boundary condition at the global monopole. Moreover, in this case we find that the three quantities compare with their counterparts on Minkowski spacetime in the same manner, and consistently with considering an angular deficit due to a cosmic string, i.e.
  \begin{equation*}
  \frac{\dot{\mathcal{F}}_{\beta,\gamma}(0^+)}{\dot{\mathcal{F}}_{\beta,\gamma}^{\text{Mink}} (0^+)}=  \frac{\Delta\mathcal{G}_{\beta,\gamma}(0^+)}{\Delta\mathcal{G}_{\beta,\gamma}^{\text{Mink}} (0^+)}=\frac{E_{\beta,\gamma}(0^+) }{E_{\beta,\gamma}^{\text{Mink}} (0^+)}\bigg|_{\ell=0}=\frac{1}{\alpha^2}.
\end{equation*}
  To the contrary, for massless, conformally coupled fields, we verified that for all boundary conditions, including the Dirichlet one, the energy density for the renormalized thermal state diverges. Markedly, regardless of this fact, since the thermal fluctuations vanish at the singularity, so does the transition rate. The results reported here were published in \cite{deSouzaCampos2021awm}.
\end{itemize}

\subsubsection*{Follow up work and open questions}
To ponder on follow up work and open questions regarding the results presented in this thesis and summarized above, I bear in mind the limitations of the framework employed together with the main focuses that were pointed out in the Introduction section respectively at pages \pageref{page Limitations of the framework} and \pageref{page main focuses}.
  \begin{enumerate}
  \item \underline{Anti-Unruh and anti-Hawking effects}
  \item[] On a static BTZ black hole, the anti-Hawking effect was first shown to occur for Dirichlet, Neumann and transparent boundary conditions in \cite{Henderson2019uqo}. Here, we confirmed that it also occurs for other Robin boundary conditions \cite{deSouzaCampos2020ddx}. In addition, by generalizing this analysis to higher dimensions, we found out that such effects are absent on all four-dimensional massless hyperbolic black holes \cite{deSouzaCampos2020bnj}. Recently, two noteworthy results have been obtained. First, the effect is intensified if one considers rotating BTZ black holes instead \cite{robbins2021anti}. Second, the effect is absent on other four-dimensional spacetimes \cite{conroy2021response}. Having in mind that the latter is consistent with a detector measuring the Unruh effect on a four-dimensional Minkowski spacetime, these results give rise to the following question: do anti-correlation effects occur on any four- or higher-dimensional spacetime? By studying such effects on particular four- or higher-dimensional spacetimes one after another one might be able to answer that---if such a procedure eventually halts. A good place to start in this regard would be to check if the anti-Hawking effect is manifest on four-dimensional hyperbolic black holes with non vanishing mass, though a more sophisticated numerical work would be needed in this case. Nevertheless, to pursue a better understanding of the statistical significance of the anti-correlation effects, one could attempt to look at the underlying mathematical structure that gives rise to this spacetime dependence. For example, the response of a detector on an $n$-dimensional Minkowski spacetime is bosonic-like or fermionic-like depending on the parity of $n$ \cite{takagi1986vacuum}. In \cite{Unruh1986tc}, a closer look at the origin of this dependence clarified that, for positive energy gap, there is not a relevant qualitatively distintion between the response as a function of the energy gap for different spacetime dimensions (as one can see in Figure \ref{fig:transition rate Minkowski 3 4 5 6 as a function of the energy gap}, or in \cite[Fig.1]{Unruh1986tc}).
  However, as noted here and clear in Figure \ref{fig:transition rate Minkowski 3 4 5 6 as a function of the energy gap}, for negative energy gap the response behaves rather differently for $n=3$. In \cite{Ottewill1987tm} it was shown that the response, besides depending on the ``number of particles'', depends also on a quantity that can be interpreted as a local density of states that relates with the underlying geometry. Their focus was on the simplicity of the results for $n=4$ in comparison with $n>4$, but an analogous analysis could be pursued to understand instead the contrast between $n<4$ with $n\geq 4$.
  \item \underline{Lorentz violation}
  \item[] Having constructed physically-sensible two-point functions on Lifshitz topological black holes, the foundations for the employment of the Unruh-DeWitt detector approach is set. In fact, the transition rates for a static detector coupled to massive, arbitrarily coupled fields admitting certain Robin boundary conditions in the ground or thermal states were explicitly obtained in Section \ref{sec: On Lifshitz topological black holes}. The next steps would be to study those expressions, numerically, in several situations, which is motivated by the fact that we know Unruh-DeWitt detectors are sensible to the topology of the underlying spacetime \cite{MartinMartinez2015qwa} and to Lorentz violating dispersion relations in field theory \cite{Husain2015tna}. The effects of choosing different boundary conditions and of changing the topology of the underlying spacetime have been studied for the vacuum polarization on topological black holes of Einstein gravity \cite{morley2021vacuum}. An analogous study could be performed on Lifshitz topological black holes. Such analysis should focus both on understanding the difference between the boundary conditions for each spacetime, and on the effects of changing just the underlying topology for each boundary condition. In addition, by generalizing the construction we performed on hyperbolic black holes to flat and spherical massless topological black holes of Einstein gravity, it would be interesting to compare the behaviour of the detector in these scenarios with those on Lifshitz topological black holes, to understand the consequences of introducing the Lorentz violation in the perspective of the detector.
  \newpage
  \item \underline{Naked singularities}
  \item[] We studied thermal effects on global monopoles with Robin boundary conditions. We verified that both the choice of boundary condition and the type of coupling change drastically the behavior of the transition rate, the thermal fluctuations and the energy density. We verified that these physical observables diverge at the naked singularity for many sets of the parameters. In these situations, it would be interesting to investigate whether backreaction effects arise and hide the naked singularity, as it happens on BTZ black holes \cite{Casals2016odj}. However, since we did not notice any particular behaviour for any specific value of the temperature, the question of what is the role of temperature on horizonless spacetimes remains most puzzling. Insights in this regard might emerge in a situation when we can compare a naked singularity and a hidden singularity in a shared context. Most stimulating would be to study a dynamical scenario that ends in a naked singularity and approach it as Hawking originally looked into black hole collapse \cite{hawking1975particle}. %
\end{enumerate}

\cleardoublepage                                                                                                                %
\appendix
\pagestyle{empty}

\chapter*{Conventions}
\addcontentsline{toc}{chapter}{Conventions}
\markboth{Conventions}{Conventions}
\footnotesize
\begin{tabular}{cp{.8\textwidth}}
    $\hslash=c=G=1$ & natural units\\
    \multicolumn{2}{l}{$g^{\mu\nu}x_\mu x_\nu<0 \Rightarrow x \text{ is timelike}$}\\
    \multicolumn{2}{l}{$\overline{\text{something}}\equiv\text{something}^\ast\equiv\text{complex conjugate of something}$}\\
    \multicolumn{2}{l}{Fourier transform convention: non-unitary with angular frequency $\hat{f}(\omega):=\frac{1}{2\pi}\int_{-\infty}^{\infty} e^{-i \omega x}f(x)dx $} \\
    $\mathbb{N}_0$&$\{0,1,2,...\}$\\
    $\mathbb{N}^*$&$\{1,2,...\}$\\
    $\mathbb{N}$ &  natural numbers with or without the zero\\
  $\mathcal{M}$ & $n$-dimensional static spacetime\\
  $g$ & Lorentzian metric tensor of $\mathcal{M}$ \\
  $\nabla$ & Levi-Civita connection of $g$ \\
  $\Gamma _{\mu\nu}^{\hspace{.3cm}\kappa }=\frac{1}{2} g^{\kappa \sigma } \left( \partial_\mu\partial g_{\sigma \nu }-\partial_\sigma g_{\mu \nu }+\partial_\nu g_{\sigma \mu }\right)$ & Christoffel symbols \\
  $R_{\mu\nu\alpha}^{\hspace{.5cm}\beta} =  - \partial_\mu \Gamma _{\nu \alpha }^{\hspace{.3cm}\beta }+\partial_\nu\Gamma _{\mu \alpha }^{\hspace{.3cm}\beta }  + \Gamma _{\nu \lambda }^{\hspace{.3cm}\beta } \Gamma _{\mu \alpha }^{\hspace{.3cm}\lambda } -\Gamma _{\mu \lambda }^{\hspace{.3cm}\beta } \Gamma _{\nu \alpha }^{\hspace{.3cm}\lambda } $ & Riemann tensor\\
  $R_{\mu\nu} = R_{\mu\sigma\nu}^{\hspace{.5cm}\sigma}$ & Ricci tensor\\
  $\mathbf{R}=g^{\mu\nu}R_{\mu\nu}$ & Ricci scalar \\
  $G_{\mu\nu} = R_{\mu\nu} - \frac{1}{2} \mathbf{R} g_{\mu\nu}$ & Einstein tensor\\
  greek letters & indices in ${1,...,n}$, effective mass or auxiliary parameter  \\
  $m_0$ & mass of scalar field \\
  $m_{\text{eff}}$ & effective mass of scalar field coupled to curvature \\
  $\text{id}_{C_{0}^{\infty}(\mathcal{M},\mathbb{R})}$ & $\text{id}_{C_{0}^{\infty}(\mathcal{M},\mathbb{R})} f = f, \,\forall f\in C_0^\infty(\mathcal{M},\mathbb{R}) $\\
  expression$(0^+)$ & standard weak-limit for distributions $\lim\limits_{\varepsilon\rightarrow 0^+} $ expression$(\varepsilon)$\\
  $\delta(x,x')$ & Dirac delta\\
  $\delta_{a,b}=\begin{cases} 1, & \text{ if }a=b \\ 0, & \text{ if }a\neq b\end{cases}$ & Kronecker delta \\[.5cm]
  $\Theta(x)=\begin{cases} 1, & \text{ if }x>0 \\ 0, & \text{ if }x<0\end{cases}$ & Heavise step function, not defined for $x=0$ \\
  $W_x[u(x),v(x)]:= \frac{d u}{dx} v - u \frac{dv}{dx}$ & the Wronskian of differentiable functions $u$ and $v$
\end{tabular}
\cleardoublepage
\normalsize

\pagestyle{myPhDpagestyle}
\bibliographystyle{ieeetr}
\bibliography{bib}

\end{document}